\def\R{\mathcal{R}}
\def\I{\mathcal{I}}
\def\T{\mathcal{T}}
\def\x{\boldsymbol{x}}
\def\v{\boldsymbol{v}}
\def\w{\boldsymbol{w}}
\def\s{\boldsymbol{s}}
\def\t{\boldsymbol{t}}
\def\eps{\epsilon}
\newtheorem{theorem}{Theorem}
\newtheorem{corollary}{Corollary}[theorem]
\newtheorem{proposition}{Proposition}
\title{\boldmath The on-shell expansion: from Landau equations to the Newton polytope}
\author[a]{Einan Gardi,} 
\author[a]{Franz Herzog,}
\author[b]{Stephen Jones,}
\author[a]{Yao Ma,}
\author[c]{Johannes Schlenk}
\affiliation[a]{Higgs Centre for Theoretical Physics, School of Physics and Astronomy,\\
The University of Edinburgh, Edinburgh EH9 3FD, Scotland, UK}
\affiliation[b]{Institute for Particle Physics Phenomenology, Durham University, Durham DH1 3LE, UK}
\affiliation[c]{ICS, University of Zurich, Winterthurerstrasse 190, Zurich, Switzerland}
\emailAdd{einan.gardi@ed.ac.uk}
\emailAdd{fherzog@ed.ac.uk}
\emailAdd{stephen.jones@durham.ac.uk}
\emailAdd{yao.ma@ed.ac.uk}
\emailAdd{johanneskaspar.schlenk@uzh.ch}
\abstract{ 
We study the application of the method of regions to Feynman integrals with massless propagators contributing to off-shell Green's functions in Minkowski spacetime (with non-exceptional momenta) around vanishing external masses, $p_i^2\to 0$.
This \emph{on-shell expansion} allows us to identify all infrared-sensitive regions at any power, in terms of infrared subgraphs in which a subset of the propagators become collinear to external lightlike momenta and others become soft.
We show that each such region can be viewed as a solution to the Landau equations, or equivalently, as a facet in the Newton polytope constructed from the Symanzik graph polynomials. This identification allows us to study the properties of the graph polynomials associated with infrared regions, as well as to construct a graph-finding algorithm for the on-shell expansion, which identifies all regions using exclusively graph-theoretical conditions.
We also use the results to investigate the analytic structure of integrals associated with regions in which every connected soft subgraph connects to just two jets. For such regions we prove that multiple on-shell expansions commute. This applies in particular to all regions in Sudakov form-factor diagrams as well as in any planar diagram.}
\begin{document}

\maketitle
\flushbottom

\section{Introduction}
\label{section-introduction}

The infrared structure of on-shell amplitudes in gauge theory has been a topic of continued interest for several decades, as discussed from different perspectives in several textbooks and reviews~\cite{EdenLdshfOlvPkhn02book,Stm95book,Cls11book,Stmg18,Hrch21,AgwMgnSgnrlTrpth21,BurStw13lectures, BchBrgFrl15book}.
The origin of these singularities has long been understood at the level of individual Feynman integrals with on-shell external momenta, using the Landau equations~\cite{Lnd59,Stm78I,LbyStm78,Stm96lectures,ClsSprStm04,Cls20}.
The solutions of these equations identify manifolds in the space of loop momentum integration where the Feynman integrand diverges, and in addition, the integration contour is pinched, generating potential singularities for the integral. These manifolds are called \emph{pinch surfaces}, and they completely capture the infrared singularity structure of Feynman integrals.

Insight regarding the infrared singularity structure of Feynman integrals may also be gained using the method of regions (MoR). This technique provides a systematic way to compute Feynman integrals involving multiple kinematic scales.
The main statement is that a Feynman integral $\mathcal{I}$ can be approximated, and even reproduced, by summing over integrals that are expanded in certain regions~$\left\{R_i\right\}$, i.e.
\begin{eqnarray}
\mathcal{I}= \mathcal{I}^{(R_1)}+\mathcal{I}^{(R_2)}+\dots+\mathcal{I}^{(R_n)}.
\label{MoR_definition}
\end{eqnarray}

Pioneered by Smirnov in momentum space, the MoR was first established for the large mass and momentum expansions in Euclidean space~\cite{Smn90,Smn94}. In particular, Smirnov showed that each of the regions in this expansion corresponds to a specific assignment of large loop momenta in a certain subgraph.
These subgraphs have been referred to as \emph{asymptotically irreducible} subgraphs.\footnote{The asymptotically irreducible subgraphs are the same as motic graphs~\cite{Brn15}, a notion used in more recent literature. We will revisit this concept in more detail in section~\ref{section-graph_finding_algorithm_regions}.} Prior to Smirnov's work, it was known since the 80s~\cite{CtkGrshnTch82,Ctk83,GrshnLrnTch83,Grshn86,GrshnLrn87,Ctk88I,Ctk88II,SmthDVr88,Grshn89} that an asymptotic expansion may be performed by identifying subgraphs whose loop momenta are of the same order of magnitude as the large masses or external momenta one expands in. This identification also facilitated uncovering a deep connection with forest formulas such as Bogoliubov's $R$-operation~\cite{BglPrs57,Hepp66,Zmm69} and its infrared generalisation in Euclidean space, the so-called $R^*$-operation~\cite{CtkTch82,CtkSmn84}. This connection also opened the way to proving the convergence of the large-momentum and large-mass expansions~\cite{Smn90}, putting these expansions on a more rigorous footing.

While the expansion-by-subgraph interpretation was limited to Euclidean space, the MoR is applicable also in Minkowski space. First examples included the threshold limit $q^2\to 4m^2$ for two-loop self-energy and vertex graphs~\cite{BnkSmn97}, the Sudakov limits for two-loop vertex graphs~\cite{SmnRkmt99}, etc. The departure from Euclidean space, and, in particular, the presence of lightlike external momenta, introduces new regions such as soft and collinear regions. Furthermore, in certain kinematic configurations, additional regions such as potential and Glauber show up \cite{Smn99,Smn02book}.
The identification of these regions has been made on a case-by-case basis, often using heuristic methods based on examples and experience. In recent years, detailed explanations of how the MoR would work in general has been provided by Jantzen~\cite{Jtz11}, and the MoR has been implemented in the formalism of loop-tree duality by Plenter and Rodrigo \cite{PltRdrg21}. Meanwhile, effective field theories, notably Heavy Quark Effective Theory (HQET) and Soft-Collinear Effective Theory (SCET), have been developed based on the complete characterisation of the regions appearing in particular kinematic situations.

Another way to understand the regions that appear in kinematic expansions, is to interpret them as certain scaling vectors in the parameter representation of a given Feynman graph~$G$~\cite{Plp08,PakSmn11,JtzSmnSmn12,SmnvSmnSmv19,AnthnrySkrRmn19,HrchJnsSlk22}. In particular, for the Lee-Pomeransky representation~\cite{LeePmrsk13}, which will be central for our analysis, each region is considered as the vector normal to a so-called \emph{lower facet} of the Newton polytope,\footnote{Besides the MoR, approaches based on the geometry of parametric representations of Feynman integrals, e.g. the Newton polytopes corresponding to the ${\cal U}$ and ${\cal F}$ polynomials or their Minkowski sum, have also been applied in the context of sector decomposition, tropical geometry, UV/IR divergences, maximal cut of Feynman graphs, etc.~\cite{KnkUeda10,AkHmHlmMzr22,AnthnryDasSkr20}.} which is defined as the convex hull of the exponent vectors of the sum, $\mathcal{P}(\x;\s)$, of the Symanzik polynomials, $\mathcal{P}\equiv \mathcal{U}+\mathcal{F}$. This approach can be applied directly when the monomials of $\mathcal{P}(\x;\s)$ are all of the same sign~\cite{PakSmn11, HrchJnsSlk22}. Instead, when the monomials of $\mathcal{P}(\x;\s)$ are of indefinite signs (possibly leading to a potential region or a Glauber region), a change of integration variables may be required before constructing the Newton polytope~\cite{JtzSmnSmn12,SmnvSmnSmv19,AnthnrySkrRmn19}. Based on these observations, computer codes such as Asy2~\cite{JtzSmnSmn12} (as part of the program FIESTA~\cite{SmnTtyk09FIESTA,SmnSmnTtyk11FIESTA2,Smn14FIESTA3,Smn16FIESTA4,Smn22FIESTA5}), ASPIRE~\cite{AnthnrySkrRmn19} and pySecDec~\cite{HrchJnsSlk22} have been developed to identify the complete set of regions.

A key observation we make here is that the regions of the MoR have a clear physical interpretation in terms of the infrared structure. Specifically, we expect that each of the aforementioned lower facets of the Newton polytope realises a particular solution of the Landau equations, namely a pinch surface.

In order to make a precise connection between the regions of the MoR and the solutions of the Landau equations, we focus in this paper on the special case of \emph{on-shell expansion} of \emph{wide-angle scattering}. 
To this end we consider Feynman integrals contributing to off-shell Green’s functions with massless fields in Minkowski spacetime. Starting with $(K+L)$ non-exceptional external momenta, $p_1^\mu,\dots,p_K^\mu,q_1^\mu,\dots,q_L^\mu$, we define the on-shell expansion by considering the limit where all $p_i^2$ become small while all other Lorentz invariants remain large. More precisely, introducing a scaling variable $\lambda\ll 1$ and a hard scale $Q^2$, we have
\begin{equation}
\begin{aligned}
&\text{on-shell:}
&p_i^2\sim \lambda Q^2\ \ (i=1,\dots,K),\\
&\text{off-shell:}&
q_j^2\sim Q^2\ \ (j=1,\dots,L),&\\
&\text{wide-angle:}&
p_k\cdot p_l\sim Q^2\ \ (k\neq l).&
\label{wideangle_scattering_kinematics}
\end{aligned}
\end{equation}
As we shall see, in this case, the MoR expresses any Feynman integral as a sum of a single hard region and a set of infrared regions. Each of the regions gives rise to an infinite series in powers of $p_i^2/Q^2\sim \lambda$. The infrared regions all correspond to nontrivial solutions of the Landau equations, which are characterised by infrared subgraphs, having propagators that become collinear with the external $p_i$, as well as ones that become soft when $\lambda\to 0$. Any such infrared region can be described by figure~\ref{classical_picture}.

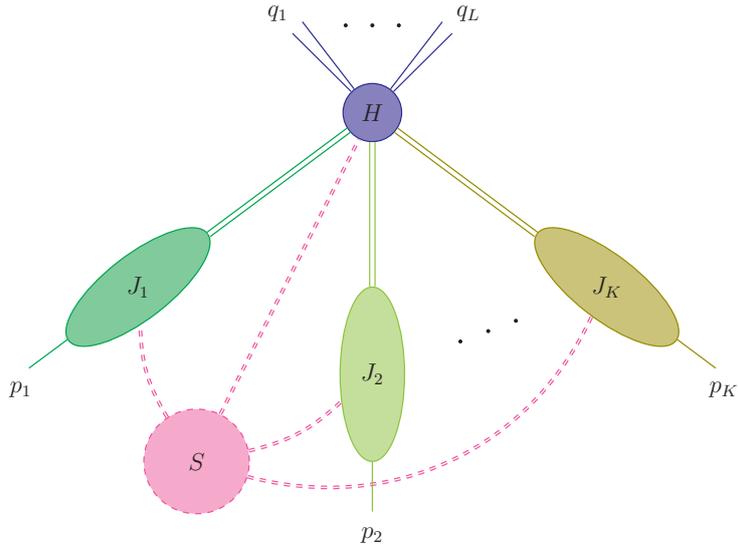
\begin{figure}[t]
\centering
  \resizebox{0.66\textwidth}{!}{
\begin{tikzpicture}[line width = 0.6, font=\large, mydot/.style={circle, fill, inner sep=.7pt}]

\node[draw=Blue,circle,minimum size=1cm,fill=Blue!50] (h) at (6,8){};
\node[dashed, draw=Rhodamine,circle,minimum size=1.8cm,fill=Rhodamine!50] (s) at (3,2){};
\node[draw=Green,ellipse,minimum height=3cm, minimum width=1.1cm,fill=Green!50,rotate=-52] (j1) at (2,5){};
\node[draw=LimeGreen,ellipse,minimum height=3cm, minimum width=1.1cm,fill=LimeGreen!50] (j2) at (6,3.5){};
\node[draw=olive,ellipse,minimum height=3cm, minimum width=1.1cm,fill=olive!50,rotate=52] (jn) at (10,5){};

\node at (h) {$H$};
\node at (s) {$S$};
\node at (j1) {$J_1$};
\node at (j2) {$J_2$};
\node at (jn) {$J_K$};

\path (h) edge [double,double distance=2pt,color=Green] (j1) {};
\path (h) edge [double,double distance=2pt,color=LimeGreen] (j2) {};
\path (h) edge [double,double distance=2pt,color=olive] (jn) {};

\draw (s) edge [dashed,double,color=Rhodamine] (h) node [right] {};
\draw (s) edge [dashed,double,color=Rhodamine,bend left = 15] (j1) {};
\draw (s) edge [dashed,double,color=Rhodamine,bend right = 15] (j2) {};
\draw (s) edge [dashed,double,color=Rhodamine,bend right = 40] (jn) {};

\node (q1) at (4.5,9.5) {};
\node (q1p) at (4.7,9.7) {};
\node (qn) at (7.5,9.5) {};
\node (qnp) at (7.3,9.7) {};
\draw (q1) edge [color=Blue] (h) node [] {};
\draw (q1p) edge [color=Blue] (h) node [left] {$q_1$};
\draw (qn) edge [color=Blue] (h) node [] {};
\draw (qnp) edge [color=Blue] (h) node [right] {$q_L$};

\node (p1) at (0,3.5) {};
\node (p2) at (6,1) {};
\node (pn) at (12,3.5) {};
\draw (p1) edge [color=Green] (j1) node [below] {$p_1$};
\draw (p2) edge [color=LimeGreen] (j2) node [below] {$p_2$};
\draw (pn) edge [color=olive] (jn) node [below] {$p_K$};

\path (j2)-- node[mydot, pos=.333] {} node[mydot] {} node[mydot, pos=.666] {}(jn);
\path (q1)-- node[mydot, pos=.333] {} node[mydot] {} node[mydot, pos=.666] {}(qn);

\end{tikzpicture}
}
\vspace{-2em}
\caption{The partitioning of a generic wide-angle scattering graph into infrared subgraphs corresponding to a particular pinch surface appearing in the on-shell expansion, eq.~(\ref{wideangle_scattering_kinematics}). The doubled lines connecting different blobs represent any number of propagators.}
\label{classical_picture}
\end{figure}

Having interpreted the individual regions in the MoR in terms of the infrared subgraphs, we thus generalise the notion of asymptotically irreducible graphs introduced in Euclidean space by Smirnov, to Minkowski space.

This paper is organised as follows. In section~\ref{section-onshell_expansion_parameter_representation} we introduce the parameter representation approach to the method of regions, and relate the solutions of the Landau equations to the region vectors. Section~\ref{section-regions_onshell_momentum_expansion} then identifies necessary and sufficient conditions for a solution of the Landau equations to be a region of the on-shell expansion. We then turn to discuss two applications of these results. In section~\ref{section-graph_finding_algorithm_regions} we derive a graph-theoretic algorithm to construct generic regions. In section~\ref{section-analytic_structure_commutativity_multiple_expansions} we investigate the analytic structure of certain on-shell expansions, and derive conditions for the cases where multiple expansions commute. Finally, section~\ref{section-conclusions_outlook} summarises the results and discusses potential future research.

Some detailed analyses are presented in the appendices. Explicitly, in appendix~\ref{appendix-possible_regions_nonplanar_doublebox_graph} we study a nonplanar double-box graph, and show that although there are positive and negative kinematic invariants in the $\mathcal{F}$ polynomial for this specific example, there are no regions due to cancellation between terms. 
In appendix~\ref{appendix-schwinger-lp} we review the relation between the Schwinger and Lee-Pomeransky representations and demonstrate that the \emph{same} scaling law for the respective 
parameters corresponding to hard, jet and soft propagators applies for the integrand in these two representations in any infrared region.
In appendix~\ref{appendix-necessity_sufficiency_requirements} we present the detailed proof of the main statement of section~\ref{infrared_regions_in_wideangle_scattering}, regarding the requirements of the subgraphs $H$, $J$ and~$S$.

\section{The on-shell expansion in the parameter representation}
\label{section-onshell_expansion_parameter_representation}

While initially formulated in momentum space, the MoR has been further developed in parameter space, where it reached rather general algorithmic formulations~\cite{Plp08,PakSmn11,JtzSmnSmn12,SmnvSmnSmv19,AnthnrySkrRmn19,HrchJnsSlk22}. Specifically, this paper will be primarily based on the formulation in ref.~\cite{HrchJnsSlk22} using the Lee-Pomeransky representation, where regions can be identified geometrically as specific facets of the corresponding Newton polytope.

In section~\ref{general_setup_MoR} we present the general setup for the MoR in parameter representation, where we define the expansion operator and introduce notations we shall use in this paper. Then, in section~\ref{region_vectors_from_Newton_polytope} we determine the regions geometrically by identifying them with certain facets of the Newton polytope, following the method of ref.~\cite{HrchJnsSlk22}. Finally, we relate the regions of the on-shell expansion to the particular solutions of the Landau equations in section~\ref{region_vectors_from_LE}, and propose that they stand in one-to-one correspondence.

\subsection{General setup}
\label{general_setup_MoR}

In this section we explain how the MoR is viewed in the Lee-Pomeransky representation. We first introduce our notation and define expansion operators associated to a given kinematic limit. 
We then demonstrate this general notation using the one-loop Sudakov form factor as an example.

Throughout this paper, we will use $G$ to denote a wide-angle scattering graph, and $N$, $L$ and $V$ to denote respectively the numbers of propagators, loops and vertices of $G$. Similarly, for any subgraph $\gamma \subseteq G$, the numbers of propagators, loops and vertices are separately $N(\gamma)$, $L(\gamma)$ and $V(\gamma)$.

We further denote the dimensionally-regularised Feynman integral in $D= 4-2\epsilon$ spacetime dimensions corresponding to a graph $G$ as $\mathcal{I}(G)$. In the Lee-Pomeransky representation,\footnote{In our paper we only discuss scalar integrals, for example, in eqs.~(\ref{lee_pomeransky_definition}) and (\ref{feynman_representation}). This can be extended to more generic Feynman integrals with nontrivial numerators, because any Feynman integral can be reduced to a sum over scalar integrals \cite{Trsv96}, each of which has the same Lee-Pomeransky polynomial $\mathcal{P}(\x;\s)$.}
\begin{equation}
\I(G)=\int [d\x]\,I(\x;\s)\equiv \frac{\Gamma(D/2)}{\Gamma((L+1)D/2-\nu)\prod_{e\in G}\Gamma(\nu_e)} \int_0^\infty \left( \prod_{e\in G} \frac{dx_e}{x_e} \right) \,I(\x;\s)\,,
\label{lee_pomeransky_definition}
\end{equation}
where $\nu_e$ is the exponent of the denominator associated to the propagator $e$, and $\nu\equiv \sum_{e\in G}\nu_e$. The integration measure $[d\x]$ is defined through the equation above, where the product goes over all the edges $e\in G$ with $x_e$ being the Lee-Pomeransky parameters. We denote the set (or vector) of $N$ Lee-Pomeransky parameters by $\x$. We also denote the set (or vector) of Lorentz invariants (Mandelstam variables) by $\s$, which consists of scalar products formed amongst the momenta $p_1^\mu,\dots,p_K^\mu,q_1^\mu,\dots,q_L^\mu$.
Since the integral $\I(G)$ is a function of $\s$, we will occasionally use the notation $\I(\s)$. The corresponding integrand $I(\x;\s)$ reads
\begin{eqnarray}
I(\x;\s)= \left(\prod_{e\in G} x_e^{\nu_e} \right) \cdot \Big( \mathcal{P}(\x;\s) \Big)^{-D/2},\qquad \mathcal{P}(\x;\s)\equiv \mathcal{U}(\x)+\mathcal{F}(\x;\s),
\label{lee_pomeransky_integrand_definition}
\end{eqnarray}
where $\mathcal{P}(\x;\s)$ is the Lee-Pomeransky polynomial, and $\mathcal{U}(\x)$ and $\mathcal{F}(\x;\s)$ are the first and second Symanzik polynomials, given by:
\begin{eqnarray}
\mathcal{U}(\x)=\sum_{T^1}^{}\prod_{e\notin T^1}^{}x_e,\qquad \mathcal{F}(\x;\s)=-\sum_{T^2}^{} s_{T^2}^{} \prod_{e\notin T^2}^{}x_e +\mathcal{U}(\x)\sum_{e}^{}m_e^2 x_e\ . \label{UFterm_general_expression}
\end{eqnarray}
The notations $T^1$ and $T^2$ respectively denote a \emph{spanning tree} and a \emph{spanning 2-tree} of the graph $G$. The symbol $s_{T^2}^{}$ is the square of the momentum flowing into each of the components of the spanning 2-tree $T^2$. In this paper we consider only massless propagators, and hence we set all the internal masses to zero, so eq.~(\ref{UFterm_general_expression}) can be simplified to
\begin{eqnarray}
\mathcal{U}(\x)=\sum_{T^1}^{}\prod_{e\notin T^1}^{}x_e,\qquad \mathcal{F}(\x;\s)=-\sum_{T^2}^{}s_{T^2}^{} \prod_{e\notin T^2}^{}x_e.
\label{UFterm_onshell_expression}
\end{eqnarray}

Later we will also use the Feynman parameterised integral in momentum space, which may be written as 
\begin{eqnarray}
\I(\s)= \frac{\Gamma(\nu)}{\prod_{e\in G}\Gamma(\nu_e)}\int_0^\infty \bigg(\prod_{e\in G} d\alpha_e \alpha_e^{\nu_e-1}\bigg) \delta \bigg(\sum_{e\in G}\alpha_e -1\bigg) \int [d\boldsymbol{k}] \frac{1}{[\mathcal{D}(k,p,q;\alpha)]^\nu}.\nonumber\\
\label{feynman_parameterisation_form}
\end{eqnarray}
where we introduced $k=\{k_a\}$, $p=\{p_i\}$, $q=\{q_j\}$ and $\alpha=\{\alpha_e\}$. Here the integration measure $[d\boldsymbol{k}]\equiv \prod_{a=1}^L d^Dk_a/(2\pi)^D$, and the function in the denominator reads
\begin{eqnarray}
\mathcal{D}( k,p,q;\alpha)= \sum_{e\in G} \alpha_e \left(-l_e^2(k,p,q)+m_e^2 -i\varepsilon\right),
\label{feynman_parameterisation_denominator}
\end{eqnarray}
where $\alpha_e$ is the Feynman parameter associated to propagator $e$ and the corresponding momentum $l_e^\mu$ is a linear combination of the internal loop momenta $k_a^\mu$ and the external momenta $p_i^\mu$ and $q_j^\mu$. After integrating over the loop momenta, one obtains the Feynman parameter representation of $\I(\s)$:
\begin{eqnarray}
\I(\s)=\frac{\Gamma(\nu-LD/2)}{\prod_{e\in G}\Gamma(\nu_e)} \int_0^\infty \left(\prod_{e\in G} d\alpha_e \alpha_e^{\nu_e-1}\right) \delta \left(\sum_{e\in G}\alpha_e -1\right) \frac{\left[\mathcal{U}(\boldsymbol{\alpha})\right]^{\nu-(L+1)D/2}}{\left[\mathcal{F}(\boldsymbol{\alpha};\s)\right]^{\nu-LD/2}}.
\label{feynman_representation}
\end{eqnarray}
Note that the Feynman and Lee-Pomeransky representations can be obtained from each other via the following change of variables
\begin{eqnarray}
\alpha_e=\frac{x_e}{x_1+x_2+\dots+x_N}.
\label{Feynman_LP_relation}
\end{eqnarray}
Starting from eq.~(\ref{lee_pomeransky_definition}), one can first insert $1=\int_{-\infty}^\infty \delta (\sum_{e\in G} x_e -X) dX$, and then change the variables $x_e= X\alpha_e$ according to (\ref{Feynman_LP_relation}). At this point we use the homogeneity properties of the Symanzik polynomials in (\ref{UFterm_general_expression}), noting that the $\mathcal{U}$ polynomial scales as $X^L$ while the $\mathcal{F}$ polynomial scales as $X^{L+1}$, allowing us to integrate over $X$ and obtain a ratio of gamma functions. The result is exactly (\ref{feynman_representation}).

We now consider the expansion of eq.~(\ref{lee_pomeransky_definition}) around the kinematic limit where the Mandelstam variables $\t \subset \s$ become small. To this end it is convenient to introduce a scaling vector $\w_{\t}$ in the space of Mandelstam variables $\s$:
\begin{equation}
    \w_{\t}=(w_1,w_2,w_3,\dots).
    \label{kinematic_variables_scaling}
\end{equation}
For each $i$, $w_i=1$ if $s_i\in \t$ and $w_i=0$ otherwise. In the on-shell limit shown in eq.~(\ref{wideangle_scattering_kinematics}), we have $\boldsymbol{s} \to  \lambda^{\w_{\t}} \boldsymbol{s}$, or equivalently, $s_i\to  \lambda^{w_i}s_i$ for every $i$.
Here, we have introduced notation for the raising of a scalar to the power of a vector and for the (Hadamard/component-wise) product of two vectors. Given a scalar $\lambda$ and two vectors $\boldsymbol{a}=(a_1,a_2,a_3,\dots)$ and $\boldsymbol{b}=(b_1,b_2,b_3,\dots)$ we define
\begin{eqnarray}
\lambda^{\boldsymbol{b}} \boldsymbol{a} \equiv (\lambda^{b_1} a_1,\lambda^{b_2} a_2,\lambda^{b_3} a_3,\dots).
\label{lambda_power_b_times_a}
\end{eqnarray}
We will also use the following definition for the raising of a vector to the power of a vector
\begin{equation}
\boldsymbol{a}^{\boldsymbol{b}} \equiv (a_1^{b_1}, a_2^{b_2}, a_3^{b_3} \ldots).
\end{equation}

The MoR states that in order to obtain the correct asymptotic expansion, one needs to sum over a finite set of regions, which we denote by $\mathcal{R}(\I(\s),\t)$. In each region $R\in \mathcal{R}(\I(\s),\t)$, the relevant contribution arises from the scaling of the Lee-Pomeransky parameters $x_i\to \lambda^{u_i} x_i$ together with the external kinematic variables $s_i\to \lambda^{w_i} s_i$. The Lee-Pomeransky integrand in eq.~(\ref{lee_pomeransky_integrand_definition}) is then expanded in powers of $\lambda$, and the integral in eq.~(\ref{lee_pomeransky_definition}) is performed term by term. Note that $\lambda$ serves as a bookkeeping parameter and is finally set to $1$.

We can express the Feynman integral $\I(\s)$ as a sum over contributions from each region:
\begin{eqnarray}
\I(\s)\simeq  \T_{\t}\I(\s)&& \equiv \!\!\!\sum_{R \in \mathcal{R}(\I(\s),\t)} \!\!\!\I^{(R)},\qquad\quad \I^{(R)}= \T_{\t}^{(R)} \I(\s),
\label{expansion_by_regions_definition}
\end{eqnarray}
where the operator $\T_{\t}^{(R)}$ produces a Taylor expansion in $\lambda$ by acting on the integrand:
\begin{eqnarray}
\T_{\t}^{(R)} \I(\s)\equiv \left. \int[d\x]\ T_\lambda \left( \lambda^{-p_R^{}(\epsilon)} I(\lambda^{\boldsymbol{u}_R}\x;\lambda^{\w}\s) \right) \right|_{\lambda=1}.
\label{integral_expansion_operator_definition}
\end{eqnarray}
The symbol $\simeq$ in 
eq.~(\ref{expansion_by_regions_definition}) is used to indicate that $\T_{\t}\I(\s)$ is an approximation to $\I(\s)$ when the Taylor expansion is truncated, while it becomes an equality when the expansion is summed up to all orders.
In eq.~(\ref{integral_expansion_operator_definition})
we have rescaled the kinematic variables as well as the Feynman parameters according to 
\begin{equation}
\s\to \lambda^{\w}\s \qquad \text{and} \qquad \x\to \lambda^{\boldsymbol{u}_R}\x,
\end{equation}
where $\w$ is defined in (\ref{kinematic_variables_scaling}) and $\boldsymbol{u}_R$ is called the \emph{region vector} (we will see one method for determining region vectors in section~\ref{region_vectors_from_Newton_polytope}). We have also assumed in eq.~(\ref{integral_expansion_operator_definition}) that the asymptotic behaviour of the integrand in the region $R$ is
\begin{eqnarray}
I(\lambda^{\boldsymbol{u}_R}\x;\lambda^{\w}\s) \to \lambda^{p_R(\epsilon)} I_0^{(R)}(\x;\s),
\label{integrand_asymptotic_behaviour}
\end{eqnarray}
as $\lambda\to 0$, where the exponent $p_R^{}(\epsilon)$ is a linear function of the dimensional regularisation parameter~$\epsilon$, and $I_0$ represents the \emph{leading-order} (in $\lambda$) approximation of the integrand in the region $R$. Then, by multiplying the rescaled integrand $I(\lambda^{\boldsymbol{u}_R}\x;\lambda^{\w}\s)$ by $\lambda^{-p_R^{}(\epsilon)}$, we can take in eq.~(\ref{integral_expansion_operator_definition}) a regular Taylor expansion:
\begin{eqnarray}
T_\lambda\equiv \sum_{n=0}^\infty \lambda^n T_{\lambda,n},\qquad T_{\lambda,n} (\cdots)= \left.\frac{1}{n!} \frac{d^n}{d\lambda^n} (\cdots)\right|_{\lambda=0}.
\label{integrand_expansion_operator_definition}
\end{eqnarray}
It is convenient to define the Lee-Pomeransky polynomial corresponding to the leading behaviour of eq.~(\ref{integrand_asymptotic_behaviour}) in region $R$ as $\mathcal{P}_0^{(R)}(\x;\s)$. In each region, this polynomial contains only a subset of the monomials of the original Lee-Pomeransky polynomial $\mathcal{P}(\x;\s)$.

We stress the following key aspects~\cite{Smn02book,Jtz11,SmnvSmnSmv19} concerning eqs.~(\ref{expansion_by_regions_definition}) and (\ref{integral_expansion_operator_definition}).
First, the regions in the set $\mathcal{R}(\I(\s),\t)$ should be defined such that each point in the space of integration belongs to exactly one of them. Second, despite the fact that the expansion of the integrand in eq.~(\ref{integral_expansion_operator_definition}) is a valid approximation only within the region $R$, the integral $\I^{(R)}$ in eq.~(\ref{expansion_by_regions_definition}) is taken over the entire space. This relies on using dimensional regularisation, where any scaleless integral is set to zero, namely,
\begin{eqnarray}
\int [d\x] I_0^{(R)}(\x;\s)=0\qquad\text{ if }\ \ I_0^{(R)}(c^{\boldsymbol{u}} \x;\s) = c^q I_0^{(R)}(\x;\s),
\label{eq:uf_scaleless}
\end{eqnarray}
for some $c,q\in\mathbb{R}$ ($c\neq 1$) and $\boldsymbol{u} \in \mathbb{R}^N$.

To summarise, the MoR in eqs.~(\ref{expansion_by_regions_definition}) and (\ref{integral_expansion_operator_definition}) states that
\begin{equation}
\I(\s)\simeq \T_{\t} \I(\s)= \sum_{R \in \R(I(G),\t)}\int[d\x] \left. T_\lambda \left( \lambda^{-p_R^{}(\epsilon)} I(\lambda^{\boldsymbol{u}_R}\x;\lambda^{\w}\s) \right) \right|_{\lambda=1}.
\end{equation}
At a given order $n$ in the expansion we have
\begin{align}
\T_{\t,n} \equiv \sum_{R\in \R(\I,\t)} \T^{(R)}_{\t,n},\qquad \T_{\t,n}^{(R)} \I(\s)= \int[d\x] \left. T_{\lambda,n} \left( \lambda^{-p_R^{}(\epsilon)} I(\lambda^{\boldsymbol{u}_R}\x;\lambda^{\w}\s) \right) \right|_{\lambda=1}.
\label{integral_expansion_operator_definition_fixed_order}
\end{align}

Let us demonstrate the above definitions using a simple example  relevant to the one-loop Sudakov form factor. Consider the triangle graph, 
\begin{equation}
G_{3}\equiv
\begin{tikzpicture}[baseline=13ex,scale=1.]
\coordinate (x1) at (1.1340, 1.4999) ;
\coordinate (x2) at (2.8660, 1.5000) ;
\coordinate (x3) at (2,3) ;
\node (p1) at (0.2681, 0.9998) {$p_1$};
\node (p2) at (3.7320, 1.0000) {$p_2$};
\node (p3) at (2,4) {$q_1$};
\draw[color=green] (x1) -- (p1);
\draw[ultra thick,color=Blue] (x3) -- (p3);
\draw[color=ForestGreen] (x2) -- (p2);
\draw[ultra thick,color=Black] (x1) -- (x2);
\draw[ultra thick,color=Black] (x2) -- (x3);
\draw[ultra thick,color=Black] (x3) -- (x1);
\draw[fill,thick,color=Blue] (x1) circle (1pt);
\draw[fill,thick,color=Blue] (x2) circle (1pt);
\draw[fill,thick,color=Blue] (x3) circle (1pt);
\end{tikzpicture},
\end{equation}
whose external momenta $p_1^\mu$, $p_2^\mu$ and $q_1^\mu=-p_1^\mu-p_2^\mu$ satisfy $|p_1^2|\sim |p_2^2|\ll |q_1^2|$. We consider the expansion where the external momenta $p_1^\mu$ and $p_2^\mu$ become on-shell, so the relevant sets of small and large Mandelstam invariants are 
\begin{equation}
\begin{aligned}
&\t= ( p_1^2,p_2^2 )\subset \s=( p_1^2,p_2^2,q_1^2 )\,,
\end{aligned}
\end{equation}
the scaling vector of eq.~(\ref{kinematic_variables_scaling}) is $\w=(1,1,0)$, so $\lambda^{\w}\s=( \lambda^1 p_1^2,\lambda^1 p_2^2,\lambda^0 q_1^2)$,
and the Symanzik polynomials of eq.~(\ref{UFterm_onshell_expression}) are
\begin{equation}
\begin{aligned}
&\mathcal{U}= x_1 + x_2 + x_3,\qquad \mathcal{F}= (-p_1^2) x_1x_3+ (-p_2^2) x_2x_3 +(-q_1^2) x_1x_2.
\label{UFterms_oneloop_Sudakov_example}
\end{aligned}
\end{equation}
Consider the specific case where $\nu_1=\nu_2=\nu_3 =1$, and then the Lee-Pomeransky integrand, according to eq.~(\ref{lee_pomeransky_integrand_definition}), is 
\begin{eqnarray}
I(\x;\s)= x_1x_2x_3 (x_1 + x_2 + x_3 -p_1^2 x_1x_3- p_2^2 x_2x_3 - q_1^2 x_1x_2)^{-D/2}.
\label{integrand_oneloop_Sudakov_example}
\end{eqnarray}
As we will see in section~\ref{region_vectors_from_Newton_polytope}, there are four associated regions, which are defined by the following $N$-dimensional scaling vectors, $\boldsymbol{u}_R$,
\begin{equation}
\begin{aligned}
&\mathrm{Hard}\ (H): & &\boldsymbol{u}_H= (0,0,0),&\\
&\mathrm{Collinear\,1}\ (C_1):& &\boldsymbol{u}_{C_1}= (-1,0,-1),&\\
&\mathrm{Collinear\,2}\ (C_2):& &\boldsymbol{u}_{C_2}= (0,-1,-1),&\\
&\mathrm{Soft}\ (S): & &\boldsymbol{u}_S= (-1,-1,-2).&
\end{aligned}
\label{eq:sudakov_region_vectors}
\end{equation}

The MoR then claims that, for this example,
\begin{eqnarray}
\I(\s)= \T_{\t}^{(H)}\I(\s)+ \T_{\t}^{(C_1)}\I(\s)+ \T_{\t}^{(C_2)}\I(\s)+ \T_{\t}^{(S)}\I(\s).
\label{Sudakov_example_expansion}
\end{eqnarray}
To determine the leading contribution 
from each of the four regions in eq.~(\ref{eq:sudakov_region_vectors}), consider rescaling the integration parameters in eq.~(\ref{lee_pomeransky_definition})
according to~$\s\to \lambda^{\w}\s$ and $\x\to \lambda^{\boldsymbol{u}_R}\x$. Since the integration measure in eq.~(\ref{lee_pomeransky_definition}) is rescaling invariant, only the integrand of eq.~(\ref{integrand_oneloop_Sudakov_example}) changes under the rescaling. Consider for instance the case $R=S$, where $\boldsymbol{u}_S= (-1,-1,-2)$. The rescaled integrand for $D=4-2\epsilon$ reads
\begin{align}
I(\lambda^{\boldsymbol{u}_S}\x;\lambda^{\w}\s)&=I(\lambda^{-1}x_1, \lambda^{-1}x_2, \lambda^{-2}x_3;\lambda^{1}p_1^2, \lambda^{1}p_2^2, q_1^2)\nonumber\\
&=\lambda^{-4} x_1x_2x_3 \Big(\lambda^{-1}(x_1 + x_2) + \lambda^{-2}(x_3 -p_1^2 x_1x_3- p_2^2 x_2x_3 - q_1^2 x_1x_2) \Big)^{\epsilon-2} \nonumber \\
&= \lambda^{-2\epsilon} x_1x_2x_3 (\mathcal{P}_0^{(S)}(\x;\s))^{\epsilon-2} +\cdots,
\label{integrand_oneloop_Sudakov_example_rescaled}
\end{align}
where in the last expression we neglected terms that are suppressed by powers of $\lambda$, and identified the Lee-Pomeransky polynomial in the soft region, $R=S$, as
\begin{eqnarray}
\mathcal{P}_0^{(S)}(\x;\s)= x_3 +(-p_1^2)x_1x_3 +(-p_2^2)x_2x_3 +(-q_1^2)x_1x_2.
\end{eqnarray}
It follows from eq.~(\ref{integrand_oneloop_Sudakov_example_rescaled}) that $p_{S}^{}(\epsilon)=-2\epsilon$. Hence, according to eqs.~(\ref{integral_expansion_operator_definition}) and (\ref{integral_expansion_operator_definition_fixed_order}), the soft expansion operator acts on the integral $\I$ as follows:
\begin{eqnarray}
&&\T_{\t}^{(S)}\I(\s)= \int[d\x] \left. T_\lambda \left( \lambda^{2\epsilon} I(\lambda^{\boldsymbol{u}_S}\x;\lambda^{\w}\s) \right) \right|_{\lambda=1} \nonumber\\
&&\hspace{0.8cm} =\sum_{n=0}^\infty \frac{(\epsilon-2)_n}{n!}\int[d\x] x_1 x_2 x_3 \left(x_1+x_2\right)^n \left( x_3 -p_1^2x_1x_3 -p_2^2x_2x_3 -q_1^2x_1x_2 \right)^{\epsilon-2-n},\nonumber\\
&&
\label{1loop_Sudakov_soft_expansion}
\end{eqnarray}
where we have inserted the rescaled integrand of eq.~(\ref{integrand_oneloop_Sudakov_example_rescaled}) and expanded in $\lambda$ to all orders. Here $()_n$ denotes the falling factorial, i.e. $(a)_n\equiv a(a-1)\cdots(a-n+1)$. The right-hand side of the second equality of eq.~(\ref{1loop_Sudakov_soft_expansion}) represents a sum over the terms $\T_{\t,n}^{(S)}\I(\s)$. The same procedure can be carried out analogously for the other three regions $H$, $C_1$ and $C_2$. In particular we obtain
\begin{equation}
p_H^{}(\epsilon)=0,\quad 
p_{C_1}^{}(\epsilon)=-\epsilon,\quad 
p_{C_2}^{}(\epsilon)=-\epsilon,\quad 
p_{S}^{}(\epsilon)=-2\epsilon,
\label{leading_powers_oneloop_triangle_regions}
\end{equation}
reflecting a different analytic behaviour characteristic to each of the regions for small $p_i^2$. We shall return to analyse this in section~\ref{section-analytic_structure_commutativity_multiple_expansions}.

\subsection{Region vectors from the Newton polytope}
\label{region_vectors_from_Newton_polytope}

In this section, we briefly summarise the geometric formulation of the MoR in parameter space, following ref.~\cite{HrchJnsSlk22}. We review how the region vectors, $\boldsymbol{v}_R^{}$, which are used to rescale the Lee-Pomeransky parameters~$\x$, in each region~$R$, can be obtained by considering certain \emph{facets} of a \emph{Newton polytope} associated to the integral. We then revisit our previous example, the one-loop Sudakov form factor, to demonstrate how the regions are obtained directly from a geometric point of view.

In order to define the Newton polytope of the integral $\I(G)$ given by eqs.~(\ref{lee_pomeransky_definition}) and~(\ref{lee_pomeransky_integrand_definition}), we first consider the polynomial $\mathcal{P}(\x,\lambda^{\w} \s)$ obtained from $\mathcal{P}(\boldsymbol{x},\boldsymbol{s})$ by rescaling all of the Mandelstam variables by $\s\to \lambda^{\w} \s$.
In general, the polynomials we will consider are of the form:
\begin{align}
&\mathcal{P}(\boldsymbol{x};\lambda^{\w} \s) = \sum_{i=1}^m c_i(\s)\, x_1^{r_{i,1}} \dots x_N^{r_{i,N}}\, \lambda^{r_{i,N+1}} = \sum_{i=1}^m c_i(\s)\, \boldsymbol{x}^{\hat{\boldsymbol{r}}_i}\, \lambda^{r_{i,N+1}},
\label{LP_polynomial_rescaling_momenta}
\end{align}
where $r_{i,j} \in \{0,1\}$ for any $i=1,\dots,m$ with $m$ being the number of terms (monomials) in the Lee-Pomeransky polynomial, and $j=1,\dots,N+1$. (In the case of massive propagators, $r_{i,j} \in \{0,1,2\}$.)
We define the $(N+1)$-dimensional exponent vectors as $\boldsymbol{r}_i\equiv (\hat{\boldsymbol{r}}_i;r_{i,N+1}) \equiv (r_{i,1}, \ldots, r_{i,N}; r_{i,N+1})$ and for the work presented here, we assume $r_{i,N+1}=0$ or $1$. 
For the description of the geometric method below, we also demand ${c_i > 0\ \forall\ i}$, which forbids sets of monomials from cancelling each other at any point in the integration domain ${x_i \geqslant 0\ \forall\ i}$.
The method can be applied with some $c_i <0$, provided the cancellation of monomials does not lead to new regions. Generally speaking, the geometric method will identify only the regions that are present when all $c_i>0$. These regions correspond to endpoint singularities in parameter space.

The Newton polytope can now be defined as the \emph{convex hull} of the \emph{vertices} (dimension-0 \emph{faces}) given by the polynomial exponent vectors
\begin{eqnarray}
\Delta^{(N+1)}[\mathcal{P}] = \mathrm{convHull}(\boldsymbol{r}_1,\ldots,\boldsymbol{r}_m) = \left\{ \sum_i^m \alpha_i \boldsymbol{r}_i\  |\  \alpha_i \geqslant 0 \land \sum_i^m \alpha_i = 1 \right\}, \label{eq:convex_hull}
\end{eqnarray}
or, alternatively, as an intersection of half-spaces,
\begin{align}
&\Delta^{(N+1)}[\mathcal{P}] = \bigcap_{f\in F} \left\{ \boldsymbol{\rho}\in\mathbb{R}^{N+1} \mid  \boldsymbol{\rho}\cdot\boldsymbol{v}_f + a_f \geqslant 0 \right\},&
&a_f \in \mathbb{Z}\quad \forall\ f,&
\label{eq:NPhalfspaces}
\end{align}
where $F$ is the set of \emph{facets} (codimension 1 faces) of the polytope and each $\boldsymbol{v}_f$ is an inward-pointing vector normal to facet $f$. 
The vectors $\boldsymbol{v}_f$ take the form of eq.~(\ref{v_R_definition}). Let facets with an inward-pointing normal vector with a positive component in the $\lambda$ direction (i.e. with component $v_{f,N+1} > 0$) be called \emph{lower facets}, and let us denote the set of all such facets $F_+$. 
The region vectors of the contributing regions are given by these lower facets, i.e. $\{ \boldsymbol{v}_R \} = \{ \boldsymbol{v}_f, \ \forall\ f \in F_+ \}$.
Several computer packages exist for computing Newton polytopes (or convex hulls) and their representation in terms of facets, see for example refs.~\cite{Normaliz,Qhull}.

According to eq.~\eqref{eq:convex_hull}, the Newton polytope $\Delta^{(N+1)}[\mathcal{P}]$ is thus an $(N+1)$-dimensional polytope enclosing all points defined by the exponent vectors $\boldsymbol{r}_i$.
The first $N$ dimensions correspond to the Lee-Pomeransky parameters $x_1, \ldots, x_N$, which are integrated over, while the $(N+1)$-th dimension corresponds to the exponent of the expansion parameter, $\lambda$, which emerges from rescaling the external invariants $\s\to \lambda^{\w}\s$. 
Starting from eq.~(\ref{LP_polynomial_rescaling_momenta}), in each region, we will consider a polynomial of the form
\begin{align}
\mathcal{P}(\lambda^{\boldsymbol{u}_R} \boldsymbol{x}; \lambda^{\w} \s) &= \sum_{i=1}^m c_i(\s)\, \lambda^{\boldsymbol{u}_R \cdot \hat{\boldsymbol{r}}}\, \boldsymbol{x}^{\hat{\boldsymbol{r}}}\, \lambda^{r_{i,N+1}} = \sum_{i=1}^m c_i(\s)\, \lambda^{\boldsymbol{v}_R \cdot \boldsymbol{r}}\, x^{\hat{\boldsymbol{r}}} \nonumber\\
&= \lambda^{-a_R}\sum_{i=1}^m c_i(\s)\, \lambda^{\boldsymbol{v}_R \cdot \boldsymbol{r}+a_R}\, x^{\hat{\boldsymbol{r}}}\,,
\label{eq:LPscalingsubstitution}
\end{align}
where
\begin{equation}
\boldsymbol{v}_R \equiv (\boldsymbol{u}_R; v_{R,N+1})
=(\boldsymbol{u}_R; 1)
\,.
\label{v_R_definition}
\end{equation}
Here have made identification of the first $N$ components of $\boldsymbol{v}_R$ as $\boldsymbol{u}_R$ of eq.~(\ref{integral_expansion_operator_definition}), while the last component, is set to one: $v_{R,N+1}= 1$, representing the fact that $R$ corresponds to a lower facet.

The leading monomials in $\mathcal{P}(\lambda^{\boldsymbol{u}_R} \boldsymbol{x}, \lambda^{\w} \s)$, i.e. the ones having the smallest power of $\lambda$, are those that minimise $\boldsymbol{v}_R \cdot \boldsymbol{r}$.
Since each region vector is normal to a particular facet~$f$, of the Newton polytope, by comparing eqs.~(\ref{eq:NPhalfspaces}) and (\ref{eq:LPscalingsubstitution}) we see that the leading monomials, i.e.~those satisfying $\boldsymbol{v}_R \cdot \boldsymbol{r} + a_R =0$,  lie on and span the facet $f$, while all other points have $\boldsymbol{v}_R \cdot \boldsymbol{r} + a_R > 0$.
Crucially, after expanding in $\lambda$, we will obtain exactly the monomials which lie on the facet $f$.
By definition, facets have codimension 1, and so polynomials obtained from each region vector will thus live on an $N$-dimensional subspace of the Newton polytope.

We emphasise that in general, not all regions of the MoR are associated with facets~\cite{PakSmn11}.  It is generally accepted that the following holds. \emph{The regions contributing to the MoR fall into two categories, ones that are due to cancellations between terms of the Lee-Pomeransky polynomial $\mathcal{P}$, and ones that can be associated with facets of the corresponding Newton polytope $\Delta^{(N+1)}[\mathcal{P}]$.} The latter are identified with scaling vectors $\v_R$, the normal vectors of the lower facets, and in present paper, focusing on wide-angle scattering, we will assume that these are the only regions that are present. We briefly comment on the former at the end of this subsection.

In principle, as well as the facets of the Newton polytope, we could also consider lower dimensional faces (i.e. faces of codimension $> 1$), such faces correspond to intersections of the higher dimensional faces.
The normal vectors corresponding to faces of codimension $> 1$ select leading monomials in $\mathcal{P} (\x;\s)$, which span a face of dimension less than $N$. From eq.~(\ref{eq:uf_scaleless}) it follows that if the dimension of a lower face is less than $N$, then the corresponding dimensionally-regularised Feynman integral is scaleless, and therefore vanishes. Thus only facets contribute to the MoR.

Let us note that the lower facets of the Newton polytope obtained from the Lee-Pomeransky polynomial $\mathcal{P}(\x;\s) = \mathcal{U}(\x)+\mathcal{F}(\x;\s)$ are in one-to-one correspondence with the lower facets of the Newton polytope obtained using the product $\mathcal{U}(\x)\mathcal{F}(\x;\s)$. This implies that the region vectors obtained in the Feynman and Lee-Pomeransky representations are equivalent~\cite{SmnvSmnSmv19}, differing only by constant shifts of the form $(n\boldsymbol{1},0)$ with $n \in \mathbb{R}$. The Symanzik polynomials $\mathcal{U}(\x)$ and $\mathcal{F}(\x;\s)$ are homogeneous functions of $\x$ with degree $L$ and $L+1$, respectively. This means that in the $\x$ coordinates their Newton polytopes lie on separated parallel hyperplanes.
In this case, it follows from a geometric theorem known as the \emph{Cayley trick}, that the Newton polytope defined by the intersection of a hyperplane parallel to the $\mathcal{U}(\x)$ and $\mathcal{F}(\x;\s)$ hyperplanes with the Newton polytope of the sum $\mathcal{U}(\x) + \mathcal{F}(\x;\s)$ has the same convex hull, up to a rescaling, as the Newton polytope of the product $\mathcal{U}(\x) \mathcal{F}(\x;\s)$. We demonstrate the use of the Cayley trick in the example below, see, in particular, figure~\ref{fig:tri1_2}.

As an example of the geometric procedure for determining regions\footnote{Note that the monomials in ${\cal P}$ are all of the same sign provided we choose all kinematic invariants to be spacelike. In such Euclidean kinematics we can clearly exclude cancellations between terms in ${\cal P}$ and focus exclusively on the facets of the Newton polytope.  }, let us apply it to the one-loop triangle integral in the limit $p_1^2\sim p_2^2\ll q_1^2$ defined in eq.~(\ref{UFterms_oneloop_Sudakov_example}).
The Newton polytope is 4-dimensional and depends on $(r_{i,1}, r_{i,2}, r_{i,3}; r_{i,4})$, which are the exponents of the variables $( x_1,x_2,x_3; \lambda)$, it is given by
\begin{align}
\Delta^{(N+1)} \left[\mathcal{U}(G)+\mathcal{F}(G) \right] = \mathrm{convHull}(U_1,U_2,U_3,F_1,F_2,F_3),
\end{align}
where the vertices $U_1 = (1,0,0,0),\, U_2 = (0,1,0,0),\, U_3 = (0,0,1,0)$ correspond to monomials in $\mathcal{U}(G)$ and the vertices $F_1=(1,0,1,1),\, F_2=(0,1,1,1),\, F_3=(1,1,0,0)$ correspond to monomials in $\mathcal{F}(G)$.
Calculating the normal vector of each facet of the 4-dimensional Newton polytope and selecting those with a positive $\lambda$ component (i.e. the lower facets) immediately yields the region vectors,
\begin{align}
\begin{tabular}{lll}
&$\boldsymbol{v}_H = (0,0,0;1)$,\qquad
&$\boldsymbol{v}_S = (-1,-1,-2;1)$, \\
&$\boldsymbol{v}_{C_1} = (-1,0,-1;1)$,\qquad
&$\boldsymbol{v}_{C_2} = (0,-1,-1;1)$,
\end{tabular}
\end{align}
which match those stated in eq.~(\ref{eq:sudakov_region_vectors}) with the additional component $v_{R,N+1} = 1$, according to eq.~(\ref{v_R_definition}).

In order to visualise these regions, in figure~\ref{fig:tri1_1} we display a 3-dimensional projection of the Newton polytope, neglecting the $\lambda$ direction.
The colour of the vertices indicates whether they have a zero $\lambda$ exponent (black) or a nonzero $\lambda$ exponent (red).
The blue face corresponds to the $\mathcal{U}(G)$ polynomial and the green face corresponds to the $\mathcal{F}(G)$ polynomial.
The vertices of the Newton polytope obtained from the product of the Symanzik polynomials, $\Delta^{(N+1)}[\mathcal{U}(G)\mathcal{F}(G)]$, lie, up to a rescaling, on the edges connecting the vertices of the $\mathcal{U}(G)$ and $\mathcal{F}(G)$ faces, as depicted by the grey hexagon in figure~\ref{fig:tri1_2}. We use the notation $U_i F_j$ to denote the vertex obtained by taking the product of the $i$th term in $\mathcal{U}(G)$ with the $j$th term in $\mathcal{F}(G)$.

In figure~\ref{fig:tri1_3} we define a new $x_1^\prime x_2^\prime$ coordinate system on the $\mathcal{U}(G)\mathcal{F}(G)$ hyper-surface.
Note that because the $\mathcal{U}(G)\mathcal{F}(G)$ polynomial is homogeneous in the variables $\{x_1, x_2, x_3\}$ its corresponding hyper-surface is 3-dimensional (rather than 4-dimensional).
In figure~\ref{fig:tri1_4} we display the $\mathcal{U}(G)\mathcal{F}(G)$ polytope in the new coordinates and also show the $\lambda$ direction. Here the lower facets of the Newton polytope, which correspond to the different regions, are displayed in colour.
As we will demonstrate in the following sections, the upward-pointing light blue facet corresponds to the hard region $(H)$, the red facet corresponds to the soft region $(S)$ and the two green and yellow facets correspond to the collinear (jet) regions $(C_1)$ and $(C_2)$, respectively.
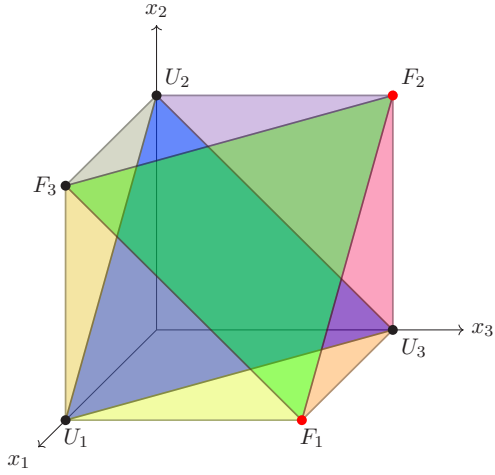
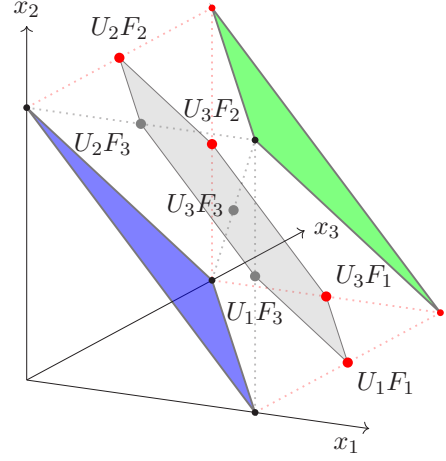
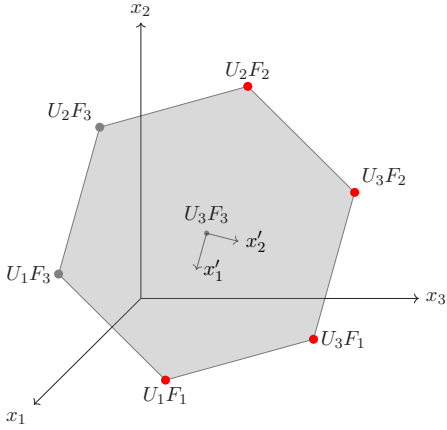
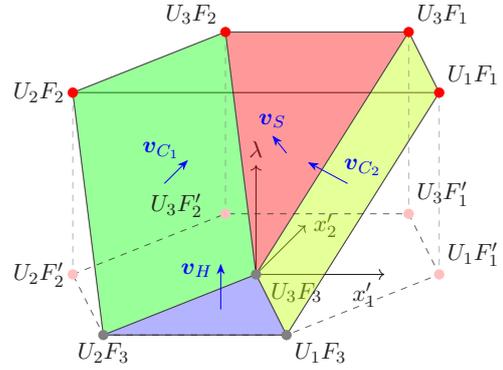
\begin{figure}[h]
  \centering
  \begin{subfigure}[b]{0.45\textwidth}
    \centering
      \resizebox{\textwidth}{!}{
    \begin{tikzpicture}[rotate around x=-90, rotate around z=-90, scale=4]
      \begin{scope}
        \draw[->] (0,0,0) -- (1.3,0,0) node[below left] {$x_1$};
        \draw[->] (0,0,0) -- (0,1.3,0) node[right] {$x_3$};
        \draw[->] (0,0,0) -- (0,0,1.3) node[above] {$x_2$};
        \begin{scope}[fill opacity=.5, draw opacity=.6, text opacity=1]
            %
            \draw [thick,fill=blue,fill] 
            (1,0,0) coordinate [label={[anchor=north west,xshift=-5pt]below:$U_1$}] (U0) 
            -- (0,0,1) coordinate [label=above right:$U_2$] (U1) 
            -- (0,1,0) coordinate [label={below right:$U_3$}] (U2)
            -- cycle;
            %
            \draw [thick,fill=green,fill] 
            (1,1,0) coordinate [label={[anchor=north west,xshift=-5pt]below:$F_1$}] (F0) 
            -- (0,1,1) coordinate [label=above right:$F_2$] (F1) 
            -- (1,0,1) coordinate [label={left:$F_3$}] (F2) 
            -- cycle;
            \draw [thick,fill,color=red,opacity=.2] (F0) -- (U2) -- (F1) -- cycle;
            \draw [thick,fill,color=cyan,opacity=.2] (F2) -- (U1) -- (F1) -- cycle;
            \draw [thick,fill,color=magenta,opacity=.2] (F1) -- (U2) -- (U1) -- cycle;
            \draw [thick,fill,color=yellow,opacity=.2] (U2) -- (U0) -- (F0) -- cycle;
            \draw [thick,fill,color=lime,opacity=.2] (F0) -- (U0) -- (F2) -- cycle;
            \draw [thick,fill,color=orange,opacity=.2] (U0) -- (F2) -- (U1) -- cycle;
            \draw [thick,opacity=.2] (F0) -- (U2) -- (F1) -- cycle;
            \draw [thick,opacity=.2] (F2) -- (U1) -- (F1) -- cycle;
            \draw [thick,opacity=.2] (F1) -- (U2) -- (U1) -- cycle;
            \draw [thick,opacity=.2] (U2) -- (U0) -- (F0) -- cycle;
            \draw [thick,opacity=.2] (F0) -- (U0) -- (F2) -- cycle;
            \draw [thick,opacity=.2] (U0) -- (F2) -- (U1) -- cycle;
            \node at (U0)[circle,fill,inner sep=1.7pt,opacity=1]{};
            \node at (U1)[circle,fill,inner sep=1.7pt,opacity=1]{};
            \node at (U2)[circle,fill,inner sep=1.7pt,opacity=1]{};
            \node at (F0)[circle,fill,inner sep=1.7pt,opacity=1,color=red]{};
            \node at (F1)[circle,fill,inner sep=1.7pt,opacity=1,color=red]{};
            \node at (F2)[circle,fill,inner sep=1.7pt,opacity=1]{};
        \end{scope}
      \end{scope}
    \end{tikzpicture}
    }
\vspace{-3em}
    \vspace{10pt}
    \caption{$\mathcal{U}+\mathcal{F}$ Newton polytope}
    \label{fig:tri1_1}
  \end{subfigure}
  \hfill
  \begin{subfigure}[b]{0.4\textwidth}
  \centering
          \resizebox{\textwidth}{!}{
    \begin{tikzpicture}[rotate around x=-90, rotate around z=-18, scale=4]
    \begin{scope}
    \draw[->] (0,0,0) -- (1.5,0,0) node[below left] {$x_1$};
    \draw[->] (0,0,0) -- (0,1.5,0) node[right] {$x_3$};
    \draw[->] (0,0,0) -- (0,0,1.3) node[above] {$x_2$};
    \begin{scope}[fill opacity=.5, draw opacity=.6, text opacity=1]
        %
        \draw [thick,fill=blue,fill] 
        (1,0,0) coordinate [label={[anchor=north west,xshift=-5pt]below:}] (U0) 
        -- (0,0,1) coordinate [label=above right:] (U1) 
        -- (0,1,0) coordinate [label={right:}] (U2) 
        -- cycle;
        %
        \draw [thick,fill=green,fill] 
        (1,1,0) coordinate [label={[anchor=north west,xshift=-5pt]below:}] (F0) 
        -- (0,1,1) coordinate [label=above right:] (F1) 
        -- (1,0,1) coordinate [label={right:}] (F2) 
        -- cycle;
        \draw [thick, dotted, color=red, opacity=0.3] (U1) -- (F1);
        \draw [thick, dotted, , opacity=0.3] (F2) -- (U1);
        \draw [thick, dotted, color=red, opacity=0.3] (U0) -- (F0);
        \draw [thick, dotted, opacity=0.3] (U0) -- (F2);
        \draw [thick, dotted, color=red, opacity=0.3] (U2) -- (F0);
        \draw [thick, dotted, color=red, opacity=0.3] (F1) -- (U2);
        \draw [thick, dotted, opacity=0.3] (U2) -- (F2);
        \node at (U0)[circle,fill,inner sep=1pt,opacity=1]{};
        \node at (U1)[circle,fill,inner sep=1pt,opacity=1]{};
        \node at (U2)[circle,fill,inner sep=1pt,opacity=1]{};
        \node at (F0)[circle,fill,inner sep=1pt,opacity=1,color=red]{};
        \node at (F1)[circle,fill,inner sep=1pt,opacity=1,color=red]{};
        \node at (F2)[circle,fill,inner sep=1pt,opacity=1]{};
        %
        \draw [fill=gray,fill opacity=0.2] 
        (0,1/2,1) coordinate [label={[anchor=south,yshift=4pt]$U_2 F_2$}] (U1F1) 
        -- (1/2,0,1) coordinate [label={below left:$U_2 F_3$}] (U1F2)
        -- (1,0,1/2) coordinate [label={[anchor=north,yshift=-7pt]:$U_1 F_3$}] (U0F2) 
        -- (1,1/2,0) coordinate [label={below right:$U_1 F_1$}] (U0F0) 
        -- (1/2,1,0) coordinate [label={above right:$U_3 F_1$}] (U2F0) 
        -- (0,1,1/2) coordinate [label={[anchor=south,yshift=6pt]:$U_3 F_2$}] (U2F1) 
        -- cycle;
        \node at (U1F1)[circle,fill,inner sep=1.5pt,opacity=1,color=red]{};
        \node at (U1F2)[circle,fill,inner sep=1.5pt,opacity=1,color=gray]{};
        \node at (U0F2)[circle,fill,inner sep=1.5pt,opacity=1,color=gray]{};
        \node at (U0F0)[circle,fill,inner sep=1.5pt,opacity=1,color=red]{};
        \node at (U2F0)[circle,fill,inner sep=1.5pt,opacity=1,color=red]{};
        \node at (U2F1)[circle,fill,inner sep=1.5pt,opacity=1,color=red]{};
        \node at (1/2,1/2,1/2)[circle,fill,inner sep=1.5pt,opacity=1,color=gray,label={[anchor=east]:$U_3 F_3$}]{};
    \end{scope}
    \end{scope}
    \end{tikzpicture}
    }
\vspace{-3em}
    \vspace{10pt}
    \caption{$\mathcal{U}$, $\mathcal{F}$ and $\mathcal{UF}$ faces}
    \label{fig:tri1_2}
  \end{subfigure}
  \begin{subfigure}[b]{0.38\textwidth}
      \vspace{1em}
\resizebox{!}{\textwidth}{
\begin{tikzpicture}[rotate around x=-90, rotate around z=-90, scale=4]
  \begin{scope}
    \draw[->] (0,0,0) -- (1.3,0,0) node[below left] {$x_1$};
    \draw[->] (0,0,0) -- (0,1.3,0) node[right] {$x_3$};
    \draw[->] (0,0,0) -- (0,0,1.3) node[above] {$x_2$};
    \begin{scope}[fill opacity=.5, draw opacity=.6, text opacity=1]
        %
        \draw [fill=gray,fill opacity=0.3] 
        (0,1/2,1) coordinate [label={above:$U_2 F_2$}] (U1F1) 
        -- (1/2,0,1) coordinate [label={above left:$U_2 F_3$}] (U1F2)
        -- (1,0,1/2) coordinate [label={left:$U_1 F_3$}] (U0F2) 
        -- (1,1/2,0) coordinate [label={below:$U_1 F_1$}] (U0F0) 
        -- (1/2,1,0) coordinate [label={right:$U_3 F_1$}] (U2F0) 
        -- (0,1,1/2) coordinate [label=above right:$U_3 F_2$] (U2F1) 
        -- cycle;

        \node at (U1F1)[circle,fill,inner sep=1.7pt,opacity=1,color=red]{};
        \node at (U0F2)[circle,fill,inner sep=1.7pt,opacity=1,color=gray]{};
        \node at (U1F2)[circle,fill,inner sep=1.7pt,opacity=1,color=gray]{};
        \node at (U2F1)[circle,fill,inner sep=1.7pt,opacity=1,color=red]{};
        \node at (U2F0)[circle,fill,inner sep=1.7pt,opacity=1,color=red]{};
        \node at (U0F0)[circle,fill,inner sep=1.7pt,opacity=1,color=red]{};
        \node at (1/2,1/2,1/2)[circle,fill,inner sep=1pt,opacity=1,color=gray,label=above:$U_3 F_3$]{};

        \draw[->] (1/2,1/2,1/2) -- (7/16,5/8,7/16)  node[right, text width=5em, color=black] {$x_2^\prime$};
        \draw[->] (1/2,1/2,1/2) -- (5/8,1/2,3/8)  node[right, text width=5em, color=black] {$x_1^\prime$};

    \end{scope}
  \end{scope}
\end{tikzpicture} 
}
\vspace{-3em}
\vspace{10pt}
\caption{$\mathcal{UF}$ face}
\label{fig:tri1_3}
\end{subfigure}
  \hfill
  \begin{subfigure}[b]{0.45\textwidth}
      \vspace{1em}
\resizebox{\textwidth}{!}{
  \begin{tikzpicture}[rotate around x=-90, rotate around z=-0, scale=3]
  \begin{scope}
    \draw[->] (0,0,0) -- (0.7,0,0) node[below left] {$x_1^\prime$};
    \draw[->] (0,0,0) -- (0,0.7,0) node[right] {$x_2^\prime$};
    \draw[->] (0,0,0) -- (0,0,0.6) node[above] {$\lambda$};
    \begin{scope}[fill opacity=.8, draw opacity=.6, text opacity=1]
        %
        \draw [fill opacity=0] 
        (-1,0,1) coordinate [label=left:$U_2 F_2$] (U2F2) 
        -- (-1/2,{sqrt(3)/2},1) coordinate [label=above left:$U_3 F_2$] (U3F2)
        -- (1/2,{sqrt(3)/2},1) coordinate [label=above right:$U_3 F_1$] (U3F1) 
        -- (1,0,1) coordinate [label=above right:$U_1 F_1$] (U1F1)
        -- cycle;
        \draw [fill opacity=0] 
        (U2F2)
        -- (U1F1)
        -- (1/2,{-sqrt(3)/2},0) coordinate [label=below right:$U_1 F_3$] (U1F3) 
        -- (-1/2,{-sqrt(3)/2},0) coordinate [label={below:$U_2 F_3$}] (U2F3) 
        -- cycle;
        \draw [fill=red,fill opacity=0.4] 
        (U3F2) coordinate
        -- (U3F1) coordinate
        -- (0,0,0) coordinate (U3F3)
        -- cycle;
        \draw [fill=green,fill opacity=0.4] 
        (U2F2) coordinate
        -- (U3F2) coordinate
        -- (U3F3) coordinate [label={[anchor=north west,xshift=3pt]:$U_3 F_3$}] 
        -- (U2F3) coordinate
        -- cycle;
        \draw [fill=lime,fill opacity=0.4]
        (U3F1) coordinate
        -- (U1F1) coordinate 
        -- (U1F3) coordinate
        -- (U3F3) coordinate
        -- cycle;
        \draw [fill=blue,fill opacity=0.3] 
        (U3F3)
        -- (U1F3) 
        -- (U2F3) 
        -- cycle;
        %
        \draw [opacity=0.7,dashed] 
        (-1,0,0) coordinate [label=left:$U_2 F_2^\prime$] (U2F2p) 
        -- ({-1/2},{sqrt(3)/2},0) coordinate [label={[anchor= east,xshift=-7pt,yshift=4pt]below:$U_3 F_2^\prime$}] (U3F2p)
        -- (1/2,{sqrt(3)/2},0) coordinate [label=above right:$U_3 F_1^\prime$] (U3F1p) 
        -- (1,0,0) coordinate [label=above right:$U_1 F_1^\prime$] (U1F1p) 
        -- (1/2,{-sqrt(3)/2},0) coordinate (UF3p) 
        -- (-1/2,{-sqrt(3)/2},0) coordinate (UF2p) 
        -- cycle;
        \draw [color=gray, dashed] (U2F2) -- (U2F2p);
        \draw [color=gray, dashed] (U3F2) -- (U3F2p);
        \draw [color=gray, dashed] (U3F1) -- (U3F1p);
        \draw [color=gray, dashed] (U1F1) -- (U1F1p);
        \node at (U2F2p)[circle,fill,inner sep=1.7pt,opacity=1,color=pink]{};
        \node at (U3F2p)[circle,fill,inner sep=1.7pt,opacity=1,color=pink]{};
        \node at (U3F1p)[circle,fill,inner sep=1.7pt,opacity=1,color=pink]{};
        \node at (U1F1p)[circle,fill,inner sep=1.7pt,opacity=1,color=pink]{};
        \node at (U2F2)[circle,fill,inner sep=1.7pt,opacity=1,color=red]{};
        \node at (U3F2)[circle,fill,inner sep=1.7pt,opacity=1,color=red]{};
        \node at (U3F1)[circle,fill,inner sep=1.7pt,opacity=1,color=red]{};
        \node at (U1F1)[circle,fill,inner sep=1.7pt,opacity=1,color=red]{};
        \node at (U1F3)[circle,fill,inner sep=1.7pt,opacity=1,color=gray]{};
        \node at (U2F3)[circle,fill,inner sep=1.7pt,opacity=1,color=gray]{};
        \node at (U3F3)[circle,fill,inner sep=1.7pt,opacity=1,color=gray]{};
    \end{scope}
    \draw[-{Stealth},color=blue] (0,-1/2,0) -- (0,-1/2,1/4) node[above left,yshift=-10pt,color=blue] {$\boldsymbol{v}_H$};
    \draw[-{Stealth},color=blue] (-1/2,0,1/2) -- ({-1/2+2/7*1/sqrt(3)},{-2/21},{1/2+2/7*1/sqrt(3)}) node[above left,yshift=-3pt,color=blue] {$\boldsymbol{v}_{C_1}$};
    \draw[-{Stealth},color=blue] (1/2,0,1/2) -- ({1/2-32/271*sqrt(3)},{-8/271},{1/2+20/271*sqrt(3)}) node[above right,xshift=13pt,yshift=-10pt,color=blue] {$\boldsymbol{v}_{C_2}$};
    \draw[-{Stealth},color=blue] (0,{sqrt(3)/4},1/2) -- ({0},{-1/5+sqrt(3)/4},{1/2+sqrt(3)/10}) node[above,color=blue] {$\boldsymbol{v}_{S}$};
  \end{scope}
\end{tikzpicture}
}
\vspace{-3em}
  \vspace{10pt}
  \caption{Facets of $\mathcal{UF}$ Newton polytope}
  \label{fig:tri1_4}
  \end{subfigure}
\caption{The geometry of the one-loop triangle graph $G_3$. The black (red) dots indicate that the vertices correspond to a term with an overall factor of $\lambda^0$ ($\lambda^1$). In figures~\ref{fig:tri1_1} and \ref{fig:tri1_2}, the blue face is defined by the exponents of the $\mathcal{U}$ polynomial and the green face is defined by the exponents of the $\mathcal{F}$ polynomial. Figures~\ref{fig:tri1_2} and \ref{fig:tri1_3} show how the \emph{Cayley trick} can be used to obtain the $\mathcal{U}\mathcal{F}$ polytope by slicing the $\mathcal{U} + \mathcal{F}$ polytope. In figure~\ref{fig:tri1_4} we display the facets of the $\mathcal{U}\mathcal{F}$ polytope corresponding to the different regions, as we will argue later, the blue facet corresponds to a hard region $H$, the red facet corresponds to a soft region $S$, and the green and yellow facets correspond to collinear regions $C_1$ and $C_2$, respectively.
}
\label{fig:tri1}
\end{figure}

Before concluding, let us emphasise once more that the geometric approach we just described is not guaranteed in general to identify all regions. The regions it identifies are only those corresponding to facets, i.e. those associated with a scaling vector,  $\x\to \lambda^{\boldsymbol{u}_R} \x$,  such that for $\lambda\to 0$ the singularity occur at the boundary of the domain of integration. These singularities are therefore all of the \emph{endpoint} type.
Other potential singularities, associated with cancellation between terms in ${\cal P}$, are non-endpoint, and would not be identified as potential regions in this approach. 

We stress that there is a wide class of region expansions for which scaling vectors are sufficient to classify all the required regions. First of all, this class encompasses all the kinematic limits that can be approached in  Euclidean kinematics. In such cases all the coefficients in the Lee-Pomeransky polynomial are positive definite, so the only way for an overall scaling with $\lambda$ to appear is by having homogeneous scaling of every leading monomial in the sum. When an expansion cannot be associated with Euclidean kinematics, the coefficients in the Lee-Pomeransky polynomial may have different signs. In this case a region could be associated to cancellations between different terms, e.g. the Glauber regions in the Regge limit \cite{JtzSmnSmn12,SmnvSmnSmv19,AnthnrySkrRmn19}. However, there are also expansions, for which all the regions are given by facets even though the coefficients cannot always be chosen to have equal signs. It is generally believed that the on-shell expansion for wide-angle scattering, on which we focus here, belongs to this category.
As an example, in appendix~\ref{appendix-possible_regions_nonplanar_doublebox_graph} we examine the nonplanar double-box graph, where the coefficients are necessarily of different signs in the on-shell limit. We show that regions due to cancellations do not emerge in this  case.

\subsection{Region vectors from the Landau equations}
\label{region_vectors_from_LE}

The Newton polytope approach provides us with a systematic way to determine the regions for a given Feynman graph $G$. In this section, focusing on wide-angle scattering, we endow the regions that appear in the expansion to any order in~$\lambda$ with a physical interpretation and relate them to the solutions of the Landau equations. The proposition is that for the on-shell expansion in wide-angle scattering, the general form for the region vectors $\v_R^{}$ is
\begin{align}
\label{region_vector_wideangle_scattering}
\begin{split}
    &\v_R= (u_{R,1},u_{R,2},\dots,u_{R,N};1),\qquad u_{R,e}\in \{0,-1,-2\},\\
    &\qquad u_{R,e}=0 \quad\Leftrightarrow\quad e\in H \\
    &\qquad u_{R,e}=-1 \quad\Leftrightarrow\quad e\in J\equiv \cup_{i=1}^K J_i\\
    &\qquad u_{R,e}=-2 \quad\Leftrightarrow\quad e\in S 
\end{split}
\end{align}
meaning that each edge $e$ falls into one of three possible categories, hard, jet and soft, and moreover, the region vectors must conform with the separation of the graph to subgraphs as in figure~\ref{classical_picture}, which manifests particular partitions of the set of propagators into $H$ (hard), $J_1,\dots,J_K$ (jets), and $S$ (soft). We will refer to all the Lee-Pomeransky parameters $x_e$ where $e \in H$ as $x^{[H]}$. Similarly, we use $x^{[J]}$ and $x^{[S]}$ to denote, respectively, parameters with $e \in J$ and $e \in S$.
Thus for $\lambda\to 0$, eq.~(\ref{region_vector_wideangle_scattering}) implies the following scaling of the Lee-Pomeransky parameters:
\begin{eqnarray}
x^{[H]}\sim \lambda^0,\ \quad x^{[J]}\sim \lambda^{-1},\quad \ x^{[S]}\sim \lambda^{-2}.
\label{LP_parameter_scales}
\end{eqnarray}
We note that the number of propagators of each type are denoted as $N(H)$, $N(J)$ and~$N(S)$, with $N(H)+N(J)+N(S) =N$.

Let us now recall the formulation of the Landau equations~\cite{Lnd59}. Considering the Feynman parameterised integral  (\ref{feynman_parameterisation_form}) with all the internal propagators massless, the combined denominator function $\mathcal{D}$ reads
\begin{eqnarray}
\mathcal{D}( k,p,q;\alpha)= \sum_{e\in G} \alpha_e \left(-l_e^2(k,p,q) -i\varepsilon\right).
\label{feynman_parameterisation_denominator_massless}
\end{eqnarray}
The Landau equations are then:
\begin{subequations}
\label{landau_equations}
\begin{align}
\label{landau_equations_I}
&\alpha_el_e^2(k,p,q) =0\qquad \forall e\in G \\
&\displaystyle{ \frac{\partial}{\partial k_a}} \mathcal{D} \left( k,p,q;\alpha \right) =0 \qquad \forall a\in\{1,\dots,L\}\,.
\label{landau_equations_II}
\end{align}
\end{subequations}
Here the first Landau equation, (\ref{landau_equations_I}), states that each propagator of $G$ is either on-shell, or associated with a vanishing Feynman parameter. This implies in particular that $\mathcal{D}=0$ and the integrand diverges. The second Landau condition, eq.~(\ref{landau_equations_II}), requires in addition that the integration contour would be pinched, that is, it is trapped between singularities. Each solution of this set of equations identifies a specific \emph{pinch surface} $\sigma$ of the Feynman integral $\I$ in eq.~(\ref{feynman_parameterisation_form}), where infrared singularities may arise. Below we explain the grounds for the proposition in eq.~(\ref{region_vector_wideangle_scattering}) (or eq.~(\ref{LP_parameter_scales})).

The hard region, where all line momenta are off-shell, is defined by accounting for the scaling of the external Mandelstam invariants with respect to the expansion parameter $\lambda$ in the integrand, without modifying the Feynman parameters, i.e. $\alpha^{[H]}\sim \lambda^0$. The region vector~in eq.~(\ref{region_vector_wideangle_scattering}) is then:
\begin{eqnarray}
\boldsymbol{v}_H^{}= (\underset{N}{\underbrace{0,0, \dots, 0}};1).
\label{hard_vector}
\end{eqnarray}
This region does not correspond to a solution of the Landau equations. Interpreting this region according to figure~\ref{classical_picture} amounts to $N(J)=N(S)=0$.

Let us now discuss the region vectors of the form (\ref{region_vector_wideangle_scattering}) for which $N(J)+N(S)\geqslant 1$. We will refer to them as \emph{infrared region vectors}, and show below that the corresponding regions of $\I$ in eq.~(\ref{feynman_parameterisation_form}) can be viewed as describing the specific scaling of the Feynman parameters $\boldsymbol{\alpha}$ in the vicinity of a solution of the Landau equations.

In detail, we take the momenta $p_i^\mu$ slightly off-shell, i.e. $p_i^2\sim \lambda Q^2$ as in eq.~(\ref{wideangle_scattering_kinematics}), and consider the neighbourhood of the pinch surface $\sigma$. The three relevant scalings of loop momenta in this neighbourhood are known to be~\cite{Stm78I,LbyStm78,Cls20,Ma20, EdgStm15}
\begin{subequations}
\label{infrared_region_momentum_scaling}
\begin{align}
\label{infrared_region_momentum_scaling_H} &\text{Hard: }k_H^\mu= (k_H^t,k_H^x,k_H^y,k_H^z) \sim Q (1,1,1,1),\\
\label{infrared_region_momentum_scaling_J} &\text{Jet: }k_{J_i}^\mu= \left(k_{J_i}\cdot \overline{\beta}_i,\ k_{J_i}\cdot \beta_i,\ k_{J_i}\cdot \beta_{i\perp}\right) \sim Q\left(1,\lambda,\lambda^{1/2}\right),\\
\label{infrared_region_momentum_scaling_S}        &\text{Soft: }k_S^\mu =(k_S^t,k_S^x,k_S^y,k_S^z) \sim Q (\lambda,\lambda,\lambda,\lambda).
\end{align}
\end{subequations}
In the scaling of the jet loop momenta, we have used light-cone coordinates, where $\beta_i^\mu$ is a null vector in the direction of the jet $p_i^\mu$, defined as $\beta_i^\mu=\frac{1}{\sqrt{2}}(1,\widehat{v}_i)$ where $\widehat{v}_i$ is the unit three-velocity of the jet. For each $\beta_i^\mu$, we also define $\overline{\beta}_i^\mu \equiv \frac{1}{\sqrt{2}} (1, -\widehat{v}_i)$, so that $\beta_i \cdot \overline{\beta}_i =1$.\footnote{For example, suppose the $i$-th jet is in the direction of the $z$-axis, then $\beta_i^\mu= (\beta_{i}^t,\beta_{i}^x,\beta_{i}^y,\beta_{i}^z)= (1,0,0,1)/\sqrt{2}$ and $\overline{\beta}_i^\mu=(1,0,0,-1)/\sqrt{2}.$ For any vector $k^\mu$, $k\cdot \overline{\beta}_i= (k^t+k^z)/\sqrt{2}$ and $k\cdot \beta_i= (k^t-k^z)/\sqrt{2}$.} For a given graph~$G$ and a region~$R$, the scaling of the loop momenta according to eq.~(\ref{infrared_region_momentum_scaling}) translates uniquely into the scaling of each of the line momenta of $G$, that is, each edge $l_e^\mu$ is either soft, jet-like or hard, as in eq.~(\ref{infrared_region_momentum_scaling}). The precise relation has been illustrated in section 3 of ref.~\cite{Ma20}.

From eq.~(\ref{infrared_region_momentum_scaling}) we can therefore find the scaling of the virtuality of each propagator:
\begin{equation}
\label{virtuality_scaling}
(l_e^{[H]})^2 \sim\lambda^0;\qquad 
(l_e^{[J]})^2 \sim\lambda^1;\qquad
(l_e^{[S]})^2 \sim\lambda^2\,.
\end{equation}
From the Schwinger representation of the propagator
\begin{equation}
\label{Schwinger_param}
\frac{1}{\left(l_e^2(k,p,q)\right)^{\nu_e}} = \int_0^\infty \frac{d \tilde{x}_e}{\tilde{x}_e}\, \frac{\tilde{x}_e^{\nu_e}}{\Gamma(\nu_e)} e^{
-\tilde{x}_e l_e^2(k,p,q)},
\end{equation}
one sees that the scaling limit of the Schwinger parameter~$\tilde{x}_e$ is related to the scaling of the corresponding propagator virtuality as (see ref.~\cite{Engel:2022kde} for example):
\begin{equation}
\label{x_scaling_rule}
x_e \sim \tilde{x}_e \sim \frac{1}{l_e^2(k,p,q)} \sim \lambda^{u_{R,e}}.
\end{equation}
Here we have further stated that the very same scaling applies also to the Lee-Pomeransky parameter~$x_e$. This is shown in appendix~\ref{appendix-schwinger-lp}, where we relate the Schwinger and Lee-Pomeransky representations and demonstrate that  
the region vectors $\boldsymbol{u}_R$ in the two are the same. 
From eqs.~(\ref{virtuality_scaling}) and (\ref{x_scaling_rule}) we can immediately read off the scaling of the Lee-Pomeransky parameters for hard ($H$), jet ($J$) and soft ($S$) propagators as summarised by eq.~(\ref{LP_parameter_scales}).
With this we have
established the scaling rule of eq.~(\ref{LP_parameter_scales})  based on the scaling of the momentum modes~(\ref{infrared_region_momentum_scaling}).

Using the relation in eq.~(\ref{Feynman_LP_relation}) we can also deduce the scaling of the Feynman parameter~$\alpha_e$.
To this end we must distinguish between two cases: infrared regions which feature soft propagators ($N(S)\geqslant 1$) versus those that have only jet and hard ones ($N(S)=0$). Using eqs.~(\ref{LP_parameter_scales}) and (\ref{Feynman_LP_relation}) we find
\begin{subequations}
\label{FP_scaling_any}
\begin{eqnarray}
N(J)\geqslant 1,\ N(S)=0:\quad&& \alpha_e^{[H]}\sim \lambda^{1},\quad \alpha_e^{[J]}\sim \lambda^{0};
\label{FP_scaling_hard_jet}\\
N(J),\ N(S)\geqslant 1:\quad&& \alpha_e^{[H]}\sim \lambda^{2},\quad  \alpha_e^{[J]}\sim \lambda^{1},\quad \alpha_e^{[S]}\sim \lambda^0.
\label{FP_scaling_hard_jet_soft}
\end{eqnarray}
\end{subequations}
Note that in either of these cases the propagators with the smallest virtuality (\emph{cf.}~eq.~(\ref{virtuality_scaling})) are of order ${\cal O}(\lambda^0)$, which is necessary for the condition $\sum_{e\in G} \alpha_e=1$ to be satisfied.  

In the remainder of this subsection we investigate the relation between pinch surfaces and infrared regions of the MoR. We begin in section~\ref{sec:neighbourhood} by introducing the concept of a neighbourhood of the pinch surface associated with the on-shell expansion. 
This allows us to study what constraints are imposed on the structure of infrared regions, making use of the Coleman-Norton analysis. Then in section~\ref{sec:proposition_1} we formulate our general proposition for the structure of the region vectors and discuss its applicability.

\subsubsection{The neighbourhood of pinch surfaces and infrared regions
\label{sec:neighbourhood}
}

The scaling of the momenta and the Feynman parameters discussed above naturally leads us to a definition of a \emph{neighbourhood of a pinch surface} associated with the expansion in~$\lambda$ by the following modified Landau equations:
\begin{subequations}
\label{landau_equations_modification}
\begin{align}
\label{landau_equations_modification_I}
&\alpha_e l_e^2(k,p,q) \sim\lambda^p \qquad \forall e\in G \\
&\displaystyle{ \frac{\partial}{\partial k_a}} \mathcal{D} \left( k,p,q;\alpha \right)\lesssim \lambda^{1/2} \qquad \forall a\in\{1,\dots,L\}\,,
\label{landau_equations_modification_II}
\end{align}
\end{subequations}
where $p$ in eq.~(\ref{landau_equations_modification_I}) is fixed according to the scaling of the line momentum with the smallest virtuality $l_{e'}^2\sim \lambda^p$: according to eq.~(\ref{virtuality_scaling}), for infrared regions $R$ which feature soft propagators,  we have $p=2$,  while for regions that feature only jet and hard propagators, $p=1$. 
In eq.~(\ref{landau_equations_modification_II}) the notation $\lesssim \lambda^{1/2}$ should be interpreted as allowing for scaling $\sim \lambda^q$ with $q \geqslant 1/2$. This is the weakest condition to be placed on the derivative, so as to reproduce eq.~(\ref{landau_equations_II}) in the limit $\lambda\to 0$.

While the Landau equations (\ref{landau_equations}) describe the pinch surfaces themselves, i.e. the manifolds where the integral is singular, the modified ones in eq.~(\ref{landau_equations_modification}) aim to capture the vicinity of pinch surfaces, describing the scaling of momenta and Feynman parameters when approaching the singularity as $\lambda$ tends to zero. 
The precise modification of the first Landau equation~(\ref{landau_equations_modification_I}) immediately follows from the discussion above (eqs.~(\ref{virtuality_scaling}) and (\ref{FP_scaling_any})). 

Let us now discuss the interpretation of eq.~(\ref{landau_equations_modification}). A first, key observation is that for a given pinch surface that corresponds to an infrared region of $\I(G)$, \emph{all} the terms $\alpha_e l_e^2$ in the combined denominator function ${\cal D}$ of eq.~(\ref{feynman_parameterisation_denominator_massless}) are characterised by a \emph{uniform} scaling.
From the geometric formulation of the MoR, excluding the possibility that some~$\alpha_{e'}$ is strictly zero\footnote{We note that $\alpha_{e'}=0$ may correspond to a region of a simpler graph where the propagator $e'$ has been contracted.}, amounts to excluding lower-dimensional faces of the Newton polytope of~$G$, which lead to vanishing scaleless integrals.

Eq.~(\ref{landau_equations_modification_II}) requires that the derivatives vanish as $\lambda^{1/2}$ (or faster) as $\lambda \to 0$, and places constraints on what configuration of hard, jet and soft momenta may constitute a  region. As we shall see below, these constraints relate to the Coleman-Norton interpretation of pinch surfaces~\cite{ClmNtn65}.

Consider first the case $N(J)\geqslant 1,\ N(S)=0$, where the denominator function of eq.~(\ref{feynman_parameterisation_denominator}) can be rewritten as: 
\begin{align}
\label{feynman_parameterisation_denominator_hard_jet}
\mathcal{D}= -\sum_{e\in H} \alpha_e^{[H]} \big( l_e^{[H]} \big)^2 - \sum_{e\in J} \alpha_e^{[J]} \big( l_e^{[J]} \big)^2 -i\epsilon.
\end{align}
Given a jet $J_i$ with the external momentum $p_i^\mu\sim Q\beta_i^\mu$, where the null vector $\beta_i^\mu$ is defined below eq.~(\ref{infrared_region_momentum_scaling}), then for any jet loop momentum $k_{a\mu}$ admitting (\ref{infrared_region_momentum_scaling_J}), the second Landau equation (\ref{landau_equations_modification_II}), $\partial D/\partial k_{a\mu}\lesssim \lambda^{1/2}$, implies
\begin{eqnarray}
\sum_{e\in H} 2\eta_{ae}^{} \alpha_e^{[H]} l_e^{[H]\mu} + \sum_{e\in J_i} 2\eta_{ae}^{} \alpha_e^{[J_i]} l_e^{[J_i]\mu} \lesssim \lambda^{1/2} \quad \Rightarrow \quad \sum_{e\in J_i} 2\eta_{ae}^{} \alpha_e^{[J_i]} l_e^{[J_i]\mu}\lesssim \lambda^{1/2}.
\label{second_LE_hard_jet}
\end{eqnarray}
The coefficients $\eta_{ae}^{}$ are either $0$ or $\pm 1$, depending on how the loop momentum $k_{a}$ enters the line momenta $l_e^{[H]}$ or $l_e^{[J_i]}$.
In deriving the second inequality, we have used the fact that $\alpha^{[H]}\sim \lambda$. We stress that the condition in eq.~(\ref{second_LE_hard_jet}) must be satisfied for any component~$\mu$, and below we analyze each component separately.

For the small lightcone component of any jet line momentum, $l_e^{[J_i]}\cdot \beta_i\ (\sim \lambda)$, and the transverse one \hbox{$l_e^{[J_i]}\cdot \beta_{i\perp}\ (\sim \lambda^{1/2})$}, eq.~(\ref{second_LE_hard_jet}) is automatically satisfied due to the scaling of the momentum components in eq.~(\ref{infrared_region_momentum_scaling_J}).
For the large lightcone component, \hbox{$l_e^{[J_i]}\cdot \overline{\beta}_i\ (\sim \lambda^0)$}, eq.~(\ref{second_LE_hard_jet}) provides a genuine constraint, as it requires cancellations between different terms in the sum, up to terms of $\mathcal{O}(\lambda^{1/2})$.
As we have pointed out above, this constraint corresponds to the Coleman-Norton analysis: each jet vertex $a$ can be associated to a ``position vector'' $x_a^\mu$, where any jet line $e$ in eq.~(\ref{second_LE_hard_jet}) can be seen to represent a massless particle propagating freely between the vertices $a$ and $b$, such that $x_b^\mu-x_a^\mu = \alpha_e l_e^\mu +\mathcal{O}(\lambda^{1/2})$. This sets a definite order of the jet vertices along the jet momentum flow in $J_i$, from the hard vertices to the final on-shell parton $i$. This allows us to view the propagation of jets as a classical process, constraining the configurations infrared regions may have.

One consequence of the Coleman-Norton analysis is that, after two jets depart from a hard subgraph, at vertices $x_a^\mu$ and $x_b^\mu$ respectively, $x_b^\mu-x_a^\mu \sim \lambda^0$, cannot be connected by propagators $e$ such that $\alpha_e l_e^{\mu} \ll \lambda^0$. Thus, a hard propagator connecting them is excluded. For example, in the $1\to 3$ decay process with external momenta $q_1^\mu$, $q_2^\mu$, $p_1^\mu$ and $p_2^\mu$, both the configurations in figures~\ref{Coleman_Norton_restriction_example}(i) and \ref{Coleman_Norton_restriction_example}(ii) are allowed by eq.~(\ref{FP_scaling_hard_jet}). However, figure~\ref{Coleman_Norton_restriction_example}(ii) is forbidden by the Coleman-Norton analysis, which implies that after the two jets $J_1$ and~$J_2$ depart from the hard subgraph $H_1$, they should propagate in different directions and should not interact at any other hard vertices again. Disconnected hard subgraphs are thus excluded.
\begin{figure}[t]
\centering
\begin{subfigure}[b]{0.32\textwidth}
\centering
\resizebox{\textwidth}{!}{
\begin{tikzpicture}[line width = 0.6, font=\large, mydot/.style={circle, fill, inner sep=.7pt},rotate=90,transform shape]

\node[draw=Black,circle,minimum size=4cm] (h) at (6,6){};

\node at (h) [rotate=-90] {$G$};


\node (q1) at (5.8,9.5) {};
\node (q1p) at (6.2,9.5) {};
\draw (q1) edge [color=Blue] (h) node [] {};
\draw (q1p) edge [color=Blue] (h) node [below,rotate=-90,xshift=-5pt] {$q_1$};

\node (qn) at (5.8,2.5) {};
\node (qnp) at (6.2,2.5) {};
\draw (qn) edge [color=Blue] (h) node [] {};
\draw (qnp) edge [color=Blue] (h) node [below,rotate=-90,xshift=5pt] {$q_2$};

\node (p1) at (2.6,3) {};
\node (pn) at (9.4,3) {};
\draw (p1) edge [color=LimeGreen] (h) node [right,rotate=-90] {$p_2$};
\draw (pn) edge [color=Green] (h) node [right,rotate=-90] {$p_1$};

\end{tikzpicture}
}
\vspace{-2em}
\caption{}
\end{subfigure}
\hfill
\setcounter{subfigure}{0}
\renewcommand\thesubfigure{\roman{subfigure}}
\begin{subfigure}[b]{0.32\textwidth}
\centering
\resizebox{\textwidth}{!}{
\begin{tikzpicture}[line width = 0.6, font=\large, mydot/.style={circle, fill, inner sep=.7pt},rotate=90,transform shape]

\node[draw=Blue,circle,minimum size=1cm,fill=Blue!50] (h) at (6,8){};
\node[draw=LimeGreen,ellipse,minimum height=3cm, minimum width=1.1cm,fill=LimeGreen!50,rotate=-35] (j1) at (4,5){};
\node[draw=Green,ellipse,minimum height=3cm, minimum width=1.1cm,fill=Green!50,rotate=35] (jn) at (8,5){};

\node at (h) [rotate=-90] {$H_1$};
\node at (j1) [rotate=-90] {$J_2$};
\node at (jn) [rotate=-90] {$J_1$};

\path (h) edge [double,double distance=2pt,color=LimeGreen] (j1) {};
\path (h) edge [double,double distance=2pt,color=Green] (jn) {};

\node (q1) at (5.8,9.5) {};
\node (q1p) at (6.2,9.5) {};
\draw (q1) edge [color=Blue] (h) node [] {};
\draw (q1p) edge [color=Blue] (h) node [below,rotate=-90,xshift=-5pt] {$q_1$};

\node (qn) at (5.8,6.5) {};
\node (qnp) at (6.2,6.5) {};
\draw (qn) edge [color=Blue] (h) node [] {};
\draw (qnp) edge [color=Blue] (h) node [below,rotate=-90,xshift=5pt] {$q_2$};

\node (p1) at (2.6,3) {};
\node (pn) at (9.4,3) {};
\draw (p1) edge [color=LimeGreen] (j1) node [right,rotate=-90] {$p_2$};
\draw (pn) edge [color=Green] (jn) node [right,rotate=-90] {$p_1$};

\end{tikzpicture}
}
\vspace{-2em}
\caption{}
\label{Coleman_Norton_restriction_example_i}
\end{subfigure}
\hfill
\begin{subfigure}[b]{0.32\textwidth}
\centering
\resizebox{\textwidth}{!}{
\begin{tikzpicture}[line width = 0.6, font=\large, mydot/.style={circle, fill, inner sep=.7pt},rotate=90,transform shape]

\node[draw=Blue,circle,minimum size=1cm,fill=Blue!50] (h) at (6,8){};
\node[draw=LimeGreen,ellipse,minimum height=3cm, minimum width=1.1cm,fill=LimeGreen!50,rotate=-35] (j1) at (4,5){};
\node[draw=Blue,circle,minimum size=1cm,fill=Blue!50] (h2) at (6,4){};
\node[draw=Green,ellipse,minimum height=3cm, minimum width=1.1cm,fill=Green!50,rotate=35] (jn) at (8,5){};

\node at (h) [rotate=-90] {$H_1$};
\node at (j1) [rotate=-90] {$J_2$};
\node at (h2) [rotate=-90] {$H_2$};
\node at (jn) [rotate=-90] {$J_1$};

\path (h) edge [double,double distance=2pt,color=LimeGreen] (j1) {};
\path (h) edge [double,double distance=2pt,color=Green] (jn) {};
\path (h2) edge [double,double distance=2pt,color=LimeGreen,bend left=20] (j1) {};
\path (h2) edge [double,double distance=2pt,color=Green,bend right=20] (jn) {};

\node (q1) at (5.8,9.5) {};
\node (q1p) at (6.2,9.5) {};
\draw (q1) edge [color=Blue] (h) node [] {};
\draw (q1p) edge [color=Blue] (h) node [below,rotate=-90,xshift=-5pt] {$q_1$};

\node (qn) at (5.8,2.5) {};
\node (qnp) at (6.2,2.5) {};
\draw (qn) edge [color=Blue] (h2) node [] {};
\draw (qnp) edge [color=Blue] (h2) node [below,rotate=-90,xshift=5pt] {$q_2$};

\node (p1) at (2.6,3) {};
\node (pn) at (9.4,3) {};
\draw (p1) edge [color=LimeGreen] (j1) node [right,rotate=-90] {$p_2$};
\draw (pn) edge [color=Green] (jn) node [right,rotate=-90] {$p_1$};

\end{tikzpicture}
}
\vspace{-2em}
\caption{}
\label{Coleman_Norton_restriction_example_ii}
\end{subfigure}
\caption{The restriction of the regions from the Coleman-Norton interpretation. For a wide-angle scattering process $q_1\to p_1,p_2,q_2$ on the left-hand side, where $p_1^\mu$ and $p_2^\mu$ are on-shell, and $q_1^\mu$ and $q_2^\mu$ are off-shell, both the configurations in (i) and (ii) on the right-hand side satisfy eqs.~(\ref{infrared_region_momentum_scaling}) and (\ref{landau_equations_modification_I}), but (ii) is ruled out because it does not satisfy~(\ref{landau_equations_modification_II}).}
\label{Coleman_Norton_restriction_example}
\end{figure}
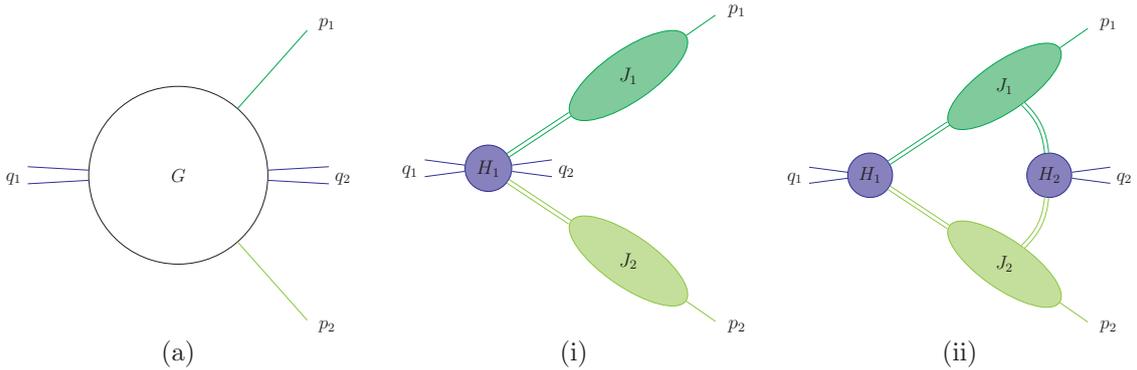

Another consequence of the Coleman-Norton analysis concerns the internal structure of any jet. Examples are provided in figure~\ref{jet_configuration_forbidden} where $(a)$ is consistent with the Coleman-Norton analysis, while $(b)$ and $(c)$ are inconsistent, and therefore cannot appear as part of an infrared region. 
Specifically, suppose two jet momenta $l_e^\mu$ ($e=1,2$) both connect two of the jet vertices $a$ and $b$, as is shown in figure~\ref{jet_configuration_forbidden}$(b)$. Then upon considering the edge $e=1$, according to the Coleman-Norton analysis, one can associate $x_a^\mu$ to $a$ and $x_b^\mu$ to $b$ such that eq.~(\ref{second_LE_hard_jet}) becomes $x_b^\mu-x_a^\mu = \alpha_1 l_1^\mu +\mathcal{O}(\lambda^{1/2})$. For the large lightcone component (suppose it is the $+$ direction), one then obtains $x_b^+>x_a^+$. However, the same reasoning works for the edge $e=2$, which yields $x_a^+>x_b^+$. This contradiction forbids the configuration of figure~\ref{jet_configuration_forbidden}$(b)$. A similar analysis can be carried out for figure~\ref{jet_configuration_forbidden}$(c)$, where the soft lines do not change the argument, since the soft momenta are negligible compared to the large lightcone component of the jet.
\begin{figure}[t]
\centering
\begin{subfigure}[b]{0.32\textwidth}
\centering
\resizebox{\textwidth}{!}{
\begin{tikzpicture}[line width = 0.6, font=\large, mydot/.style={circle, fill, inner sep=.7pt},transform shape]

\node (pi) at (0.9,8.2) {\Huge $p_i$};
\node (Ji) at (2.6,5) {\huge $J_i$};
\node (H) at (6,1) {\huge $H$};
\node (S) at (6,7) {\huge $S$};

\draw (1,9) edge [ultra thick, color=LimeGreen] (2,7.5) node [right] {};
\draw (3,6) edge [ultra thick, color=LimeGreen] (4,4.5) node [right] {};
\node[ultra thick, draw=LimeGreen,ellipse,minimum height=3cm, minimum width=1.1cm,fill=LimeGreen!66,rotate=33.7] (j1) at (2.5,6.83){};
\node[ultra thick, draw=LimeGreen,ellipse,minimum height=3cm, minimum width=1.1cm,fill=LimeGreen!66,rotate=33.7] (j1) at (4.33,4){};
\draw (4.5,3) edge [ultra thick, color=LimeGreen, bend right =10] (5.33,1.67) node [right] {};
\draw (5.16,3.5) edge [ultra thick, color=LimeGreen, bend left =10] (6,2.1) node [right] {};
\draw (4,0) edge [ultra thick, color=Blue, bend left =30] (8,2.66) node [right] {};
\draw (3,7) edge [dashed, ultra thick, color=Rhodamine, bend left =20] (5,9) node [right] {};
\draw (5,4) edge [dashed, ultra thick, color=Rhodamine, bend right =20] (7.5,5.2) node [right] {};

\path (4.9,1.9)-- node[mydot, pos=.333] {} node[mydot] {} node[mydot, pos=.666] {}(6,2.6);
\path (4,7)-- node[mydot, pos=.333] {} node[mydot] {} node[mydot, pos=.666] {}(5.5,5.2);

\end{tikzpicture}
}
\caption{}
\label{jet_configuration_forbidden_good}
\end{subfigure}
\hfill
\begin{subfigure}[b]{0.32\textwidth}
\centering
\resizebox{\textwidth}{!}{
\begin{tikzpicture}[line width = 0.6, font=\large, mydot/.style={circle, fill, inner sep=.7pt},transform shape]

\node (H) at (6,1) {\huge $H$};
\node (Ji) at (2.5,6.5) {\huge $J_i$};
\node (S) at (7.5,7) {\huge $S$};
\node (a) at (3.7,3) {\huge $a$};
\node (b) at (5.3,3.95) {\huge $b$};
\node (l1) at (3.7,4.4) {\huge $l_1$};
\node (l2) at (5.3,2.5) {\huge $l_2$};

\draw (1.5,5.5) edge [ultra thick, color=LimeGreen, bend right =10] (4.6,1) node [right] {};
\draw (3,8) edge [ultra thick, color=LimeGreen, bend left =10] (7,2.5) node [right] {};
\draw (4.05,3) edge [ultra thick, color=LimeGreen, bend right =10] (5.33,1.67) node [right] {};
\draw (5,3.95) edge [ultra thick, color=LimeGreen, bend left =10] (6,2.1) node [right] {};
\node[ultra thick, draw=LimeGreen,ellipse,minimum height=1.5cm, minimum width=1.3cm,rotate=33.7] (j1) at (4.5,3.5){};
\draw[fill,thick,color=LimeGreen] (4.05,3) circle (3pt);
\draw[fill,thick,color=LimeGreen] (5,3.95) circle (3pt);
\draw (4,0) edge [ultra thick, color=Blue, bend left =30] (8,2.66) node [right] {};
\draw (4,7) edge [dashed, ultra thick, color=Rhodamine, bend left =20] (8,9) node [right] {};
\draw (6.2,4) edge [dashed, ultra thick, color=Rhodamine, bend right =10] (9,5) node [right] {};

\draw (3.75,3.9) edge [] (4.1,4.15) node [right] {};
\draw (4,3.7) edge [] (4.1,4.15) node [right] {};
\draw (5,3.26) edge [] (4.8,2.85) node [right] {};
\draw (5.26,3) edge [] (4.8,2.85) node [right] {};

\path (1.9,4.5)-- node[mydot, pos=.333] {} node[mydot] {} node[mydot, pos=.666] {}(5.1,6.2);
\path (6,8)-- node[mydot, pos=.333] {} node[mydot] {} node[mydot, pos=.666] {}(8,5);

\end{tikzpicture}
}
\caption{}
\label{jet_configuration_forbidden_bad1}
\end{subfigure}
\hfill
\begin{subfigure}[b]{0.32\textwidth}
\centering
\resizebox{\textwidth}{!}{
\begin{tikzpicture}[line width = 0.6, font=\large, mydot/.style={circle, fill, inner sep=.7pt},transform shape]

\node (pi) at (0.9,7.2) {\Huge $p_i$};
\node (Ji) at (4.5,5) {\huge $J_i$};
\node (H) at (6,1) {\huge $H$};
\node (a) at (3.6,7.3) {\huge $a$};
\node (b) at (5.2,5.5) {\huge $b$};
\node (l1) at (5.4,6.9) {\huge $l_1$};
\node (l2) at (4.5,6.3) {\huge $l_2$};
\node (S) at (7.5,6.5) {\huge $S$};

\draw (1,8) edge [ultra thick, color=LimeGreen] (2,6.5) node [right] {};
\node[ultra thick, draw=LimeGreen,ellipse,minimum height=4cm, minimum width=1.1cm,fill=LimeGreen!66,rotate=33.7] (j1) at (3,5.1){};
\draw (3.2,4) edge [ultra thick, color=LimeGreen, bend right =10] (4.9,1.36) node [right] {};
\draw (4,4.5) edge [ultra thick, color=LimeGreen, bend left =10] (5.7,2) node [right] {};
\draw[fill,thick,color=LimeGreen] (3.4,5.5) circle (3pt);
\node[ultra thick, draw=LimeGreen,ellipse,minimum height=2cm, minimum width=1.1cm,rotate=33.7] (j1) at (4.5,6.5){};
\draw (3.4,5.5) edge [ultra thick, color=LimeGreen] (3.78,7) node [right] {};
\draw (3.4,5.5) edge [ultra thick, color=LimeGreen] (4.8,5.6) node [right] {};
\draw[fill,thick,color=LimeGreen] (3.78,7) circle (3pt);
\draw[fill,thick,color=LimeGreen] (4.8,5.6) circle (3pt);
\draw (4,0) edge [ultra thick, color=Blue, bend left =30] (8,2.66) node [right] {};
\draw (4.5,7.3) edge [dashed, ultra thick, color=Rhodamine, bend left =20] (8,9) node [right] {};
\draw (5,3.33) edge [dashed, ultra thick, color=Rhodamine, bend right =10] (9,5) node [right] {};

\draw (4.2,6.2) edge [] (3.9,6.5) node [right] {};
\draw (3.9,6.05) edge [] (3.9,6.5) node [right] {};
\draw (5,7) edge [] (5.1,6.6) node [right] {};
\draw (4.74,6.8) edge [] (5.1,6.6) node [right] {};

\path (3.4,6.3)-- node[mydot, pos=.333] {} node[mydot] {} node[mydot, pos=.666] {}(4.1,5.3);
\path (6,8)-- node[mydot, pos=.333] {} node[mydot] {} node[mydot, pos=.666] {}(8,4.5);

\end{tikzpicture}
}
\caption{}
\label{jet_configuration_forbidden_bad2}
\end{subfigure}
\caption{Some configurations of the jet $J_i$ that are allowed or forbidden by eq.~(\ref{landau_equations_modification_II}), according to the Coleman-Norton analysis (the conclusion is not affected by the presence of the soft lines). $(a)$ an allowed jet configuration. $(b)$ a configuration of $J_i$ that is forbidden, because a connected component of $J_i$ is attached to the hard process only. $(c)$ another configuration of $J_i$ that is forbidden, because the jet contains a ``tadpole'' component upon removing the soft lines.}
\label{jet_configuration_forbidden}
\end{figure}
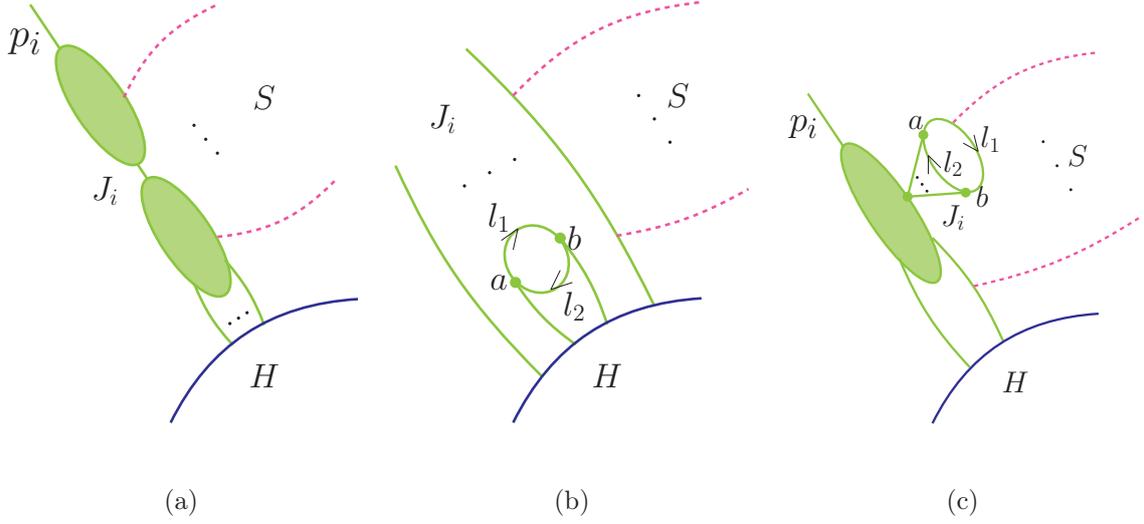

Consider now the case $N(J),\ N(S)\geqslant 1$ where the denominator function of eq.~(\ref{feynman_parameterisation_denominator}) can be rewritten as: 
\begin{eqnarray}
\mathcal{D}= -\sum_{e\in H} \alpha_e^{[H]} \big( l_e^{[H]} \big)^2 -\sum_{e\in J} \alpha_e^{[J]} \big( l_e^{[J]} \big)^2 -\sum_{e\in S} \alpha_e^{[S]} \big( l_e^{[S]} \big)^2 -i\epsilon.
\label{feynman_parameterisation_denominator_hard_jet_soft}
\end{eqnarray}
In this case, the second Landau equation (\ref{landau_equations_modification_II}) automatically holds, because it is of the following possible forms:
\begin{subequations}
\label{second_LE_hard_jet_soft}
\begin{align}
\label{second_LE_hard_jet_soft_H} &\frac{\partial D}{\partial k_{a\mu}^{[H]}}\lesssim \lambda^{1/2} \Leftrightarrow \sum_{e\in H}2\eta_{ae}^{} \alpha_e^{[H]} l_e^{[H]\mu} \lesssim \lambda^{1/2}; \\
\label{second_LE_hard_jet_soft_J} &\frac{\partial D}{\partial k_{a\mu}^{[J]}}\lesssim \lambda^{1/2} \Leftrightarrow \sum_{e\in H}2\eta_{ae}^{} \alpha_e^{[H]} l_e^{[H]\mu} +\sum_{i=1}^n \sum_{e\in J_i} 2\eta_{ae}^{} \alpha_e^{[J_i]} l_e^{[J_i]\mu} \lesssim \lambda^{1/2}; \\
\label{second_LE_hard_jet_soft_S} &\frac{\partial D}{\partial k_{a\mu}^{[S]}}\lesssim \lambda^{1/2} \Leftrightarrow \sum_{e\in H}2\eta_{ae}^{} \alpha_e^{[H]} l_e^{[H]\mu} +\sum_{i=1}^n \sum_{e\in J_i} 2\eta_{ae}^{} \alpha_e^{[J_i]} l_e^{[J_i]\mu} +\sum_{e\in S} 2\eta_{ae}^{} \alpha_e^{[S]} l_e^{[S]\mu} \lesssim \lambda^{1/2}.
\end{align}
\end{subequations}
In each of the relations above, the coefficients $\eta_{ae}^{}$ are either $0$ or $\pm 1$, depending on how the loop momentum $k_{a}$ enters the line momenta $l_e^{[H]}$, $l_e^{[J_i]}$ or $l_e^{[S]}$. The three relations in eq.~(\ref{second_LE_hard_jet_soft}) can be justified immediately based on the momentum scaling of eq.~(\ref{FP_scaling_hard_jet_soft}). Nevertheless, we can always restrict the Landau equation analysis to the subset of propagators containing the hard and the jet momenta only. (This can be demonstrated by starting with a Feynman parameterisation of these propagators, and leaving out all the soft ones.) The outcome of this analysis would lead to the same constraints we have seen in eq.~(\ref{second_LE_hard_jet}) in the case of $N(J)\geqslant0, N(S)=0$. This is equivalent to the argument used in the discussion of figure~\ref{jet_configuration_forbidden}, where soft lines can be removed when obtaining the constraints on the configuration of an infrared region from the second Landau condition.

\subsubsection{The region vectors in the on-shell expansion\label{sec:proposition_1}}

Summarising the analysis above, we propose that the solutions of the Landau equations for massless wide-angle scattering are all endpoint singularities which stand in one-to-one correspondence with the regions in the on-shell expansion associated with the facets of the Newton polytope~$\Delta^{(N+1)}[\mathcal{P}]$. This leads to the following proposition, which we present without a formal proof.

\begin{proposition}
In the on-shell expansion for any wide-angle scattering graph, all regions are given by region vectors which take the form of eq.~(\ref{region_vector_wideangle_scattering}), where the entries $0$, $-1$ and~$-2$ correspond to the propagators in the hard ($H$), jet ($J_1,\dots,J_K$) and soft ($S$) subgraphs separately. Moreover, these subgraphs meet the following basic requirements.
\begin{enumerate}
    \item $H$, to which all the off-shell external momenta $q_j^\mu$ are attached, is a connected subgraph.
    \item For each $i=1,\dots,K$, $J_i$ is a connected subgraph attached to $H$, and its only external momentum is the $p_i^\mu$ line.
    \item $S$ is a subgraph without external momenta, and each connected component of $S$ is attached to $H\cup J$, where $J\equiv \cup_{i=1}^K J_i$.
\end{enumerate}
\label{proposition-region_vectors_are_hard_and_infrared}
\end{proposition}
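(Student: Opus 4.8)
The plan is to prove the proposition in two stages. The first establishes that the entries of any infrared region vector are confined to $\{0,-1,-2\}$; the second derives the three structural requirements on $H$, $J_i$ and $S$ from the Coleman--Norton interpretation of the modified Landau equations~(\ref{landau_equations_modification}). Throughout I would rely on the identification, set up in section~\ref{region_vectors_from_LE}, between infrared regions (the lower facets of $\Delta^{(N+1)}[\mathcal{P}]$) and solutions of the Landau equations~(\ref{landau_equations}).

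For the first stage, I would start from the fact, recalled below eq.~(\ref{infrared_region_momentum_scaling}), that in the neighbourhood of any pinch surface of a massless wide-angle scattering graph each loop momentum carries exactly one of the three scalings of eq.~(\ref{infrared_region_momentum_scaling}): hard, jet, or soft. This partitions the edges of $G$ into $H$, $J=\cup_{i=1}^K J_i$ and $S$. The virtuality of each propagator then scales as in eq.~(\ref{virtuality_scaling}), and via the Schwinger/Lee--Pomeransky relation~(\ref{x_scaling_rule}) (justified in appendix~\ref{appendix-schwinger-lp}) this fixes the Lee--Pomeransky scalings of eq.~(\ref{LP_parameter_scales}), i.e. the entries $0$, $-1$, $-2$ of eq.~(\ref{region_vector_wideangle_scattering}). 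The last component being fixed to $1$ follows from eq.~(\ref{v_R_definition}) together with the fact that only lower facets carry a nontrivial $\lambda$-scaling.

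For the second stage I would invoke the Coleman--Norton picture associated with eq.~(\ref{landau_equations_modification_II}). Requirement~1 follows by excluding configurations of the type in figure~\ref{Coleman_Norton_restriction_example}(ii): two hard blobs joined only through jets would force a separation $x_b^\mu-x_a^\mu\sim\lambda^0$ to be bridged by hard propagators with $\alpha_e l_e^\mu\ll\lambda^0$, contradicting eq.~(\ref{landau_equations_modification_II}); since all off-shell legs $q_j^\mu$ attach to lines of virtuality $\mathcal{O}(\lambda^0)$, they must all lie in the single connected $H$. Requirement~2 follows from the jet ordering in eq.~(\ref{second_LE_hard_jet}): the large lightcone component imposes a strict ordering of jet vertices, forbidding the bridge and tadpole components of figure~\ref{jet_configuration_forbidden}(b),(c), while collinearity to a single $\beta_i^\mu$ forces a single external leg $p_i^\mu$. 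Requirement~3 follows because a soft line cannot carry an external momentum of virtuality $\mathcal{O}(\lambda^0)$, and because a soft component not touching $H\cup J$ would yield a scaleless integral, excluded by eq.~(\ref{eq:uf_scaleless}).

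The steps above establish \emph{necessity}: any infrared region must take the claimed form and obey the three requirements. The main obstacle is the reverse, \emph{sufficiency}: that every $H/J/S$ partition meeting the requirements genuinely produces a leading-power region, i.e. a facet of codimension~$1$ rather than a lower-dimensional face that would be scaleless by eq.~(\ref{eq:uf_scaleless}). Proving this requires showing that the leading polynomial $\mathcal{P}_0^{(R)}$ spans an $N$-dimensional subspace and that the associated integral is not scaleless, which is precisely the analysis deferred to appendix~\ref{appendix-necessity_sufficiency_requirements}. A secondary subtlety I would address by appealing to the wide-angle assumption, supported by the explicit nonplanar double-box check, is the exclusion of non-endpoint regions arising from cancellations in $\mathcal{P}$ and hence not associated with any facet.
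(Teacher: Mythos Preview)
Your outline tracks the paper's own informal justification, but you should note that the paper explicitly presents this proposition \emph{without a formal proof} (stated in the sentence immediately preceding it). The reasoning you reconstruct --- momentum modes of eq.~(\ref{infrared_region_momentum_scaling}) translated to parameter scalings via eqs.~(\ref{virtuality_scaling})--(\ref{x_scaling_rule}), together with the Coleman--Norton analysis constraining the subgraph structure --- is exactly what section~\ref{region_vectors_from_LE} uses to \emph{motivate} the proposition, not to prove it. The paper's actual support for the statement is the verification in section~\ref{more_examples_implementation}, where the graph-finding algorithm built on the proposition is checked to reproduce the Newton-polytope regions for all three-point graphs up to five loops and four-point graphs up to four loops.

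The gap you label a ``secondary subtlety'' is in fact the primary obstruction: excluding non-endpoint regions from cancellations in $\mathcal{P}$. Your appeal to the wide-angle assumption and the nonplanar double-box check (appendix~\ref{appendix-possible_regions_nonplanar_doublebox_graph}) is precisely what the paper does, but this is an expectation supported by one example, not a proof. A second unproven input is that only the three modes of eq.~(\ref{infrared_region_momentum_scaling}) occur; this is quoted from the literature, and the paper's own figure~\ref{mass_expansion_counterexample} shows it fails in other expansions. Finally, your discussion of sufficiency is somewhat misplaced: the proposition asserts only necessity (every region has this form), while the question of which $H/J/S$ partitions actually yield facets is the separate subject of section~\ref{infrared_regions_in_wideangle_scattering} and appendix~\ref{appendix-necessity_sufficiency_requirements}, with \emph{additional} requirements beyond the three listed here.
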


While the basic requirements set out above are necessary for any infrared region, it is clear at the outset that additional requirements on the configurations of jets arise from the Coleman-Norton analysis, as shown in figure~\ref{jet_configuration_forbidden}. This amounts to excluding certain candidates for regions, which yield scaleless integrals. We will return to study this aspect in section~\ref{infrared_regions_in_wideangle_scattering}, where we formulate the precise requirements for the momentum configuration in any infrared regions.

It is natural to ask what the precise conditions are for the applicability of this proposition, as the momentum scalings of eq.~(\ref{infrared_region_momentum_scaling}) may be relevant beyond the scope of the on-shell expansion in wide-angle scattering, for which it has been formulated. The reality is, that depending on the kinematic setup and on the expansion considered, different regions may arise, going far beyond this simple picture. We will give below three examples to illustrate this.

Consider first the mass expansion described in figure~\ref{mass_expansion_counterexample} where we assume the hierarchy \hbox{$p^2=m^2 \ll M^2=Q_1^2\sim Q_2^2$}. Here the following two region vectors (amongst others) can be found:
\begin{samepage}
\begin{itemize}
    \item [$R_1$:] $\v_{R_1}= (0,-1,-1,-1,-1,-2,-1,-2,-3,1)$;
    \item [$R_2$:] $\v_{R_2}= (0,-1,-1,-1,-1,-1,-1,-2,-2,1)$.
\end{itemize}
\end{samepage}
In momentum space, the loop momenta in the regions $R_1$ and $R_2$ are respectively:
\begin{samepage}
\begin{itemize}
    \item [$R_1$:] $k_1^\mu=(k_1^+,k_1^-,k_{1\perp})\sim (1,\lambda,\lambda^{1/2})$, $k_2^\mu\sim \lambda$, $k_3^\mu= (k_1^+,k_1^-,k_{1\perp})\sim \lambda(1,\lambda,\lambda^{1/2})$;
    \item [$R_2$:] $k_1^\mu=(k_1^+,k_1^-,k_{1\perp})\sim (1,\lambda,\lambda^{1/2})$, $k_2^\mu\sim \lambda$, $k_3^\mu\sim \lambda$.
\end{itemize}
\end{samepage}
\begin{figure}[t]
\centering
\resizebox{0.3\textwidth}{!}{
\begin{tikzpicture}[decoration={markings,mark=at position 0.55 with {\arrow{latex}}}] 
\node (q1m) at (3,0) {$Q_2$};
\node [draw,circle,minimum size=4pt,fill=Blue,color=Blue,inner sep=0pt,outer sep=0pt] (q1) at (3, 1) {};
\node (q2) at (1, 5) {$Q_1$};
\node (p1) at (5,5) {$p$};

\path (q1)-- node[pos=.3,minimum size=0pt,inner sep=0pt,outer sep=0pt,color=Red,label={[color=red,anchor=north west,xshift=-5pt,yshift=-12pt]below:$1$}] (q1a) {} node[pos=.6,minimum size=0pt,inner sep=0pt,outer sep=0pt,color=Red,label={[color=red,anchor=north west,xshift=-5pt,yshift=-12pt]below:$2$}] (q1b) {} node[pos=.9,minimum size=0pt,inner sep=0pt,outer sep=0pt,color=Red,label={[color=red,anchor=north west,xshift=-5pt,yshift=-12pt]below:$3$}] (q1c) {}(q2);
\path (q1)-- node[pos=.3,minimum size=0pt,inner sep=0pt,outer sep=0pt,color=Red,label={[color=red,anchor=north west,xshift=-8pt,yshift=-12pt]below:$4$}] (p1a) {} node[pos=.6,minimum size=0pt,inner sep=0pt,outer sep=0pt,color=Red,label={[color=red,anchor=north west,xshift=-8pt,yshift=-13pt]below:$5$}] (p1b) {} node[pos=.9,minimum size=0pt,inner sep=0pt,outer sep=0pt,color=Red,label={[color=red,anchor=north west,xshift=-8pt,yshift=-13pt]below:$6$}] (p1c) {}(p1);

\node [draw,circle,minimum size=4pt,fill=Black,inner sep=0pt,outer sep=0pt] () at (q1a) {};
\node [draw,circle,minimum size=4pt,fill=Black,inner sep=0pt,outer sep=0pt] () at (q1b) {};
\node [draw,circle,minimum size=4pt,fill=Black,inner sep=0pt,outer sep=0pt] () at (q1c) {};
\node [draw,circle,minimum size=4pt,fill=Black,inner sep=0pt,outer sep=0pt] () at (p1a) {};
\node [draw,circle,minimum size=4pt,fill=Black,inner sep=0pt,outer sep=0pt] () at (p1b) {};
\node [draw,circle,minimum size=4pt,fill=Black,inner sep=0pt,outer sep=0pt] () at (p1c) {};

\draw [ultra thick,color=Blue] (q1m) -- (q1);
\draw [ultra thick,color=Black] (q1) -- (q2); 
\draw [color=Black] (q1) -- (p1); 

\draw (q1a) edge [dashed,bend left = 20,postaction={decorate}] (p1a) {};
\path (q1a) -- (p1a) node [midway,above,yshift=4pt] {$k_1$};
\path (q1a) -- (p1a) node [midway,below,yshift=1pt,color=Red] {$7$};

\draw (q1b) edge [dashed,bend left = 20,postaction={decorate}] (p1b) {};
\path (q1b) -- (p1b) node [midway,above,yshift=7pt] {$k_2$};
\path (q1b) -- (p1b) node [midway,below,yshift=4pt,color=Red] {$8$};

\draw (q1c) edge [dashed,bend left = 20,postaction={decorate}] (p1c) {};
\path (q1c) -- (p1c) node [midway,above,yshift=10pt] {$k_3$};
\path (q1c) -- (p1c) node [midway,below,yshift=7pt,color=Red] {$9$};
\end{tikzpicture}
}
\vspace{-3em}
\vspace{10pt}
\caption{The mass expansion example.}
\label{mass_expansion_counterexample}
\end{figure}
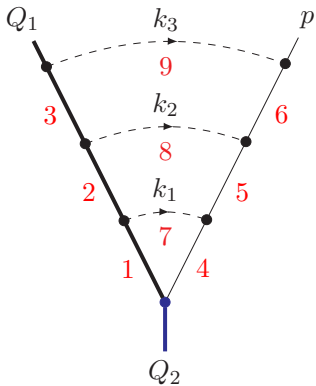
There is an essential difference between the scaling of the two regions $R_1$ and $R_2$. While the latter is fully consistent with eq.~(\ref{infrared_region_momentum_scaling}), the former departs from it in that $k_3^\mu$ does not correspond to any of the expected scalings we have discussed. This illustrates the fact that the scaling vectors of eq.~(\ref{region_vector_wideangle_scattering}) may not suffice to describe all the regions in a generic expansion.

A second example illustrates the fact that some solutions of the Landau equations may not correspond to facet of the Newton polytope, but to lower dimensional faces, in which case they would not appear as regions. As a consequence, only a subset of the pinch surfaces would be represented as regions. 
This occurs for example in the threshold expansion for Drell-Yan or Higgs production. Specifically, refs.~\cite{AnstsDhrDltHzgmMstbg13,AnstsDhrDltFrlHzgmMstbg15} computed the mixed real and virtual corrections at N$^3$LO  using the threshold expansion where the emitted partons are soft. For certain integrals the regions were found (using a somewhat heuristic method) to correspond to only a subset of the soft and collinear momentum scalings of the form of eq.~(\ref{region_vector_wideangle_scattering}).

Another scenario where more complex regions emerge is upon departing from wide-angle scattering kinematics. As explained in section~\ref{region_vectors_from_Newton_polytope}, we expect that all the regions appearing in the wide-angle scattering are given by the facets of $\Delta^{(N+1)}[\mathcal{P}]$. This does not hold, however, for other processes such as the forward scattering, where Glauber modes exist. 
Also in this case, solutions of the Landau equations are expected to be useful in identifying regions, but some of these may arise from cancellation between terms in ${\cal P}$ rather then from scaling vectors associated with the facets of $\Delta^{(N+1)}[\mathcal{P}]$.

To summarise, we determined the general form of region vectors that arise in the on-shell expansion in wide-angle scattering. In the following sections we will show that vectors of the form of eq.~(\ref{region_vector_wideangle_scattering}) correspond to regions, if and only if certain extra requirements on $H$, $J$ and~$S$ are satisfied. These requirements will be spelt out and proven in section~\ref{infrared_regions_in_wideangle_scattering}.
Based on these, we formulate in section~\ref{section-graph_finding_algorithm_regions} a graph-finding algorithm. Furthermore, using this algorithm we verify proposition~\ref{proposition-region_vectors_are_hard_and_infrared} up to five loops, showing that it produces precisely the regions found via the geometric approach of section~\ref{region_vectors_from_Newton_polytope}.

\section{Regions in the on-shell expansion}
\label{section-regions_onshell_momentum_expansion}

In this section we explore the properties of the region vectors in the on-shell expansion of wide-angle scattering, which take the form of eq.~(\ref{region_vector_wideangle_scattering}) as proposed above. Our motivation is to formulate and prove the precise conditions under which such a vector corresponds to a region of a given Feynman integral, namely, a lower facet of the associated Newton polytope $\Delta^{(N+1)}[\mathcal{P}]$.

This section is organised as follows. We begin in section~\ref{space_spanning_trees} by deriving the conditions for the existence of the hard region of the integral $\I(\s)$ corresponding to a graph $G$. To this end we introduce a vector space that is spanned by $\boldsymbol{r}_i- \boldsymbol{r}_j$, the difference between the exponents of any two different monomials in the $\mathcal{U}$ polynomial, each of which is associated with a specific spanning tree of the graph. If this space is $N$-dimensional, then there is a hard region. In section~\ref{leading_terms_in_infrared_regions} we investigate the generic form of monomials that may appear in the leading Lee-Pomeransky polynomial~$\mathcal{P}_0^{(R)}(\x;\s)$, in a generic infrared region $R$.
Based on these results, we prove in section~\ref{infrared_regions_in_wideangle_scattering} that any vector of the form of eq.~(\ref{region_vector_wideangle_scattering}) defines a region vector of the Newton polytope $\Delta^{(N+1)}[\mathcal{P}(G)]$, if and only if some extra requirements on the subgraphs of $G$ are satisfied, eliminating any potential scaleless integrals.

A few notations are introduced here for convenience.
Recall that we use $T^1$ and $T^2$ to denote, respectively, a spanning tree and a spanning 2-tree of $G$. Similarly, for any subgraph $\gamma\subseteq G$, $T^1(\gamma)$ and $T^2(\gamma)$ are used to denote, respectively, a spanning tree and a spanning 2-tree of $\gamma$. We further denote the monomials in $\mathcal{F}$ carrying a kinematic factor $p_i^2\sim \lambda$ as $\mathcal{F}^{(p_i^2)}$ terms, and those carrying a kinematic factor $q_j^2$ or $p_k\cdot p_l$ ($k\neq l$), which scale as $\lambda^0$, as $\mathcal{F}^{(q^2)}$ terms. In a region $R$, the $\mathcal{U}$, $\mathcal{F}^{(p_i^2)}$ and $\mathcal{F}^{(q^2)}$ terms that appear in the leading Lee-Pomeransky polynomial $\mathcal{P}_0^{(R)}(\x;\s)$ (defined below eq.~(\ref{integrand_expansion_operator_definition})) are denoted respectively as $\mathcal{U}^{(R)}$, $\mathcal{F}^{(p_i^2,R)}$ and $\mathcal{F}^{(q^2,R)}$ terms.

\subsection{Spanning trees}
\label{space_spanning_trees}

For any connected graph $\gamma$, we denote the parameter subspace that is spanned by the points corresponding to $T^1(\gamma)$ as $W_\gamma$. By definition, $W_\gamma$ is generated by vectors of the form $\Delta \boldsymbol{r}_{ij}\equiv \boldsymbol{r}_i - \boldsymbol{r}_j$, where $\boldsymbol{r}_i$ and $\boldsymbol{r}_j$ are any two points corresponding to two spanning trees $T_i^1(\gamma)$ and $T_j^1(\gamma)$. Here we aim to derive the dimension, $\text{dim}(W_\gamma)$, by obtaining a set of basis vectors of $W_\gamma$.

We recall that a graph is \emph{one-vertex irreducible} (1VI) if it remains connected after any one of its vertices is removed. (In graph theory, a 1VI graph is also called a biconnected graph.)
Let us consider the case where $\gamma$ is 1VI, and derive the following theorem. 
\begin{theorem}
For any 1VI graph $\gamma$, the parameter subspace $W_\gamma$, which contains all the points corresponding to the spanning trees of $\gamma$, satisfies
\begin{eqnarray}
\textup{dim}(W_\gamma)= N(\gamma)-1,
\label{Uterms_space_dimensionality}
\end{eqnarray}
where $N(\gamma)$ is the number of propagators of $\gamma$. Furthermore, the $(N(\gamma)-1)$ independent basis vectors of $W_\gamma$ can be expressed as:
\begin{eqnarray}
\Big ( \underset{N(\gamma)}{\underbrace{1,-1,0,\dots,0}};0 \Big ),\ \Big ( \underset{N(\gamma)}{\underbrace{1,0,-1,0,\dots,0}};0 \Big ),\ \dots,\ \Big ( \underset{N(\gamma)}{\underbrace{1,0,\dots,0,-1}};0 \Big ).
\label{Nminus1_basis_vector_U}
\end{eqnarray}
\label{theorem-Uterms_space_dimensionality}
\end{theorem}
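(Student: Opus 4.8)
The plan is to establish the two inequalities $\dim(W_\gamma)\le N(\gamma)-1$ and $\dim(W_\gamma)\ge N(\gamma)-1$ separately, the latter by exhibiting the explicit basis of eq.~(\ref{Nminus1_basis_vector_U}). For the upper bound I would invoke homogeneity: every spanning tree $T_i^1(\gamma)$ has exactly $V(\gamma)-1$ edges, so its complement $\gamma\setminus T_i^1$ has $N(\gamma)-V(\gamma)+1=L(\gamma)$ edges, and hence every exponent vector $\boldsymbol{r}_i$ (the indicator vector of $\gamma\setminus T_i^1$, with vanishing last component) has the fixed coordinate sum $\sum_e r_{i,e}=L(\gamma)$. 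Consequently every generator $\Delta\boldsymbol{r}_{ij}=\boldsymbol{r}_i-\boldsymbol{r}_j$ has vanishing coordinate sum and vanishing last component, so $W_\gamma$ is contained in the hyperplane $\{\boldsymbol{v}:\sum_e v_e=0,\ v_{N+1}=0\}$, which is $(N(\gamma)-1)$-dimensional. This yields $\dim(W_\gamma)\le N(\gamma)-1$ immediately.

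For the lower bound it suffices to show that each of the $N(\gamma)-1$ vectors in eq.~(\ref{Nminus1_basis_vector_U}) actually lies in $W_\gamma$; since they are manifestly linearly independent and already span the hyperplane above, they then form a basis and the dimension is exactly $N(\gamma)-1$. I would in fact prove the stronger statement that $\hat{\boldsymbol{e}}_a-\hat{\boldsymbol{e}}_b\in W_\gamma$ for every pair of distinct edges $a,b$, writing $\hat{\boldsymbol{e}}_a$ for the unit vector along the $a$-th Lee--Pomeransky parameter. Such a difference is realised by two exponent vectors precisely when there exist spanning trees $T_i^1\ni b,\ \not\ni a$ and $T_j^1\ni a,\ \not\ni b$ that agree on all remaining edges, i.e.\ two spanning trees related by the single-edge swap $a\leftrightarrow b$, namely $T_i^1=(T_j^1\setminus\{a\})\cup\{b\}$.

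The key step is to construct such a swap from biconnectivity. Starting from a spanning tree $T_j^1$ containing $a$ but not $b$, adding $b$ creates a unique cycle (the fundamental cycle of $b$ with respect to $T_j^1$), and $(T_j^1\setminus\{a\})\cup\{b\}$ is again a spanning tree if and only if $a$ lies on that fundamental cycle. I therefore need a spanning tree whose fundamental cycle of $b$ passes through $a$, which exists once $a$ and $b$ lie on a common cycle $C$: the edge set $C\setminus\{b\}$ is a path through $a$, hence acyclic, and can be completed to a spanning tree $T_j^1$ without using $b$; since in a tree the path between the endpoints of $b$ is unique, that path is exactly $C\setminus\{b\}$, so the fundamental cycle of $b$ is $C$ itself, and as $a\in C$ the swap produces a valid second spanning tree. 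This is where the hypothesis that $\gamma$ is 1VI is indispensable: by Whitney's characterisation of biconnected graphs, a connected graph with at least two edges is 1VI if and only if every pair of edges lies on a common cycle, which is exactly what the construction requires. (The case $N(\gamma)=1$ is trivial, with $W_\gamma=\{0\}$.)

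I expect the main obstacle to be the clean handling of this graph-theoretic step — guaranteeing the common cycle and verifying that the completion of $C\setminus\{b\}$ to a spanning tree yields $C$ as the fundamental cycle of $b$. The hypothesis is used in an essential way and not merely for convenience: a bridge lies on no cycle and can never be swapped, and a cut vertex would likewise obstruct the required exchange, so that without biconnectivity the dimension would drop below $N(\gamma)-1$. Once the swap construction is in place, applying it to the pairs $(1,k)$ for $k=2,\dots,N(\gamma)$ produces precisely the stated basis vectors and completes the proof.
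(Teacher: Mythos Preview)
Your proposal is correct and follows essentially the same approach as the paper: both establish the upper bound via the homogeneity constraint $\sum_e r_{i,e}=L(\gamma)$, and both obtain the lower bound by invoking the graph-theoretic fact that in a 1VI (biconnected) graph any two edges lie on a common cycle, then producing two spanning trees related by a single edge swap. The only cosmetic difference is that the paper spells out the extension of the cycle to a one-loop spanning subgraph $\gamma'$ by iteratively adjoining edges to reach all vertices, whereas you invoke the standard fact that the acyclic set $C\setminus\{b\}$ can be completed to a spanning tree and identify $C$ as the fundamental cycle of $b$; these are the same construction in different language.
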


\begin{proof}
First recall that vectors defining the monomials in $\mathcal{U}(\x)$ take the form of $\boldsymbol{r}_i\equiv (r_{i,1}, \ldots, r_{i,N}; 0)$, where $r_{i,e}=1$ for each edge $e$ that has been removed in forming the corresponding spanning tree, while the other entries are $0$.

It immediately follows from the definition above that $\textup{dim}(W_\gamma) \leqslant N(\gamma)$. Assume now that $\gamma$ contains $L(\gamma)$ loops. In order to form a spanning tree of $\gamma$, we must remove exactly one propagator per loop from $\gamma$, hence every point in $W_\gamma$ satisfies the constraint \hbox{$\sum_{e\in \gamma} r_{i,e} =L(\gamma)$}. Thus, we can tighten the upper bound on the dimension:
\begin{eqnarray}
\textup{dim}(W_\gamma) \leqslant N(\gamma)-1.
\label{proof_theorem1_step1}
\end{eqnarray}
Below we aim to prove $\textup{dim}(W_\gamma) \geqslant N(\gamma)-1$, by finding the basis vectors shown in eq.~(\ref{Nminus1_basis_vector_U}).

Since $\gamma$ is 1VI, a result in graph theory is that for any two propagators of $\gamma$ there is a loop containing them both \cite{West01book}. So for any $i,j\in \{1,\dots,N(\gamma)\}$, we consider a loop $L_{ij}$ that simultaneously contains $e_i$ and $e_j$. We now aim to obtain a subgraph $\gamma' \subseteq \gamma$, which includes all the vertices of $\gamma$ and contains exactly one loop, $L_{ij}$.

The subgraph $\gamma'$ can be obtained through the following operations. Define $\mathcal{V}_0$ as the set of vertices of $L_{ij}$ and $\mathcal{V}'_0$ as the set of vertices of $\gamma \setminus L_{ij}$. Since $\gamma$ is connected, there must be an edge $e'_1$ in $\gamma$ that connects a vertex $v_0\in \mathcal{V}_0$ and $v'_0\in \mathcal{V}'_0$. The graph $\gamma_1\equiv L_{ij}\cup e'_1$ is then also a connected subgraph of $\gamma$, which still contains exactly one loop $L_{ij}$.
We then define $\mathcal{V}_1$ to be the set of vertices of $\gamma_1$, and $\mathcal{V}'_1$ as the set of vertices of $\gamma \setminus \gamma_1$. Now there must be another edge $e'_2$ joining a vertex $v_1\in \mathcal{V}_1$ and $v'_1\in \mathcal{V}'_1$. The graph $\gamma_2\equiv L_{ij}\cup e'_1\cup e'_2$, is again a connected subgraph of $\gamma$ that contains exactly one loop $L_{ij}$. This procedure can be carried out recursively as above, terminating with the final graph $\gamma'\equiv L_{ij}\cup e'_1 \cup\dots\cup e'_n$, with $n=V(\gamma)-V(L_{ij})$. An example illustrating this procedure is shown in figure~\ref{theorem1_procedure}.

\begin{figure}[t]
\centering
\begin{subfigure}[b]{0.13\textwidth}
\centering
    \resizebox{\textwidth}{!}{
\begin{tikzpicture}[scale=0.3]
\node [draw,circle,minimum size=3pt,fill=Black,inner sep=0pt,outer sep=0pt] (v1) at (0,0) {};
\node [draw,circle,minimum size=3pt,fill=Black,inner sep=0pt,outer sep=0pt] (v2) at (2,0) {};
\node [draw,circle,minimum size=3pt,fill=Black,inner sep=0pt,outer sep=0pt] (v3) at (2,2) {};
\node [draw,circle,minimum size=3pt,fill=Black,inner sep=0pt,outer sep=0pt]  (v4) at (0,2) {};
\node [draw,circle,minimum size=3pt,fill=Black,inner sep=0pt,outer sep=0pt]  (v5) at (0,5) {};
\node [draw,circle,minimum size=3pt,fill=Black,inner sep=0pt,outer sep=0pt]  (v6) at (5,0) {};
\node [draw,circle,minimum size=3pt,fill=Black,inner sep=0pt,outer sep=0pt]  (v7) at (5,5) {};
\node [draw,circle,minimum size=3pt,fill=Black,inner sep=0pt,outer sep=0pt]  (v8) at (3.5,3.5) {};

\draw [thick] (v1) -- (v2) node [midway,below] {\textcolor{red}{$e_j$}};
\draw [thick] (v2) -- (v3);
\draw [thick] (v3) -- (v4) node [midway,above] {\textcolor{red}{$e_i$}};
\draw [thick] (v4) -- (v1);

\draw [thick] (v2) -- (v6);
\draw [thick] (v6) -- (v7);
\draw [thick] (v7) -- (v5);
\draw [thick] (v5) -- (v4);

\draw [thick] (v3) -- (v8);
\draw [thick] (v8) -- (v7);
\draw [thick] (v8) -- (v5);
\end{tikzpicture}
}
\vspace{-2pt}
\caption{$\gamma$}
\end{subfigure}

\setcounter{subfigure}{0}
\renewcommand\thesubfigure{\roman{subfigure}}
\begin{subfigure}[b]{0.13\textwidth}
\centering
    \resizebox{\textwidth}{!}{
\begin{tikzpicture}[scale=0.3]
\node [draw,circle,minimum size=3pt,fill=Black,inner sep=0pt,outer sep=0pt] (v1) at (0,0) {};
\node [draw,circle,minimum size=3pt,fill=Black,inner sep=0pt,outer sep=0pt] (v2) at (2,0) {};
\node [draw,circle,minimum size=3pt,fill=Black,inner sep=0pt,outer sep=0pt] (v3) at (2,2) {};
\node [draw,circle,minimum size=3pt,fill=Black,inner sep=0pt,outer sep=0pt]  (v4) at (0,2) {};
\node [draw,circle,minimum size=3pt,fill=Black,inner sep=0pt,outer sep=0pt]  (v5) at (0,5) {};
\node [draw,circle,minimum size=3pt,fill=Black,inner sep=0pt,outer sep=0pt]  (v6) at (5,0) {};
\node [draw,circle,minimum size=3pt,fill=Black,inner sep=0pt,outer sep=0pt]  (v7) at (5,5) {};
\node [draw,circle,minimum size=3pt,fill=Black,inner sep=0pt,outer sep=0pt]  (v8) at (3.5,3.5) {};

\draw [thick] (v1) -- (v2);
\draw [thick] (v2) -- (v3);
\draw [thick] (v3) -- (v4);
\draw [thick] (v4) -- (v1);



\end{tikzpicture}
}
\caption{$L_{ij}$}
\end{subfigure}
\hfill
\begin{subfigure}[b]{0.13\textwidth}
\centering
    \resizebox{\textwidth}{!}{
\begin{tikzpicture}[scale=0.3]
\node [draw,circle,minimum size=3pt,fill=Black,inner sep=0pt,outer sep=0pt] (v1) at (0,0) {};
\node [draw,circle,minimum size=3pt,fill=Black,inner sep=0pt,outer sep=0pt] (v2) at (2,0) {};
\node [draw,circle,minimum size=3pt,fill=Black,inner sep=0pt,outer sep=0pt] (v3) at (2,2) {};
\node [draw,circle,minimum size=3pt,fill=Black,inner sep=0pt,outer sep=0pt]  (v4) at (0,2) {};
\node [draw,circle,minimum size=3pt,fill=Black,inner sep=0pt,outer sep=0pt]  (v5) at (0,5) {};
\node [draw,circle,minimum size=3pt,fill=Black,inner sep=0pt,outer sep=0pt]  (v6) at (5,0) {};
\node [draw,circle,minimum size=3pt,fill=Black,inner sep=0pt,outer sep=0pt]  (v7) at (5,5) {};
\node [draw,circle,minimum size=3pt,fill=Black,inner sep=0pt,outer sep=0pt]  (v8) at (3.5,3.5) {};

\draw [thick] (v1) -- (v2);
\draw [thick] (v2) -- (v3);
\draw [thick] (v3) -- (v4);
\draw [thick] (v4) -- (v1);

\draw [thick] (v5) -- (v4) node [midway,right] {\textcolor{red}{$e_1^\prime$}};


\end{tikzpicture}
}
\caption{$\gamma_1$}
\end{subfigure}
\hfill
\begin{subfigure}[b]{0.13\textwidth}
\centering
    \resizebox{\textwidth}{!}{
\begin{tikzpicture}[scale=0.3]
\node [draw,circle,minimum size=3pt,fill=Black,inner sep=0pt,outer sep=0pt] (v1) at (0,0) {};
\node [draw,circle,minimum size=3pt,fill=Black,inner sep=0pt,outer sep=0pt] (v2) at (2,0) {};
\node [draw,circle,minimum size=3pt,fill=Black,inner sep=0pt,outer sep=0pt] (v3) at (2,2) {};
\node [draw,circle,minimum size=3pt,fill=Black,inner sep=0pt,outer sep=0pt]  (v4) at (0,2) {};
\node [draw,circle,minimum size=3pt,fill=Black,inner sep=0pt,outer sep=0pt]  (v5) at (0,5) {};
\node [draw,circle,minimum size=3pt,fill=Black,inner sep=0pt,outer sep=0pt]  (v6) at (5,0) {};
\node [draw,circle,minimum size=3pt,fill=Black,inner sep=0pt,outer sep=0pt]  (v7) at (5,5) {};
\node [draw,circle,minimum size=3pt,fill=Black,inner sep=0pt,outer sep=0pt]  (v8) at (3.5,3.5) {};

\draw [thick] (v1) -- (v2);
\draw [thick] (v2) -- (v3);
\draw [thick] (v3) -- (v4);
\draw [thick] (v4) -- (v1);

\draw [thick] (v5) -- (v4);

\draw [thick] (v3) -- (v8) node [midway,above left] {\textcolor{red}{$e_2^\prime$}};

\end{tikzpicture}
}
\caption{$\gamma_2$}
\end{subfigure}
\hfill
\begin{subfigure}[b]{0.13\textwidth}
\centering
    \resizebox{\textwidth}{!}{
\begin{tikzpicture}[scale=0.3]
\node [draw,circle,minimum size=3pt,fill=Black,inner sep=0pt,outer sep=0pt] (v1) at (0,0) {};
\node [draw,circle,minimum size=3pt,fill=Black,inner sep=0pt,outer sep=0pt] (v2) at (2,0) {};
\node [draw,circle,minimum size=3pt,fill=Black,inner sep=0pt,outer sep=0pt] (v3) at (2,2) {};
\node [draw,circle,minimum size=3pt,fill=Black,inner sep=0pt,outer sep=0pt]  (v4) at (0,2) {};
\node [draw,circle,minimum size=3pt,fill=Black,inner sep=0pt,outer sep=0pt]  (v5) at (0,5) {};
\node [draw,circle,minimum size=3pt,fill=Black,inner sep=0pt,outer sep=0pt]  (v6) at (5,0) {};
\node [draw,circle,minimum size=3pt,fill=Black,inner sep=0pt,outer sep=0pt]  (v7) at (5,5) {};
\node [draw,circle,minimum size=3pt,fill=Black,inner sep=0pt,outer sep=0pt]  (v8) at (3.5,3.5) {};

\draw [thick] (v1) -- (v2);
\draw [thick] (v2) -- (v3);
\draw [thick] (v3) -- (v4);
\draw [thick] (v4) -- (v1);

\draw [thick] (v2) -- (v6) node [midway,above] {\textcolor{red}{$e_3^\prime$}};
\draw [thick] (v5) -- (v4);

\draw [thick] (v3) -- (v8);

\end{tikzpicture}
}
\caption{$\gamma_3$}
\end{subfigure}
\hfill
\begin{subfigure}[b]{0.13\textwidth}
\centering
    \resizebox{\textwidth}{!}{
\begin{tikzpicture}[scale=0.3]
\node [draw,circle,minimum size=3pt,fill=Black,inner sep=0pt,outer sep=0pt] (v1) at (0,0) {};
\node [draw,circle,minimum size=3pt,fill=Black,inner sep=0pt,outer sep=0pt] (v2) at (2,0) {};
\node [draw,circle,minimum size=3pt,fill=Black,inner sep=0pt,outer sep=0pt] (v3) at (2,2) {};
\node [draw,circle,minimum size=3pt,fill=Black,inner sep=0pt,outer sep=0pt]  (v4) at (0,2) {};
\node [draw,circle,minimum size=3pt,fill=Black,inner sep=0pt,outer sep=0pt]  (v5) at (0,5) {};
\node [draw,circle,minimum size=3pt,fill=Black,inner sep=0pt,outer sep=0pt]  (v6) at (5,0) {};
\node [draw,circle,minimum size=3pt,fill=Black,inner sep=0pt,outer sep=0pt]  (v7) at (5,5) {};
\node [draw,circle,minimum size=3pt,fill=Black,inner sep=0pt,outer sep=0pt]  (v8) at (3.5,3.5) {};

\draw [thick] (v1) -- (v2);
\draw [thick] (v2) -- (v3);
\draw [thick] (v3) -- (v4);
\draw [thick] (v4) -- (v1);

\draw [thick] (v2) -- (v6);
\draw [thick] (v5) -- (v4);

\draw [thick] (v3) -- (v8);
\draw [thick] (v8) -- (v7) node [midway,above left] {\textcolor{red}{$e_4^\prime$}};

\end{tikzpicture}
}
\caption{$\gamma^\prime=\gamma_4$}
\end{subfigure}
\caption{The procedure in the proof of theorem~\ref{theorem-Uterms_space_dimensionality}.}
\label{theorem1_procedure}
\end{figure}
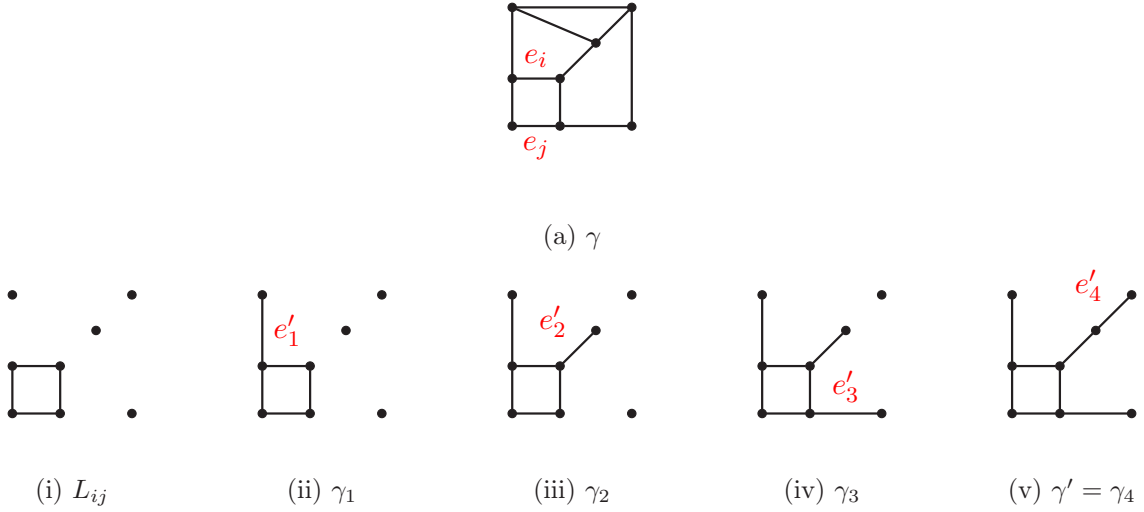

From the construction above, we can obtain a spanning tree of $\gamma$ by removing any edge of $L_{ij}$ from $\gamma'$. We consider the spanning trees $T_i^1\equiv \gamma' \setminus e_i$ and $T_j^1\equiv \gamma' \setminus e_j$. These two spanning trees correspond respectively to two points $\boldsymbol{r}_i$ and $\boldsymbol{r}_j$ in the parameter space, and $\Delta \boldsymbol{r}_{ij}\equiv \boldsymbol{r}_i -\boldsymbol{r}_j$ is
\begin{eqnarray}
\Delta \boldsymbol{r}_{ij}= \Big( \underset{N(\gamma)}{\underbrace{0,\dots,0,1,0,\dots,0,-1,0,\dots,0}};0 \Big).
\label{delta_rij_form_1VI_spanning_tree}
\end{eqnarray}
We can parameterise the graph $\gamma$ in specific ways such that $i=1$ and $j=2,\dots,N(\gamma)$. Then all the vectors in eq.~(\ref{Nminus1_basis_vector_U}) are obtained, and there are $N(\gamma)-1$ of them. Thus $\textup{dim}(W_\gamma) \geqslant N(\gamma)-1$. Combining this with eq.~(\ref{proof_theorem1_step1}) above, we have determined the dimension of $W_\gamma$, which is (\ref{Uterms_space_dimensionality}).
\end{proof}

Note that every connected graph can be seen as the union of several 1VI graphs. Then theorem~\ref{theorem-Uterms_space_dimensionality} could be directly generalised to any connected graph $\gamma$, i.e.
\begin{corollary}
Suppose a graph $\gamma$ has $n$ 1VI components, $\gamma_1, \gamma_2, \dots, \gamma_n$, then we have
\begin{eqnarray}
\label{Uterms_space_direct_sum_1VI_components}
W_\gamma &&= \bigoplus_{i=1}^n W_{\gamma_n},\\
\textup{dim}(W_\gamma) &&= \sum_{i=1}^n (N(\gamma_i)-1).
\label{Uterms_spacedim_sum_1VI_dimensions}
\end{eqnarray}
\label{theorem-Uterms_space_dimensionality_corollary1}
\end{corollary}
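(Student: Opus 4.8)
The plan is to deduce the corollary from Theorem~\ref{theorem-Uterms_space_dimensionality} by exploiting the fact that the 1VI components (blocks) of a connected graph $\gamma$ partition its edge set, so that both the exponent vectors of spanning trees and the subspace $W_\gamma$ decompose block by block. First I would recall the standard structural fact from graph theory that any cycle of $\gamma$ is contained entirely within a single block $\gamma_i$; equivalently, two edges lie on a common cycle if and only if they belong to the same block. Consequently the edge sets $E(\gamma_1),\dots,E(\gamma_n)$ form a partition of $E(\gamma)$, the cycle space of $\gamma$ is the direct sum of those of its blocks, and $N(\gamma)=\sum_{i=1}^n N(\gamma_i)$ while $L(\gamma)=\sum_{i=1}^n L(\gamma_i)$.

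The key combinatorial step is then to show that a spanning subgraph $T\subseteq\gamma$ is a spanning tree of $\gamma$ if and only if, for each $i$, its restriction $T\cap\gamma_i$ is a spanning tree of the block $\gamma_i$. Since cycles never straddle two blocks, breaking all the loops of $\gamma$ amounts to removing $L(\gamma_i)$ edges independently inside each $\gamma_i$, while connectivity across the cut vertices is preserved by the block-cut-tree structure. This means that the exponent vector $\boldsymbol{r}$ of any spanning tree of $\gamma$ is the concatenation, over the disjoint edge-blocks, of the exponent vectors of independently chosen spanning trees $T^1(\gamma_i)$, and every such concatenation arises from some $T^1(\gamma)$.

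With this in hand I would prove the two inclusions. For $W_\gamma\subseteq\bigoplus_i W_{\gamma_i}$: given two spanning trees of $\gamma$ with exponent vectors $\boldsymbol{r}_i,\boldsymbol{r}_j$, the difference $\Delta\boldsymbol{r}_{ij}$ splits as a sum of vectors supported on the disjoint coordinate blocks, and the piece on block $\gamma_k$ is a difference of two spanning-tree exponent vectors of $\gamma_k$, hence lies in $W_{\gamma_k}$. For the reverse inclusion, any generator $\boldsymbol{r}_a-\boldsymbol{r}_b$ of $W_{\gamma_k}$ lifts to $W_\gamma$ by extending both $T^1_a(\gamma_k)$ and $T^1_b(\gamma_k)$ to global spanning trees that agree on every other block; the two global exponent vectors then differ only on the coordinates of $\gamma_k$, and their difference equals the given generator. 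Because distinct blocks occupy disjoint coordinate sets (the $\lambda$ entry being $0$ throughout), the sum $\bigoplus_i W_{\gamma_i}$ is genuinely direct, establishing eq.~(\ref{Uterms_space_direct_sum_1VI_components}).

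Finally, the dimension formula follows at once: disjoint coordinate supports force the dimensions to add, so $\textup{dim}(W_\gamma)=\sum_{i=1}^n \textup{dim}(W_{\gamma_i})$, and Theorem~\ref{theorem-Uterms_space_dimensionality} supplies $\textup{dim}(W_{\gamma_i})=N(\gamma_i)-1$ for each block, yielding eq.~(\ref{Uterms_spacedim_sum_1VI_dimensions}). I expect the main obstacle to be the combinatorial step of the second paragraph, namely rigorously justifying that spanning trees of $\gamma$ correspond exactly to tuples of spanning trees of its blocks; once cycles are known to be confined to individual blocks, everything downstream is routine linear bookkeeping.
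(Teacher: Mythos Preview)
Your proposal is correct and follows essentially the same approach as the paper: both rely on the graph-theoretic fact that the spanning trees of $\gamma$ are exactly the unions of spanning trees of its 1VI components, from which the direct-sum decomposition of $W_\gamma$ and the dimension formula via Theorem~\ref{theorem-Uterms_space_dimensionality} follow. The paper simply cites this combinatorial fact as known and proceeds in two sentences, whereas you spell out the block-cut-tree reasoning and the two inclusions in more detail.
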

\begin{proof}
First, it is a result of graph theory that each spanning tree of $\gamma$ is the union of the spanning trees of the $\gamma_i$ ($i=1,\dots,n$), and vice versa. Then since $W_\gamma$ is the space that contains all the points corresponding to the spanning trees of $\gamma$, we immediately have eq.~(\ref{Uterms_space_direct_sum_1VI_components}). Eq.~(\ref{Uterms_spacedim_sum_1VI_dimensions}) then follows directly from eq.~(\ref{Uterms_space_dimensionality}).
\end{proof}

Note that for any 1VI component $\gamma_i$ with a single propagator, then $N(\gamma_i)-1=0$, and $\gamma_i$ does not contribute to $\text{dim} (W_\gamma)$. We can define the \emph{nontrivial 1VI components} as those containing at least one loop, so that the word ``1VI components'' could get replaced by ``nontrivial 1VI components'' in corollary~\ref{theorem-Uterms_space_dimensionality_corollary1}.

From this observation, we construct a simpler form of every graph $G$ and call it the \emph{reduced form of $G$}. It is obtained from $G$ by contracting each of its nontrivial 1VI components into a vertex. If two nontrivial 1VI components $\gamma_1$ and $\gamma_2$ share a vertex, we add an auxiliary propagator connecting $\gamma_1$ and $\gamma_2$ before contracting them. The reduced form of $G$ is denoted by $G_\text{red}$, which is always a tree graph.

Some examples of the reduced form of $G$ are given in figure~\ref{reduced_form_examples}. Here we note a relation between the edges of $G$ and $G_\text{red}$:
\begin{eqnarray}
N(G) = \widehat{N}(G_\text{red})+ \sum_{i=1}^n N(\gamma_i),
\label{relation_propagator_number_reduced}
\end{eqnarray}
where $n$ is the number of nontrivial 1VI components of $G$, and $\widehat{N}(G_\text{red})$ is the number of edges that are in both $G_\text{red}$ and $G$ (i.e. the non-auxiliary edges of $G_\text{red}$).
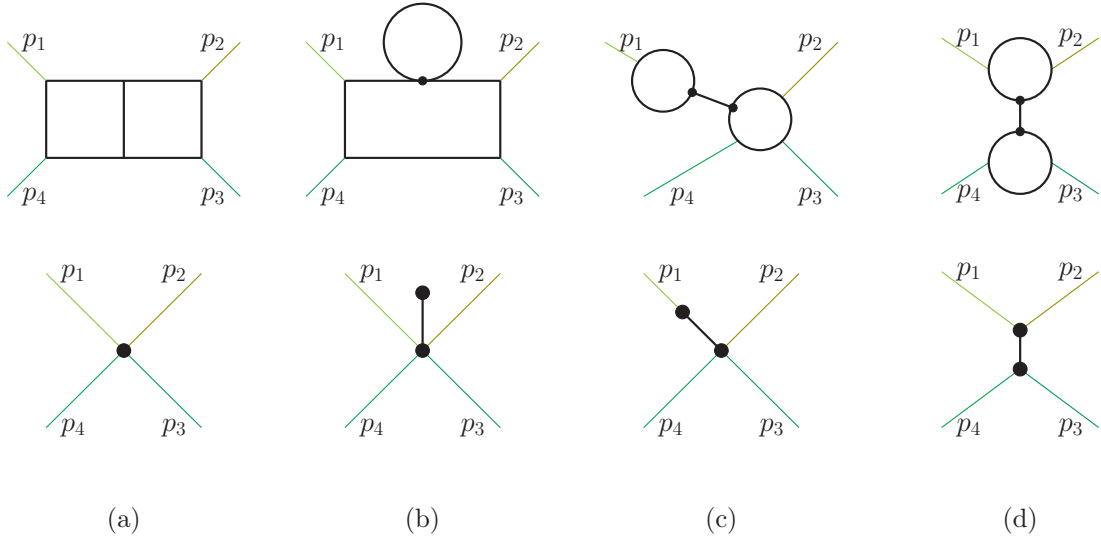
\begin{figure}[t]
\centering
\begin{subfigure}[b]{0.22\textwidth}
\centering
\resizebox{\textwidth}{!}{
\begin{tikzpicture}[line width = 0.6, font=\large, mydot/.style={circle, fill, inner sep=.7pt}]

\node (p1) at (0.7,11) {\huge $p_1$};
\node (p2) at (5.3,11) {\huge $p_2$};
\node (p3) at (5.3,7) {\huge $p_3$};
\node (p4) at (0.7,7) {\huge $p_4$};
\node (p1r) at (1.7,5) {\huge $p_1$};
\node (p2r) at (4.3,5) {\huge $p_2$};
\node (p3r) at (4.3,1) {\huge $p_3$};
\node (p4r) at (1.7,1) {\huge $p_4$};

\draw (0,11) edge [thick, color=LimeGreen] (1,10) node [] {};
\draw (0,7) edge [thick, color=Green] (1,8) node [] {};
\draw (6,11) edge [thick, color=olive] (5,10) node [] {};
\draw (6,7) edge [thick, color=ForestGreen] (5,8) node [] {};
\draw (1,10) edge [ultra thick] (5,10) node [] {};
\draw (1,8) edge [ultra thick] (5,8) node [] {};
\draw (1,10) edge [ultra thick] (1,8) node [] {};
\draw (3,10) edge [ultra thick] (3,8) node [] {};
\draw (5,10) edge [ultra thick] (5,8) node [] {};

\draw (1,5) edge [thick, color=LimeGreen] (3,3) node [] {};
\draw (1,1) edge [thick, color=Green] (3,3) node [] {};
\draw (5,5) edge [thick, color=olive] (3,3) node [] {};
\draw (5,1) edge [thick, color=ForestGreen] (3,3) node [] {};

\draw[fill,thick] (3,3) circle (5pt);
\end{tikzpicture}
}
\vspace{2pt}
\caption{}
\label{reduced_form_examples_a}
\end{subfigure}
\hfill
\begin{subfigure}[b]{0.22\textwidth}
\centering
\resizebox{\textwidth}{!}{
\begin{tikzpicture}[line width = 0.6, font=\large, mydot/.style={circle, fill, inner sep=.7pt}]

\node (p1) at (0.7,11) {\huge $p_1$};
\node (p2) at (5.3,11) {\huge $p_2$};
\node (p3) at (5.3,7) {\huge $p_3$};
\node (p4) at (0.7,7) {\huge $p_4$};
\node (p1r) at (1.7,5) {\huge $p_1$};
\node (p2r) at (4.3,5) {\huge $p_2$};
\node (p3r) at (4.3,1) {\huge $p_3$};
\node (p4r) at (1.7,1) {\huge $p_4$};

\draw (0,11) edge [thick, color=LimeGreen] (1,10) node [] {};
\draw (0,7) edge [thick, color=Green] (1,8) node [] {};
\draw (6,11) edge [thick, color=olive] (5,10) node [] {};
\draw (6,7) edge [thick, color=ForestGreen] (5,8) node [] {};
\draw (1,10) edge [ultra thick] (5,10) node [] {};
\draw (1,8) edge [ultra thick] (5,8) node [] {};
\draw (1,10) edge [ultra thick] (1,8) node [] {};
\draw (5,10) edge [ultra thick] (5,8) node [] {};
\draw[fill,thick] (3,10) circle (3pt);
\node[ultra thick, draw=Black, circle, minimum size=2cm] () at (3,11){};

\draw (1,5) edge [thick, color=LimeGreen] (3,3) node [] {};
\draw (1,1) edge [thick, color=Green] (3,3) node [] {};
\draw (5,5) edge [thick, color=olive] (3,3) node [] {};
\draw (5,1) edge [thick, color=ForestGreen] (3,3) node [] {};
\draw (3,3) edge [ultra thick] (3,4.5) node [] {};

\draw[fill,thick] (3,3) circle (5pt);
\draw[fill,thick] (3,4.5) circle (5pt);
\end{tikzpicture}
}
\vspace{2pt}
\caption{}
\label{reduced_form_examples_b}
\end{subfigure}
\hfill
\begin{subfigure}[b]{0.22\textwidth}
\centering
\resizebox{\textwidth}{!}{
\begin{tikzpicture}[line width = 0.6, font=\large, mydot/.style={circle, fill, inner sep=.7pt}]

\node (p1) at (0.7,11) {\huge $p_1$};
\node (p2) at (5.3,11) {\huge $p_2$};
\node (p3) at (5.3,7) {\huge $p_3$};
\node (p4) at (2,7) {\huge $p_4$};
\node (p1r) at (1.7,5) {\huge $p_1$};
\node (p2r) at (4.3,5) {\huge $p_2$};
\node (p3r) at (4.3,1) {\huge $p_3$};
\node (p4r) at (1.7,1) {\huge $p_4$};

\draw (0,11) edge [thick, color=LimeGreen] (0.83,10.5) node [] {};
\draw (1,7) edge [thick, color=Green] (3.46,8.46) node [] {};
\draw (6,11) edge [thick, color=olive] (4.54,9.54) node [] {};
\draw (6,7) edge [thick, color=ForestGreen] (4.54,8.46) node [] {};
\node[ultra thick, draw=Black, circle, minimum size=1.6cm] () at (4,9){};
\node[ultra thick, draw=Black, circle, minimum size=1.6cm] () at (1.5,10){};
\draw (2.25,9.7) edge [ultra thick] (3.3,9.3) node [] {};
\draw[fill,thick] (2.25,9.7) circle (3pt);
\draw[fill,thick] (3.3,9.3) circle (3pt);

\draw (1,5) edge [thick, color=LimeGreen] (2,4) node [] {};
\draw (1,1) edge [thick, color=Green] (3,3) node [] {};
\draw (5,5) edge [thick, color=olive] (3,3) node [] {};
\draw (5,1) edge [thick, color=ForestGreen] (3,3) node [] {};
\draw[fill,thick] (2,4) circle (5pt);
\draw (3,3) edge [ultra thick] (2,4) node [] {};

\draw[fill,thick] (3,3) circle (5pt);
\end{tikzpicture}
}
\vspace{2pt}
\caption{}
\label{reduced_form_examples_c}
\end{subfigure}
\hfill
\begin{subfigure}[b]{0.22\textwidth}
\centering
\resizebox{0.7\textwidth}{!}{
\begin{tikzpicture}[line width = 0.6, font=\large, mydot/.style={circle, fill, inner sep=.7pt}]


\node (p1) at (1.7,11) {\huge $p_1$};
\node (p2) at (4.3,11) {\huge $p_2$};
\node (p3) at (4.3,7) {\huge $p_3$};
\node (p4) at (1.7,7) {\huge $p_4$};
\node (p1r) at (1.7,5) {\huge $p_1$};
\node (p2r) at (4.3,5) {\huge $p_2$};
\node (p3r) at (4.3,1) {\huge $p_3$};
\node (p4r) at (1.7,1) {\huge $p_4$};

\draw (1,11) edge [thick, color=LimeGreen] (2.2,10.2) node [] {};
\draw (1,7) edge [thick, color=Green] (2.2,7.8) node [] {};
\draw (5,11) edge [thick, color=olive] (3.8,10.2) node [] {};
\draw (5,7) edge [thick, color=ForestGreen] (3.8,7.8) node [] {};
\node[ultra thick, draw=Black, circle, minimum size=1.6cm] () at (3,10.2){};
\node[ultra thick, draw=Black, circle, minimum size=1.6cm] () at (3,7.8){};
\draw[fill,thick] (3,9.4) circle (3pt);
\draw[fill,thick] (3,8.6) circle (3pt);
\draw (3,9.4) edge [ultra thick] (3,8.6) node [] {};

\draw (1,5) edge [thick, color=LimeGreen] (3,3.5) node [] {};
\draw (1,1) edge [thick, color=Green] (3,2.5) node [] {};
\draw (5,5) edge [thick, color=olive] (3,3.5) node [] {};
\draw (5,1) edge [thick, color=ForestGreen] (3,2.5) node [] {};
\draw (3,3.5) edge [ultra thick] (3,2.5) node [] {};

\draw[fill,thick] (3,3.5) circle (5pt);
\draw[fill,thick] (3,2.5) circle (5pt);
\end{tikzpicture}
}
\vspace{2pt}
\caption{}
\label{reduced_form_examples_d}
\end{subfigure}
\caption{Some examples of the reduced forms of certain $2\to 2$ scattering graphs. Top: the original Feynman graphs. Bottom: the corresponding reduced forms.}
\label{reduced_form_examples}
\end{figure}

The concept of the reduced form enables us to obtain a necessary and sufficient condition that $G$ has a hard region, which can be seen as another corollary of theorem~\ref{theorem-Uterms_space_dimensionality}.

\begin{corollary}
The hard vector $\boldsymbol{v}_H$ is normal to a lower facet of $\Delta^{(N+1)}[\mathcal{P}(G)]$, if and only if all the internal propagators of $G_\textup{red}$ are off-shell.
\label{theorem-Uterms_space_dimensionality_corollary2}
\end{corollary}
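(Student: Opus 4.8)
The plan is to translate the geometric condition ``$\boldsymbol v_H$ is normal to a lower facet'' into a statement about the dimension of the face of $\Delta^{(N+1)}[\mathcal P]$ selected by $\boldsymbol v_H$, and then to evaluate that dimension using the spanning-tree results of theorem~\ref{theorem-Uterms_space_dimensionality} and corollary~\ref{theorem-Uterms_space_dimensionality_corollary1}. Since $\boldsymbol u_H=\boldsymbol 0$, the quantity $\boldsymbol v_H\cdot\boldsymbol r=r_{N+1}$ is minimised exactly by the monomials carrying $\lambda^0$, namely all $\mathcal U$ terms together with the $\mathcal F^{(q^2)}$ terms; these span the leading polynomial $\mathcal P_0^{(H)}=\mathcal U+\mathcal F^{(q^2)}$. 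The face they span lies in the hyperplane $\{r_{N+1}=0\}$, so it is a facet if and only if its affine hull is $N$-dimensional. By eq.~(\ref{eq:uf_scaleless}) this is equivalent to $\int[d\x]\,I_0^{(H)}$ being non-scaleless, so the whole statement reduces to computing $d\equiv\dim(\text{aff.\ hull of the }\lambda^0\text{ exponents})$ and showing $d=N$ iff all internal edges of $G_\textup{red}$ are off-shell.

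First I would compute the contribution of the $\mathcal U$ terms. By corollary~\ref{theorem-Uterms_space_dimensionality_corollary1} their difference vectors span $W_G=\bigoplus_i W_{\gamma_i}$, of dimension $\sum_i\big(N(\gamma_i)-1\big)$, where the $\gamma_i$ are the nontrivial 1VI components; each $W_{\gamma_i}$ consists of differences $\boldsymbol e_e-\boldsymbol e_{e'}$ internal to $\gamma_i$, and no bridge coordinate is reached. Next I would add the $\mathcal F^{(q^2)}$ directions. Writing any spanning $2$-tree as $T^1\setminus\{e^\ast\}$, the exponent of the corresponding $\mathcal F$ monomial equals a $\mathcal U$ exponent plus $\boldsymbol e_{e^\ast}$, and it carries $\lambda^0$ precisely when the cut through $e^\ast$ is off-shell. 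Hence an off-shell bridge $b$ contributes the new direction $\boldsymbol e_b$, while a nontrivial block $\gamma_i$ possessing an off-shell internal cut contributes one further direction $\boldsymbol e_{e^\ast}$ ($e^\ast\in\gamma_i$) which, together with $W_{\gamma_i}$, fills the whole $N(\gamma_i)$-dimensional coordinate subspace of $\gamma_i$. As these increments live in mutually disjoint coordinate subspaces (distinct blocks and bridges), they are independent, giving
\begin{equation}
d=\sum_{i}N(\gamma_i)+\#\{\text{off-shell bridges}\}
\end{equation}
whenever every block has an off-shell internal cut. Using $N=\sum_iN(\gamma_i)+\widehat N(G_\textup{red})$ from eq.~(\ref{relation_propagator_number_reduced}), this equals $N$ exactly when all bridges are off-shell \emph{and} every nontrivial block admits an off-shell internal cut.

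For the converse I would exhibit explicit obstructions. If a bridge $b$ is on-shell then $x_b$ appears in no $\lambda^0$ monomial (every $2$-tree that cuts the bridge $b$ separates its two fixed sides and is therefore on-shell), so all $\lambda^0$ exponents lie in $\{r_b=0\}$ and $\boldsymbol u=\boldsymbol e_b$ witnesses $d<N$. Likewise, if a block $\gamma_{i_0}$ has no off-shell internal cut, the vector $\boldsymbol u$ equal to $1$ on the edges of $\gamma_{i_0}$ and $0$ elsewhere satisfies $\boldsymbol u\cdot\hat{\boldsymbol r}=L(\gamma_{i_0})$ on every $\lambda^0$ monomial: it is constant on each block by the basis of theorem~\ref{theorem-Uterms_space_dimensionality}, and it is left unchanged by the cuts of $\gamma_{i_0}$, which are all on-shell and hence absent from $\mathcal P_0^{(H)}$; this again gives $d<N$. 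It then remains to match the two halves of the combinatorial condition with the corollary. The bridge half is immediate, since the bridges of $G$ are exactly the non-auxiliary internal edges of $G_\textup{red}$.

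The main obstacle is the block half: showing that a nontrivial block $\gamma_i$ admits an off-shell internal cut precisely when the internal edges of $G_\textup{red}$ incident to the vertex $\gamma_i$ (its bridges and auxiliary edges) are all off-shell. The key is that the momentum entering $\gamma_i$ through any one attachment point equals the momentum carried by the corresponding incident edge of $G_\textup{red}$; separating a single off-shell channel from the rest then produces an off-shell internal cut, whereas if some incident edge is on-shell the associated channel carries a single $p_i^2\sim\lambda$ and no genuine off-shell splitting through it is available. Care is needed for blocks carrying several external legs, and for the degenerate self-energy configurations in which $G_\textup{red}$ has no internal edge at all (hence no hard scale), which fall outside the wide-angle kinematics of eq.~(\ref{wideangle_scattering_kinematics}); treating these boundary cases cleanly is the part that most requires attention.
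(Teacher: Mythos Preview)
Your approach is essentially the same as the paper's: both compute the dimension of the $\lambda^0$ face by first using corollary~\ref{theorem-Uterms_space_dimensionality_corollary1} to obtain the $\sum_i(N(\gamma_i)-1)$-dimensional $\mathcal{U}$-span, and then adding the $n+\widehat N(G_\text{red})$ extra directions from $\mathcal{F}^{(q^2)}$ terms obtained by removing one further edge from a spanning tree. Your converse, which exhibits explicit scaleless directions $\boldsymbol u$, is in fact more detailed than the paper's, which simply notes that the required basis ``cannot be constructed'' once some edge of $G_\text{red}$ is on-shell.

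There is, however, a genuine imprecision in your final paragraph. The per-block equivalence you state --- ``$\gamma_i$ admits an off-shell internal cut precisely when the internal edges of $G_\text{red}$ incident to $\gamma_i$ are all off-shell'' --- is false in the direction you need for necessity. A block $\gamma_i$ may have an on-shell incident internal edge (say a bridge carrying $p_1$) while also having an off-shell port (an external $q_j$ or another off-shell bridge), and the singleton cut through the latter is achievable and off-shell. What you actually require is only the \emph{global} statement, conditioned on all bridges being off-shell: if some auxiliary edge is on-shell (carrying $\pm p_k$ or $0$), then the subtree of $G_\text{red}$ on one side has at most one external leg $p_k$; since bridges are off-shell, every edge in that subtree is auxiliary and carries $\pm p_k$ or $0$; following it to a leaf block yields a block with at most two on-shell ports (the single auxiliary edge and possibly the external $p_k$) and hence no off-shell cut. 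This gives the obstruction you want without the false per-block claim. The paper is equally brief at this point, so your flagging it as ``the part that most requires attention'' is apt.
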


\begin{proof}
First we show that if all the propagators of $G_\text{red}$ are off-shell, then $\boldsymbol{v}_H$ is normal to a lower facet.
Suppose there are $n$ nontrivial 1VI components of $G$. We can parameterise the propagators of $G$ such that the first $\sum_{i=1}^n N(\gamma_i)$ parameters correspond to the propagators of the $n$ nontrivial 1VI components, and the next (and the last) $\widehat{N}(G_\text{red})$ parameters correspond to the propagators that are in both $G_\text{red}$ and $G$. According to corollary~\ref{theorem-Uterms_space_dimensionality_corollary1}, we can find the following $\sum_{i=1}^n (N(\gamma_i)-1)$ linearly independent vectors of the form:
\begin{eqnarray}
\Big ( \underset{N(\gamma_1)+\dots+N(\gamma_{i-1})}{\underbrace{0,\dots,0}}, \underset{N(\gamma_i)}{\underbrace{1,0,\dots,0,-1,0,\dots,0}}, \underset{N(\gamma_{i+1})+\dots+N(\gamma_{n})}{\underbrace{0,\dots,0}}, \underset{\widehat{N}(G_\text{red})}{\underbrace{0,\dots,0}};0 \Big ),
\label{hard_basis_vectors_U_polynomial}
\end{eqnarray}
and the dimension of the space spanned by the vectors above is $\sum_{i=1}^n N(\gamma_i)-n$. According to eq.~(\ref{relation_propagator_number_reduced}), we still need to find another $n+ \widehat{N}(G_\text{red})$ vectors that are linearly independent of them, in order to show that the lower face normal to $\boldsymbol{v}_H$ is $N$-dimensional, hence a lower facet.

For the spanning tree $T^1$ corresponding to  a given term in $\mathcal{U}(\x)$, the associated point in the parameter space is of the following form:
\begin{eqnarray}
\boldsymbol{r}_1= \Big ( \underset{N(\gamma_1)}{\underbrace{a_1^{(1)},\dots,a_{N(\gamma_1)}^{(1)}}},\dots,\underset{N(\gamma_n)}{\underbrace{a_1^{(n)},\dots,a_{N(\gamma_n)}^{(n)}}},\underset{\widehat{N}(G_\text{red})}{\underbrace{0,\dots,0}};0 \Big ),
\end{eqnarray}
where for each $i=1,\dots,n$, there are $L(\gamma_i)$ entries with value $1$ and $N(\gamma_i)-L(\gamma_i)$ entries with value $0$ in $\{a_1^{(i)},\dots,a_{N(\gamma_i)}^{(i)}\}$. We now consider the spanning 2-tree corresponding to an $\mathcal{F}^{(q^2)}$ term by removing one propagator from $T^1$. Since every line momentum of $G_\text{red}$ is off-shell, this propagator can either be in $G_\text{red}$, or in the $n$ nontrivial 1VI components. The point $\boldsymbol{r}_2$ that is associated to this spanning 2-tree, is identical to $\boldsymbol{r}_1$ except for one entry, which is $0$ in $\boldsymbol{r}_1$ and $1$ in $\boldsymbol{r}_2$. As a result, the vector $\boldsymbol{r}_2 -\boldsymbol{r}_1$ may take one of the two following forms:
\begin{subequations}
\label{hard_UandFs_difference}
\begin{align}
        \Big (\underset{N(\gamma_1)+\dots+N(\gamma_{i-1})}{\underbrace{0,\dots,0}},\ \underset{N(\gamma_i)}{\underbrace{0,\dots,0,1,0,\dots,0}}\ , \underset{N(\gamma_{i+1})+\dots+N(\gamma_n)}{\underbrace{0,\dots,0}}, \underset{\widehat{N}(G_\text{red})}{\underbrace{0,\dots,0}},;0 \Big ),
        \label{hard_UandFs_differenceI}
        \\
        \Big (\underset{N(\gamma_1)+\dots+N(\gamma_n)}{\underbrace{0,\dots,0}},\underset{\widehat{N}(G_\text{red})}{\underbrace{0,\dots,0,1,0,\dots,0}};0 \Big ).
        \label{hard_UandFs_differenceII}
\end{align}
\end{subequations}
In each of the vectors above, there is exactly one entry with value $1$, while all the others are~$0$. Note that in (\ref{hard_UandFs_differenceI}), we may choose the entry $1$ at $n$ distinct positions, each corresponding to removing some off-shell propagator in a different nontrivial 1VI component $\gamma_i$ with $1 \leqslant i\leqslant n$; these are bound to be independent. 
In (\ref{hard_UandFs_differenceII}), we may choose the entry~$1$ at any of the last $\widehat{N}(G_\text{red})$ entries, each corresponding to removing a different internal propagator in~$G_\text{red}$. 
It is straightforward to check that the vectors in eqs.~(\ref{hard_basis_vectors_U_polynomial}) and (\ref{hard_UandFs_difference}) are linearly independent, and their total number is $N(G)$, hence they constitute a set of basis vectors of the hard facet.

Let us now consider the case where there is a propagator $e'\in G_\text{red}$ that carries an on-shell momentum (either lightlike or vanishing). In this case, such a basis cannot be constructed.
For example, if one propagator of $G_\text{red}$ carries a momentum $p_1^\mu$, then removing it from a spanning tree, yields a $\mathcal{F}^{(p_1^2)}$ term, rather than a $\mathcal{F}^{(q^2)}$ term, which is not in the hard facet.
\end{proof}

As an example, all the propagators in the reduced form of figures~\ref{reduced_form_examples_a} and \ref{reduced_form_examples_d} carry off-shell momenta, so the hard region is present in those graphs; meanwhile the reduced form in figures~\ref{reduced_form_examples_b} and \ref{reduced_form_examples_c} contains propagators that carry vanishing or lightlike momenta, respectively, so  there are no hard regions for these graphs.

With theorem~\ref{theorem-Uterms_space_dimensionality} and its corollaries, we can proceed with the study of the space generated by the polynomial $\mathcal{P}_0^{(R)}(\x;\s)$ for a given infrared region $R$. This will be our objective in the rest of this section.

\subsection{Leading terms in the infrared regions}
\label{leading_terms_in_infrared_regions}

We now focus on the infrared regions by investigating the generic properties of the terms of $\mathcal{P}_0^{(R)} (\x;\s)$, where $R$ is a given infrared region characterised by the region vector $\boldsymbol{v}_R$. By definition, these terms correspond to the points $\boldsymbol{r}$ of the Newton polytope with the minimum of $\boldsymbol{r} \cdot \boldsymbol{v}_R$. Below we will see that these terms can be classified into four types having distinct features.

We start with two examples of infrared regions from figure~\ref{figure-UF_polynomial_onshell_expansion_example}, which describes a hard process with one off-shell external momentum $Q^\mu$ and three nearly on-shell external momenta $p_i^\mu$ for $i=1,2,3$. In the on-shell expansion, we assume that $p_i^2 \sim \lambda Q^2$ with $\lambda\ll 1$, while the hard scales $s_{kl}\equiv (p_k+p_l)^2$ and $Q^2$ are all of the same order of magnitude. By definition, the Lee-Pomeransky polynomial $\mathcal{P}(\x;\s)$ is
\begin{eqnarray}
\mathcal{P}(\x;\s)=&& x_1x_2 +x_1x_3 +x_1x_5 +x_1x_6 +x_2x_3 +x_2x_4\nonumber\\
&& +x_2x_6 +x_3x_4 +x_3x_5 +x_4x_5 +x_4x_6 +x_5x_6 +(-Q^2)x_1x_2x_3\nonumber\\
&& +(-p_1^2)\big( x_1x_2x_4 +x_1x_3x_4 +x_1x_4x_5 +x_1x_4x_6 +x_1x_5x_6 \big)\nonumber\\
&& +(-p_2^2)\big( x_1x_2x_5 +x_2x_3x_5 +x_2x_4x_5 +x_2x_4x_6 +x_2x_5x_6 \big)\nonumber\\
&& +(-p_3^2)\big( x_1x_3x_6 +x_2x_3x_6 +x_3x_4x_6 +x_3x_5x_6 +x_3x_4x_5 \big)\nonumber\\
&& +(-s_{12})x_1x_2x_6 +(-s_{23})x_2x_3x_4 +(-s_{13})x_1x_3x_5.
\label{UF_polynomial_onshell_expansion_example}
\end{eqnarray}
The two regions we focus on are denoted by $\text{SS}$ and $\text{C}_1\text{S}$. In the $\text{SS}$-region, the two loop momenta are soft; in the $\text{C}_1\text{S}$-region, the loop momentum running through the propagators $4,5,2$ and $1$ is collinear to $p_1^\mu$, while the other loop momentum is soft. Using eq.~(\ref{LP_parameter_scales}), the Lee-Pomeransky parameters in these regions scale as follows:
\begin{subequations}
\label{SS_and_C1S_scaling}
\begin{align}
    \text{SS: }&\quad x_1,x_2,x_3\sim \lambda^{-1},\quad x_4,x_5,x_6\sim\lambda^{-2};\\
    \text{C}_1\text{S: }&\quad x_2\sim\lambda^0,\quad x_1,x_3,x_4,x_5\sim\lambda^{-1},\quad x_6\sim\lambda^{-2}.
\end{align}
\end{subequations}
\begin{figure}[t]
\centering
\resizebox{0.45\textwidth}{!}{
\begin{tikzpicture}[decoration={markings,mark=at position 0.55 with {\arrow{latex}}}] 
\node (q1m) at (6,6) {$Q$};
\node [draw,circle,minimum size=4pt,fill=Black,color=Black,inner sep=0pt,outer sep=0pt] (q1) at (6, 5) {};
\node (p1) at (3, 2) {$p_1$};
\node (p2) at (6,1.5) {$p_2$};
\node (p3) at (9,2) {$p_3$};
\node [minimum size=0pt,inner sep=0pt,outer sep=0pt] (s) at (5.1,2.8) {};

\path (q1)-- node[pos=.5,minimum size=0pt,inner sep=0pt,outer sep=0pt,color=Red] (q1b) {}(p1);
\path (q1)-- node[pos=.51,circle,fill=white,opacity=1,minimum size=10pt,inner sep=0pt,outer sep=0 pt] (q2c) {} node[pos=.8,minimum size=0pt,inner sep=0pt,outer sep=0pt,color=Red] (q2b) {}(p2);
\path (q1)-- node[pos=.5,minimum size=0pt,inner sep=0pt,outer sep=0pt,color=Red] (q3b) {}(p3);

\node [draw,circle,minimum size=4pt,fill=Black,inner sep=0pt,outer sep=0pt] () at (q1b) {};
\node [draw,circle,minimum size=4pt,fill=Black,inner sep=0pt,outer sep=0pt] () at (q2b) {};
\node [draw,circle,minimum size=4pt,fill=Black,inner sep=0pt,outer sep=0pt] () at (q3b) {};
\node [draw,circle,minimum size=4pt,fill=Black,inner sep=0pt,outer sep=0pt] () at (s) {};

\path (q1) -- (q1b) node [midway,above,xshift=-3pt,color=Red] {$1$};
\path (q1) -- (q3b) node [midway,above,xshift=3pt,color=Red] {$3$};
\path (q1) -- (q2c) node [midway,xshift=8pt,yshift=-5pt,color=Red] {$2$};
\path (s) -- (q1b) node [midway,below,xshift=-8pt,yshift=3pt,color=Red] {$4$};
\path (s) -- (q2b) node [midway,below,xshift=-5pt,yshift=-2pt,color=Red] {$5$};
\path (s) -- (q3b) node [midway,below,xshift=1pt,yshift=2pt,color=Red] {$6$};

\draw [ultra thick,color=Blue] (q1m) -- (q1);
\draw [thick,color=Black] (q1) -- (p1); 
\draw [thick,color=Black] (q1) -- (q2c);
\draw [thick,color=Black] (q2c) -- (p2);
\draw [thick,color=Black] (q1) -- (p3);
\draw (s) edge [thick,color=Black,bend left=15] (q1b) {};
\draw (s) edge [thick,color=Black,bend right=15] (q2b) {};
\draw (s) edge [thick,color=Black,bend left=15] (q3b) {};
\end{tikzpicture}
}
\vspace{-3em}
\vspace{12pt}
\caption{A two-loop four-point wide-angle scattering example with three nearly on-shell external momenta $p_i^\mu$ ($i=1,2,3$) and one off-shell external momentum $Q^\mu$.}
\label{figure-UF_polynomial_onshell_expansion_example}
\end{figure}
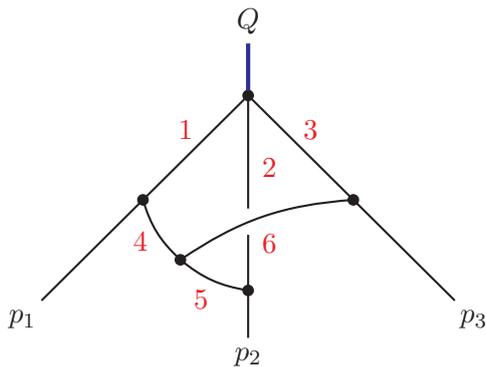
The leading Lee-Pomeransky polynomial $\mathcal{P}_0^{(R)}(\x;\s)$, with $R=\text{SS}$ reads:
\begin{eqnarray}
\label{UF_polynomial_SS_leading_terms}
\mathcal{P}_0^{(\text{SS})}(\x;\s)=&& {\color{Red} x_4x_5} + {\color{Red} x_4x_6} + {\color{Red} x_5x_6} + (-p_1^2) ({\color{Green} x_1}{\color{Red} x_4x_5} + {\color{Green} x_1}{\color{Red} x_4x_6} + {\color{Green} x_1}{\color{Red} x_5x_6}) \\ \nonumber
&&+(-p_2^2)({\color{Green} x_2}{\color{Red} x_4x_5} + {\color{Green} x_2}{\color{Red} x_4x_6} + {\color{Green} x_2}{\color{Red} x_5x_6}) + (-p_3^2)({\color{Green} x_3}{\color{Red} x_4x_5} + {\color{Green} x_3}{\color{Red} x_4x_6} + {\color{Green} x_3}{\color{Red} x_5x_6}) \\ \nonumber
&&+(-s_{12}) {\color{Green} x_1x_2}{\color{Red} x_6} +(-s_{23}){\color{Green} x_2x_3}{\color{Red} x_4} +(-s_{13}){\color{Green} x_1x_3}{\color{Red} x_5},
\end{eqnarray}
while the leading polynomial $\mathcal{P}_0^{(R)}(\x;\s)$ with $R=\text{C}_1\text{S}$ is:
\begin{eqnarray}
\label{UF_polynomial_C1S_leading_terms}
\mathcal{P}_0^{(\text{C}_1\text{S})}(\x;\s)=&& {\color{Green} x_1}{\color{Red} x_6}+ {\color{Green} x_4}{\color{Red} x_6}+ {\color{Green} x_5}{\color{Red} x_6} + (-p_1^2) ({\color{Green} x_1x_4}{\color{Red} x_6} + {\color{Green} x_1x_5}{\color{Red} x_6})\\
&&+ (-p_3^2)({\color{Green} x_1x_3}{\color{Red} x_6} + {\color{Green} x_3x_4}{\color{Red} x_6} + {\color{Green} x_3x_5}{\color{Red} x_6}) +(-s_{12}){\color{Green} x_1}{\color{Blue} x_2}{\color{Red} x_6} +(-s_{13}){\color{Green} x_1x_3x_5}.\qquad\nonumber
\end{eqnarray}
Note that in the polynomials~(\ref{UF_polynomial_SS_leading_terms}) and~(\ref{UF_polynomial_C1S_leading_terms})
we used colour coding to identify the parameters associated with 
{\color{Blue} hard} (blue), {\color{Green} jet} (green) and {\color{Red} soft} (red), respectively.
In each of these polynomials, the first three terms are the $\mathcal{U}^{(R)}$ terms, the terms carrying $(-p_i^2)$ dependence are the $\mathcal{F}^{(p_i^2,R)}$ terms, e.g. $\mathcal{F}^{(p_i^2,\text{C}_1\text{S})}= (-p_1^2) ({\color{Green} x_1x_4}{\color{Red} x_6} + {\color{Green} x_1x_5}{\color{Red} x_6})$, and those carrying $s_{kl}$ dependence are $\mathcal{F}^{(q^2,R)}$ terms, e.g. $\mathcal{F}^{(q^2,\text{C}_1\text{S})}= (-s_{12}){\color{Green} x_1}{\color{Blue} x_2}{\color{Red} x_6} +(-s_{13}){\color{Green} x_1x_3x_5}$. Using eq.~(\ref{SS_and_C1S_scaling}), one can verify that all the terms in (\ref{UF_polynomial_SS_leading_terms}) scale as $\lambda^{-4}$ while all the terms in (\ref{UF_polynomial_C1S_leading_terms}) scale as $\lambda^{-3}$. Note that the remaining terms in (\ref{UF_polynomial_onshell_expansion_example}), which have been dropped in $\mathcal{P}_0^{(\text{SS})}$ and $\mathcal{P}_0^{(\text{C}_1\text{S})}$, would be sub-leading in $\lambda$.

In the two polynomials above, one observes that only certain combinations of {\color{Blue} hard}, {\color{Green} jet} and {\color{Red} soft} parameters can be consistent with aforementioned $\lambda$ scaling. Specifically, for $R=\text{SS}$ in (\ref{UF_polynomial_SS_leading_terms}), every term in $\mathcal{U}^{(R)}$ has two soft parameters, every term in $\mathcal{F}^{(p^2,R)}$ has two soft and one jet parameters, and every term in $\mathcal{F}^{(q^2,R)}$ has one soft and two jet parameters. For $R=\text{C}_1\text{S}$ in (\ref{UF_polynomial_C1S_leading_terms}), every term in $\mathcal{U}^{(R)}$ has one soft and one jet parameters, every term in $\mathcal{F}^{(p^2,R)}$ has one soft and two jet parameters. In this case we note that there are two types of terms in $\mathcal{F}^{(q^2,R)}$: one of them ($s_{12}{\color{Green} x_1}{\color{Blue} x_2}{\color{Red} x_6}$) has one soft, one jet and one hard parameters, while the other ($s_{13}{\color{Green} x_1x_3x_5}$) has three jet parameters.

Motivated by these observations, we consider a generic region with hard, jet and soft propagators. In order to study the terms of $\mathcal{U}^{(R)}$ and $\mathcal{F}^{(R)}$ and classify them into different types, we define the subgraphs $H$, $J_i$ ($i=1,\dots,K$) and $S$, as a partition of $G$, such that every edge and vertex in $G$ can be uniquely associated to exactly one of them. Furthermore, we associate edges and vertices to these subgraphs such that:
\begin{itemize}
    \item An edge is contained in $H$, $J_i$ ($i=1,\dots,K$) and $S$ respectively if and only if its momentum is hard, collinear to $p_i$, or soft.  
    \item A vertex is in $S$ if and only if it connects to soft edges only. A vertex is in $J_i$ if and only if it connects to at least one edge in $J_i$ and possibly edges in $S$ but no edges from $H$ nor $J_j$ with $j\neq i$. All the other vertices are in $H$.
\end{itemize}
Note that from the definitions above, the subgraph containing the endpoint of an edge may be different to the subgraph containing the edge itself. For example, the endpoints of a soft propagator can be jet vertices, and one endpoint of a jet propagator can be a hard vertex. Thus, what we are calling a subgraph in this section is subtly different from the strict notion of a subgraph in graph theory.

\begin{figure}[t]
\centering
\begin{subfigure}[b]{0.32\textwidth}
\centering
    \resizebox{0.6\textwidth}{!}{
\begin{tikzpicture}[line width = 0.6, font=\large, mydot/.style={circle, fill, inner sep=.7pt}]
\node[draw=Blue,circle,minimum size=1cm,fill=Blue!50] (h) at (6,8){};
\node at (h) {$H$};
\node (q1) at (4.5,9.5) {};
\node (q1p) at (4.7,9.7) {};
\node (qn) at (7.5,9.5) {};
\node (qnp) at (7.3,9.7) {};
\draw (q1) edge [color=Blue] (h) node [] {};
\draw (q1p) edge [color=Blue] (h) node [left] {$q_1$};
\draw (qn) edge [color=Blue] (h) node [] {};
\draw (qnp) edge [color=Blue] (h) node [right] {$q_L$};
\path (q1)-- node[mydot, pos=.333] {} node[mydot] {} node[mydot, pos=.666] {}(qn);
\end{tikzpicture}
}
\vspace{3em}
\vspace{3pt}
\caption{A hard subgraph $H$}
\label{H}
\end{subfigure}
\begin{subfigure}[b]{0.32\textwidth}
\centering
    \resizebox{0.2\textwidth}{!}{
\begin{tikzpicture}[line width = 0.6, font=\large, mydot/.style={circle, fill, inner sep=.7pt}]
\node(h) at (2,7){};
\node[draw=LimeGreen,ellipse,minimum height=3cm, minimum width=1.1cm,fill=LimeGreen!50] (j2) at (2,4.5){};
\node at (j2) {$J_i$};
\node (p2) at (2,2) {};
\draw[fill,thick,color=magenta] (h) circle (4pt);
\path (h) edge [color=LimeGreen] [double,double distance=3pt] (j2) {};
\draw (p2) edge [color=LimeGreen] (j2) node [below] {$p_i$};
\end{tikzpicture}
}
\vspace{3pt}
\caption{A contracted jet graph $\widetilde{J}_i$}
\label{Jtilde}
\end{subfigure}
\hfill
\begin{subfigure}[b]{0.32\textwidth}
\centering
    \resizebox{0.65\textwidth}{!}{
\begin{tikzpicture}[line width = 0.6, font=\large, mydot/.style={circle, fill, inner sep=.7pt}]
\node (h) at (3,7){};
\node[dashed,draw=Rhodamine,circle,minimum size=1.8cm,fill=Rhodamine!50] (s) at (3,2){};
\node (ji) at (3.5,4.5){};
\node (jn) at (4.5,4.5){};
\node at (s) {$S$};
\draw[fill,thick,color=magenta] (h) circle (3pt);
\draw (s) edge [dashed,double,bend left = 60,color=Rhodamine] (h) node [right] {};
\draw (s) edge [dashed,double,bend left = 15,color=Rhodamine] (h) {};
\draw (s) edge [dashed,double,bend right = 15,color=Rhodamine] (h) {};
\draw (s) edge [dashed,double,bend right = 60,color=Rhodamine] (h) {};
\path (ji)-- node[mydot, pos=.15] {} node[mydot] {} node[mydot, pos=.85] {}(jn);
\end{tikzpicture}
}
\vspace{3pt}
\caption{A contracted soft graph $\widetilde{S}$}
\label{Stilde}
\end{subfigure}
\caption{The $H$, $\widetilde{J}_i$ and $\widetilde{S}$ graphs. The pink vertex in the $\widetilde{J}_i$ (or $\widetilde{S}$) graph is a newly introduced vertex, which identifies all the vertices of $H$ (or $H\cup J$) that $J_i$ (or $S$) is attached to. Note that what is contracted is the rest of the graph, not the $J$ or $S$ themselves.}
\label{HJStilde}
\end{figure}
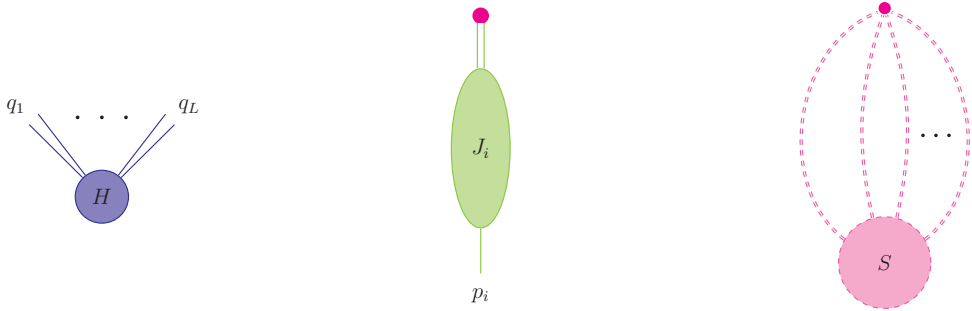

Next, let us define the \emph{contracted graphs} $\widetilde{J}_i$ and $\widetilde{S}$ that are constructed from the subgraphs $J_i$ and $S$. The contracted soft subgraph $\widetilde{S}$ is obtained from $S$ by contracting to a point all the vertices of $H$ and $J$ into which soft momenta flow, and replacing them by a new soft vertex, which we will refer to as the \emph{auxiliary soft vertex}. Similarly, $\widetilde{J}_i$ is obtained from $J_i$ by contracting to a point all the vertices of $H$ into which the $J_i$ momenta flow, and replacing them by a new vertex of $J_i$ (the \emph{auxiliary $J_i$ vertex}).
In figure~\ref{HJStilde} we show an illustration of the hard subgraphs $H$ and the contracted subgraphs $\widetilde{J}_i$ and $\widetilde{S}$ that would be obtained from the generic scattering graph shown in figure~\ref{classical_picture}.
By definition, for each of the contracted graphs $\widetilde{S}$ and $\widetilde{J}_i$ ($i=1,\dots,K$), there is one more vertex than the corresponding subgraph $S$ and $J_i$. The contracted graphs can also be viewed as follows: $\widetilde{S}=G/(H\cup J)$ and $\widetilde{J}_i = (G\setminus S)/(H \bigcup_{k\neq i} J_k)$.

The introduction of the contracted graphs is convenient for the following reason. We observe that the total\footnote{Note that the number of loops in each of the graphs separately is not the same between the original and the contracted graphs. For example, $\widetilde{S}$ has more loops than $S$.} number of the independent hard, jet and soft loop momenta in $\I(G)$ is equal to the number of loops in $H$, $\widetilde{J}$ and $\widetilde{S}$, respectively.
It then follows that
\begin{eqnarray}
L(G) =&& L(G\setminus S) + L(\widetilde{S})\nonumber\\
=&& L(H) +\sum_{i=1}^K L(\widetilde{J}_i) + L(\widetilde{S}).
\label{loop_relation}
\end{eqnarray}

Another benefit of defining the contracted subgraphs is that we can use this concept to describe the leading polynomials. As we will see below, the polynomials $\mathcal{U}^{(R)}$ and $\mathcal{F}^{(R)}$ can be classified into certain types, according to the structures of $H$, $\widetilde{J}$ and $\widetilde{S}$.

Any term of $\mathcal{P}_0^{(R)}$ is obtained either from a spanning tree or a spanning 2-tree of $G$, which we denote as $T$. Further, let there be $n_H^{}$ ($n_J^{}$, $n_S^{}$) hard (jet, soft) propagators in the complement of $T$. The following fundamental constraints on the values of $n_H^{}$ and $n_S^{}$ can be observed:
\begin{eqnarray}
n_H^{} \geqslant L(H),\qquad\quad n_S^{} \leqslant L(\widetilde{S}).
\label{leading_terms_fundamental_constraints}
\end{eqnarray}
The first inequality $n_H^{} \geqslant L(H)$ must hold, because we need to remove at least $L(H)$ propagators in the hard subgraph to turn $G$ into a spanning tree. To arrive at \hbox{$n_S^{}\leqslant L(\widetilde{S})$} let us construct a proof by contradiction. To this end we assume that $n_S>L(\widetilde{S})$ and show that it is inconsistent. First of all, $n_S>L(\widetilde{S})$ implies that the graph $\gamma_{\widetilde S}\equiv \widetilde{S} \cap T$ would be disconnected.\footnote{This can be understood from Euler's theorem. For $\widetilde S$ we have $V(\widetilde S)-N(\widetilde S)+L(\widetilde S)=1$. After removing $L(\widetilde S)+x$ edges from $\widetilde S$ we obtain $\gamma_{\widetilde S}$ for which Euler's theorem yields $V(\widetilde S)-(N(\widetilde S)-L(\widetilde S)-x)+L(\gamma_{\widetilde S})=k$ and $k$ denotes the number of disjoint connected components of $\gamma_{\widetilde S}$ . Combining these equations one obtains $k=L(\gamma_{\widetilde S})+x+1>1$  as long as $x>0$ proving that $\gamma_{\widetilde S}$ is disconnected.} One of its components consists exclusively of soft edges which do not attach to $H$ or $J$ in $G$.
Since by definition (see point~\textit{3} of proposition~\ref{proposition-region_vectors_are_hard_and_infrared}), the soft subgraph $S$ does not attach to any external momenta, momentum conservation of this disconnected component implies that the total momentum flowing into it vanish, and hence it would not contribute to the Symanzik polynomials. We have thus established eq.~(\ref{leading_terms_fundamental_constraints}).

\begin{figure}[t]
  \centering
  \begin{subfigure}[b]{0.49\textwidth}
    \centering
        \resizebox{\textwidth}{!}{
\begin{tikzpicture}[line width = 0.6, font=\large, mydot/.style={circle, fill, inner sep=.7pt}]

\node[draw=Blue,circle,minimum size=1cm,fill=Blue!50] (h) at (6,8){};
\node[dashed, draw=Rhodamine,circle,minimum size=1.8cm,fill=Rhodamine!50] (s) at (4,4.1){};
\node[draw=Green,ellipse,minimum height=3cm, minimum width=1.1cm,fill=Green!50,rotate=-52] (j1) at (2,5){};
\node[draw=LimeGreen,ellipse,minimum height=3cm, minimum width=1.1cm,fill=LimeGreen!50] (j2) at (6,3.5){};
\node[draw=olive,ellipse,minimum height=3cm, minimum width=1.1cm,fill=olive!50,rotate=52] (jn) at (10,5){};

\node at (h) {ST};
\node at (s) {ST};
\node at (j1) {ST};
\node at (j2) {ST};
\node at (jn) {ST};

\path (h) edge [double,double distance=2pt,color=Green] (j1) {};
\path (h) edge [double,double distance=2pt,color=LimeGreen] (j2) {};
\path (h) edge [double,double distance=2pt,color=olive] (jn) {};

\draw (s) edge [dashed,double,color=Rhodamine] (h) node [right] {};
\draw (s) edge [dashed,double,color=Rhodamine] (j1) {};
\draw (s) edge [dashed,double,color=Rhodamine] (j2) {};
\draw (s) edge [dashed,double,color=Rhodamine,bend left = 35] (jn) {};

\node (q1) at (4.5,9.5) {};
\node (q1p) at (4.7,9.7) {};
\node (qn) at (7.5,9.5) {};
\node (qnp) at (7.3,9.7) {};
\draw (q1) edge [color=Blue] (h) node [] {};
\draw (q1p) edge [color=Blue] (h) node [left] {$q_1$};
\draw (qn) edge [color=Blue] (h) node [] {};
\draw (qnp) edge [color=Blue] (h) node [right] {$q_L$};

\node (p1) at (0,3.5) {};
\node (p2) at (6,1) {};
\node (pn) at (12,3.5) {};
\draw (p1) edge [color=Green] (j1) node [below] {$p_1$};
\draw (p2) edge [color=LimeGreen] (j2) node [below] {$p_i$};
\draw (pn) edge [color=olive] (jn) node [below] {$p_K$};

\path (p2)-- node[mydot, pos=.333] {} node[mydot] {} node[mydot, pos=.666] {}(pn);
\path (p1)-- node[mydot, pos=.333] {} node[mydot] {} node[mydot, pos=.666] {}(p2);
\path (q1)-- node[mydot, pos=.333] {} node[mydot] {} node[mydot, pos=.666] {}(qn);

\end{tikzpicture}
}
        \vspace{-2em}
        \vspace{5pt}
    \caption{}
    \label{fig:UF_terms_graphic_description_1}
    \vspace{3pt}
  \end{subfigure}
  \hfill
  \begin{subfigure}[b]{0.49\textwidth}
  \centering
        \resizebox{\textwidth}{!}{
\begin{tikzpicture}[line width = 0.6, font=\large, mydot/.style={circle, fill, inner sep=.7pt}]

\node[draw=Blue,circle,minimum size=1cm,fill=Blue!50] (h) at (6,8){};
\node[dashed, draw=Rhodamine,circle,minimum size=1.8cm,fill=Rhodamine!50] (s) at (4,4.1){};
\node[draw=Green,ellipse,minimum height=3cm, minimum width=1.1cm,fill=Green!50,rotate=-52] (j1) at (2,5){};
\node[draw=LimeGreen,ellipse,minimum height=3cm, minimum width=1.1cm,fill=LimeGreen!50] (j2) at (6,3.5){};
\node[draw=olive,ellipse,minimum height=3cm, minimum width=1.1cm,fill=olive!50,rotate=52] (jn) at (10,5){};

\node at (h) {ST};
\node at (s) {ST};
\node at (j1) {ST};
\node at (j2) {S2T};
\node at (jn) {ST};

\path (h) edge [double,double distance=2pt,color=Green] (j1) {};
\path (h) edge [double,double distance=2pt,color=LimeGreen] (j2) {};
\path (h) edge [double,double distance=2pt,color=olive] (jn) {};

\draw (s) edge [dashed,double,color=Rhodamine] (h) node [right] {};
\draw (s) edge [dashed,double,color=Rhodamine] (j1) {};
\draw (s) edge [dashed,double,color=Rhodamine] (j2) {};
\draw (s) edge [dashed,double,color=Rhodamine,bend left = 35] (jn) {};

\node (q1) at (4.5,9.5) {};
\node (q1p) at (4.7,9.7) {};
\node (qn) at (7.5,9.5) {};
\node (qnp) at (7.3,9.7) {};
\draw (q1) edge [color=Blue] (h) node [] {};
\draw (q1p) edge [color=Blue] (h) node [left] {$q_1$};
\draw (qn) edge [color=Blue] (h) node [] {};
\draw (qnp) edge [color=Blue] (h) node [right] {$q_L$};

\node (p1) at (0,3.5) {};
\node (p2) at (6,1) {};
\node (pn) at (12,3.5) {};
\draw (p1) edge [color=Green] (j1) node [below] {$p_1$};
\draw (p2) edge [color=LimeGreen] (j2) node [below] {$p_i$};
\draw (pn) edge [color=olive] (jn) node [below] {$p_K$};

\path (p2)-- node[mydot, pos=.333] {} node[mydot] {} node[mydot, pos=.666] {}(pn);
\path (p1)-- node[mydot, pos=.333] {} node[mydot] {} node[mydot, pos=.666] {}(p2);
\path (q1)-- node[mydot, pos=.333] {} node[mydot] {} node[mydot, pos=.666] {}(qn);

\end{tikzpicture}
}
        \vspace{-2em}
        \vspace{5pt}
    \caption{}
    \label{fig:UF_terms_graphic_description_2}
    \vspace{3pt}
  \end{subfigure}
  \begin{subfigure}[b]{0.49\textwidth}
  \centering
        \resizebox{\textwidth}{!}{
\begin{tikzpicture}[line width = 0.6, font=\large, mydot/.style={circle, fill, inner sep=.7pt}]

\node[draw=Blue,circle,minimum size=1cm,fill=Blue!50] (h) at (6,8){};
\node[dashed, draw=Rhodamine,circle,minimum size=1.8cm,fill=Rhodamine!50] (s) at (4,4.1){};
\node[draw=Green,ellipse,minimum height=3cm, minimum width=1.1cm,fill=Green!50,rotate=-52] (j1) at (2,5){};
\node[draw=LimeGreen,ellipse,minimum height=3cm, minimum width=1.1cm,fill=LimeGreen!50] (j2) at (6,3.5){};
\node[draw=olive,ellipse,minimum height=3cm, minimum width=1.1cm,fill=olive!50,rotate=52] (jn) at (10,5){};

\node at (h) {S2T};
\node at (s) {ST};
\node at (j1) {ST};
\node at (j2) {ST};
\node at (jn) {ST};

\path (h) edge [double,double distance=2pt,color=Green] (j1) {};
\path (h) edge [double,double distance=2pt,color=LimeGreen] (j2) {};
\path (h) edge [double,double distance=2pt,color=olive] (jn) {};

\draw (s) edge [dashed,double,color=Rhodamine] (h) node [right] {};
\draw (s) edge [dashed,double,color=Rhodamine] (j1) {};
\draw (s) edge [dashed,double,color=Rhodamine] (j2) {};
\draw (s) edge [dashed,double,color=Rhodamine,bend left = 35] (jn) {};

\node (q1) at (4.5,9.5) {};
\node (q1p) at (4.7,9.7) {};
\node (qn) at (7.5,9.5) {};
\node (qnp) at (7.3,9.7) {};
\draw (q1) edge [color=Blue] (h) node [] {};
\draw (q1p) edge [color=Blue] (h) node [left] {$q_1$};
\draw (qn) edge [color=Blue] (h) node [] {};
\draw (qnp) edge [color=Blue] (h) node [right] {$q_L$};

\node (p1) at (0,3.5) {};
\node (p2) at (6,1) {};
\node (pn) at (12,3.5) {};
\draw (p1) edge [color=Green] (j1) node [below] {$p_1$};
\draw (p2) edge [color=LimeGreen] (j2) node [below] {$p_i$};
\draw (pn) edge [color=olive] (jn) node [below] {$p_K$};

\path (p2)-- node[mydot, pos=.333] {} node[mydot] {} node[mydot, pos=.666] {}(pn);
\path (p1)-- node[mydot, pos=.333] {} node[mydot] {} node[mydot, pos=.666] {}(p2);
\path (q1)-- node[mydot, pos=.333] {} node[mydot] {} node[mydot, pos=.666] {}(qn);

\end{tikzpicture}
}
        \vspace{-2em}
        \vspace{5pt}
    \caption{}
    \label{fig:UF_terms_graphic_description_3}
  \end{subfigure}
  \hfill
    \begin{subfigure}[b]{0.49\textwidth}
      \vspace{0.5em}
  \centering
        \resizebox{\textwidth}{!}{
\begin{tikzpicture}[line width = 0.6, font=\large, mydot/.style={circle, fill, inner sep=.7pt}]

\node[draw=Blue,circle,minimum size=1cm,fill=Blue!50] (h) at (6,8){};
\node[dashed, draw=Rhodamine,circle,minimum size=1.8cm,fill=Rhodamine!50] (s) at (4,4.1){};
\node[draw=Green,ellipse,minimum height=3cm, minimum width=1.1cm,fill=Green!50,rotate=-52] (j1) at (2,5){};
\node[draw=LimeGreen,ellipse,minimum height=3cm, minimum width=1.1cm,fill=LimeGreen!50] (j2) at (6,3.5){};
\node[draw=olive,ellipse,minimum height=3cm, minimum width=1.1cm,fill=olive!50,rotate=52] (jn) at (10,5){};

\node at (h) {ST};
\node at (s) {1-loop};
\node at (j1) {S2T};
\node at (j2) {ST};
\node at (jn) {S2T};

\path (h) edge [double,double distance=2pt,color=Green] (j1) {};
\path (h) edge [double,double distance=2pt,color=LimeGreen] (j2) {};
\path (h) edge [double,double distance=2pt,color=olive] (jn) {};

\draw (s) edge [dashed,double,color=Rhodamine] (h) node [right] {};
\draw (s) edge [dashed,double,color=Rhodamine] (j1) {};
\draw (s) edge [dashed,double,color=Rhodamine] (j2) {};
\draw (s) edge [dashed,double,color=Rhodamine,bend left = 35] (jn) {};

\node (q1) at (4.5,9.5) {};
\node (q1p) at (4.7,9.7) {};
\node (qn) at (7.5,9.5) {};
\node (qnp) at (7.3,9.7) {};
\draw (q1) edge [color=Blue] (h) node [] {};
\draw (q1p) edge [color=Blue] (h) node [left] {$q_1$};
\draw (qn) edge [color=Blue] (h) node [] {};
\draw (qnp) edge [color=Blue] (h) node [right] {$q_L$};

\node (p1) at (0,3.5) {};
\node (p2) at (6,1) {};
\node (pn) at (12,3.5) {};
\draw (p1) edge [color=Green] (j1) node [below] {$p_1$};
\draw (p2) edge [color=LimeGreen] (j2) node [below] {$p_i$};
\draw (pn) edge [color=olive] (jn) node [below] {$p_K$};

\path (p2)-- node[mydot, pos=.333] {} node[mydot] {} node[mydot, pos=.666] {}(pn);
\path (p1)-- node[mydot, pos=.333] {} node[mydot] {} node[mydot, pos=.666] {}(p2);
\path (q1)-- node[mydot, pos=.333] {} node[mydot] {} node[mydot, pos=.666] {}(qn);

\end{tikzpicture}
}
        \vspace{-2em}
        \vspace{5pt}
    \caption{}
    \label{fig:UF_terms_graphic_description_4}
  \end{subfigure}
  \caption{The graphic descriptions of the terms of $\mathcal{U}^{(R)}(\x)$ and $\mathcal{F}^{(R)}(\x;\s)$ with $R$ being an infrared region. The lines connecting different blobs represent any number of propagators, and the notation ST (S2T) stands for spanning trees (spanning 2-trees). The labels (ST, S2T, one-loop) imply that the contracted graphs $H$, $\widetilde{J}$ and $\widetilde{S}$ is respectively a (spanning tree, spanning 2-tree, one-loop graph) in the corresponding terms. Any of the terms in $\mathcal{P}_0^{(R)}$ is of one of the following four types: $(a)$ the terms of $\mathcal{U}^{(R)}(\x)$; $(b)$ the terms of $\mathcal{F}^{(p_i^2,R)}(\x;\s)$, note that the spanning 2-tree structure can be within any one of the jet subgraphs; $(c)$ one possibility for the terms of $\mathcal{F}^{(q^2,R)}(\x;\s)$, denoted as the $\mathcal{F}_\text{I}^{(q^2,R)}$ terms, where $H\cap T$ is a spanning 2-tree of $H$, identifying an off-shell momentum $q^\mu$ flowing between its components;
  $(d)$ another possibility for the terms of $\mathcal{F}^{(q^2,R)}(\x;\s)$, denoted as the $\mathcal{F}_\text{II}^{(q^2,R)}$ terms, where for two of the jet graphs $\widetilde{J}_i$ and $\widetilde{J}_j$ ($i=1$ and $j=K$ in the figure), $\widetilde{J}_i\cap T$ and $\widetilde{J}_j\cap T$ are spanning 2-trees of $\widetilde{J}_i$ and $\widetilde{J}_j$ respectively. While in $(a)$, $(b)$ and $(c)$ the graph $\widetilde{S}\cap T$ is a spanning tree of $\widetilde{S}$, in $(d)$ this graph has (exactly) one loop.}
  \label{UF_terms_graphic_description}
\end{figure}
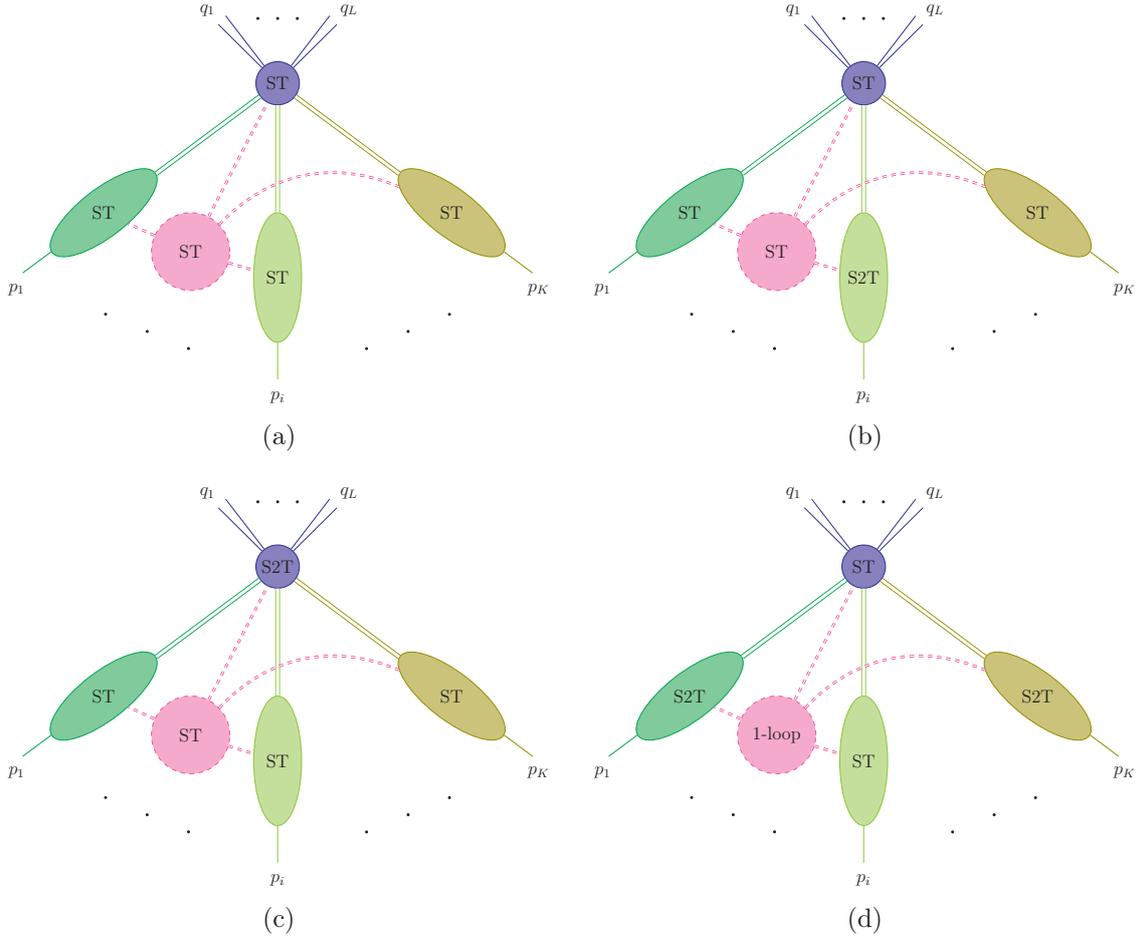

With the observations above, one can derive the generic form of the terms in $\mathcal{P^{(R)}}$ by considering all the possible cases.
We first focus on any term of $\mathcal{U}^{(R)}$, denoting the corresponding spanning tree $T^1$. The $\mathcal{U}^{(R)}$ polynomial is a homogeneous, degree $L(G)$ polynomial, which implies that $n_H + n_J + n_S = L(H) + \sum_{i=1}^K L(\widetilde{J}_i) + L(\widetilde{S})$. Recall that a facet contains the points that minimise $\boldsymbol{r}\cdot \boldsymbol{v}_R$, as follows from eq.~(\ref{eq:LPscalingsubstitution}).
Because for the $\mathcal{U}$ terms $\boldsymbol{r}\cdot \boldsymbol{v}_R = \hat{\boldsymbol{r}}\cdot \boldsymbol{u}_R$ (i.e. $r_{i,N+1}=0$), $\boldsymbol{r}\cdot \boldsymbol{v}_R$ corresponds to summing the subset of entries of $\boldsymbol{v}_R$ associated with edges $e$, for which $r_e=1$ (see eq.~(\ref{LP_polynomial_rescaling_momenta})).
Since the hard, jet and soft propagators have a corresponding $0$, $-1$, or $-2$ entry in $\boldsymbol{v}_R$ respectively, in order to achieve the minimum of $\boldsymbol{r}\cdot \boldsymbol{v}_R$, the value of $n_H^{}$ must be minimised while the value of $n_S^{}$ must be maximised. With this in mind, eq.~(\ref{leading_terms_fundamental_constraints}) implies:
\begin{eqnarray}
n_H^{} = L(H),\quad n_J^{} = L(\widetilde{J}),\quad n_S^{}= L(\widetilde{S}).
\label{minimal_u_term_condition}
\end{eqnarray}
This implies that each of the subgraphs, $H\cap T^1$, $\widetilde{J}_i\cap T^1$ for each $i\in \{1,\dots,K\}$, and $\widetilde{S}\cap T^1$ are spanning trees of $H$, $\widetilde{J}_i$ and $\widetilde{S}$, respectively. This configuration is depicted in figure~\ref{UF_terms_graphic_description}$(a)$. Here the notation ST stands for a spanning tree. Note that it is the \emph{contracted} graphs that should be considered, namely, $\gamma\cap T^1$ is a spanning tree of $\gamma$, for $\gamma= H, \widetilde{J}_1,\dots,\widetilde{J}_K, \widetilde{S}$.

For the analysis that follows, it is useful to examine the structure of the soft subgraph in more detail. Suppose there are $n$ connected components of $S$, denoted by $S_1,\dots,S_n$, corresponding to the $n$ 1VI components of $\widetilde{S}$, i.e. $\widetilde{S}_1,\cdots,\widetilde{S}_n$, which all share the auxiliary soft vertex. As illustrated above, $\widetilde{S} \cap T^1$ is a spanning tree of $\widetilde{S}$. Then recall corollary~\ref{theorem-Uterms_space_dimensionality_corollary1}, $\widetilde{S}_i \cap T^1$ is a spanning tree of $\widetilde{S}_i$ for $i=1,\dots,n$. Eq.~(\ref{minimal_u_term_condition}) can hence be rewritten in a more precise way as
\begin{eqnarray}
\mathcal{U}^{(R)}(\x):&&\ n_H^{} = L(H),\nonumber \\
&&\ n_{J_i}^{} = L(\widetilde{J}_i) \quad (i=1,\dots,K), \nonumber\\
&&\ n_{S_k}^{}= L(\widetilde{S}_k) \quad (k=1,\dots,n).
\label{minimal_u_term_condition_precise}
\end{eqnarray}

Using the graph-theoretical knowledge that the combination of any chosen spanning tree of $H$, $\widetilde{J}_i$ and $\widetilde{S}$ is also a spanning tree of $G$, along with eq.~(\ref{minimal_u_term_condition_precise}), we can deduce a useful property: upon denoting the parameters $\x$ that are associated with edges in $H$, $J_i$, and $S = \cup_{k=1}^n S_k$ by $\x^{[H]}$, $\x^{[J_i]}$ and $\x^{[S]}$ respectively, the polynomial $\mathcal{U}^{(R)}(\x)$ can be factorised as follows
\begin{eqnarray}
\mathcal{U}^{(R)}(\x) = \mathcal{U}_H(\x^{[H]}) \cdot \Big( \prod_{i=1}^K \mathcal{U}_{J_i}(\x^{[J_i]}) \Big) \cdot \mathcal{U}_S(\x^{[S]}),
\label{leading_Uterms_factorise}
\end{eqnarray}
where $\mathcal{U}_H(\x^{[H]})$, $\mathcal{U}_{J_i}(\x^{[J_i]})$ and $\mathcal{U}_S(\x^{[S]})= \prod_{k=1}^n \mathcal{U}_{S_k}(\x^{[S_k]})$ are the polynomials that contain only the parameters $\x^{[H]}$, $\x^{[J_i]}$ and $\x^{[S]}$, and correspond to the spanning trees of the graphs $H$, $\widetilde{J}_i$ and $\widetilde{S}$ respectively.
For example, in eq.~(\ref{UF_polynomial_SS_leading_terms}) we have $\mathcal{U}_H(\x^{[H]}) = \mathcal{U}_{J_i}(\x^{[J_i]}) =1$, and $\mathcal{U}_S(\x^{[S]})= x_4x_5 + x_4x_6 + x_5x_6$. Meanwhile in eq.~(\ref{UF_polynomial_C1S_leading_terms}) we can see that $\mathcal{U}_H(\x^{[H]}) =1$, $\mathcal{U}_{J_1}(\x^{[J_1]})= x_1+x_4+x_5$, $\mathcal{U}_{J_i}(\x^{[J_i]})=1$ ($i\neq 1$), and $\mathcal{U}_S(\x^{[S]})= x_6$.

Note that eq.~(\ref{minimal_u_term_condition}) can also be viewed as a result of the uniqueness of minimum spanning trees in graph theory, i.e. by associating a weight to each edge of a given graph, the set of weights in a minimum spanning tree of this graph is uniquely determined. In the case under consideration, these weights are $0$, $-1$ and $-2$ for hard, jet and soft edges, respectively. The minimum of $\boldsymbol{r}\cdot \boldsymbol{v}_R$, where $\boldsymbol{r}$ corresponds to any term in $\mathcal{P}_0^{(R)}(\x;\s)$, can then be directly read from~(\ref{minimal_u_term_condition}):
\begin{eqnarray}
\text{min}\left( \boldsymbol{r}\cdot \boldsymbol{v}_R \right)= -2L(\widetilde{S}) -L(\widetilde{J}).
\label{min_value_rdotv}
\end{eqnarray}

\bigbreak
Now we study the terms in $\mathcal{F}^{(R)}(\x;\s)$. We denote their corresponding spanning 2-trees as $T^2$. Here we have $n_H^{}+n_J^{}+n_S^{}= L(H)+L(\widetilde{J})+L(\widetilde{S})+1$. Using the fundamental constraint, eq.~(\ref{leading_terms_fundamental_constraints}), all the possible values of $n_H^{}$, $n_J^{}$ and $n_S^{}$ can be summarised as follows:
\begin{eqnarray}
n_H^{}= L(H)+k_1,\ \ n_J^{}= L(\widetilde{J})+k_2+1-k_1,\ \ n_S^{}=L(\widetilde{S})-k_2,
\label{condition_generic_F_terms}
\end{eqnarray}
where $k_1$ and $k_2$ are non-negative integers ($k_1,k_2\in \mathbb{N}_0$). Below we discuss three cases of $k_1$ and $k_2$, and show that they enumerate all possible $\mathcal{F}^{(R)}$~terms, which correspond to, respectively, figures~\ref{fig:UF_terms_graphic_description_2}, \ref{fig:UF_terms_graphic_description_3} and \ref{fig:UF_terms_graphic_description_4}.

\begin{enumerate}
    \item [\textbf{I. }]$\boldsymbol{k_1=k_2=0}$ \textbf{(figure~\ref{fig:UF_terms_graphic_description_2}).} In this case, we have
    \begin{eqnarray}
    n_H^{} = L(H),\ n_J^{} = L(\widetilde{J})+1,\ n_S^{}= L(\widetilde{S}).
    \label{minimal_fp2_term_condition}
    \end{eqnarray}
    This relation implies that $H\cap T^2$ and $\widetilde{S}\cap T^2$ are spanning trees of $H$ and $\widetilde{S}$ respectively. The subgraph $\widetilde{J}\cap T^2$, on the other hand, is a spanning 2-tree of $\widetilde{J}$. More precisely, for a specific jet $i\in\{1,\dots,K\}$, $\widetilde{J}_i \cap T^2$ is a spanning 2-tree of $\widetilde{J}_i$, while all the subgraphs $\widetilde{J}_j \cap T^2$ with $j\neq i$ are spanning trees of $\widetilde{J}_j$. This configuration is then depicted in figure~\ref{fig:UF_terms_graphic_description_2}, where the notation S2T is the abbreviation of spanning 2-trees. In this configuration, one component of $T^2$ only has a single nearly on-shell external momentum $p_i^\mu$, hence it corresponds to $\mathcal{F}^{(p_i^2)}(\x;\s)$. Similar to the analysis above (\ref{minimal_u_term_condition_precise}), $\widetilde{S}_k \cap T^1$ is a spanning tree of $\widetilde{S}_k$ for $k=1,\dots,n$. Therefore, eq.~(\ref{minimal_fp2_term_condition}) can be written as:
    \begin{eqnarray}
    \mathcal{F}^{(p_i^2,R)}(\x;\s):
    &&\ n_H^{} = L(H),\nonumber \\
    &&\ n_{J_i}^{} = L(\widetilde{J}_i)+1,\quad n_{J_j}^{} = L(\widetilde{J}_j)\quad (j\neq i,\ j=1,\dots,K), \nonumber\\
    &&\ n_{S_k}^{}= L(\widetilde{S}_k)\quad (k=1,\dots,n).
    \label{minimal_fp2_term_condition_precise}
    \end{eqnarray}
    
    It remains to be shown that these terms are associated with the same minimum of $\boldsymbol{r}\cdot \boldsymbol{v}_R$ as in eq.~(\ref{min_value_rdotv}). In comparison with the terms of $\mathcal{U}^{(R)}(\x)$, which are characterised by (\ref{minimal_u_term_condition_precise}), there is an extra $-1$ from the jet parameter contribution to $\boldsymbol{r}\cdot \boldsymbol{v}_R$, and an extra $+1$ from the kinematic contribution, which is the $(N+1)$-th entry of $\boldsymbol{r}$ (see eq.~(\ref{eq:LPscalingsubstitution})). The terms of $\mathcal{F}^{(p_i^2,R)}(\x;\s)$ are thus described by figure~\ref{fig:UF_terms_graphic_description_2}.
    
    We note that a factorisation formula, similar to eq.~(\ref{leading_Uterms_factorise}), also applies to the  $\mathcal{F}^{(p_i^2,R)}$ terms, namely,
    \begin{eqnarray}
    \mathcal{F}^{(p_i^2,R)}(\x;\s) = \mathcal{U}_H(\x^{[H]}) \cdot \mathcal{F}_{J_i}^{(p_i^2)}(\x^{[J_i]}) \cdot \Big( \prod_{j\neq i}^K\mathcal{U}_{J_j}(\x^{[J_j]}) \Big) \cdot\mathcal{U}_S(\x^{[S]}).
    \label{leading_Fp2terms_factorise}
    \end{eqnarray}
    The polynomials $\mathcal{U}_H(\x^{[H]})$, $\mathcal{U}_{J_j}(\x^{[J_j]})$ and $\mathcal{U}_S(\x^{[S]})$ correspond to the spanning trees of $H$, $\widetilde{J}_i$ and $\widetilde{S}$, respectively. The polynomial $\mathcal{F}^{(p_i^2)}(\x^{[J_i]})$ depends on the parameters $x^{[J_i]}$ only, and corresponds to the spanning 2-trees of $\widetilde{J}_i$. Moreover, the momentum flowing between the two connected components is exactly $p_i^\mu$. As in the case of eq.~(\ref{leading_Uterms_factorise}), this factorisation reflects a graph-theoretical property: by combining any set of spanning trees of $H$, $\widetilde{J}_j$ ($j\neq i$) and $\widetilde{S}$, together with a $p_i$ spanning 2-tree of $\widetilde{J}_i$, one obtains a $p_i$ spanning 2-tree of $G$.

    \item [\textbf{II. }]$\boldsymbol{k_1=1,\ k_2=0}$ \textbf{(figure~\ref{fig:UF_terms_graphic_description_3}).} In this case, we have
    \begin{eqnarray}
    n_H^{} = L(H)+1,\ n_J^{} = L(\widetilde{J}),\ n_S^{}= L(\widetilde{S}).
    \label{minimal_fs_term_condition1}
    \end{eqnarray}
    In the example of eq.~(\ref{UF_polynomial_C1S_leading_terms}) as we showed above, the term $s_{12}{\color{Green} x_1}{\color{Blue} x_2}{\color{Red} x_6}$ is of this type. More generically, for a given $\boldsymbol{r}$ corresponding to a term of this form, we first examine the value of $\boldsymbol{r}\cdot \boldsymbol{v}_R$, using eq.~(\ref{eq:LPscalingsubstitution}), obtaining
    \begin{eqnarray}
    \boldsymbol{r}\cdot \boldsymbol{v}_R = -2L(\widetilde{S}) -L(\widetilde{J}) +r_{N+1},
    \end{eqnarray}
    where $r_{N+1}=0$ or $1$ is the contribution from the kinematic prefactor of the term (i.e. a factor of $\lambda^0$ or $\lambda^1$ generated by the rescaling of the kinematic invariants $\s \rightarrow \lambda^{\w}\s$). In order that a term satisfying (\ref{minimal_fs_term_condition1}) is in $\mathcal{F}^{(R)}(\x;\s)$, we must have $r_{N+1}=0$, which means that $T^2$ corresponds to a term of $\mathcal{F}^{(q^2,R)}(\x;\s)$ in this case. We denote these terms as $\mathcal{F}_\text{I}^{(q^2,R)}(\x;\s)$. In each $T^2$ that corresponds to a term in $\mathcal{F}_\text{I}^{(q^2,R)}(\x;\s)$, the graph $H\cap T^2$ is a spanning 2-tree of $H$, such that the momentum flowing between its components of $T^2$ is $q^\mu$, which is off shell. In contrast, the remaining subgraphs in $T^2$ are spanning trees. In particular, $\widetilde{J}_i\cap T^2$ is a spanning tree of $\widetilde{J}_i$ for any $i=1,\dots,K$, and $\widetilde{S}_k \cap T^2$ is a spanning tree of $\widetilde{S}_k$ for any $k=1,\dots,n$. With this analysis, we may rewrite (\ref{minimal_fs_term_condition1}) as
    \begin{eqnarray}
    \mathcal{F}_\text{I}^{(q^2,R)}(\x;\s):
    &&\ n_H^{} = L(H)+1,\nonumber \\
    &&\ n_{J_i}^{} = L(\widetilde{J}_i)\quad (i=1,\dots,K), \nonumber\\
    &&\ n_{S_k}^{}= L(\widetilde{S}_k)\quad (k=1,\dots,n).
    \label{minimal_fs_term_condition1_precise}
    \end{eqnarray}
    The configuration of $T^2$ in this case is depicted in figure~\ref{fig:UF_terms_graphic_description_3}. The polynomial of $\mathcal{F}_\text{I}^{(q^2)}$ can be factorised as follows
    \begin{eqnarray}
    \mathcal{F}_\textup{I}^{(q^2)}(\x;\s) = \mathcal{F}_{H\cup J}^{(q^2)}(\x^{[H]},\x^{[J]}) \cdot \mathcal{U}_S(\x^{[S]}),
    \label{leading_first_Fq2terms_factorise}
    \end{eqnarray}
    where the polynomial $\mathcal{F}_{H\cup J}^{(q^2)}(\x^{[H]},\x^{[J]})$ consists of all the spanning 2-trees of $H\cup J$ such that the momentum flowing between its components is $q^\mu$. In comparison with the previous factorisation formulas, eqs.~(\ref{leading_Uterms_factorise}) and (\ref{leading_Fp2terms_factorise}), we note that $\mathcal{F}_{H\cup J}^{(q^2)}$ cannot be factorised any further.

    \item [\textbf{III. }]$\boldsymbol{k_1=0,\ k_2=1}$ \textbf{(figure~\ref{fig:UF_terms_graphic_description_4}).} In this case, we have
    \begin{eqnarray}
    n_H^{} = L(H),\ n_J^{} = L(\widetilde{J})+2,\ n_S^{}= L(\widetilde{S})-1.
    \label{minimal_fs_term_condition2}
    \end{eqnarray}
    In the example of eq.~(\ref{UF_polynomial_C1S_leading_terms}) as we showed above, the term $s_{13}{\color{Green} x_1x_3x_5}$ is of this type. In generic cases, we first examine the value of $\boldsymbol{r}\cdot \boldsymbol{v}_R$ as above, which is
    \begin{eqnarray}
    \boldsymbol{r}\cdot \boldsymbol{v}_R = -2\big( L(\widetilde{S})-1 \big) -\big( L(\widetilde{J})+2 \big) +r_{N+1} =-2L(\widetilde{S}) -L(\widetilde{J}) +r_{N+1},
    \end{eqnarray}
    where $r_{N+1}=0$ or $1$ is the kinematic contribution. In order for a term satisfying eq.~(\ref{minimal_fs_term_condition2}) to appear in $\mathcal{F}^{(R)}(\x;\s)$, we must have $r_{N+1}=0$, which means that this case corresponds to another contribution to $\mathcal{F}^{(q^2,R)}(\x;\s)$, distinct from case \textbf{II} above. We denote the polynomial consisting of these terms as $\mathcal{F}_\text{II}^{(R)}(\x;\s)$.
    
    To understand the configuration of $T^2$ in this case, we first notice that eq.~(\ref{minimal_fs_term_condition2}) implies that $H\cap T^2$ is a spanning tree of $H$, and $\widetilde{S}\cap T^2$ contains one loop. If the auxiliary soft vertex were not part of this loop, then $T^2$ would have contained this loop as well, which contradicts $T^2$ being a spanning 2-tree. Therefore, the loop of $\widetilde{S}\cap T^2$ must correspond to a path $P$ that consists of soft propagators, joining two distinct vertices $A$ and $B$ in $H\cup J$. From the definition of $\widetilde{S}$, this path becomes a loop of $\widetilde{S}$ upon contracting $H\cup J$ and introducing the auxiliary soft vertex.
    
    Now we argue that $A$ and $B$ must be in two different jets $J_i$ and $J_j$. Notice that $n_J^{}= L(\widetilde{J})+2$ implies that $\widetilde{J}\cap T^2$ has three connected components, two of which have nearly on-shell external momenta $p_i$ and $p_j$, respectively. If the path $P$ were not connecting them, then at least one would have been a component of $T^2$, thus an $\mathcal{F}^{(p_i^2,R)}$ term, in conflict with the conclusion above.

    Therefore, in the configuration of $T^2$ there is a path of soft propagators that joins two of the nearly on-shell external momenta, say $p_i$ and $p_j$. We denote the connected component of $S$ that contains this path as $S_{k[i,j]}$, and $q_{ij}\equiv p_i+p_j$. The graph $\widetilde{S}_{k[i,j]} \cap T^2$ then contains exactly one loop, and the graphs $\widetilde{J}_i\cap T^2$ and $\widetilde{J}_j\cap T^2$ are spanning 2-trees of $\widetilde{J}_i$ and $\widetilde{J}_j$. That is,
    \begin{eqnarray}
    \mathcal{F}_\text{II}^{(q^2,R)}(\x;\s):
    &&\ n_H^{} = L(H),\nonumber \\
    &&\ n_{J_{i}}^{} = L(\widetilde{J}_i)+1,\quad n_{J_j}^{} = L(\widetilde{J}_j)+1,\quad n_{J_l}=L(\widetilde{J}_l)\ \ (\forall l\neq i,j),\nonumber\\
    &&\ n_{S_{k[i,j]}}^{}= L(\widetilde{S}_{k[i,j]})-1,\quad n_{S_k}^{}= L(\widetilde{S}_k)\ \ (\forall k\neq k[i,j]).
    \label{minimal_fs_term_condition2_precise}
    \end{eqnarray}
    The configuration of $T^2$ in this case is depicted in figure~\ref{UF_terms_graphic_description}$(d)$. Similar to the previous three cases, the $\mathcal{F}_\textup{II}^{(q^2,R)}$ polynomial can be factorised as follows
    \begin{eqnarray}
    \label{leading_second_Fq2terms_factorise1}
    \mathcal{F}_\textup{II}^{(q_{ij}^2,R)}(\x;\s)= \mathcal{U}_H(\x^{[H]}) \cdot \sum_{i,j} \Big( \mathcal{F}_{J_i\cup J_j\cup S}^{q_{ij}^2}(\x^{[J_i]},\x^{[J_j]},\x^{[S]}) \cdot \prod_{k\neq i,j}\mathcal{U}_{J_k}(\x^{[J_k]}) \Big).
    \end{eqnarray}
    Here the polynomial
    $\mathcal{F}_{J_i\cup J_j\cup S}^{(q_{ij}^2)}$ consists of the spanning 2-trees of the graph $J_i\cup J_j\cup S$, such that the momentum flowing between the components is exactly $q_{ij}\equiv p_i+p_j$.
\end{enumerate}

\bigbreak
We note that the three cases above have covered all the possibilities of the terms of $\mathcal{F}^{(R)}(\x;\s)$, because for any other values of $k_1$ and $k_2$ in (\ref{condition_generic_F_terms}), we have $k_1+k_2 \geqslant 2$, and the corresponding value of $\boldsymbol{r}\cdot \boldsymbol{v}_R$ is
\begin{eqnarray}
\boldsymbol{r}\cdot \boldsymbol{v}_R &&= -2 (L(\widetilde{S})-k_2) -(L(\widetilde{J}) +k_2+1-k_1) +r_{N+1} \nonumber\\
&&=-2L(\widetilde{S}) -L(\widetilde{J}) +k_2+k_1-1 +r_{N+1} \nonumber\\
&&> -2L(\widetilde{S}) -L(\widetilde{J}),
\label{rdotv_other_potential_possibilities}
\end{eqnarray}
where $r_{N+1}$ is the kinematic contribution of the point $\boldsymbol{r}$, satisfying $r_{N+1}=0$ or $1$. We emphasise that (\ref{rdotv_other_potential_possibilities}) is a strict inequality. The right-hand side is $\text{min}(\boldsymbol{r}\cdot \boldsymbol{v}_R)$, which is attained for all of the four cases above. It follows that cases with $k_1+k_2 \geqslant 2$ do not correspond to any potential terms of $\mathcal{P}_0^{(R)}(\x;\s)$.

We conclude that the following theorem must hold.
\begin{theorem}
For any region $R$ in the on-shell expansion of a wide-angle scattering graph $G$, the leading Lee-Pomeransky polynomial takes the form
\begin{eqnarray}
\mathcal{P}_0^{(R)}(\x;\s)&=& \mathcal{U}^{(R)}(\x) + 
\mathcal{F}^{(R)}(\x;\s)\\
\mathcal{F}^{(R)}(\x;\s)&=&\sum_{i=1}^K \mathcal{F}^{(p_i^2,R)}(\x;\s)+ \mathcal{F}_\textup{I}^{(q^2,R)}(\x;\s)+ 
\sum_{i>j=1}^K \mathcal{F}_\textup{II}^{(q_{ij}^2,R)}(\x;\s)
\end{eqnarray}
\noindent These polynomials factorise as follows
\begin{align}
    \begin{split}
        &\mathcal{U}^{(R)}(\x) = \mathcal{U}_H(\x^{[H]}) \cdot \Big( \prod_{i=1}^K \mathcal{U}_{J_i}(\x^{[J_i]}) \Big) \cdot \mathcal{U}_S(\x^{[S]})\,,\\
        &\mathcal{F}^{(p_i^2,R)}(\x;\s) = \mathcal{U}_H(x^{[H]}) \cdot \mathcal{F}^{(p_i^2)}_{J_i}(\x^{[J_i]};\s) \cdot \Big( \prod_{j\neq i}^K\mathcal{U}_{J_j}(\x^{[J_j]}) \Big) \cdot\mathcal{U}_S(\x^{[S]})\,,\\
        &\mathcal{F}_\textup{I}^{(q^2,R)}(\x;\s) = \mathcal{F}_{H\cup J}^{(q^2)}(\x^{[H]},\x^{[J]}) \cdot \mathcal{U}_S(\x^{[S]})\,,\phantom{\prod_{k\neq i,j}}\\
        &\mathcal{F}_\textup{II}^{(q_{ij}^2,R)}(\x;\s)= \mathcal{U}_H(\x^{[H]}) \cdot 
        \mathcal{F}_{J_i\cup J_j\cup S}^{(q_{ij}^2)}(\x^{[J_i]},\x^{[J_j]},\x^{[S]}) \cdot \prod_{k\neq i,j}\mathcal{U}_{J_k}(\x^{[J_k]})\,.
    \end{split}
    \label{summary_leading_terms_factorise}
\end{align}
\label{theorem-leading_terms_form}
\end{theorem}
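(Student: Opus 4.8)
The plan is to assemble the theorem directly from the term-by-term classification already carried out in this subsection, since every monomial of $\mathcal{P}_0^{(R)}$ descends from either a spanning tree of $G$ (contributing to $\mathcal{U}^{(R)}$) or a spanning 2-tree of $G$ (contributing to $\mathcal{F}^{(R)}$), and the surviving monomials are exactly those points $\boldsymbol{r}$ minimising $\boldsymbol{r}\cdot\boldsymbol{v}_R$. The organising principle is the triple of counts $(n_H,n_J,n_S)$ of edges cut from the hard, jet and soft subgraphs, which is pinned down by the fundamental constraints of eq.~(\ref{leading_terms_fundamental_constraints}) together with the homogeneity degrees $L(G)$ and $L(G)+1$ of $\mathcal{U}$ and $\mathcal{F}$ respectively.

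First I would dispose of the $\mathcal{U}^{(R)}$ part. Homogeneity fixes $n_H+n_J+n_S=L(H)+L(\widetilde{J})+L(\widetilde{S})$, and minimising $\boldsymbol{r}\cdot\boldsymbol{v}_R$ (which weights hard, jet and soft edges by $0$, $-1$, $-2$) forces saturation of eq.~(\ref{leading_terms_fundamental_constraints}), namely eq.~(\ref{minimal_u_term_condition}). This says a leading spanning tree restricts to a spanning tree of each contracted subgraph $H$, $\widetilde{J}_i$, $\widetilde{S}$. The factorisation in eq.~(\ref{summary_leading_terms_factorise}) then follows from the graph-theoretic correspondence between spanning trees of $G$ obeying these counts and arbitrary tuples of spanning trees of the contracted subgraphs; because the edge set of $G$ is partitioned among $H$, the $J_i$ and $S$, the corresponding monomial factors into the product $\mathcal{U}_H\cdot\prod_i\mathcal{U}_{J_i}\cdot\mathcal{U}_S$.

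Next I would treat $\mathcal{F}^{(R)}$ through the $(k_1,k_2)$ parametrisation of eq.~(\ref{condition_generic_F_terms}). Evaluating $\boldsymbol{r}\cdot\boldsymbol{v}_R$ and including the kinematic prefactor $r_{N+1}\in\{0,1\}$, the strict inequality of eq.~(\ref{rdotv_other_potential_possibilities}) eliminates every case with $k_1+k_2\geqslant 2$, leaving exactly the three leading cases $(0,0)$, $(1,0)$ and $(0,1)$. These reproduce, in turn, $\mathcal{F}^{(p_i^2,R)}$ (a 2-tree broken inside a single jet, carrying $p_i^2$), $\mathcal{F}_\textup{I}^{(q^2,R)}$ (a 2-tree of $H\cup J$ carrying an off-shell $q^2$), and $\mathcal{F}_\textup{II}^{(q_{ij}^2,R)}$ (a soft path breaking two jets simultaneously, carrying $q_{ij}^2$). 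In each case the same partition-of-edges argument yields the stated factorisation, with the broken 2-tree factor replacing the spanning-tree factor on the distinguished subgraph or subgraphs.

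The main obstacle, and the step deserving the most care, is case III ($\mathcal{F}_\textup{II}$). Here the extra loop sits in $\widetilde{S}\cap T^2$, and one must argue that (i) this loop necessarily passes through the auxiliary soft vertex, hence corresponds to a soft path $P$ joining two vertices $A,B$ of $H\cup J$, and (ii) $A$ and $B$ lie in two \emph{distinct} jets rather than in $H$ or in a common jet. Point (i) rests on $T^2$ being acyclic once the contracted vertex is excised, while point (ii) uses the requirement that $S$ carries no external momentum (point~3 of proposition~\ref{proposition-region_vectors_are_hard_and_infrared}) together with momentum conservation, which fixes the momentum across the cut to be $q_{ij}=p_i+p_j$ and rules out both the $\mathcal{F}^{(p_i^2)}$-type reassignment and any scaleless configuration. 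Establishing this connectivity statement cleanly is precisely what makes the four-way classification exhaustive and the factorisation of $\mathcal{F}_\textup{II}^{(q_{ij}^2,R)}$ in eq.~(\ref{summary_leading_terms_factorise}) correct.
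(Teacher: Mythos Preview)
Your proposal is correct and follows essentially the same approach as the paper: the theorem is presented there as a summary of the preceding case-by-case analysis in section~\ref{leading_terms_in_infrared_regions}, using precisely the fundamental constraints~(\ref{leading_terms_fundamental_constraints}), the minimisation argument for $\mathcal{U}^{(R)}$, the $(k_1,k_2)$ parametrisation~(\ref{condition_generic_F_terms}) for $\mathcal{F}^{(R)}$, and the strict inequality~(\ref{rdotv_other_potential_possibilities}) to exclude $k_1+k_2\geqslant 2$. Your identification of case~III as the delicate step is apt; the paper's argument there is that $n_J=L(\widetilde{J})+2$ forces three components in $\widetilde{J}\cap T^2$, and if the soft path $P$ failed to reconnect the two carrying $p_i$ and $p_j$ one would land on an $\mathcal{F}^{(p_i^2)}$ term, contradicting $r_{N+1}=0$ --- which is the same mechanism you describe as ruling out the ``$\mathcal{F}^{(p_i^2)}$-type reassignment''.
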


\begin{corollary}
Let us denote the number of parameters which each monomial contains in $H$, $J_i$ and $S_k$ as $n_H$, $n_{J_i}$ and $n_{S_k}$, respectively. These numbers are summarised in the following table:
\begin{table}[htbp]
    \centering
    \begin{tblr}{
      colspec={
      |>{\centering \arraybackslash}X[1]
      |>{\centering \arraybackslash}X[1.2]
      |>{\centering \arraybackslash}X[4.5]
      |>{\centering \arraybackslash}X[3.5]|
      }, row{1} = {c}, hlines,
    }
  & $n_H$ & $n_{J_1},\dots,n_{J_K}$ & $n_{S_1},\dots,n_{S_n}$ \\
 $\mathcal{U}^{(R)}$ &  $L(H)$  & $L(\widetilde{J}_1),\dots,L(\widetilde{J}_K)$  & $L(\widetilde{S}_1),\dots,L(\widetilde{S}_n)$ \\
 $\mathcal{F}^{(p_i^2,R)}$ &  $L(H)$  & $L(\widetilde{J}_1),\dots,L(\widetilde{J}_i)+1,\dots,L(\widetilde{J}_K)$  & $L(\widetilde{S}_1),\dots,L(\widetilde{S}_n)$\\
 $\mathcal{F}_\textup{I}^{(q^2,R)}$ &  $L(H)+1$    & $L(\widetilde{J}_1),\dots,L(\widetilde{J}_K)$  & $L(\widetilde{S}_1),\dots,L(\widetilde{S}_n)$\\
 $\mathcal{F}_\textup{II}^{(q_{ij}^2,R)}$ &  $L(H)$    & $\scriptstyle{L(H),L(\widetilde{J}_1),\dots,L(\widetilde{J}_i)+1,\dots,L(\widetilde{J}_j)+1,\dots,L(\widetilde{J}_K)}$  & $\scriptstyle{L(\widetilde{S}_1),\ \dots,\ L(\widetilde{S}_{k[i,j]})-1,\ \dots,\ L(\widetilde{S}_n)}$ \\
    \end{tblr}
\end{table}
\label{theorem-leading_terms_form-corollary}
\end{corollary}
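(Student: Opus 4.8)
The plan is to recognise that this corollary is nothing more than a tabulation of the case-by-case analysis already completed in the proof of Theorem~\ref{theorem-leading_terms_form}, so that the only genuine work is to reconcile the two meanings attached to the symbols $n_H$, $n_{J_i}$ and $n_{S_k}$ and then to read off the four rows from the relations derived above.

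First I would establish the required dictionary. Every monomial of $\mathcal{P}_0^{(R)}(\x;\s)$ arises from a spanning tree $T^1$ or a spanning 2-tree $T^2$ of $G$ via the product $\prod_{e\notin T}x_e$, so the variable $x_e$ appears in the monomial if and only if the edge $e$ lies in the complement of $T$. Hence, for any of the subgraphs $\gamma\in\{H,J_i,S_k\}$, the number of parameters the monomial carries from $\gamma$ is exactly the number of edges of $\gamma$ in the complement of $T$, which is precisely the quantity denoted $n_\gamma$ throughout Section~\ref{leading_terms_in_infrared_regions}. This identification shows that the definition of $n_H$, $n_{J_i}$ and $n_{S_k}$ adopted in the corollary statement coincides with the one used in the preceding analysis, so no recomputation is needed.

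Second, I would fill in the table by invoking the component-resolved counts already obtained for each of the four term types classified in Theorem~\ref{theorem-leading_terms_form}. The $\mathcal{U}^{(R)}$ row is eq.~(\ref{minimal_u_term_condition_precise}); the $\mathcal{F}^{(p_i^2,R)}$ row is eq.~(\ref{minimal_fp2_term_condition_precise}); the $\mathcal{F}_\textup{I}^{(q^2,R)}$ row is eq.~(\ref{minimal_fs_term_condition1_precise}); and the $\mathcal{F}_\textup{II}^{(q_{ij}^2,R)}$ row is eq.~(\ref{minimal_fs_term_condition2_precise}). The passage from the coarse totals $n_H,n_J,n_S$ to the per-component entries $n_{J_i}$ and $n_{S_k}$ is supplied in each case by Corollary~\ref{theorem-Uterms_space_dimensionality_corollary1}: a spanning tree of $\widetilde{S}$ (or $\widetilde{J}$) restricts to a spanning tree of each 1VI component $\widetilde{S}_k$ (or $\widetilde{J}_i$), so that the complement counts distribute over components exactly as tabulated.

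The one point to handle with care, rather than a true obstacle, is completeness: I must confirm that the four rows exhaust all leading monomials. This is guaranteed by the strict inequality in eq.~(\ref{rdotv_other_potential_possibilities}), which shows that any configuration with $k_1+k_2\geqslant 2$ raises $\boldsymbol{r}\cdot\boldsymbol{v}_R$ strictly above its minimum $-2L(\widetilde{S})-L(\widetilde{J})$ and therefore contributes only subleading monomials. Since the four tabulated types are precisely those attaining this minimum, the table is both exhaustive and correct, which completes the proof.
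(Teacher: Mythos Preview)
Your proposal is correct and matches the paper's approach: the corollary is presented in the paper without a separate proof, as an immediate tabulation of the per-component counts already derived in eqs.~(\ref{minimal_u_term_condition_precise}), (\ref{minimal_fp2_term_condition_precise}), (\ref{minimal_fs_term_condition1_precise}) and (\ref{minimal_fs_term_condition2_precise}), together with the exhaustiveness argument in eq.~(\ref{rdotv_other_potential_possibilities}). Your explicit dictionary between ``number of parameters in a monomial'' and ``number of edges in the complement of $T$'' is a useful clarification that the paper leaves implicit.
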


\subsection{The infrared regions in wide-angle scattering}
\label{infrared_regions_in_wideangle_scattering}

As is discussed above, any infrared region $R$ contributing to the MoR in the on-shell expansion of a wide-angle graph $G$, may be described by figure~\ref{classical_picture} with the subgraphs $H,J_1,\dots,J_K,S$ satisfying the set of basic requirements stated in proposition~\ref{proposition-region_vectors_are_hard_and_infrared}. Our goal is to fully characterise the possible structure of these subgraphs. To this end, we must exclude any configuration that passes these basic requirements and yet produces a scaleless integral, which therefore does not contribute to the MoR. In order to avoid these pathological configurations, some extra requirements on $H$, $J$ and $S$ would be needed. We will show that the following three requirements provide a necessary and sufficient condition for any configuration satisfying proposition~\ref{proposition-region_vectors_are_hard_and_infrared} to be an infrared region.
\begin{itemize}
    \item \emph{Requirement of $H$: all the internal propagators of $H_\textup{red}$, which is the reduced form of~$H$, are off-shell.}
    \item \emph{Requirement of $J$: all the internal propagators of $\widetilde{J}_{i,\textup{red}}$, which is the reduced form of the contracted graph $\widetilde{J}_i$, carry exactly the momentum $p_i^\mu$.}
    \item \emph{Requirement of $S$: every connected component of $S$ must connect at least two different jet subgraphs $J_i$ and $J_j$.}
\end{itemize}
To understand these requirements, we recall the concept of the reduced form of a graph, which is introduced in section~\ref{space_spanning_trees}. The reduced form of $H$ is obtained by contracting each of the nontrivial 1VI components of $H$ to a vertex, and if two nontrivial 1VI components $\gamma_1$ and $\gamma_2$ share a vertex, we add an auxiliary propagator connecting $\gamma_1$ and $\gamma_2$ before contracting them. The same construction applies to any of the contracted jet graphs $\widetilde{J}_i$. Similar to eq.~(\ref{relation_propagator_number_reduced}), we can further derive the following relations:
\begin{eqnarray}
N(H) = \widehat{N}(H_\text{red})+ \sum_{j=1}^{n} N(\gamma_j^H),\qquad N(\widetilde{J}_i) = \widehat{N}(\widetilde{J}_{i,\text{red}})+ \sum_{j=1}^{n_i} N(\gamma_j^{\widetilde{J}_i}),
\label{relation_propagator_number_reduced_HandJ}
\end{eqnarray}
where $\gamma_j^H$ denotes the 1VI components of $H$ and $\widehat{N}(H_\text{red})$ the number of edges that are in both $H_\text{red}$ and $H$ (i.e. $\widehat{N}(H_\text{red})$ excludes the auxiliary edges that have been introduced upon separating nontrivial 1VI components when constructing $H_\text{red}$). Similarly,  $\gamma_j^{\widetilde{J}_i}$ represents the nontrivial 1VI components of $\widetilde{J}_i$, where $\widehat{N}(\widetilde{J}_{i,\text{red}})$ is the number of edges that are in both $\widetilde{J}_{i,\text{red}}$ and $\widetilde{J}_i$.

The requirement of $H$ can be seen as a generalisation of corollary~\ref{theorem-Uterms_space_dimensionality_corollary2}. The requirement of $J$ follows directly from the second Landau condition, eq.~(\ref{landau_equations_modification_II}). Namely, for example, the jet configurations in figure~\ref{jet_configuration_forbidden}$(a)$ are allowed, while those in figure~\ref{jet_configuration_forbidden}$(b)$ and $(c)$ instead lead to scaleless integrals over some jet loop momenta, and are hence forbidden in any infrared region. Finally, the requirement of $S$ rules out the configurations where a connected component of $S$ is only attached to the hard subgraph and/or one of the jets. In other words, allowed soft subgraphs must connect at least two jets. As we show in appendix~\ref{appendix-necessity_sufficiency_requirements}, for any configuration satisfying 
proposition~\ref{proposition-region_vectors_are_hard_and_infrared}, the requirements of $H,J$ and $S$ above provide a necessary and sufficient condition for the existence of a corresponding region $R$, such that $\boldsymbol{v}_R$ is normal to a lower facet of the Newton polytope $\Delta^{(N+1)}[{\cal P}(G)]$. Should any of the requirements be violated, such a configuration may correspond to a face with dimension less than $N$, rather than a facet.

\section{A graph-finding algorithm for regions}
\label{section-graph_finding_algorithm_regions}

In the last section we classified the configurations of the hard and infrared regions, where the latter emerge as specific solutions of the Landau equations. The subgraphs $H$, $J$ and $S$ introduced in proposition~\ref{proposition-region_vectors_are_hard_and_infrared}, supplemented with the extra requirements in section~\ref{infrared_regions_in_wideangle_scattering}, allow us to identify precisely the infrared regions of the Feynman integral $\I(G)$. The purpose of this section is to construct an algorithm which would identify all the infrared regions of a graph $G$ directly, without referring to the geometric MoR polytope construction. To this end, we will translate the properties of $H$, $J$ and $S$ into a strict graph-theoretical description of the infrared regions.

In section~\ref{graph_theoretical_condition} we propose a set of graph-theoretical conditions for the subgraphs of $G$ and show that these conditions are equivalent to the extra requirements in section~\ref{infrared_regions_in_wideangle_scattering}. In section~\ref{algorithm_for_regions} we then construct a general algorithm for the regions, which ensures that the properties in proposition~\ref{proposition-region_vectors_are_hard_and_infrared} as well as the requirements in section~\ref{infrared_regions_in_wideangle_scattering} are satisfied. Finally in section~\ref{more_examples_implementation}, we apply this algorithm to a variety of wide-angle massless scattering graph examples including all graphs having up to 5 loops and 3 external legs, and verify that it agrees with the geometric algorithm of ref.~\cite{HrchJnsSlk22}.

\subsection{Graph-theoretical conditions of infrared subgraphs}
\label{graph_theoretical_condition}

A useful concept in the classification of Euclidean infrared subgraphs, due to Brown~\cite{Brn15}, is that of a \emph{motic subgraph} (or asymptotically irreducible graph by Smirnov~\cite{Smn02book}) which can be used to define the infrared-cograph, or hard subgraph. A motic subgraph may be defined as \emph{a subgraph whose connected components each become 1PI after connecting all the external lines to an auxiliary vertex.} In the original work by Brown \cite{Brn15} subgraphs only inherit external momenta if all external momenta of the parent graph are incident on the subgraph. Here we allow a subgraph to inherit any of the external momenta which enter any of its vertices.

The concept of motic subgraphs is useful in at least two ways. The first is that it facilitates the construction of a simple algorithm for finding infrared subgraphs~\cite{HzgRjl17}. The second is that motic subgraphs are associated with a Hopf algebra structure~\cite{Krm97}, which has been important in placing the $R^*$-operation~\cite{BkvBrskHzg20}, an operation which subtracts both infrared and ultraviolet divergences from any Euclidean Feynman integral, onto a firmer mathematical ground.

We find that a small but essential modification of the notion of motic graphs is useful in the classification of Minkowskian infrared subgraphs, and specifically, those appearing in the on-shell expansion. In particular, we define a graph to be \emph{mojetic} if \emph{it becomes 1VI after connecting all of its external edges to an auxiliary vertex} (a mojetic graph is necessarily motic). It turns out that mojetic is equivalent to connected, motic and scaleful (i.e. free of scaleless subgraphs) for the case under consideration. As we will see below, the notion of mojetic graphs is useful for the construction of a Minkowskian infrared-subgraph finder. In figure~\ref{motic_mojetic} we show a few examples to illustrate the concepts of motic and mojetic graphs.
\begin{figure}[t]
\centering
\centering
\begin{subfigure}[b]{0.45\textwidth}
\centering
    \resizebox{\textwidth}{!}{
\begin{tikzpicture}[scale=0.5]

\node (v1) at (0,5) {};
\node [draw,circle,minimum size=4pt,fill=Black,inner sep=0pt,outer sep=0pt] (v2) at (2,5) {};
\node [draw,circle,minimum size=4pt,fill=Black,inner sep=0pt,outer sep=0pt] (v3) at (4,7) {};
\node [draw,circle,minimum size=4pt,fill=Black,inner sep=0pt,outer sep=0pt]  (v4) at (4,3) {};
\node [draw,circle,minimum size=4pt,fill=Black,inner sep=0pt,outer sep=0pt]  (v5) at (6,5) {};
\node (v6) at (8,5) {};

\draw [thick] (v1) -- (v2);
\path (v2) edge [thick,bend left = 45] (v3) {};
\path (v2) edge [thick,bend right = 45] (v4) {};
\draw [thick] (v3) -- (v4);
\path (v4) edge [thick,bend right = 45] (v5) {};
\path (v3) edge [thick,bend left = 45] (v5) {};
\draw [thick] (v5) -- (v6);

\node (x1) at (10,5) {};
\node (x2) at (12,5) {};
\draw [->,thick] (x1) -- (x2);

\node [draw,circle,minimum size=4pt,fill=Black,inner sep=0pt,outer sep=0pt] (vv2) at (15,5) {};
\node [draw,circle,minimum size=4pt,fill=Black,inner sep=0pt,outer sep=0pt] (vv3) at (17,7) {};
\node [draw,circle,minimum size=4pt,fill=Black,inner sep=0pt,outer sep=0pt]  (vv4) at (17,3) {};
\node [draw,circle,minimum size=4pt,fill=Black,inner sep=0pt,outer sep=0pt]  (vv5) at (19,5) {};
\node [draw=magenta,circle,minimum size=4pt,fill=magenta,inner sep=0pt,outer sep=0pt] (vv7) at (17,1.5) {};

\node (y8) at (20,1) {};

\path (vv7) edge [thick,bend left = 60] (vv2) {};
\path (vv2) edge [thick,bend left = 45] (vv3) {};
\path (vv2) edge [thick,bend right = 45] (vv4) {};
\draw [thick] (vv3) -- (vv4);
\path (vv4) edge [thick,bend right = 45] (vv5) {};
\path (vv3) edge [thick,bend left = 45] (vv5) {};
\path (vv5) edge [thick,bend left = 60] (vv7) {};

\end{tikzpicture}
}
\vspace{-5em}
\vspace{3em}
\vspace{-2em}
\vspace{9pt}
\caption{Both motic and mojetic.}
\label{fig:mojetic1}
\end{subfigure}
\hfill
\begin{subfigure}[b]{0.45\textwidth}
\centering
    \resizebox{\textwidth}{!}{
\begin{tikzpicture}[scale=0.5]

\node (v1) at (0,5) {};
\node [draw,circle,minimum size=4pt,fill=Black,inner sep=0pt,outer sep=0pt] (v2) at (1.5,5) {};
\node [minimum size=0pt,inner sep=0pt,outer sep=0pt] (v3) at (2.5,6) {};
\node [minimum size=0pt,inner sep=0pt,outer sep=0pt] (v4) at (2.5,4) {};
\node [draw,circle,minimum size=4pt,fill=Black,inner sep=0pt,outer sep=0pt]  (v5) at (3.5,5) {};
\node [draw,circle,minimum size=4pt,fill=Black,inner sep=0pt,outer sep=0pt]  (v6) at (4.5,5) {};
\node [minimum size=0pt,inner sep=0pt,outer sep=0pt] (v7) at (5.5,6) {};
\node [minimum size=0pt,inner sep=0pt,outer sep=0pt] (v8) at (5.5,4) {};
\node [draw,circle,minimum size=4pt,fill=Black,inner sep=0pt,outer sep=0pt]  (v9) at (6.5,5) {};
\node (v10) at (8,5) {};

\draw [thick] (v1) -- (v2);
\path (v2) edge [thick,bend left = 45] (v3) {};
\path (v2) edge [thick,bend right = 45] (v4) {};
\path (v4) edge [thick,bend right = 45] (v5) {};
\path (v3) edge [thick,bend left = 45] (v5) {};
\draw [thick] (v5) -- (v6);
\path (v6) edge [thick,bend left = 45] (v7) {};
\path (v6) edge [thick,bend right = 45] (v8) {};
\path (v8) edge [thick,bend right = 45] (v9) {};
\path (v7) edge [thick,bend left = 45] (v9) {};
\draw [thick] (v9) -- (v10);

\node (x1) at (10,5) {};
\node (x2) at (12,5) {};
\draw [->,thick] (x1) -- (x2);

\node [draw,circle,minimum size=4pt,fill=Black,inner sep=0pt,outer sep=0pt] (vv2) at (14.5,5) {};
\node [minimum size=0pt,inner sep=0pt,outer sep=0pt] (vv3) at (15.5,6) {};
\node [minimum size=0pt,inner sep=0pt,outer sep=0pt] (vv4) at (15.5,4) {};
\node [draw,circle,minimum size=4pt,fill=Black,inner sep=0pt,outer sep=0pt]  (vv5) at (16.5,5) {};
\node [draw,circle,minimum size=4pt,fill=Black,inner sep=0pt,outer sep=0pt]  (vv6) at (17.5,5) {};
\node [minimum size=0pt,inner sep=0pt,outer sep=0pt] (vv7) at (18.5,6) {};
\node [minimum size=0pt,inner sep=0pt,outer sep=0pt] (vv8) at (18.5,4) {};
\node [draw,circle,minimum size=4pt,fill=Black,inner sep=0pt,outer sep=0pt]  (vv9) at (19.5,5) {};

\node [draw=magenta,circle,minimum size=4pt,fill=magenta,inner sep=0pt,outer sep=0pt]  (vv10) at (17,1.5) {};

\node (y8) at (20,1) {};

\path (vv10) edge [thick,bend left = 50] (vv2) {};
\path (vv2) edge [thick,bend left = 45] (vv3) {};
\path (vv2) edge [thick,bend right = 45] (vv4) {};
\path (vv4) edge [thick,bend right = 45] (vv5) {};
\path (vv3) edge [thick,bend left = 45] (vv5) {};
\draw [thick] (vv5) -- (vv6);
\path (vv6) edge [thick,bend left = 45] (vv7) {};
\path (vv6) edge [thick,bend right = 45] (vv8) {};
\path (vv8) edge [thick,bend right = 45] (vv9) {};
\path (vv7) edge [thick,bend left = 45] (vv9) {};
\path (vv9) edge [thick,bend left = 50] (vv10) {};

\end{tikzpicture}
}
\vspace{-2em}
\vspace{-2em}
\vspace{9pt}
\caption{Both motic and mojetic.}
\label{fig:mojetic2}
\end{subfigure}

\begin{subfigure}[b]{0.45\textwidth}
\centering
    \vspace{1em}
\resizebox{\textwidth}{!}{
\begin{tikzpicture}[scale=0.5]

\node (v1) at (0,5) {};
\node [draw,circle,minimum size=4pt,fill=Black,inner sep=0pt,outer sep=0pt] (v2) at (3,5) {};
\node [draw,circle,minimum size=4pt,fill=Black,inner sep=0pt,outer sep=0pt] (v3) at (4,6) {};
\node [minimum size=0pt,inner sep=0pt,outer sep=0pt] (v4) at (4,4) {};
\node [draw,circle,minimum size=4pt,fill=Black,inner sep=0pt,outer sep=0pt]  (v5) at (5,5) {};
\node (v6) at (8,5) {};
\node [minimum size=0pt,inner sep=0pt,outer sep=0pt] (v7) at (4,7.5) {};

\draw [thick] (v1) -- (v2);
\path (v2) edge [thick,bend left = 45] (v3) {};
\path (v2) edge [thick,bend right = 45] (v4) {};
\path (v4) edge [thick,bend right = 45] (v5) {};
\path (v3) edge [thick,bend left = 45] (v5) {};
\draw [thick] (v5) -- (v6);
\path (v3) edge [thick,bend left = 80] (v7) {};
\path (v3) edge [thick,bend right = 80] (v7) {};

\node (x1) at (10,5) {};
\node (x2) at (12,5) {};
\draw [->,thick] (x1) -- (x2);

\node [draw,circle,minimum size=4pt,fill=Black,inner sep=0pt,outer sep=0pt] (vv2) at (16,5) {};
\node [draw,circle,minimum size=4pt,fill=Black,inner sep=0pt,outer sep=0pt] (vv3) at (17,6) {};
\node [minimum size=0pt,inner sep=0pt,outer sep=0pt] (vv4) at (17,4) {};
\node [draw,circle,minimum size=4pt,fill=Black,inner sep=0pt,outer sep=0pt]  (vv5) at (18,5) {};
\node [minimum size=0pt,inner sep=0pt,outer sep=0pt] (vv7) at (17,7.5) {};

\node [draw=magenta,circle,minimum size=4pt,fill=magenta,inner sep=0pt,outer sep=0pt]  (vv10) at (17,2.5) {};

\path (vv10) edge [thick,bend left = 50] (vv2) {};
\path (vv2) edge [thick,bend left = 45] (vv3) {};
\path (vv2) edge [thick,bend right = 45] (vv4) {};
\path (vv4) edge [thick,bend right = 45] (vv5) {};
\path (vv3) edge [thick,bend left = 45] (vv5) {};
\path (vv5) edge [thick,bend left = 50] (vv10) {};
\path (vv3) edge [thick,bend left = 80] (vv7) {};
\path (vv3) edge [thick,bend right = 80] (vv7) {};

\node (y8) at (20,1) {};

\end{tikzpicture}
}
\vspace{-3em}
\vspace{-2em}
\vspace{9pt}
\caption{Motic but not mojetic.}
\label{fig:mojetic3}
\end{subfigure}
\hfill
\begin{subfigure}[b]{0.45\textwidth}
\centering
    \vspace{1em}
\resizebox{\textwidth}{!}{
\begin{tikzpicture}[scale=0.5]

\node (v1) at (0,5) {};
\node [draw,circle,minimum size=4pt,fill=Black,inner sep=0pt,outer sep=0pt] (v2) at (3,5) {};
\node [draw,circle,minimum size=4pt,fill=Black,inner sep=0pt,outer sep=0pt] (v3) at (4,6) {};
\node [minimum size=0pt,inner sep=0pt,outer sep=0pt] (v4) at (4,4) {};
\node [draw,circle,minimum size=4pt,fill=Black,inner sep=0pt,outer sep=0pt]  (v5) at (5,5) {};
\node (v6) at (8,5) {};
\node [draw,circle,minimum size=4pt,fill=Black,inner sep=0pt,outer sep=0pt] (v7) at (4,7) {};
\node [minimum size=0pt,inner sep=0pt,outer sep=0pt] (v8) at (4,8.5) {};

\draw [thick] (v1) -- (v2);
\path (v2) edge [thick,bend left = 45] (v3) {};
\path (v2) edge [thick,bend right = 45] (v4) {};
\path (v4) edge [thick,bend right = 45] (v5) {};
\path (v3) edge [thick,bend left = 45] (v5) {};
\draw [thick] (v5) -- (v6);
\path (v3) edge [thick] (v7) {};
\path (v7) edge [thick,bend right = 90] (v8) {};
\path (v7) edge [thick,bend left = 90] (v8) {};

\node (x1) at (10,5) {};
\node (x2) at (12,5) {};
\draw [->,thick] (x1) -- (x2);

\node [draw,circle,minimum size=4pt,fill=Black,inner sep=0pt,outer sep=0pt] (vv2) at (16,5) {};
\node [draw,circle,minimum size=4pt,fill=Black,inner sep=0pt,outer sep=0pt] (vv3) at (17,6) {};
\node [minimum size=0pt,inner sep=0pt,outer sep=0pt] (vv4) at (17,4) {};
\node [draw,circle,minimum size=4pt,fill=Black,inner sep=0pt,outer sep=0pt]  (vv5) at (18,5) {};
\node [draw,circle,minimum size=4pt,fill=Black,inner sep=0pt,outer sep=0pt] (vv7) at (17,7) {};
\node [minimum size=0pt,inner sep=0pt,outer sep=0pt] (vv8) at (17,8.5) {};

\node [draw=magenta,circle,minimum size=4pt,fill=magenta,inner sep=0pt,outer sep=0pt]  (vv10) at (17,2.5) {};

\path (vv10) edge [thick,bend left = 50] (vv2) {};
\path (vv2) edge [thick,bend left = 45] (vv3) {};
\path (vv2) edge [thick,bend right = 45] (vv4) {};
\path (vv4) edge [thick,bend right = 45] (vv5) {};
\path (vv3) edge [thick,bend left = 45] (vv5) {};
\path (vv5) edge [thick,bend left = 50] (vv10) {};
\path (vv3) edge [thick] (vv7) {};
\path (vv7) edge [thick,bend right = 80] (vv8) {};
\path (vv7) edge [thick,bend left = 80] (vv8) {};

\node (y8) at (20,1) {};

\end{tikzpicture}
}
\vspace{-3em}
\vspace{-2em}
\vspace{9pt}
\caption{Neither motic nor mojetic.}
\label{fig:mojetic4}
\end{subfigure}
\caption{Examples of motic and mojetic graphs. The pink vertex is introduced to connect all external lines to a single point.}
\label{motic_mojetic}
\end{figure}
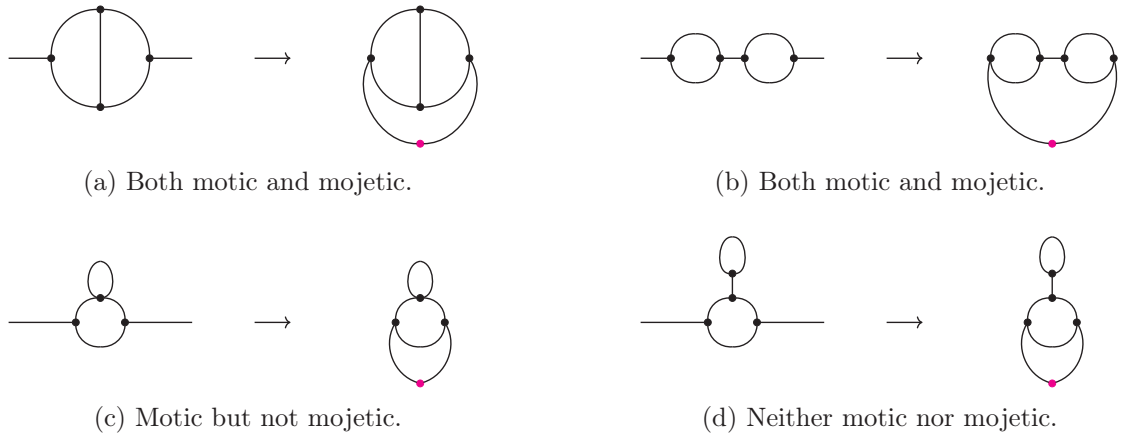

With the help of these concepts, we claim that the requirements regarding~$H$ and $J$ in section~\ref{infrared_regions_in_wideangle_scattering}, are equivalent to the requirement that all the $K$ subgraphs $H\cup J\setminus J_i$ ($i=1,\dots,K$) are mojetic. Therefore, the following theorem characterising the infrared regions of $G$ may be formulated.

\begin{theorem}
Any configuration of subgraphs in $G$ compatible with proposition~\ref{proposition-region_vectors_are_hard_and_infrared} is an infrared region of $G$, if and only if the following graph-theoretical conditions for $H$, $J$ and $S$ are satisfied:
\begin{enumerate}
    \item[$\textit{1.}$] {For any $i=1,\dots,K$, the subgraph $H\cup J\setminus J_i$ is mojetic.}
    \item[$\textit{2.}$] {Every connected component of $S$ must connect at least two jet subgraphs $J_i$ and $J_j$ for some $i \neq j$.}
\end{enumerate}
\label{theorem-algorithm_necessary_sufficient}
\end{theorem}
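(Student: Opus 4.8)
The plan is to build the proof on the equivalence already secured in section~\ref{infrared_regions_in_wideangle_scattering} and appendix~\ref{appendix-necessity_sufficiency_requirements}, namely that a configuration compatible with proposition~\ref{proposition-region_vectors_are_hard_and_infrared} is an infrared region if and only if the three requirements on $H$, $J$ and $S$ hold. Granting this, it suffices to prove that the two graph-theoretic conditions of the theorem are \emph{equivalent} to those three requirements. The requirement of $S$ and condition~\textit{2} are literally the same statement, so the entire content reduces to showing that condition~\textit{1}, that every $H\cup J\setminus J_i$ be mojetic, is equivalent to the conjunction of the requirement of $H$ and the requirement of $J$. Throughout I would use the stated characterisation that \emph{mojetic} equals \emph{connected, motic and scaleful}, together with the fact that joining all external edges of a subgraph to a single auxiliary vertex turns its external momentum flow into internal cycles, so that biconnectivity (1VI) becomes a statement about the absence of cut vertices and of scaleless pieces.

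The first substantive step is to set up a dictionary between the two momentum requirements and purely combinatorial statements, using momentum routing. By the construction of the reduced form, an internal edge of $H_\text{red}$ is a bridge joining two nontrivial 1VI blobs of $H$ (or an auxiliary bridge between blobs sharing a vertex), and the momentum it carries is the sum of the external momenta entering one side. Hence the requirement of $H$, that all such bridges be off-shell, fails exactly when some bridge isolates a part of $H$ whose net incoming external momentum is a single lightlike $p_k$ (on-shell) or zero (scaleless); in $\Gamma_i := H\cup J\setminus J_i$ with the auxiliary vertex attached, such a bridge is precisely a cut-vertex or scaleless-component obstruction to being mojetic. I would make this rigorous by invoking corollary~\ref{theorem-Uterms_space_dimensionality_corollary2}, which already equates ``all internal propagators of the reduced form off-shell'' with the reduced graph admitting a full-dimensional (facet) spanning-tree space, and theorem~\ref{theorem-Uterms_space_dimensionality} for the explicit basis on each 1VI blob. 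The requirement of $J$ is handled identically after contraction: an internal edge of $\widetilde{J}_{i,\text{red}}$ carrying momentum other than $p_i$ signals either a jet component attached to $H$ alone (as in figure~\ref{jet_configuration_forbidden}$(b)$) or a soft-amputated tadpole (figure~\ref{jet_configuration_forbidden}$(c)$), both of which are scaleless and hence violate the scaleful part of mojetic.

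With the dictionary in place both directions follow. For (requirements)~$\Rightarrow$~(condition~\textit{1}) I would argue contrapositively: if $\Gamma_i$ fails to be mojetic it has, after joining externals, either a cut vertex or a scaleless subgraph, which by the previous step produces a bridge of $H_\text{red}$ carrying on-shell momentum or a component of some $\widetilde{J}_k$ ($k\neq i$) carrying momentum $\neq p_k$, contradicting the requirement of $H$ or of $J$. For the converse I would show that when the requirements hold, each 1VI blob of $H$ and of every $\widetilde{J}_k$ glues through the auxiliary vertex into a single biconnected component with no scaleless piece. The reason $J_i$ is deleted is instructive here: its single on-shell leg $p_i$ would otherwise furnish exactly the forbidden ``single lightlike momentum'' bridge, so removing $J_i$ and sending its attachment edges to the auxiliary vertex replaces that lightlike leg by generic external legs, letting the surviving jets $J_k$ ($k\neq i$) and the hard core be tested against the off-shell condition; as $i$ ranges over all $K$ jets, every jet requirement is probed (for $K\ge 2$, with the degenerate $K=1$ case collapsing to the requirement of $H$ for $H$ itself).

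I expect the crux to be the faithful matching between the combinatorial cut-vertex/scaleless analysis of $\Gamma_i$ and the momentum-flow requirements, because the subgraphs $H$, $J_i$ and $S$ here use the non-standard vertex assignment noted in section~\ref{leading_terms_in_infrared_regions} (soft edges may terminate on jet vertices, jet edges on hard vertices), so ``1VI after attaching externals'' must be read on the \emph{contracted} graphs $H$, $\widetilde{J}_i$ and their reduced forms rather than on the naive subgraphs. Verifying that contraction commutes with the reduced-form construction, and that each cut vertex of $\Gamma_i$ corresponds bijectively to an on-shell bridge rather than to an artefact of the auxiliary vertex, is where the argument will require the most care; applying theorem~\ref{theorem-Uterms_space_dimensionality} and corollary~\ref{theorem-Uterms_space_dimensionality_corollary2} component-by-component to the 1VI blobs of $H$ and of the $\widetilde{J}_k$ is what I expect to make this precise.
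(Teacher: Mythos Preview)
Your overall reduction is correct and matches the paper: since condition~\textit{2} is literally the requirement of $S$, the content is that condition~\textit{1} is equivalent to the requirements of $H$ and $J$. However, your two explicit arguments both establish the \emph{same} direction. You label the first as ``(requirements)~$\Rightarrow$~(condition~\textit{1}) by contrapositive'' and then write ``for the converse I would show that when the requirements hold, each 1VI blob \ldots glues \ldots into a single biconnected component'' --- but that is again requirements~$\Rightarrow$~mojetic, not the converse. The implication condition~\textit{1}~$\Rightarrow$~requirements (equivalently: if some edge of $H_{\text{red}}$ or $\widetilde{J}_{k,\text{red}}$ carries the wrong momentum, then some $\Gamma_i$ fails to be mojetic) never receives its own argument; it is only implicit in your claim that the dictionary is a bijection (``precisely a cut-vertex obstruction''), which you yourself flag as the delicate step. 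The paper does supply this direction explicitly: if $e\in H_{\text{red}}$ carries $p_i$, then in $H\cup J\setminus J_i$ the external leg $p_i$ is absent, so the branch of $H_{\text{red}}$ through $e$ never reaches the auxiliary vertex, leaving a cut vertex; the case where $e$ is auxiliary and the jet case are handled analogously.

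For the direction you do address, the paper's argument is structurally different from either of yours. Rather than starting from a hypothetical cut vertex of $[\Gamma_i]_c$ and tracing it back to a bad bridge, the paper classifies the 1VI components of $H$ by their number of \emph{hard-hard vertices} (vertices shared by two 1VI blocks of $H$ or incident to some $q_j$) and shows that, after deleting the $J_i$-edges, every such component still has two vertex-disjoint paths to distinct external legs of $\Gamma_i$; the same holds for each $\widetilde{J}_k$ with $k\neq i$ because $\widetilde{J}_{k,\text{red}}$ is the chain of figure~\ref{reduced_form_jetsubgraph}. Hence no single vertex removal can separate $[\Gamma_i]_c$. This two-paths argument is what actually makes the ``cut vertex $\leftrightarrow$ on-shell bridge'' correspondence work without the artefact worry you raise. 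Your plan to invoke theorem~\ref{theorem-Uterms_space_dimensionality} and corollary~\ref{theorem-Uterms_space_dimensionality_corollary2} is misplaced here: those results concern the dimension of facets of the Newton polytope and are used in appendix~\ref{appendix-necessity_sufficiency_requirements}, not in this purely graph-theoretic equivalence.
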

Note that the second condition in the theorem above is identical to the requirement of $S$ stated in section~\ref{infrared_regions_in_wideangle_scattering}. Below we will prove that the first condition is \emph{equivalent} to the requirements regarding $H$ and $J$ in section~\ref{infrared_regions_in_wideangle_scattering}.

\begin{proof}
First, we show that as long as the requirements of $H$ and $J$ of section~\ref{infrared_regions_in_wideangle_scattering} are satisfied, each of the subgraphs $H\cup J\setminus J_i$ ($i=1,\dots,K$) is mojetic. Equivalently, we show that for any vertex $v\in H\cup J\setminus J_i$, the graph $(H\cup J\setminus J_i) \setminus v$ is connected after we connect all its external momenta to an auxiliary vertex.

In order to show this, we first define a certain type of vertex of the hard subgraph $H$: a \emph{hard-hard vertex} is a vertex that is shared by two 1VI components of $H$, or attached to an external off-shell momentum $q_j$.

From the requirement of $H$ in section~\ref{infrared_regions_in_wideangle_scattering}, i.e.~the internal propagators of $H_\text{red}$ are off-shell, it follows that each 1VI component in $H$ is one of the following types (see figure~\ref{hard_subgraph_structure_original}):
\begin{enumerate}
    \item[(1)] it has two or more hard-hard vertices (and any number of jet edges attached);
    \item[(2)] it has one hard-hard vertex, and it attaches to jet edges of at least \emph{two} different jets;
    \item[(3)] it has no hard-hard vertices, and it attaches to jet edges of at least \emph{four} different jets.
\end{enumerate}
This is because $H_\text{red}$ is a tree graph, and every 1VI component of $H$ is either an edge or a vertex of $H_\text{red}$, hence for the propagators of $H_\text{red}$ to be off-shell, momentum conservation requires one of the configurations listed above. For example, in figure~\ref{hard_subgraph_structure_original} $\gamma_2$ and $\gamma_3$ have at least two hard-hard vertices (marked black), and $\gamma_1$ and $\gamma_4$ each have one hard-hard vertex, and are attached to jet edges of two different jets (through the yellow vertices).
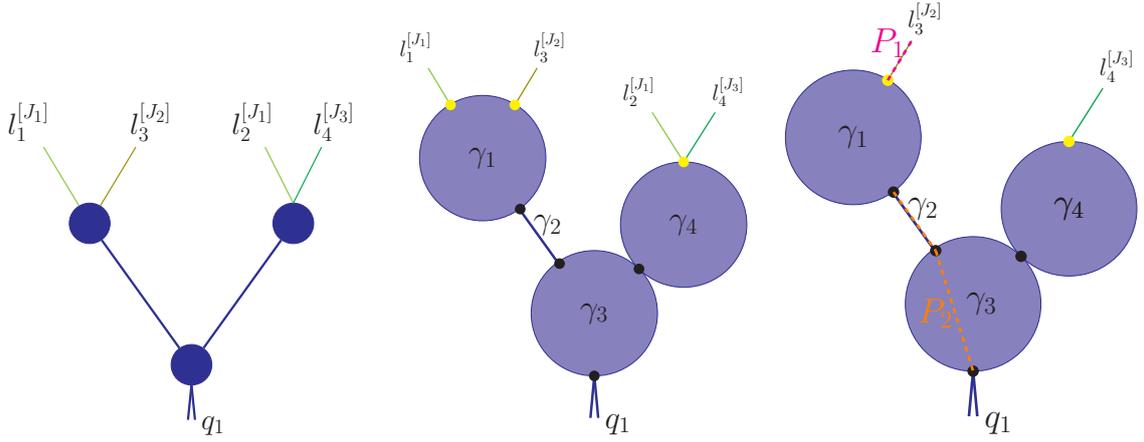
\begin{figure}[t]
\centering
\begin{subfigure}[b]{0.32\textwidth}
\centering
\resizebox{\textwidth}{!}{
\begin{tikzpicture}[line width = 0.6, font=\large, mydot/.style={circle, fill, inner sep=.7pt},transform shape]

\node[draw=Blue,circle,minimum size=1cm,fill=Blue] (h) at (5,3){};
\node[draw=Blue,circle,minimum size=1cm,fill=Blue] (g1) at (2.5,6.5){};
\node[draw=Blue,circle,minimum size=1cm,fill=Blue] (g4) at (7.5,6.5){};

\path (h) edge [ultra thick, color=Blue] (g1) {};
\path (h) edge [ultra thick, color=Blue] (g4) {};

\node (q1) at (4.9,1.5) {};
\node (q1p) at (5.1,1.5) {};
\draw (q1) edge [ultra thick, color=Blue] (5,2.5) node [] {};
\draw (q1p) edge [ultra thick, color=Blue] (5,2.5) node [right] {\huge $q_1$};

\node (p1) at (1,9) {\huge $l_1^{[J_1]}$};
\node (p2) at (4,9) {\huge $l_3^{[J_2]}$};
\node (p3) at (6.5,9) {\huge $l_2^{[J_1]}$};
\node (p4) at (8.5,9) {\huge $l_4^{[J_3]}$};
\draw (p1) edge [thick, color=LimeGreen] (g1) node [right] {};
\draw (p2) edge [thick, color=olive] (g1) node [right] {};
\draw (p3) edge [thick, color=LimeGreen] (7.5,7) node [right] {};
\draw (p4) edge [thick, color=Green] (7.5,7) node [right] {};

\draw (3.5,5) node [right] {};

\end{tikzpicture}
}
\vspace{5pt}
\vspace{-2em}
\caption{An example of $H_\text{red}$, and the jet and external momenta attached to it.}
\label{hard_subgraph_structure_reduced}
\end{subfigure}
\hfill
\begin{subfigure}[b]{0.32\textwidth}
\centering
\resizebox{\textwidth}{!}{
\begin{tikzpicture}[line width = 0.6, font=\large, mydot/.style={circle, fill, inner sep=.7pt},transform shape]

\node[draw=Blue,circle,minimum size=2.828cm,fill=Blue!50] (h) at (5,3){};
\node[draw=Blue,circle,minimum size=2.828cm,fill=Blue!50] (g1) at (2.5,6.5){};
\node[draw=Blue,circle,minimum size=2.828cm,fill=Blue!50] (g4) at (7,5){};

\node at (h) {\huge $\gamma_3$};
\node at (g1) {\huge $\gamma_1$};
\node at (g4) {\huge $\gamma_4$};

\path (h) edge [ultra thick, color=Blue] (g1) {};

\node (q1) at (4.9,0.5) {};
\node (q1p) at (5.1,0.5) {};
\draw (q1) edge [ultra thick, color=Blue] (5,1.5858) node [] {};
\draw (q1p) edge [ultra thick, color=Blue] (5,1.5858) node [right] {\huge $q_1$};

\node (p1) at (1,9) {\Large $l_1^{[J_1]}$};
\node (p2) at (4,9) {\Large $l_3^{[J_2]}$};
\node (p3) at (6,8) {\Large $l_2^{[J_1]}$};
\node (p4) at (8,8) {\Large $l_4^{[J_3]}$};
\draw (p1) edge [thick, color=LimeGreen] (g1) node [right] {};
\draw (p2) edge [thick, color=olive] (g1) node [right] {};
\draw (p3) edge [thick, color=LimeGreen] (7,6.414) node [right] {};
\draw (p4) edge [thick, color=Green] (7,6.414) node [right] {};

\draw[fill,thick,color=Black] (6,4) circle (3pt);
\draw[fill,thick,color=Black] (4.22,4.12) circle (3pt);
\draw[fill,thick,color=Black] (3.34,5.35) circle (3pt);
\draw[fill,thick,color=Black] (5,1.5858) circle (3pt);
\draw[fill,thick,color=yellow] (3.22,7.7) circle (3pt);
\draw[fill,thick,color=yellow] (1.78,7.7) circle (3pt);
\draw[fill,thick,color=yellow] (7,6.414) circle (3pt);

\draw (3.5,5) node [right] {\huge $\gamma_2$};

\end{tikzpicture}
}
\vspace{5pt}
\vspace{-2em}
\caption{A possible $H$ corresponding to $H_\text{red}$ and its attached jet and external momenta.}
\label{hard_subgraph_structure_original}
\end{subfigure}
\hfill
\begin{subfigure}[b]{0.32\textwidth}
\centering
\resizebox{\textwidth}{!}{
\begin{tikzpicture}[line width = 0.6, font=\large, mydot/.style={circle, fill, inner sep=.7pt},transform shape]

\node[draw=Blue,circle,minimum size=2.828cm,fill=Blue!50] (h) at (5,3){};
\node[draw=Blue,circle,minimum size=2.828cm,fill=Blue!50] (g1) at (2.5,6.5){};
\node[draw=Blue,circle,minimum size=2.828cm,fill=Blue!50] (g4) at (7,5){};

\node at (h) {};
\node at (g1) {};
\node at (g4) {\huge $\gamma_4$};

\path (h) edge [ultra thick, color=Blue] (g1) {};

\node (q1) at (4.9,0.5) {};
\node (q1p) at (5.1,0.5) {};
\draw (q1) edge [ultra thick, color=Blue] (5,1.5858) node [] {};
\draw (q1p) edge [ultra thick, color=Blue] (5,1.5858) node [right] {\huge $q_1$};

\node (p2) at (4,9) {\Large $l_3^{[J_2]}$};
\node (p4) at (8,8) {\Large $l_4^{[J_3]}$};
\draw (p2) edge [thick, color=olive] (g1) node [right] {};
\draw (p4) edge [thick, color=Green] (7,6.414) node [right] {};

\draw[fill,thick,color=Black] (6,4) circle (3pt);
\draw[fill,thick,color=Black] (4.22,4.12) circle (3pt);
\draw[fill,thick,color=Black] (3.34,5.35) circle (3pt);
\draw[fill,thick,color=Black] (5,1.5858) circle (3pt);
\draw[fill,thick,color=yellow] (3.22,7.7) circle (3pt);
\draw[fill,thick,color=yellow] (7,6.414) circle (3pt);
\draw (3.22,7.7) edge [ultra thick, dashed, color=magenta] (p2) node [above, xshift=0pt, yshift=10pt] {\huge {\color{magenta}$P_1$}};
\draw (3.34,5.35) edge [ultra thick, dashed, color=orange] (4.22,4.12) node [] {};
\draw (4.22,4.12) edge [ultra thick, dashed, color=orange] (5,1.5858) node [below, yshift=-25pt] {\huge {\color{orange}$P_2$}};

\draw (3.5,5) node [right] {\huge $\gamma_2$};

\node at (h) [xshift=5pt] {\huge $\gamma_3$};
\node at (g1) [] {\huge $\gamma_1$};
\node at (g4) {\huge $\gamma_4$};

\end{tikzpicture}
}
\vspace{-2em}
\vspace{5pt}
\caption{$H$ and its attached jet and external momenta, where $J_1$ is removed from $G$.}
\label{hard_subgraph_structure_removal}
\end{subfigure}
\caption{An example illustrating the structure of the hard subgraph and the momenta attached to it (discarding soft momenta), where $q_1$ is the off-shell external momentum of $G$, while $l_1^{[J_1]}$ and~$l_2^{[J_1]}$ are momenta of propagators belonging to the jet $J_1$,  $l_3^{[J_2]}$ is the momentum of a propagator of $J_2$, and $l_4^{[J_3]}$ of $J_3$. The subgraphs $\gamma_i$ (i=1,\dots,4) are the 1VI components of $H$.}
\label{hard_subgraph_structure}
\end{figure}

It then follows that upon removing all the edges from a specific jet $J_i$, every 1VI component of $H$ is one of the following types (see figure~\ref{hard_subgraph_structure_removal}):
\begin{enumerate}
    \item[(1)] it has two or more hard-hard vertices (and any number of jet edges attached);
    \item[(2)] it has one hard-hard vertex, and it attaches to at least \emph{one} jet;
    \item[(3)] it has no hard-hard vertices, and it attaches to jet edges of at least \emph{three} different jets.
\end{enumerate}
This observation implies that each 1VI component of $H$ can be connected to two external momenta of $H\cup J\setminus J_i$, through some distinct paths~$P_1$ and~$P_2$, respectively, where $P_1$ and $P_2$ do not share any vertices. For example, in figure~\ref{hard_subgraph_structure_removal} which depicts $H\cup J\setminus J_1$, the 1VI component~$\gamma_1$ can be connected to the external momentum $p_2$ (the external momentum of jet $J_2$) via $P_1$ (the magenta path), and to $q_1$ via $P_2$ (the orange path), where $P_1$ and $P_2$ do not share any vertices. Similar conclusions hold for other 1VI components $\gamma_2$, $\gamma_3$ and $\gamma_4$ for some external momenta in each case.

Moreover, from the requirement of $J$ in section~\ref{infrared_regions_in_wideangle_scattering}, i.e. that every internal propagator of~$\widetilde{J}_{j,\text{red}}$ carries exactly the momentum $p_j$, we deduce that the configuration of $\widetilde{J}_{j,\text{red}}$ must be described by figure~\ref{reduced_form_jetsubgraph}. Since every jet attaches to $H$, we know that each 1VI component of~$J_j$ can also be connected to two external momenta in $H\cup J\setminus J_i$ (with $i\neq j$), one of which is $p_j$ while the other connects via a path going through the hard subgraph $H$. These two paths do not share any vertex.

As a result, by removing any vertex $v\in H\cup J\setminus J_i$, one may break the path $P_1$ or the path $P_2$, but never both. Thus, there exists no $v\in G$, such that $(H\cup J\setminus J_i) \setminus v$ is disconnected after we connect all the external momenta to an auxiliary vertex. Thus, by definition, $H\cup J\setminus J_i$ is mojetic for any $i$.
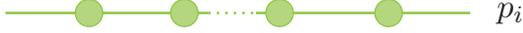
\begin{figure}[t]
\centering
\resizebox{0.5\textwidth}{!}{
\begin{tikzpicture}[scale=1.1]
\node (v1) at (0,0) {};
\node [draw=LimeGreen,circle,minimum size=9pt,fill=LimeGreen!66,inner sep=0pt,outer sep=0pt] (v2) at (1,0) {};
\node [draw=LimeGreen,circle,minimum size=9pt,fill=LimeGreen!66,inner sep=0pt,outer sep=0pt] (v3) at (2,0) {};
\node [minimum size=0pt,inner sep=0pt,outer sep=0pt] (v3p) at (2.25,0) {};
\node [minimum size=0pt,inner sep=0pt,outer sep=0pt] (v4p) at (2.75,0) {};
\node [draw=LimeGreen,circle,minimum size=9pt,fill=LimeGreen!66,inner sep=0pt,outer sep=0pt]  (v4) at (3,0) {};
\node [draw=LimeGreen,circle,minimum size=9pt,fill=LimeGreen!66,inner sep=0pt,outer sep=0pt,right]  (v5) at (4,0) {};
\node [minimum size=0pt,inner sep=0pt,outer sep=0pt]  (v6) at (5,0) {};

\node [right,xshift=5pt] at (v6) {$p_i$};

\draw [thick,color=LimeGreen] (v1) -- (v2);
\draw [thick,color=LimeGreen] (v2) -- (v3);
\draw [thick,color=LimeGreen] (v3) -- (v3p);
\draw [thick,dotted,color=LimeGreen] (v3p) -- (v4p);
\draw [thick,color=LimeGreen] (v4p) -- (v4);
\draw [thick,color=LimeGreen] (v4) -- (v5);
\draw [thick,color=LimeGreen] (v5) -- (v6);
\end{tikzpicture}
}
\vspace{-2em}
\vspace{8pt}
\caption{The reduced form of any contracted jet subgraphs that satisfy the requirement on $J$.}
\label{reduced_form_jetsubgraph}
\end{figure}

Finally, we show that $H\cup J\setminus J_i$ being mojetic for any $i=1,\dots,K$ implies the requirements of $H$ and $J$ in section~\ref{infrared_regions_in_wideangle_scattering}. Let us demonstrate this by contradiction, i.e. assume that there is an edge $e\in H_\text{red}$ that carries an on-shell momentum $p_i$, and then show that the graph $H\cup J\setminus J_i$ is not mojetic.
In the case that $e\in H$, i.e. $e$ is not an auxiliary propagator of $H_\text{red}$, then $e$ itself is a (trivial) 1VI component of $H\cup J\setminus J_i$. After we connect all the external momenta of $H\cup J\setminus J_i$ to an auxiliary vertex, $e$ remains a 1VI component: since the momentum it carries is $p_i$, and $p_i$ is not an external momentum of $H\cup J\setminus J_i$, the branch of $H_\text{red}$ to which $e$ belongs does not get connected to the auxiliary vertex. It then follows that $H\cup J\setminus J_i$ is not mojetic.
In the case that $e\notin H$, i.e. $e$ is an auxiliary propagator, then there must be a nontrivial 1VI component $\gamma^H\subset H$ that is connected to~$e$, and the total momentum flowing into (and out of) $\gamma^H$ is exactly $p_i$ (by the contradiction assumption). 
Using the same reasoning as above, one can claim that $\gamma^H$ remains a 1VI component of $H\cup J\setminus J_i$, even after all the external momenta are connected to an auxiliary vertex. Hence $H\cup J\setminus J_i$ is not mojetic.
We conclude that all the edges of $H_\text{red}$ must carry off-shell momenta. The same reasoning allows one to prove that $\widetilde{J}_{i,\text{red}}$ must carry exactly the momentum~$p_i$. (The only other possibility for edges in $\widetilde{J}_{i,\text{red}}$ would be that they  carry zero momentum, as in a dead-end structure, which is incompatible with $H\cup J\setminus J_i$ being mojetic.)
\end{proof}

\subsection{An algorithm for regions}
\label{algorithm_for_regions}

In this section we develop an algorithmic way to construct all the infrared regions for the on-shell expansion of a generic wide-angle scattering graph $G$. Throughout the construction and with the help of theorem~\ref{theorem-algorithm_necessary_sufficient} above, we will see that the Landau equation as well as the three requirements in section~\ref{infrared_regions_in_wideangle_scattering} are all satisfied automatically.

Before doing so, we introduce the following concept that is helpful in the construction of the jet subgraphs. Consider drawing a cut through a set of edges $\{e_c\}$ which disconnects a graph $G$ into two connected subgraphs, one of which, denoted by $\widehat{\gamma}_i$, contains only a single external momentum $p_i$, compatible with a unitarity cut on the $p_i$ channel (see for example refs.~\cite{Vltm63,Vltm94book,Abreu:2014cla}). Furthermore, $\widehat{\gamma}_i$ must include any vertex whose removal disconnects $p_i^\mu$. We define the \emph{one-external subgraph in the $p_i$ channel} $\gamma_i\equiv \widehat{\gamma}_i \cup \{e_c\}$. From this construction, it follows that a one-external subgraph in the $p_i$ channel, $\gamma_i$, satisfies
\begin{itemize}
    \item[(1)] $\gamma_i$ is connected;
    \item[(2)] every edge of $\gamma_i$ satisfies either
    \begin{enumerate}
        \item[(i)] both endpoints are in $\gamma_i$;
        \item[(ii)] one endpoint is in $\gamma_i$ while the other is in $G\setminus \gamma_i$.
    \end{enumerate}
\end{itemize}

Our algorithm for constructing the regions can then be described by the following steps.
\begin{itemize}
    \item \emph{Step 1}: For each nearly on-shell external momentum $p_i$ ($i=1,\dots,K$), construct the one-external subgraph $\gamma_{i}$ in the $p_i$ channel, such that the subgraph $H_i\equiv G \setminus \gamma_{i}$ is mojetic.
    \item \emph{Step 2}: 
    Consider all possible sets $\{\gamma_1,\dots,\gamma_K\}$. For each such set focus on each edge $e\in G$. If~$e$ has been assigned to two or more $\gamma_{i}$, it belongs to the soft subgraph $S$; if~$e$ has been assigned to exactly one $\gamma_{i}$, it belongs to the jet subgraph $J_i$; if~$e$ has not been assigned to any $\gamma_{i}$, it belongs to $H$. We also denote $J\equiv \cup_{i=1}^K J_i$.
    \item \emph{Step 3}: We now require that the result obtained in Step 2 satisfies the following three further constraints: (i) each jet subgraph $J_i$ is connected; (ii) each hard subgraph $H$ is connected; (iii) each of the $K$ subgraphs $H \cup J \setminus J_i$ ($i=1,\dots,K$) is mojetic. The region would be ruled out if any of these conditions are not satisfied.
\end{itemize}

In the steps above, we have associated every edge $e\in G$ to a specific subgraph. Let us recall the association of vertices, $v\in G$, to these subgraphs. According to the definitions in section~\ref{leading_terms_in_infrared_regions}, any vertex $v\in S$ if and only if it connects to soft edges only; $v\in J_i$ if and only if it connects to at least one edge in $J_i$ but no edges from $H$ nor $J_j$ with $j\neq i$; otherwise $v\in H$. A special case of $v\in H$ occurs when $v$ is met by two different jets $J_i$ and~$J_j$ simultaneously.

In the following we will consider a few examples to demonstrate the workings of the algorithm. Let us start with the one-loop triangle example $G_3$ with one off-shell external momentum $q_1$ and two nearly on-shell external momenta $p_1$ and $p_2$:

\begin{equation}
G_{3}=
\begin{tikzpicture}[baseline=13ex,scale=1.]
\coordinate (x1) at (1.1340, 1.4999) ;
\coordinate (x2) at (2.8660, 1.5000) ;
\coordinate (x3) at (2,3) ;
\node (p1) at (0.2681, 0.9998) {$p_1$};
\node (p2) at (3.7320, 1.0000) {$p_2$};
\node (p3) at (2,4) {$q_1$};
\draw[color=green] (x1) -- (p1);
\draw[ultra thick,color=Blue] (x3) -- (p3);
\draw[color=ForestGreen] (x2) -- (p2);
\draw[ultra thick,color=Black] (x1) -- (x2);
\draw[ultra thick,color=Black] (x2) -- (x3);
\draw[ultra thick,color=Black] (x3) -- (x1);
\draw[fill,thick,color=Blue] (x1) circle (1pt);
\draw[fill,thick,color=Blue] (x2) circle (1pt);
\draw[fill,thick,color=Blue] (x3) circle (1pt);
\end{tikzpicture}
\end{equation}
The algorithm above can generate a single one-external subgraph in the $p_1$ channel, which we denote by $\gamma_1$, and another one in the $p_2$ channel denoted by $\gamma_2$:
\begin{equation}
\gamma_1:
\begin{tikzpicture}[baseline=13ex,scale=1.]
\coordinate (x1) at (1.1340, 1.4999) ;
\coordinate (x2) at (2.8660, 1.5000) ;
\coordinate (x3) at (2,3) ;
\draw[ultra thick, dotted] (1.1340, 2.4999) arc (70:0:1.5);
\node (p1) at (0.2681, 0.9998) {$p_1$};
\node (p2) at (3.7320, 1.0000) {$p_2$};
\node (p3) at (2,4) {$q_1$};
\draw[color=green] (x1) -- (p1);
\draw[color=green] (x1) -- (x2);
\draw[fill,thick,color=green] (x1) circle (1pt);
\draw[color=green] (x3) -- (x1);
\draw[ultra thick,color=Blue] (x3) -- (p3);
\draw[color=ForestGreen] (x2) -- (p2);
\draw[ultra thick,color=Blue] (x2) -- (x3);
\draw[fill,thick,color=Blue] (x2) circle (1pt);
\draw[fill,thick,color=Blue] (x3) circle (1pt);
\end{tikzpicture}
\gamma_2:
\begin{tikzpicture}[baseline=13ex,scale=1.]
\draw[ultra thick, dotted] (2.866, 2.4999) arc (110:180:1.5);
\coordinate (x1) at (1.1340, 1.4999) ;
\coordinate (x2) at (2.8660, 1.5000) ;
\coordinate (x3) at (2,3) ;
\node (p1) at (0.2681, 0.9998) {$p_1$};
\node (p2) at (3.7320, 1.0000) {$p_2$};
\node (p3) at (2,4) {$q_1$};
\draw[color=green] (x1) -- (p1);
\draw[color=ForestGreen] (x2) -- (p2);
\draw[fill,thick,color=ForestGreen] (x2) circle (1pt);
\draw[color=ForestGreen] (x1) -- (x2);
\draw[color=ForestGreen] (x2) -- (x3);
\draw[ultra thick,color=Blue] (x3) -- (p3);
\draw[ultra thick,color=Blue] (x3) -- (x1);
\draw[fill,thick,color=Blue] (x1) circle (1pt);
\draw[fill,thick,color=Blue] (x3) circle (1pt);
\end{tikzpicture}
\end{equation}
Note that $\gamma_i$ ($i=1,2$) denotes the green subgraph that is enclosed by the cut (the dotted curve) that contains $p_i$, including the cut propagators.
For each of these one-external subgraphs we now check whether or not $H_i\equiv G_3 \setminus \gamma_{i}$ is mojetic. Let us explicitly check this for $G_3 \setminus \gamma_1$,
\begin{equation}
G_3\setminus \gamma_1=
\begin{tikzpicture}[baseline=13ex,scale=1.]
\coordinate (x1) at (1.1340, 1.4999) ;
\coordinate (x2) at (2.8660, 1.5000) ;
\coordinate (x3) at (2,3) ;
\node (p2) at (3.7320, 1.0000) {$p_2$};
\node (p3) at (2,4) {$q_1$};
\draw[ultra thick,color=Blue] (x3) -- (p3);
\draw[color=ForestGreen] (x2) -- (p2);
\draw[ultra thick,color=Blue] (x2) -- (x3);
\draw[fill,thick,color=Blue] (x2) circle (1pt);
\draw[fill,thick,color=Blue] (x3) circle (1pt);
\end{tikzpicture},
\qquad
\left[G_3\setminus \gamma_1\right]_c=
\begin{tikzpicture}[baseline=13ex,scale=1.]
\coordinate (x1) at (1.1340, 1.4999) ;
\coordinate (x2) at (2.8660, 1.5000) ;
\coordinate (x3) at (2,3) ;
\draw[color=ForestGreen] (x1) -- (x2);
\draw[fill,ultra thick,color=magenta] (x1) circle (1pt);
\draw[ultra thick,color=Blue] (x3) -- (x1);
\draw[ultra thick,color=Blue] (x2) -- (x3);
\draw[fill,thick,color=Blue] (x2) circle (1pt);
\draw[fill,thick,color=Blue] (x3) circle (1pt);
\end{tikzpicture} .
\label{algorithm_example_step1_check}
\end{equation}
Here, in the second equation, we have added an auxiliary (pink) vertex connecting the edges carrying the external momenta $p_1$ and $q_1$. This graph, which we denote by $\left[G_3\setminus \gamma_1\right]_c$, is evidently 1VI, so $G_3 \setminus \gamma_1$ is mojetic. By symmetry, $G_3 \setminus \gamma_2$ is also mojetic. This concludes Step 1 of the algorithm.

In Step 2 we combine the subgraphs $\gamma_1$ and $\gamma_2$, by an \emph{overlay operation} which we denote by the symbol~$\sqcup$. The resulting graph $\gamma_1 \sqcup \gamma_2$ is
\begin{equation}
\gamma_1 \sqcup \gamma_2: 
\begin{tikzpicture}[baseline=13ex,scale=1.]
\coordinate (x1) at (1.1340, 1.4999) ;
\coordinate (x2) at (2.8660, 1.5000) ;
\coordinate (x3) at (2,3) ;
\node (p1) at (0.2681, 0.9998) {$p_1$};
\node (p2) at (3.7320, 1.0000) {$p_2$};
\node (p3) at (2,4) {$q_1$};

\draw[color=red,dashed] (x1) -- (x2);

\draw[fill,thick,color=green] (x1) circle (1pt);
\draw[color=green] (x1) -- (p1);
\draw[color=green] (x1) -- (x3);
\draw[color=ForestGreen] (x2) -- (p2);
\draw[fill,thick,color=ForestGreen] (x2) circle (1pt);
\draw[color=ForestGreen] (x3) -- (x2);
\draw[ultra thick,color=Blue] (x3) -- (p3);
\draw[fill,thick,color=Blue] (x3) circle (1pt);
\end{tikzpicture}
\end{equation}
The edge connecting $p_1$ and $q_1$ belongs only to $\gamma_1$, and hence is part of $J_1$. Similarly, the edge connecting $p_2$ and $q_1$ is part of $J_2$. The edge connecting $p_1$ and $p_2$ belongs to both $\gamma_1$ and $\gamma_2$, and hence is part of $S$. Note that we have used different shades of green to denote different jets, and a red dashed line to denote the soft edge, while the hard subgraph is drawn in blue. We will keep using the same colour scheme for all examples.

Step 3 then demands that we check that the resulting graph $\gamma_1 \sqcup \gamma_2$ satisfies three properties. The first two properties about the connectivity of each $J_i$ and $H$ are clearly satisfied. The third property requires for all $H\cup J\setminus J_i$ to be mojetic, which in this particular example coincides with the check in eq.~(\ref{algorithm_example_step1_check}).

We note that the unitarity cuts that construct the one-external subgraphs can be ``crossed''\footnote{It is interesting to note that in the context of computing iterated discontinuities, crossed cuts of this type are excluded, see~ref.~\cite{Abreu:2014cla}.}, i.e. there can be vertices $v\in \gamma_i\cap \gamma_j$ for $i\neq j$. Indeed, such crossed untarity cuts are necessary to generate soft vertices. A three-loop example is given by:\\
\begin{equation}
\gamma_1:
\begin{tikzpicture}[baseline=13ex,scale=1.]
\coordinate (x1) at (1, 1.5) ;
\coordinate (x2) at (3, 1.5) ;
\coordinate (x3) at (2,3) ;
\draw (1.5,2.75) edge [dotted, ultra thick, color=Black, bend left =30] (2.5,1.25) node [right] {};
\node (p1) at (0.5, 1) {$p_1$};
\node (p2) at (3.5, 1) {$p_2$};
\node (p3) at (2,3.8) {$q_1$};
\draw[color=green] (x1) -- (p1);
\draw[color=green] (x1) -- (x2);
\draw[color=green] (1.4,2.1) -- (2,2.1);
\draw[color=green] (2,2.1) -- (2.6,2.1);
\draw[color=green] (2,2.1) -- (2,1.5);
\draw[fill,thick,color=green] (x1) circle (1pt);
\draw[fill,thick,color=green] (1.4,2.1) circle (1pt);
\draw[fill,thick,color=green] (2,2.1) circle (1pt);
\draw[fill,thick,color=green] (2,1.5) circle (1pt);
\draw[color=green] (x3) -- (x1);
\draw[ultra thick,color=Blue] (x3) -- (p3);
\draw[color=ForestGreen] (x2) -- (p2);
\draw[ultra thick,color=Blue] (x2) -- (x3);
\draw[fill,thick,color=Blue] (x2) circle (1pt);
\draw[fill,thick,color=Blue] (x3) circle (1pt);
\draw[fill,thick,color=Blue] (2.6,2.1) circle (1pt);
\end{tikzpicture}
\gamma_2:
\begin{tikzpicture}[baseline=13ex,scale=1.]
\draw (2.5,2.75) edge [dotted, ultra thick, color=Black, bend right =30] (1.5,1.25) node [right] {};
\coordinate (x1) at (1,1.5) ;
\coordinate (x2) at (3, 1.5) ;
\coordinate (x3) at (2,3) ;
\node (p1) at (0.5, 1) {$p_1$};
\node (p2) at (3.5, 1) {$p_2$};
\node (p3) at (2,3.8) {$q_1$};
\draw[color=green] (x1) -- (p1);
\draw[color=ForestGreen] (x2) -- (p2);
\draw[fill,thick,color=ForestGreen] (x2) circle (1pt);
\draw[color=ForestGreen] (x1) -- (x2);
\draw[color=ForestGreen] (x2) -- (x3);
\draw[color=ForestGreen] (1.4,2.1) -- (2,2.1);
\draw[color=ForestGreen] (2,2.1) -- (2.6,2.1);
\draw[color=ForestGreen] (2,2.1) -- (2,1.5);
\draw[fill,thick,color=ForestGreen] (2.6,2.1) circle (1pt);
\draw[fill,thick,color=ForestGreen] (2,2.1) circle (1pt);
\draw[fill,thick,color=ForestGreen] (2,1.5) circle (1pt);
\draw[ultra thick,color=Blue] (x3) -- (p3);
\draw[ultra thick,color=Blue] (x3) -- (x1);
\draw[fill,thick,color=Blue] (x1) circle (1pt);
\draw[fill,thick,color=Blue] (x3) circle (1pt);
\draw[fill,thick,color=Blue] (1.4,2.1) circle (1pt);
\end{tikzpicture}
\Rightarrow \gamma_1\sqcup\gamma_2:
\begin{tikzpicture}[baseline=13ex,scale=1.]
\coordinate (x1) at (1,1.5) ;
\coordinate (x2) at (3, 1.5) ;
\coordinate (x3) at (2,3) ;
\node (p1) at (0.5, 1) {$p_1$};
\node (p2) at (3.5, 1) {$p_2$};
\node (p3) at (2,3.8) {$q_1$};
\draw[color=green] (x1) -- (p1);
\draw[color=ForestGreen] (x2) -- (p2);
\draw[fill,thick,color=ForestGreen] (x2) circle (1pt);
\draw[dashed, color=Red] (x1) -- (x2);
\draw[color=ForestGreen] (x2) -- (x3);
\draw[dashed, color=Red] (1.4,2.1) -- (2,2.1);
\draw[dashed, color=Red] (2,2.1) -- (2.6,2.1);
\draw[dashed, color=Red] (2,2.1) -- (2,1.5);
\draw[fill,thick,color=ForestGreen] (2.6,2.1) circle (1pt);
\draw[fill,thick,color=Red] (2,2.1) circle (1pt);
\draw[fill,thick,color=Red] (2,1.5) circle (1pt);
\draw[ultra thick,color=Blue] (x3) -- (p3);
\draw[color=green] (x3) -- (x1);
\draw[fill,thick,color=green] (x1) circle (1pt);
\draw[fill,thick,color=Blue] (x3) circle (1pt);
\draw[fill,thick,color=green] (1.4,2.1) circle (1pt);
\end{tikzpicture}
\vspace*{10pt}
\end{equation}

We now show an example of a candidate region that is incompatible with the mojetic criterion in Step 1. Consider the following graph with external momenta $p_1$, $p_2$ and~$q_1$:
\begin{equation}
\gamma_2:
\begin{tikzpicture}[baseline=7ex,scale=1.]
\coordinate (x1) at (1.1340, 1.5) ;
\coordinate (x2) at (2.8660, 1.5) ;
\coordinate (x3) at (2,3) ;
\coordinate (x4) at (1.1340,0) ;
\coordinate (x5) at (2.8660,0) ;
\node (p1) at (0.2681, -0.9998) {$p_1$};
\node (q1) at (3.7320, -1.0000) {$q_1$};
\node (p2) at (2,4) {$p_2$};

\draw[color=green] (x4) -- (p1);
\draw[color=Green] (x3) -- (p2);
\draw[color=Green] (x2) -- (x3);
\draw[color=Green] (x1) -- (x2);
\draw[color=Green] (x3) -- (x1);
\draw[color=Green] (x2) -- (x5);
\draw[fill,thick,color=Green] (x2) circle (1pt);
\draw[fill,thick,color=Green] (x3) circle (1pt);

\draw[ultra thick,color=Blue] (x5) -- (q1);

\draw [ultra thick,color=Blue] (x4)  arc (230:130:0.97);
\draw [ultra thick,color=Blue] (x4)  arc (310:410:0.97);
\draw[ultra thick,color=Blue] (x4) -- (x5);

\draw[fill,thick,color=Blue] (x1) circle (1pt);
\draw[fill,thick,color=Blue] (x4) circle (1pt);
\draw[fill,thick,color=Blue] (x5) circle (1pt);
\draw[ultra thick, dotted] (1.3, 2.6) arc (190:265:2);

\end{tikzpicture}
\qquad \left[ G\setminus \gamma_2 \right]_c=
\quad 
\begin{tikzpicture}[baseline=0ex,scale=1.]
\coordinate (x1) at (1.1340, 1.5) ;
\coordinate (x2) at (2.8660, 1.5) ;
\coordinate (x3) at (2,3) ;
\coordinate (x4) at (1.1340,0) ;
\coordinate (x5) at (2.8660,0) ;
\coordinate (p1) at (2, -1) ;
\coordinate (q1) at (2, -1) ;

\draw[color=green] (x4) -- (p1);

\draw[ultra thick,color=Blue] (x5) -- (q1);

\draw [ultra thick,color=Blue] (x4)  arc (230:130:0.97);
\draw [ultra thick,color=Blue] (x4)  arc (310:410:0.97);
\draw[ultra thick,color=Blue] (x4) -- (x5);

\draw[fill,thick,color=Blue] (x1) circle (1pt);
\draw[fill,thick,color=Blue] (x4) circle (1pt);
\draw[fill,thick,color=Blue] (x5) circle (1pt);
\draw[fill,thick,color=magenta] (p1) circle (1pt);
\end{tikzpicture}
\end{equation}
The subgraph $G\setminus \gamma_2$, which is the complement of the jet subgraph $\gamma_2$ above is not mojetic, since $\left[ G\setminus \gamma_2 \right]_c$ is not 1VI. (Note that here $J_1$ corresponds to a trivial jet, containing only the external massless momentum $p_1$.) Therefore this choice of $\gamma_2$ is ruled out in Step 1.

\bigbreak
Above we have introduced an algorithm to construct the regions that appear in the on-shell expansion of a given wide-angle scattering graph $G$, and explained it using some low-order examples. We now show that this algorithm constructs exactly those regions satisfying the two conditions in theorem~\ref{theorem-algorithm_necessary_sufficient}, i.e. $H\cup J\setminus J_i$ is mojetic for every $i$, and every connected component of $S$ is attached to at least two different jets.

Let us begin by showing that any given region obtained from the algorithm above satisfies these two conditions. First, in Step 3 we have discarded all the unqualified regions that violate the property that $H\cup J\setminus J_i$ are mojetic. Second, since all the soft edges are obtained from the intersection of different jet edges, and each jet is internally connected (from Step 3), it follows that any connected component of $S$ necessarily connects at least two different jets. So both conditions in theorem~\ref{theorem-algorithm_necessary_sufficient} are automatically satisfied.

We now show that any region $R$ satisfying proposition~\ref{proposition-region_vectors_are_hard_and_infrared} and the conditions of theorem~\ref{theorem-algorithm_necessary_sufficient} can be obtained from a suitable choice of $\{\gamma_i^{(R)}\}$ for $i=1,\dots,K$, each of which is a one-external subgraph in the $p_i$ channel. To construct these $\gamma_i^{(R)}$, we first denote by $S_i$ the union of the connected components of the soft subgraph $S$ that are attached to $J_i$. The vertices of $G$ are then automatically partitioned into the following two types: those within $J_i\cup S_i$ and those within $G\setminus (J_i\cup S_i)$. It follows that $S_i\cup J_i$ has the following properties: (1) any (internal) edge $e$ in $S_i\cup J_i$ connects to at least one vertex within $J_i\cup S_i$; (2) it has a single external momentum $p_i$. These properties imply that $S_i\cup J_i$ is a one-external subgraph in the $p_i$ channel, and we define $\gamma_i^{(R)}\equiv S_i\cup J_i$. Since $S=\cup_i S_i$ and $S\cup J=\cup_i (J_i\cup S_i)$, it follows that any region $R$ can be obtained from a set of $\gamma_i$. This completes the proof that the algorithm reproduces all possible regions satisfying the conditions in theorem~\ref{theorem-algorithm_necessary_sufficient}.

\subsection{More examples and implementation}
\label{more_examples_implementation}

As another example, we show below how to find the regions for the following $3\times 2$ fishnet graph.
\begin{equation}
G_{3\times2}=
\begin{tikzpicture}[baseline=5ex,scale=1.]
\coordinate (x1) at (0,0) ;
\coordinate (x2) at (0,1) ;
\coordinate (x3) at (0,2) ;
\coordinate (x6) at (1,2) ;
\coordinate (x5) at (1,1) ;
\coordinate (x4) at (1,0) ;
\coordinate (x9) at (2,2) ;
\coordinate (x8) at (2,1) ;
\coordinate (x7) at (2,0) ;
\coordinate (x12) at (3,2) ;
\coordinate (x11) at (3,1) ;
\coordinate (x10) at (3,0) ;

\node (p4) at (-0.5,-0.5) {$4$};
\node (p1) at (-0.5,2.5) {$1$};
\node (p3) at (3.5,-0.5) {$3$};
\node (p2) at (3.5,2.5) {$2$};

\draw[ultra thick] (x1) -- (p4);
\draw[ultra thick] (x3) -- (p1);
\draw[ultra thick] (x12) -- (p2);
\draw[ultra thick] (x10) -- (p3);

\draw[ultra thick,color=Black] (x1) -- (x2);
\draw[ultra thick,color=Black] (x2) -- (x3);
\draw[ultra thick,color=Black] (x1) -- (x4);
\draw[ultra thick,color=Black] (x2) -- (x5);
\draw[ultra thick,color=Black] (x3) -- (x6);
\draw[ultra thick,color=Black] (x4) -- (x5);
\draw[ultra thick,color=Black] (x5) -- (x6);
\draw[ultra thick,color=Black] (x4) -- (x7);
\draw[ultra thick,color=Black] (x5) -- (x8);
\draw[ultra thick,color=Black] (x6) -- (x9);
\draw[ultra thick,color=Black] (x7) -- (x8);
\draw[ultra thick,color=Black] (x8) -- (x9);
\draw[ultra thick,color=Black] (x7) -- (x10);
\draw[ultra thick,color=Black] (x8) -- (x11);
\draw[ultra thick,color=Black] (x9) -- (x12);
\draw[ultra thick,color=Black] (x10) -- (x11);
\draw[ultra thick,color=Black] (x11) -- (x12);

\draw[fill,thick,color=Black] (x1) circle (1pt);
\draw[fill,thick,color=Black] (x2) circle (1pt);
\draw[fill,thick,color=Black] (x3) circle (1pt);
\draw[fill,thick,color=Black] (x4) circle (1pt);
\draw[fill,thick,color=Black] (x5) circle (1pt);
\draw[fill,thick,color=Black] (x6) circle (1pt);
\draw[fill,thick,color=Black] (x7) circle (1pt);
\draw[fill,thick,color=Black] (x8) circle (1pt);
\draw[fill,thick,color=Black] (x9) circle (1pt);
\draw[fill,thick,color=Black] (x10) circle (1pt);
\draw[fill,thick,color=Black] (x11) circle (1pt);
\draw[fill,thick,color=Black] (x12) circle (1pt);
\end{tikzpicture}
\end{equation}

We first consider the condition that the momenta at the diagonal positions $1$ and $3$ are nearly on-shell (in different directions), and those at $2$ and $4$ are off-shell, i.e. using the conventions above, the external momenta of $G_{3\times2}$ are $p_1$, $p_3$, $q_2$ and $q_4$. Using the algorithm of section~\ref{algorithm_for_regions}, one possible choice for the one-external subgraphs $\gamma_1$ and $\gamma_3$ is highlighted with thinner green lines in the following:
\begin{equation}
\gamma_1:
\begin{tikzpicture}[baseline=5ex,scale=1.]
\coordinate (x1) at (0,0) ;
\coordinate (x2) at (0,1) ;
\coordinate (x3) at (0,2) ;
\coordinate (x6) at (1,2) ;
\coordinate (x5) at (1,1) ;
\coordinate (x4) at (1,0) ;
\coordinate (x9) at (2,2) ;
\coordinate (x8) at (2,1) ;
\coordinate (x7) at (2,0) ;
\coordinate (x12) at (3,2) ;
\coordinate (x11) at (3,1) ;
\coordinate (x10) at (3,0) ;

\node (q4) at (-0.5,-0.5) {$q_4$};
\node (p1) at (-0.5,2.5) {$p_1$};
\node (p3) at (3.5,-0.5) {$p_3$};
\node (q2) at (3.5,2.5) {$q_2$};

\draw[ultra thick, color=Blue] (x1) -- (q4);
\draw[color=LimeGreen] (x3) -- (p1);
\draw[color=LimeGreen] (x3) -- (x6);
\draw[color=LimeGreen] (x2) -- (x3);
\draw[fill,color=LimeGreen] (x3) circle (1pt);

\draw[ultra thick, color=Blue] (x12) -- (q2);
\draw[color=Green] (x10) -- (p3);

\draw[ultra thick,color=Blue] (x1) -- (x2);
\draw[ultra thick,color=Blue] (x1) -- (x4);
\draw[ultra thick,color=Blue] (x2) -- (x5);

\draw[ultra thick,color=Blue] (x4) -- (x5);
\draw[ultra thick,color=Blue] (x5) -- (x6);
\draw[ultra thick,color=Blue] (x4) -- (x7);
\draw[ultra thick,color=Blue] (x5) -- (x8);
\draw[ultra thick,color=Blue] (x6) -- (x9);
\draw[ultra thick,color=Blue] (x7) -- (x8);
\draw[ultra thick,color=Blue] (x8) -- (x9);
\draw[ultra thick,color=Blue] (x7) -- (x10);
\draw[ultra thick,color=Blue] (x8) -- (x11);
\draw[ultra thick,color=Blue] (x9) -- (x12);
\draw[ultra thick,color=Blue] (x10) -- (x11);
\draw[ultra thick,color=Blue] (x11) -- (x12);

\draw[fill,thick,color=Blue] (x1) circle (1pt);
\draw[fill,thick,color=Blue] (x2) circle (1pt);

\draw[fill,thick,color=Blue] (x4) circle (1pt);
\draw[fill,thick,color=Blue] (x5) circle (1pt);
\draw[fill,thick,color=Blue] (x6) circle (1pt);
\draw[fill,thick,color=Blue] (x7) circle (1pt);
\draw[fill,thick,color=Blue] (x8) circle (1pt);
\draw[fill,thick,color=Blue] (x9) circle (1pt);
\draw[fill,thick,color=Blue] (x10) circle (1pt);
\draw[fill,thick,color=Blue] (x11) circle (1pt);
\draw[fill,thick,color=Blue] (x12) circle (1pt);
\end{tikzpicture}
,\qquad
\gamma_3:
\begin{tikzpicture}[baseline=5ex,scale=1.]
\coordinate (x1) at (0,0) ;
\coordinate (x2) at (0,1) ;
\coordinate (x3) at (0,2) ;
\coordinate (x6) at (1,2) ;
\coordinate (x5) at (1,1) ;
\coordinate (x4) at (1,0) ;
\coordinate (x9) at (2,2) ;
\coordinate (x8) at (2,1) ;
\coordinate (x7) at (2,0) ;
\coordinate (x12) at (3,2) ;
\coordinate (x11) at (3,1) ;
\coordinate (x10) at (3,0) ;

\node (q4) at (-0.5,-0.5) {$q_4$};
\node (p1) at (-0.5,2.5) {$p_1$};
\node (p3) at (3.5,-0.5) {$p_3$};
\node (q2) at (3.5,2.5) {$q_2$};

\draw[ultra thick, color=Blue] (x1) -- (q4);
\draw[color=LimeGreen] (x3) -- (p1);
\draw[ultra thick, color=Blue] (x12) -- (q2);

\draw[color=Green] (x10) -- (p3);
\draw[fill,thick,color=Green] (x7) circle (1pt);
\draw[fill,thick,color=Green] (x11) circle (1pt);
\draw[color=Green] (x4) -- (x7);
\draw[color=Green] (x7) -- (x8);
\draw[color=Green] (x7) -- (x10);
\draw[color=Green] (x8) -- (x11);
\draw[color=Green] (x10) -- (x11);
\draw[color=Green] (x11) -- (x12);

\draw[ultra thick,color=Blue] (x1) -- (x2);
\draw[ultra thick,color=Blue] (x2) -- (x3);
\draw[ultra thick,color=Blue] (x1) -- (x4);
\draw[ultra thick,color=Blue] (x2) -- (x5);
\draw[ultra thick,color=Blue] (x3) -- (x6);
\draw[ultra thick,color=Blue] (x4) -- (x5);
\draw[ultra thick,color=Blue] (x5) -- (x6);
\draw[ultra thick,color=Blue] (x5) -- (x8);
\draw[ultra thick,color=Blue] (x6) -- (x9);
\draw[ultra thick,color=Blue] (x8) -- (x9);
\draw[ultra thick,color=Blue] (x9) -- (x12);

\draw[fill,thick,color=Blue] (x1) circle (1pt);
\draw[fill,thick,color=Blue] (x2) circle (1pt);
\draw[fill,thick,color=Blue] (x3) circle (1pt);
\draw[fill,thick,color=Blue] (x4) circle (1pt);
\draw[fill,thick,color=Blue] (x5) circle (1pt);
\draw[fill,thick,color=Blue] (x6) circle (1pt);
\draw[fill,thick,color=Blue] (x8) circle (1pt);
\draw[fill,thick,color=Blue] (x9) circle (1pt);
\draw[fill,thick,color=Green] (x10) circle (1pt);
\draw[fill,thick,color=Blue] (x12) circle (1pt);
\end{tikzpicture}
\end{equation}

The graphs $\gamma_1$ and $\gamma_3$ do not overlap so there are no soft propagators when we combine (overlay) them:
\begin{equation}
\gamma_1 \sqcup \gamma_3 :
\begin{tikzpicture}[baseline=5ex,scale=1.]
\coordinate (x1) at (0,0) ;
\coordinate (x2) at (0,1) ;
\coordinate (x3) at (0,2) ;
\coordinate (x6) at (1,2) ;
\coordinate (x5) at (1,1) ;
\coordinate (x4) at (1,0) ;
\coordinate (x9) at (2,2) ;
\coordinate (x8) at (2,1) ;
\coordinate (x7) at (2,0) ;
\coordinate (x12) at (3,2) ;
\coordinate (x11) at (3,1) ;
\coordinate (x10) at (3,0) ;

\node (q4) at (-0.5,-0.5) {$q_4$};
\node (p1) at (-0.5,2.5) {$p_1$};
\node (p3) at (3.5,-0.5) {$p_3$};
\node (q2) at (3.5,2.5) {$q_2$};

\draw[ultra thick, color=Blue] (x1) -- (q4);

\draw[color=LimeGreen] (x3) -- (p1);
\draw[color=LimeGreen] (x3) -- (x6);
\draw[color=LimeGreen] (x2) -- (x3);
\draw[fill,color=LimeGreen] (x3) circle (1pt);

\draw[ultra thick, color=Blue] (x12) -- (q2);

\draw[color=Green] (x10) -- (p3);
\draw[fill,thick,color=Green] (x7) circle (1pt);
\draw[fill,thick,color=Green] (x11) circle (1pt);
\draw[fill,thick,color=Green] (x10) circle (1pt);
\draw[color=Green] (x4) -- (x7);
\draw[color=Green] (x7) -- (x8);
\draw[color=Green] (x7) -- (x10);
\draw[color=Green] (x8) -- (x11);
\draw[color=Green] (x10) -- (x11);
\draw[color=Green] (x11) -- (x12);

\draw[ultra thick,color=Blue] (x1) -- (x2);
\draw[ultra thick,color=Blue] (x1) -- (x4);
\draw[ultra thick,color=Blue] (x2) -- (x5);
\draw[ultra thick,color=Blue] (x4) -- (x5);
\draw[ultra thick,color=Blue] (x5) -- (x6);
\draw[ultra thick,color=Blue] (x5) -- (x8);
\draw[ultra thick,color=Blue] (x6) -- (x9);
\draw[ultra thick,color=Blue] (x8) -- (x9);
\draw[ultra thick,color=Blue] (x9) -- (x12);

\draw[fill,thick,color=Blue] (x1) circle (1pt);
\draw[fill,thick,color=Blue] (x2) circle (1pt);
\draw[fill,thick,color=Blue] (x4) circle (1pt);
\draw[fill,thick,color=Blue] (x5) circle (1pt);
\draw[fill,thick,color=Blue] (x6) circle (1pt);
\draw[fill,thick,color=Blue] (x8) circle (1pt);
\draw[fill,thick,color=Blue] (x9) circle (1pt);
\draw[fill,thick,color=Blue] (x12) circle (1pt);
\end{tikzpicture}
\end{equation}
One can readily check that the hard subgraph (thick-blue edges and vertices) fulfills all the required criteria in Step 3 of the algorithm and thus contributes a valid region. 

The following example illustrates an invalid choice of $\gamma_3$.
\begin{equation}
\gamma'_{3}:
\begin{tikzpicture}[baseline=5ex,scale=1.]
\coordinate (x1) at (0,0) ;
\coordinate (x2) at (0,1) ;
\coordinate (x3) at (0,2) ;
\coordinate (x6) at (1,2) ;
\coordinate (x5) at (1,1) ;
\coordinate (x4) at (1,0) ;
\coordinate (x9) at (2,2) ;
\coordinate (x8) at (2,1) ;
\coordinate (x7) at (2,0) ;
\coordinate (x12) at (3,2) ;
\coordinate (x11) at (3,1) ;
\coordinate (x10) at (3,0) ;

\node (q4) at (-0.5,-0.5) {$q_4$};
\node (p1) at (-0.5,2.5) {$p_1$};
\node (p3) at (3.5,-0.5) {$p_3$};
\node (q2) at (3.5,2.5) {$q_2$};

\draw[ultra thick, color=Blue] (x1) -- (q4);
\draw[color=LimeGreen] (x3) -- (p1);
\draw[ultra thick, color=Blue] (x12) -- (q2);

\draw[color=Green] (x10) -- (p3);
\draw[fill,thick,color=Green] (x11) circle (1pt);
\draw[fill,thick,color=Green] (x10) circle (1pt);
\draw[color=Green] (x7) -- (x10);
\draw[color=Green] (x8) -- (x11);
\draw[color=Green] (x10) -- (x11);
\draw[color=Green] (x11) -- (x12);
\draw[color=Green] (x7) -- (x8);

\draw[ultra thick,color=Blue] (x4) -- (x7);
\draw[ultra thick,color=Blue] (x1) -- (x2);
\draw[ultra thick,color=Blue] (x2) -- (x3);
\draw[ultra thick,color=Blue] (x1) -- (x4);
\draw[ultra thick,color=Blue] (x2) -- (x5);
\draw[ultra thick,color=Blue] (x3) -- (x6);
\draw[ultra thick,color=Blue] (x4) -- (x5);
\draw[ultra thick,color=Blue] (x5) -- (x6);
\draw[ultra thick,color=Blue] (x5) -- (x8);
\draw[ultra thick,color=Blue] (x6) -- (x9);
\draw[ultra thick,color=Blue] (x8) -- (x9);
\draw[ultra thick,color=Blue] (x9) -- (x12);

\draw[fill,thick,color=Blue] (x1) circle (1pt);
\draw[fill,thick,color=Blue] (x2) circle (1pt);
\draw[fill,thick,color=Blue] (x3) circle (1pt);
\draw[fill,thick,color=Blue] (x4) circle (1pt);
\draw[fill,thick,color=Blue] (x5) circle (1pt);
\draw[fill,thick,color=Blue] (x6) circle (1pt);
\draw[fill,thick,color=Blue] (x7) circle (1pt);
\draw[fill,thick,color=Blue] (x8) circle (1pt);
\draw[fill,thick,color=Blue] (x9) circle (1pt);
\draw[fill,thick,color=Blue] (x12) circle (1pt);
\end{tikzpicture}    
\end{equation}
The reason is that $\gamma'_3$ does not correspond to a one-external subgraph in the $p_3$ channel, since there is no unitarity cut that would be consistent with $\gamma'_3$.

Furthermore, any choices with overlapping edges between $\gamma_1$ and $\gamma_3$ ($\gamma_1\cap \gamma_3 \neq \varnothing$) cannot be candidates for regions in this case, because that would imply that the subgraph $J\cup S$ separates the two off-shell momenta $q_2$ and $q_4$, creating two disconnected hard components, which do not qualify as a hard subgraph $H$.

We then consider another kinematic configuration, where $p_1$ and $p_2$ are nearly on-shell while $q_3$ and $q_4$ are off-shell. One choice for $\gamma_1$ and $\gamma_2$ is:
\begin{equation}
\gamma_1:
\begin{tikzpicture}[baseline=5ex,scale=1.]
\coordinate (x1) at (0,0) ;
\coordinate (x2) at (0,1) ;
\coordinate (x3) at (0,2) ;
\coordinate (x6) at (1,2) ;
\coordinate (x5) at (1,1) ;
\coordinate (x4) at (1,0) ;
\coordinate (x9) at (2,2) ;
\coordinate (x8) at (2,1) ;
\coordinate (x7) at (2,0) ;
\coordinate (x12) at (3,2) ;
\coordinate (x11) at (3,1) ;
\coordinate (x10) at (3,0) ;

\node (q4) at (-0.5,-0.5) {$q_4$};
\node (p1) at (-0.5,2.5) {$p_1$};
\node (q3) at (3.5,-0.5) {$q_3$};
\node (p2) at (3.5,2.5) {$p_2$};


\draw[ultra thick, color=Blue] (x1) -- (q4);

\draw[color=LimeGreen] (x3) -- (p1);
\draw[color=LimeGreen] (x3) -- (x6);
\draw[color=LimeGreen] (x2) -- (x3);
\draw[color=LimeGreen] (x1) -- (x2);
\draw[color=LimeGreen] (x2) -- (x5);
\draw[color=LimeGreen] (x5) -- (x6);
\draw[color=LimeGreen] (x6) -- (x9);
\draw[color=LimeGreen] (x4) -- (x5);
\draw[color=LimeGreen] (x5) -- (x8);
\draw[fill,thick,color=LimeGreen] (x6) circle (1pt);
\draw[fill,thick,color=LimeGreen] (x2) circle (1pt);
\draw[fill,thick,color=LimeGreen] (x3) circle (1pt);
\draw[fill,thick,color=LimeGreen] (x5) circle (1pt);

\draw[color=ForestGreen] (x12) -- (p2);

\draw[ultra thick, color=Blue] (x10) -- (q3);


\draw[ultra thick,color=Blue] (x1) -- (x4);
\draw[ultra thick,color=Blue] (x4) -- (x7);
\draw[ultra thick,color=Blue] (x7) -- (x8);
\draw[ultra thick,color=Blue] (x8) -- (x9);
\draw[ultra thick,color=Blue] (x7) -- (x10);
\draw[ultra thick,color=Blue] (x8) -- (x11);
\draw[ultra thick,color=Blue] (x9) -- (x12);
\draw[ultra thick,color=Blue] (x10) -- (x11);
\draw[ultra thick,color=Blue] (x11) -- (x12);

\draw[fill,thick,color=Blue] (x1) circle (1pt);
\draw[fill,thick,color=Blue] (x4) circle (1pt);
\draw[fill,thick,color=Blue] (x7) circle (1pt);
\draw[fill,thick,color=Blue] (x8) circle (1pt);
\draw[fill,thick,color=Blue] (x9) circle (1pt);
\draw[fill,thick,color=Blue] (x10) circle (1pt);
\draw[fill,thick,color=Blue] (x11) circle (1pt);
\draw[fill,thick,color=Blue] (x12) circle (1pt);
\end{tikzpicture}
\quad
\gamma_2:
\begin{tikzpicture}[baseline=5ex,scale=1.]
\coordinate (x1) at (0,0) ;
\coordinate (x2) at (0,1) ;
\coordinate (x3) at (0,2) ;
\coordinate (x6) at (1,2) ;
\coordinate (x5) at (1,1) ;
\coordinate (x4) at (1,0) ;
\coordinate (x9) at (2,2) ;
\coordinate (x8) at (2,1) ;
\coordinate (x7) at (2,0) ;
\coordinate (x12) at (3,2) ;
\coordinate (x11) at (3,1) ;
\coordinate (x10) at (3,0) ;

\node (q4) at (-0.5,-0.5) {$q_4$};
\node (p1) at (-0.5,2.5) {$p_1$};
\node (q3) at (3.5,-0.5) {$q_3$};
\node (p2) at (3.5,2.5) {$p_2$};

\draw[ultra thick, color=Blue] (x1) -- (q4);
\draw[color=LimeGreen] (x3) -- (p1);
\draw[color=ForestGreen] (x12) -- (p2);
\draw[fill,thick,color=ForestGreen] (x8) circle (1pt);
\draw[fill,thick,color=ForestGreen] (x9) circle (1pt);
\draw[fill,thick,color=ForestGreen] (x11) circle (1pt);
\draw[fill,thick,color=ForestGreen] (x12) circle (1pt);
\draw[color=ForestGreen] (x5) -- (x8);
\draw[color=ForestGreen] (x6) -- (x9);
\draw[color=ForestGreen] (x7) -- (x8);
\draw[color=ForestGreen] (x8) -- (x9);
\draw[color=ForestGreen] (x8) -- (x11);
\draw[color=ForestGreen] (x9) -- (x12);
\draw[color=ForestGreen] (x10) -- (x11);
\draw[color=ForestGreen] (x11) -- (x12);

\draw[ultra thick, color=Blue] (x10) -- (q3);

\draw[ultra thick,color=Blue] (x1) -- (x2);
\draw[ultra thick,color=Blue] (x2) -- (x3);
\draw[ultra thick,color=Blue] (x1) -- (x4);
\draw[ultra thick,color=Blue] (x2) -- (x5);
\draw[ultra thick,color=Blue] (x3) -- (x6);
\draw[ultra thick,color=Blue] (x4) -- (x5);
\draw[ultra thick,color=Blue] (x5) -- (x6);
\draw[ultra thick,color=Blue] (x4) -- (x7);
\draw[ultra thick,color=Blue] (x7) -- (x10);

\draw[fill,thick,color=Blue] (x1) circle (1pt);
\draw[fill,thick,color=Blue] (x2) circle (1pt);
\draw[fill,thick,color=Blue] (x3) circle (1pt);
\draw[fill,thick,color=Blue] (x4) circle (1pt);
\draw[fill,thick,color=Blue] (x5) circle (1pt);
\draw[fill,thick,color=Blue] (x6) circle (1pt);
\draw[fill,thick,color=Blue] (x7) circle (1pt);
\draw[fill,thick,color=Blue] (x10) circle (1pt);
\end{tikzpicture}
\end{equation}
In this case we do have a nontrivial overlap between $\gamma_1$ and $\gamma_2$, namely, $\gamma_1\cap\gamma_2\ne\varnothing$. Hence overlaying the two we obtain the following graph, which also includes soft lines (the red dashed lines):
\begin{equation}
\gamma_1\sqcup\gamma_2:    
\begin{tikzpicture}[baseline=5ex,scale=1.]
\coordinate (x1) at (0,0) ;
\coordinate (x2) at (0,1) ;
\coordinate (x3) at (0,2) ;
\coordinate (x6) at (1,2) ;
\coordinate (x5) at (1,1) ;
\coordinate (x4) at (1,0) ;
\coordinate (x9) at (2,2) ;
\coordinate (x8) at (2,1) ;
\coordinate (x7) at (2,0) ;
\coordinate (x12) at (3,2) ;
\coordinate (x11) at (3,1) ;
\coordinate (x10) at (3,0) ;

\node (q4) at (-0.5,-0.5) {$q_4$};
\node (p1) at (-0.5,2.5) {$p_1$};
\node (q3) at (3.5,-0.5) {$q_3$};
\node (p2) at (3.5,2.5) {$p_2$};

\draw[color=Red,dashed] (x6) -- (x9);
\draw[color=Red,dashed] (x5) -- (x8);

\draw[ultra thick, color=Blue] (x1) -- (p4);
\draw[color=LimeGreen] (x3) -- (p1);
\draw[color=LimeGreen] (x3) -- (x6);
\draw[color=LimeGreen] (x2) -- (x3);
\draw[color=LimeGreen] (x1) -- (x2);
\draw[color=LimeGreen] (x2) -- (x5);
\draw[color=LimeGreen] (x5) -- (x6);
\draw[color=LimeGreen] (x4) -- (x5);
\draw[fill,thick,color=LimeGreen] (x6) circle (1pt);
\draw[fill,thick,color=LimeGreen] (x2) circle (1pt);
\draw[fill,thick,color=LimeGreen] (x3) circle (1pt);
\draw[fill,thick,color=LimeGreen] (x5) circle (1pt);

\draw[color=ForestGreen] (x12) -- (p2);
\draw[fill,thick,color=ForestGreen] (x8) circle (1pt);
\draw[fill,thick,color=ForestGreen] (x9) circle (1pt);
\draw[fill,thick,color=ForestGreen] (x11) circle (1pt);
\draw[fill,thick,color=ForestGreen] (x12) circle (1pt);
\draw[color=ForestGreen] (x7) -- (x8);
\draw[color=ForestGreen] (x8) -- (x9);
\draw[color=ForestGreen] (x8) -- (x11);
\draw[color=ForestGreen] (x9) -- (x12);
\draw[color=ForestGreen] (x10) -- (x11);
\draw[color=ForestGreen] (x11) -- (x12);

\draw[ultra thick, color=Blue] (x10) -- (p3);

\draw[ultra thick,color=Blue] (x1) -- (x4);
\draw[ultra thick,color=Blue] (x4) -- (x7);
\draw[ultra thick,color=Blue] (x7) -- (x10);

\draw[fill,thick,color=Blue] (x1) circle (1pt);
\draw[fill,thick,color=Blue] (x4) circle (1pt);
\draw[fill,thick,color=Blue] (x7) circle (1pt);
\draw[fill,thick,color=Blue] (x10) circle (1pt);
\end{tikzpicture}
\end{equation}
One may verify $H \cup J_1$ and $H\cup J_2$ are both mojetic.

Finally, we consider the kinematic configuration where all the external momenta are on-shell.
In particular let us choose of the following four one-external subgraphs $\gamma_1$, $\gamma_2$, $\gamma_3$ and $\gamma_4$.
\begin{equation}
\begin{aligned}
& \gamma_1:
\begin{tikzpicture}[baseline=5ex,scale=1.]
\coordinate (x1) at (0,0) ;
\coordinate (x2) at (0,1) ;
\coordinate (x3) at (0,2) ;
\coordinate (x6) at (1,2) ;
\coordinate (x5) at (1,1) ;
\coordinate (x4) at (1,0) ;
\coordinate (x9) at (2,2) ;
\coordinate (x8) at (2,1) ;
\coordinate (x7) at (2,0) ;
\coordinate (x12) at (3,2) ;
\coordinate (x11) at (3,1) ;
\coordinate (x10) at (3,0) ;

\node (q4) at (-0.5,-0.5) {$p_4$};
\node (p1) at (-0.5,2.5) {$p_1$};
\node (q3) at (3.5,-0.5) {$p_3$};
\node (p2) at (3.5,2.5) {$p_2$};


\draw[color=green] (x1) -- (q4);

\draw[color=LimeGreen] (x3) -- (p1);
\draw[color=LimeGreen] (x3) -- (x6);
\draw[color=LimeGreen] (x2) -- (x3);
\draw[color=LimeGreen] (x1) -- (x2);
\draw[color=LimeGreen] (x2) -- (x5);
\draw[color=LimeGreen] (x5) -- (x6);
\draw[color=LimeGreen] (x6) -- (x9);
\draw[color=LimeGreen] (x4) -- (x5);
\draw[color=LimeGreen] (x5) -- (x8);
\draw[fill,thick,color=LimeGreen] (x6) circle (1pt);
\draw[fill,thick,color=LimeGreen] (x2) circle (1pt);
\draw[fill,thick,color=LimeGreen] (x3) circle (1pt);
\draw[fill,thick,color=LimeGreen] (x5) circle (1pt);

\draw[color=ForestGreen] (x12) -- (p2);

\draw[color=Green] (x10) -- (q3);


\draw[ultra thick,color=Blue] (x1) -- (x4);
\draw[ultra thick,color=Blue] (x4) -- (x7);
\draw[ultra thick,color=Blue] (x7) -- (x8);
\draw[ultra thick,color=Blue] (x8) -- (x9);
\draw[ultra thick,color=Blue] (x7) -- (x10);
\draw[ultra thick,color=Blue] (x8) -- (x11);
\draw[ultra thick,color=Blue] (x9) -- (x12);
\draw[ultra thick,color=Blue] (x10) -- (x11);
\draw[ultra thick,color=Blue] (x11) -- (x12);

\draw[fill,thick,color=Blue] (x1) circle (1pt);
\draw[fill,thick,color=Blue] (x4) circle (1pt);
\draw[fill,thick,color=Blue] (x7) circle (1pt);
\draw[fill,thick,color=Blue] (x8) circle (1pt);
\draw[fill,thick,color=Blue] (x9) circle (1pt);
\draw[fill,thick,color=Blue] (x10) circle (1pt);
\draw[fill,thick,color=Blue] (x11) circle (1pt);
\draw[fill,thick,color=Blue] (x12) circle (1pt);
\end{tikzpicture}
\quad
\gamma_2:
\begin{tikzpicture}[baseline=5ex,scale=1.]
\coordinate (x1) at (0,0) ;
\coordinate (x2) at (0,1) ;
\coordinate (x3) at (0,2) ;
\coordinate (x6) at (1,2) ;
\coordinate (x5) at (1,1) ;
\coordinate (x4) at (1,0) ;
\coordinate (x9) at (2,2) ;
\coordinate (x8) at (2,1) ;
\coordinate (x7) at (2,0) ;
\coordinate (x12) at (3,2) ;
\coordinate (x11) at (3,1) ;
\coordinate (x10) at (3,0) ;

\node (q4) at (-0.5,-0.5) {$p_4$};
\node (p1) at (-0.5,2.5) {$p_1$};
\node (q3) at (3.5,-0.5) {$p_3$};
\node (p2) at (3.5,2.5) {$p_2$};

\draw[color=green] (x1) -- (q4);
\draw[color=LimeGreen] (x3) -- (p1);
\draw[color=ForestGreen] (x12) -- (p2);
\draw[fill,thick,color=ForestGreen] (x8) circle (1pt);
\draw[fill,thick,color=ForestGreen] (x9) circle (1pt);
\draw[fill,thick,color=ForestGreen] (x11) circle (1pt);
\draw[fill,thick,color=ForestGreen] (x12) circle (1pt);
\draw[color=ForestGreen] (x5) -- (x8);
\draw[color=ForestGreen] (x6) -- (x9);
\draw[color=ForestGreen] (x7) -- (x8);
\draw[color=ForestGreen] (x8) -- (x9);
\draw[color=ForestGreen] (x8) -- (x11);
\draw[color=ForestGreen] (x9) -- (x12);
\draw[color=ForestGreen] (x10) -- (x11);
\draw[color=ForestGreen] (x11) -- (x12);

\draw[color=Green] (x10) -- (q3);

\draw[ultra thick,color=Blue] (x1) -- (x2);
\draw[ultra thick,color=Blue] (x2) -- (x3);
\draw[ultra thick,color=Blue] (x1) -- (x4);
\draw[ultra thick,color=Blue] (x2) -- (x5);
\draw[ultra thick,color=Blue] (x3) -- (x6);
\draw[ultra thick,color=Blue] (x4) -- (x5);
\draw[ultra thick,color=Blue] (x5) -- (x6);
\draw[ultra thick,color=Blue] (x4) -- (x7);
\draw[ultra thick,color=Blue] (x7) -- (x10);

\draw[fill,thick,color=Blue] (x1) circle (1pt);
\draw[fill,thick,color=Blue] (x2) circle (1pt);
\draw[fill,thick,color=Blue] (x3) circle (1pt);
\draw[fill,thick,color=Blue] (x4) circle (1pt);
\draw[fill,thick,color=Blue] (x5) circle (1pt);
\draw[fill,thick,color=Blue] (x6) circle (1pt);
\draw[fill,thick,color=Blue] (x7) circle (1pt);
\draw[fill,thick,color=Blue] (x10) circle (1pt);
\end{tikzpicture}
\\
& \gamma_3:
\begin{tikzpicture}[baseline=5ex,scale=1.]
\coordinate (x1) at (0,0) ;
\coordinate (x2) at (0,1) ;
\coordinate (x3) at (0,2) ;
\coordinate (x6) at (1,2) ;
\coordinate (x5) at (1,1) ;
\coordinate (x4) at (1,0) ;
\coordinate (x9) at (2,2) ;
\coordinate (x8) at (2,1) ;
\coordinate (x7) at (2,0) ;
\coordinate (x12) at (3,2) ;
\coordinate (x11) at (3,1) ;
\coordinate (x10) at (3,0) ;

\node (p4) at (-0.5,-0.5) {$p_4$};
\node (p1) at (-0.5,2.5) {$p_1$};
\node (p3) at (3.5,-0.5) {$p_3$};
\node (p2) at (3.5,2.5) {$p_2$};

\draw[color=green] (x1) -- (p4);
\draw[color=LimeGreen] (x3) -- (p1);
\draw[color=ForestGreen] (x12) -- (p2);
\draw[color=Green] (x10) -- (p3);
\draw[fill,thick,color=Green] (x10) circle (1pt);
\draw[color=Green] (x7) -- (x10);
\draw[color=Green] (x10) -- (x11);

\draw[ultra thick,color=Blue] (x4) -- (x7);
\draw[ultra thick,color=Blue] (x1) -- (x2);
\draw[ultra thick,color=Blue] (x2) -- (x3);
\draw[ultra thick,color=Blue] (x1) -- (x4);
\draw[ultra thick,color=Blue] (x2) -- (x5);
\draw[ultra thick,color=Blue] (x3) -- (x6);
\draw[ultra thick,color=Blue] (x4) -- (x5);
\draw[ultra thick,color=Blue] (x5) -- (x6);
\draw[ultra thick,color=Blue] (x5) -- (x8);
\draw[ultra thick,color=Blue] (x6) -- (x9);
\draw[ultra thick,color=Blue] (x8) -- (x9);
\draw[ultra thick,color=Blue] (x9) -- (x12);
\draw[ultra thick,color=Blue] (x8) -- (x11);
\draw[ultra thick,color=Blue] (x11) -- (x12);
\draw[ultra thick,color=Blue] (x7) -- (x8);

\draw[fill,thick,color=Blue] (x11) circle (1pt);
\draw[fill,thick,color=Blue] (x1) circle (1pt);
\draw[fill,thick,color=Blue] (x2) circle (1pt);
\draw[fill,thick,color=Blue] (x3) circle (1pt);
\draw[fill,thick,color=Blue] (x4) circle (1pt);
\draw[fill,thick,color=Blue] (x5) circle (1pt);
\draw[fill,thick,color=Blue] (x6) circle (1pt);
\draw[fill,thick,color=Blue] (x7) circle (1pt);
\draw[fill,thick,color=Blue] (x8) circle (1pt);
\draw[fill,thick,color=Blue] (x9) circle (1pt);
\draw[fill,thick,color=Blue] (x12) circle (1pt);
\end{tikzpicture}    
\quad
\gamma_4:
\begin{tikzpicture}[baseline=5ex,scale=1.]
\coordinate (x1) at (0,0) ;
\coordinate (x2) at (0,1) ;
\coordinate (x3) at (0,2) ;
\coordinate (x6) at (1,2) ;
\coordinate (x5) at (1,1) ;
\coordinate (x4) at (1,0) ;
\coordinate (x9) at (2,2) ;
\coordinate (x8) at (2,1) ;
\coordinate (x7) at (2,0) ;
\coordinate (x12) at (3,2) ;
\coordinate (x11) at (3,1) ;
\coordinate (x10) at (3,0) ;

\node (p4) at (-0.5,-0.5) {$p_4$};
\node (p1) at (-0.5,2.5) {$p_1$};
\node (p3) at (3.5,-0.5) {$p_3$};
\node (p2) at (3.5,2.5) {$p_2$};

\draw[color=green] (x1) -- (p4);
\draw[fill,thick,color=green] (x1) circle (1pt);
\draw[color=green] (x1) -- (x2);
\draw[color=green] (x1) -- (x4);
\draw[color=LimeGreen] (x3) -- (p1);
\draw[color=ForestGreen] (x12) -- (p2);
\draw[color=Green] (x10) -- (p3);

\draw[ultra thick,color=Blue] (x2) -- (x3);
\draw[ultra thick,color=Blue] (x2) -- (x5);
\draw[ultra thick,color=Blue] (x3) -- (x6);
\draw[ultra thick,color=Blue] (x4) -- (x5);
\draw[ultra thick,color=Blue] (x5) -- (x6);
\draw[ultra thick,color=Blue] (x4) -- (x7);
\draw[ultra thick,color=Blue] (x5) -- (x8);
\draw[ultra thick,color=Blue] (x6) -- (x9);
\draw[ultra thick,color=Blue] (x7) -- (x8);
\draw[ultra thick,color=Blue] (x8) -- (x9);
\draw[ultra thick,color=Blue] (x7) -- (x10);
\draw[ultra thick,color=Blue] (x8) -- (x11);
\draw[ultra thick,color=Blue] (x9) -- (x12);
\draw[ultra thick,color=Blue] (x10) -- (x11);
\draw[ultra thick,color=Blue] (x11) -- (x12);

\draw[fill,thick,color=Blue] (x2) circle (1pt);
\draw[fill,thick,color=Blue] (x3) circle (1pt);
\draw[fill,thick,color=Blue] (x4) circle (1pt);
\draw[fill,thick,color=Blue] (x5) circle (1pt);
\draw[fill,thick,color=Blue] (x6) circle (1pt);
\draw[fill,thick,color=Blue] (x7) circle (1pt);
\draw[fill,thick,color=Blue] (x8) circle (1pt);
\draw[fill,thick,color=Blue] (x9) circle (1pt);
\draw[fill,thick,color=Blue] (x10) circle (1pt);
\draw[fill,thick,color=Blue] (x11) circle (1pt);
\draw[fill,thick,color=Blue] (x12) circle (1pt);
\end{tikzpicture}
\end{aligned}
\end{equation}
By overlaying all four we obtain
\begin{equation}
\gamma_1\sqcup \gamma_2\sqcup\gamma_3\sqcup\gamma_4:
\begin{tikzpicture}[baseline=5ex,scale=1.]
\coordinate (x1) at (0,0) ;
\coordinate (x2) at (0,1) ;
\coordinate (x3) at (0,2) ;
\coordinate (x6) at (1,2) ;
\coordinate (x5) at (1,1) ;
\coordinate (x4) at (1,0) ;
\coordinate (x9) at (2,2) ;
\coordinate (x8) at (2,1) ;
\coordinate (x7) at (2,0) ;
\coordinate (x12) at (3,2) ;
\coordinate (x11) at (3,1) ;
\coordinate (x10) at (3,0) ;

\node (p4) at (-0.5,-0.5) {$p_4$};
\node (p1) at (-0.5,2.5) {$p_1$};
\node (p3) at (3.5,-0.5) {$p_3$};
\node (p2) at (3.5,2.5) {$p_2$};

\draw[color=Red,dashed] (x6) -- (x9);
\draw[color=Red,dashed] (x5) -- (x8);
\draw[color=Red,dashed] (x10) -- (x11);
\draw[color=Red,dashed] (x1) -- (x2);

\draw[color=green] (x1) -- (p4);
\draw[fill,thick,color=green] (x1) circle (1pt);
\draw[green,green] (x1) -- (x4);
\draw[color=LimeGreen] (x3) -- (p1);
\draw[color=LimeGreen] (x3) -- (x6);
\draw[color=LimeGreen] (x2) -- (x3);

\draw[color=LimeGreen] (x2) -- (x5);
\draw[color=LimeGreen] (x5) -- (x6);
\draw[color=LimeGreen] (x4) -- (x5);
\draw[fill,thick,color=LimeGreen] (x6) circle (1pt);
\draw[fill,thick,color=LimeGreen] (x2) circle (1pt);
\draw[fill,thick,color=LimeGreen] (x3) circle (1pt);
\draw[fill,thick,color=LimeGreen] (x5) circle (1pt);

\draw[color=ForestGreen] (x12) -- (p2);
\draw[fill,thick,color=ForestGreen] (x8) circle (1pt);
\draw[fill,thick,color=ForestGreen] (x9) circle (1pt);
\draw[fill,thick,color=ForestGreen] (x11) circle (1pt);
\draw[fill,thick,color=ForestGreen] (x12) circle (1pt);
\draw[color=ForestGreen] (x7) -- (x8);
\draw[color=ForestGreen] (x8) -- (x9);
\draw[color=ForestGreen] (x8) -- (x11);
\draw[color=ForestGreen] (x9) -- (x12);

\draw[color=ForestGreen] (x11) -- (x12);

\draw[color=Green] (x10) -- (p3);
\draw[fill,thick,color=Green] (x10) circle (1pt);
\draw[color=Green] (x7) -- (x10);

\draw[ultra thick,color=Blue] (x4) -- (x7);

\draw[fill,thick,color=Blue] (x4) circle (1pt);
\draw[fill,thick,color=Blue] (x7) circle (1pt);
\end{tikzpicture}
\end{equation}
where the soft subgraph consists of four disconnected components. Note that each of the $H\cup J\setminus J_i$ is mojetic.

We have implemented the new graph-finding algorithm of section~\ref{algorithm_for_regions} in Maple \cite{Maple} and checked that the result of the program agrees with the output of pySecDec~\cite{HrchJnsSlk22} in a scalar theory containing both three- and four-point interactions. In particular, we verified that the two algorithms agree for all possible three-point graphs up to five loops and four-point graphs up to four loops, where, in both cases, we expanded around different on-shell limits, including the case of any $n-1$ - as well as all $n$ - external momenta becoming on-shell. For the case of $n-1$ external momenta we also checked all five-point graphs with up to three loops. For a particular three-loop graph with one off-shell momentum and three on-shell ones we display the complete list of regions in figure~\ref{fig:3Lexample}, as obtained both with the Maple implementation of our new algorithm and with pySecDec. 
\begin{figure}
\includegraphics[scale=0.85]{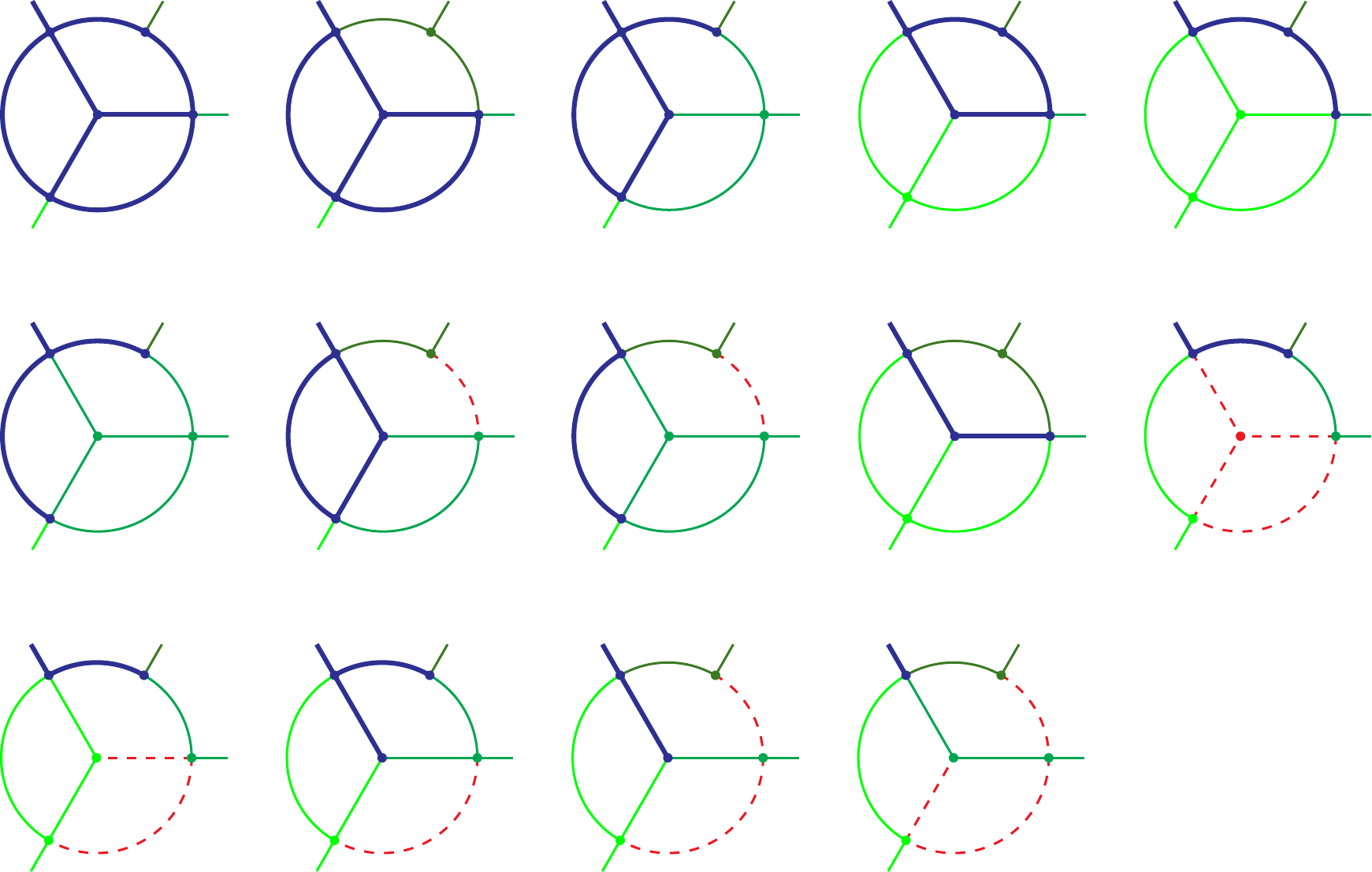}
\caption{All the regions required for the expansion of this four-point integral around the limit where the three green external legs become on-shell. As before, blue denotes the hard subgraph, three shades of green represent the three jets, while red represents the soft subgraph.}
\label{fig:3Lexample}
\end{figure}

In terms of performance we find that the graph-finding algorithm implemented in Maple is considerably faster than pySecDec, especially for more complicated problems at five loops and beyond. This is despite the Maple program not being highly optimised. Of course, the pySecDec program is far more general in the sense that it can deal with a large set of region expansions, while the Maple implementation has been developed specifically for the on-shell expansion in wide-angle scattering. 
The graph-finding algorithm does not refer to any particular representation of the integral and can thus be used in momentum and/or parameter space.  
Another advantage of having a graph-finding algorithm is that it can simplify the development of the region expansion for Feynman integrals in a general quantum field theory, without first reducing them to scalar integrals.

\section{Analytic structure and commutativity of multiple on-shell expansions}
\label{section-analytic_structure_commutativity_multiple_expansions}

In this section we discuss the commutativity of multiple on-shell expansions, and its connection with the analytic properties of wide-angle scattering graphs.
This can be viewed as another application of the results in section~\ref{section-regions_onshell_momentum_expansion}. To motivate this analysis we start with an analytic expression for the three-mass triangle,
\begin{equation}
G_{3}=
\begin{tikzpicture}[baseline=13ex,scale=1.]
\coordinate (x1) at (1.1340, 1.4999) ;
\coordinate (x2) at (2.8660, 1.5000) ;
\coordinate (x3) at (2,3) ;
\node (p1) at (0.2681, 0.9998) {$p_1$};
\node (p2) at (3.7320, 1.0000) {$p_2$};
\node (p3) at (2,4) {$q_1$};
\draw[color=green] (x1) -- (p1);
\draw[ultra thick,color=Blue] (x3) -- (p3);
\draw[color=ForestGreen] (x2) -- (p2);
\draw[ultra thick,color=Black] (x1) -- (x2);
\draw[ultra thick,color=Black] (x2) -- (x3);
\draw[ultra thick,color=Black] (x3) -- (x1);
\draw[fill,thick,color=Blue] (x1) circle (1pt);
\draw[fill,thick,color=Blue] (x2) circle (1pt);
\draw[fill,thick,color=Blue] (x3) circle (1pt);
\end{tikzpicture},
\end{equation}
which is originally derived in ref.~\cite{AnstsGlvOlr00, BoosDvdchv90}. The explicit expression for the corresponding integral, whose integrand we defined in eq.~(\ref{integrand_oneloop_Sudakov_example}), can be written as:
\begin{align}
 \I(G_3)&= \I^{(H)}(G_3)+  \I^{(C_1)}(G_3) +   \I^{(C_2)}(G_3)+ \I^{(S)}(G_3)
 \label{eq:tri3all}
\end{align}
with
\begin{subequations}
\begin{align}
\I^{(H)}(G_3)&= c_3~F_4 \left( 1,1+\eps,1+\eps,1+\eps;\frac{p_1^2}{q_1^2},\frac{p_2^2}{q_1^2} \right);
\label{triangle_one_loop_result_H}
\\ 
\I^{(C_1)}(G_3)&= -c_3~\left( \frac{p_1^2}{q_1^2} \right)^{-\eps}~F_4 \left( 1,1-\eps,1-\eps,1+\eps;\frac{p_1^2}{q_1^2},\frac{p_2^2}{q_1^2} \right);
\label{triangle_one_loop_result_C1}
\\
\I^{(C_2)}(G_3)&= -c_3~\left( \frac{p_2^2}{q_1^2}\right)^{-\eps}~F_4\left(1,1-\eps,1+\eps,1-\eps;\frac{p_1^2}{q_1^2},\frac{p_2^2}{q_1^2}\right);
\label{triangle_one_loop_result_C2}
\\
\I^{(S)}(G_3)&=c_3c_s~\left( \frac{p_1^2p_2^2}{(q_1^2)^2} \right)^{-\eps}~F_4 \left( 1-2\eps,1-\eps,1-\eps,1-\eps;\frac{p_1^2}{q_1^2},\frac{p_2^2}{q_1^2} \right) 
,\phantom{aaaa}
\label{triangle_one_loop_result_S}
\end{align}
\label{triangle_one_loop_result}
\end{subequations}
where
\begin{equation}
c_3=  -(-q_1^2)^{-1-\eps} \frac{\Gamma^2(1-\eps)\Gamma(1+\eps)}{\eps^2\Gamma(1-2\eps)} \,,
\quad c_s=\frac{\Gamma(1+\eps)\Gamma(1-2\eps)}{\Gamma(1-\eps)}\,,
\end{equation}
and $F_4$ is one of Appell's hypergeometric functions. Its series representation around the point $x=0=y$, which is valid as long as $\sqrt{x}+\sqrt{y}\leqslant 1$, is given by
\begin{equation}
F_4(\alpha,\beta,\gamma,\gamma';x,y) = \sum_{m,n=0}^{\infty} \frac{(\alpha)^{(m+n)}(\beta)^{(m+n)}}
{(\gamma)^{(m)}(\gamma')^{(n)}} \, \frac{x^m}{m!} \, \frac{y^n}{n!}\,,
\label{appell_definition}
\end{equation} 
with the rising factorial defined as $(z)^{(n)}=\Gamma(z+n)/\Gamma(z)$. For the on-shell expansion
\begin{equation}
\label{Triangle_on_shell}
|p_1^2/q_1^2|\sim |p_2^2/q_1^2|\sim \lambda \ll 1,
\end{equation}
eqs.~(\ref{triangle_one_loop_result_H})-(\ref{triangle_one_loop_result_S}), as suggested by their notations, can be identified with the hard, collinear-to-$p_1$, collinear-to-$p_2$ and soft regions, respectively.

Eqs.~(\ref{eq:tri3all})-(\ref{appell_definition}) make manifest the analytic behaviour associated with the contribution of each region around the limit $p_i^2/q_1^2\to 0$.
Specifically, in each region, the non-analytic behaviour is associated with a particular power of $(p_i^2/q_1^2)^\epsilon$, which multiplies a function with a \emph{regular} Taylor expansion in powers of $p_i^2/q_1^2$, having a \emph{finite} radius of convergence. The latter is therefore an analytic function of both~$p_1^2/q_1^2$ and $p_2^2/q_1^2$ in the vicinity of the origin.
Note that the characteristic power of the expansion parameter associated with each region can be seen directly from the scaling of the integrand, as summarised in eq.~(\ref{leading_powers_oneloop_triangle_regions}). The scaling with $\lambda$ is associated to scaling with $p_1^2/q_1^2$, or with $p_2^2/q_1^2$, or with both, in line with eq.~(\ref{Triangle_on_shell}).

We further observe that the above analytic structure implies that expansions of $\I(G_3)$ in $p_1^2/q_1^2$ and $p_2^2/q_1^2$ commute. In fact, we will show that the above properties --- that the non-analytic $p_i^2$-dependence can be factorised in each region, and the remaining function of $p_i^2$ has a regular multivariate Taylor expansion near the origin, leading to a commuting expansion in different $p_i^2$ --- hold for the on-shell expansion of a broad range of graphs, including \emph{any} Sudakov form factor and \emph{any} planar graph of wide-angle scattering. Meanwhile, commutativity may break down for nonplanar graphs having three or more external legs taken on shell, for regions which contain nontrivial soft subgraphs with at least one soft vertex. An example is provided by the graph in figure~\ref{figure-UF_polynomial_onshell_expansion_example} with all three $p_i^2$ simultaneously taken to the on-shell limit. Integrals arising in these soft regions may involve dependence on composite variables such as\footnote{In the graph of figure~\ref{figure-UF_polynomial_onshell_expansion_example} such dependence appears in the SS region of eq.~(\ref{SS_and_C1S_scaling}).}
\begin{equation}
\label{non-commutativity_X}
X(\{z_{ij}\}) = \frac{4 z_{12} z_{23} z_{13}}{(z_{12} + z_{23} + z_{13})^3}\qquad \text{with} \qquad z_{ij}=\frac{2p_i\cdot p_j}{p_i^2 p_j^2}\,.
\end{equation}
Given the homogeneity of this function,  a simultaneous expansion in all three $p_i^2$ preserves it, $X(\{z_{ij}\})\sim \lambda^0$, while a further expansion in, say $p_1^2\sim p_3^2$ (assuming they are both taken to be smaller than $p_2^2$) would necessarily introduce negative powers of $p_2^2$, and hence does not commute with the expansion where $p_2^2$ is taken to zero first.

In what follows we will study the analytic structure of a specific kind of region, which we call \emph{jet-pairing soft} regions, where every connected component of the soft subgraph only interacts with two jets. For such regions we can explicitly describe the factorisation of non-analytic dependence on $p_i^2$, and show that it leaves behind a function with a regular Taylor expansion. We formulate and prove this result in section~\ref{commutativity_general_prescription}. On this basis we prove the commutativity of multiple on-shell expansions in certain cases in section~\ref{commutativity_multiple_onshell_expansions}.

\subsection{The analytic structure of jet-pairing soft regions}
\label{commutativity_general_prescription}

In this section we focus on on-shell expansions in wide-angle scattering for a specific kind of region, one in which every connected component of the corresponding soft subgraph connects to exactly two jets. We call them the \emph{jet-pairing soft} regions. 
Note that regions with no soft propagators also belong to this class. The key result of this subsection is summarised in the following theorem.

\begin{theorem}
If $R$ is a jet-pairing soft region that appears in the on-shell expansion of a wide-angle scattering graph $G$, then the all-order expansion of $\I(G)$ in this region can be written as follows:
\begin{eqnarray}
\T_{\t}^{(R)}\I(\s)= \bigg( \prod_{p_i^2\in \t} (p_i^2)^{\rho_{R,i}(\epsilon)} \bigg)\cdot \sum_{k_1,\dots,k_{|\t|}\geqslant 0} \Big( \prod_{p_i^2\in \t} (-p_i^2)^{k_i} \Big) \cdot \overline{\I}_{\{k\}}^{(R)}\left( \s \setminus \t \right),
\label{invariant_mass_overall_factors}
\end{eqnarray}
where $\rho_{R,i}(\epsilon)$ is a linear function of~$\epsilon$, $k_i$ are non-negative integer powers and $\overline{\I}_{\{k\}}^{(R)}\left( \s \setminus \t \right)$ is a function of the off-shell kinematics, independent of any $p_i^2\in \t$.  
\label{theorem-analytic_structure_jet_pairing_soft}
\end{theorem}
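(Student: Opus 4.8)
The plan is to extract the non-analytic $p_i^2$-dependence by an explicit, kinematics-dependent rescaling of the Lee-Pomeransky parameters, after which the entire $p_i^2$-dependence of the rescaled integrand becomes polynomial. First I would use Theorem~\ref{theorem-leading_terms_form} together with the factorisation~(\ref{summary_leading_terms_factorise}) to locate precisely where the $p_i^2$ enter the leading polynomial $\mathcal{P}_0^{(R)}$. Since the soft subgraph carries no external momenta (requirement of $S$ in proposition~\ref{proposition-region_vectors_are_hard_and_infrared}), and since $q_{ij}^2 = 2p_i\cdot p_j + p_i^2 + p_j^2$ reduces at leading order to the \emph{off-shell} invariant $2p_i\cdot p_j\sim\lambda^0$, the only explicit dependence on any $p_i^2\in\t$ inside $\mathcal{P}_0^{(R)}$ is \emph{linear}, entering solely through the $\mathcal{F}^{(p_i^2,R)}$ terms, each carrying a single overall factor $(-p_i^2)$. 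Every monomial of the full polynomial $\mathcal{P}$ is likewise at most linear in each individual $p_i^2$.

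The heart of the argument is a change of variables in eq.~(\ref{lee_pomeransky_definition}),
\begin{equation}
x_e \to \mu_e\, x_e, \qquad \mu_e = \prod_{p_i^2\in\t} (-p_i^2)^{w_{e,i}},
\end{equation}
where each $\mu_e$ is a monomial in the invariants, so that to every edge $e$ I assign a weight vector $(w_{e,i})$ recording the power of each $p_i^2$ it absorbs. I would choose these weights so that, for each $i$, all monomials of $\mathcal{P}_0^{(R)}$ acquire a common total power of $p_i^2$. Then $\mathcal{P}_0^{(R)}$ becomes an overall monomial prefactor, $\prod_i(-p_i^2)^{c_i}$ times an off-shell factor, multiplying a polynomial manifestly independent of every $p_i^2\in\t$. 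Collecting this prefactor, raised to $-D/2 = -2+\epsilon$, together with the Jacobian $\prod_e \mu_e^{\nu_e}$ of the scale-invariant measure $[d\x]$, produces exactly the factor $\prod_i (p_i^2)^{\rho_{R,i}(\epsilon)}$; linearity of $\rho_{R,i}$ in $\epsilon$ is immediate, since only the prefactor power multiplies $-D/2$ while the Jacobian contributes an $\epsilon$-independent shift.

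The hard part will be establishing that the weight system is solvable, and this is where the jet-pairing hypothesis is indispensable. For each $i$ the problem reduces to a linear system over the monomial--edge incidence of $\mathcal{P}_0^{(R)}$: one must compensate the single unit of $p_i^2$ carried by the $\mathcal{F}^{(p_i^2,R)}$ monomials. Because, by eq.~(\ref{summary_leading_terms_factorise}), these monomials differ from the $\mathcal{U}^{(R)}$ ones only in the $\widetilde{J}_i$ block (a spanning $2$-tree rather than a spanning tree), the weights $w_{e,i}$ can be supported on the edges of $J_i$ and on the soft components attached to it. The \emph{jet-pairing} condition guarantees this assignment is globally consistent: each connected soft component couples to precisely two jets, so the compensating weight it must carry is unambiguously shared between two $p_i^2$-systems and is never overdetermined. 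For a soft component touching three or more jets the system is obstructed, and the residual non-removable dependence is precisely the homogeneous combination $X(\{z_{ij}\})$ of eq.~(\ref{non-commutativity_X}); I would present this as the explicit counterexample delimiting the scope of the theorem.

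Finally, to obtain the all-order statement I would observe that, after the rescaling and extraction of the prefactor, every subleading monomial resummed by $T_\lambda$ in eq.~(\ref{integral_expansion_operator_definition}) carries a \emph{strictly positive integer} power of the $p_i^2$ relative to the $p_i^2$-free leading polynomial. This positivity is inherited from $R$ being a lower facet, since all non-leading exponent vectors obey $\boldsymbol{v}_R\cdot\boldsymbol{r}+a_R>0$ and the rescaling respects this ordering through $p_i^2\sim\lambda$. Expanding $(\mathcal{P}_{\mathrm{rescaled}})^{-D/2}$ in these positive powers then yields the Taylor sum $\sum_{k_1,\dots,k_{|\t|}\geqslant 0}\prod_i(-p_i^2)^{k_i}\,\overline{\I}^{(R)}_{\{k\}}(\s\setminus\t)$, each coefficient being a dimensionally-regularised integral over the rescaled parameters that depends only on the off-shell invariants. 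The remaining check is that these coefficient integrals are well defined, with no new endpoint singularities; this follows because the $p_i^2$-free leading polynomial is strictly positive on the interior of the integration domain, so that the analytic structure of each $\overline{\I}^{(R)}_{\{k\}}$ is governed entirely by $\s\setminus\t$.
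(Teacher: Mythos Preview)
Your proposal is correct and follows essentially the same route as the paper: both proofs extract the non-analytic $(p_i^2)^{\rho_{R,i}(\epsilon)}$ dependence by a kinematics-dependent rescaling of the Lee-Pomeransky parameters, after which $\mathcal{P}_0^{(R)}$ becomes $p_i^2$-independent and the subleading terms contribute only non-negative integer powers. The paper makes the weight system fully explicit---$x'^{[H]}=x^{[H]}$, $x'^{[J_i]}=\frac{-p_i^2}{-Q^2}x^{[J_i]}$, $x'^{[S_{ij}]}=\frac{(-p_i^2)(-p_j^2)}{(-Q^2)^2}x^{[S_{ij}]}$---and then verifies directly using corollary~\ref{theorem-leading_terms_form-corollary} that all four monomial types in $\mathcal{P}_0^{(R)}$ acquire the same power $(-p_i^2)^{-L(\widetilde J_i)-\sum_{j\neq i}L(\widetilde S_{ij})}$; you would strengthen your argument by writing this down and checking it for the $\mathcal{F}_{\mathrm{I}}^{(q^2,R)}$ and $\mathcal{F}_{\mathrm{II}}^{(q_{ij}^2,R)}$ blocks as well, rather than only comparing $\mathcal{F}^{(p_i^2,R)}$ with $\mathcal{U}^{(R)}$.
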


The essence of this theorem is that the \emph{non-analytic dependence} on the virtualities $p_i^2$ involves only powers of~$(-p_i^2)^\epsilon$, and it factorises from the rest of the integral, which admits a regular multivariate Taylor expansion in $p_i^2$.

\begin{proof}[Proof of theorem~\ref{theorem-analytic_structure_jet_pairing_soft}]
We first focus on the leading contribution of $\T_{\t}^{(R)}\I(G)$, according to the definitions in section~\ref{general_setup_MoR}. Namely,
\begin{eqnarray}
\T_{\t,0}^{(R)}\I(G)= \int[d\x] \prod_{e\in G} x_e^{\nu_e} \cdot \Big( \mathcal{P}_0^{(R)}(\x;\s) \Big)^{-D/2},
\label{leading_contribution_approximation}
\end{eqnarray}
with
\begin{eqnarray}
\mathcal{P}_0^{(R)}(\x;\s)= \mathcal{U}^{(R)}(\x) +\sum_{i=1}^K \mathcal{F}^{(p_i^2,R)}(\x;\s)+ \mathcal{F}_\textup{I}^{(q^2,R)}(\x;\s)+\sum_{i>j=1}^K \mathcal{F}_\textup{II}^{(q_{ij}^2,R)}(\x;\s),
\label{leading_LP_polynomial_generic_form}
\end{eqnarray}
where the terms on the right-hand side factorise according to eq.~(\ref{summary_leading_terms_factorise}).

In the case that $R$ is a jet-pairing soft region, every connected component of the soft subgraph $S$ is attached to exactly two jets by definition. In other words, every parameter corresponding to a soft edge is associated to an $(i,j)$ pair, with $i>j =1,\dots,K$ labelling the jets. We denote these parameters by $x^{[S_{ij}]}$. We then rescale the integration variables as follows:
\begin{eqnarray}
\hspace{-10pt} x'^{[H]}\equiv x^{[H]},\qquad x'^{[J_i]}\equiv \frac{-p_i^2}{-Q^2}x^{[J_i]}\ (i=1,\dots,K),\qquad x'^{[S_{ij}]} \equiv \frac{(-p_i^2)(-p_j^2)} {(-Q^2)^2} x^{[S_{ij}]},
\label{wideangle_scattering_parameters_rescale}
\end{eqnarray}
where $|Q^2|\gg |p_i^2|$ can be \emph{any} large kinematic scale, $Q^2\in\s \setminus \t$. One can check that, according to corollary~\ref{theorem-leading_terms_form-corollary}, under this specific rescaling, each monomial of~$\mathcal{P}_0^{(R)}(\x;\s)$ has the same factor of $\prod_{i=1}^K (-p_i^2/-Q^2)^{-\kappa_{R,i}^{}}$, where $\kappa_{R,i}^{}$ are non-negative integers. Specifically,
\begin{eqnarray}
\mathcal{P}_0^{(R)}(\x;\s)= \overline{\mathcal{P}}_0^{(R)}(\x';\s\setminus \t)\cdot \prod_{i=1}^K {\Big( \frac{-p_i^2}{-Q^2} \Big)}^{-L(\widetilde{J}_i)-\sum_{j\neq i} L(\widetilde{S}_{ij})},
\end{eqnarray}
where the polynomial $\overline{\mathcal{P}}_0^{(R)} (\x'; \s\setminus \t)$ is defined by factoring out of $\mathcal{P}_0^{(R)}(\x;\s)$ the dependence on~$p_i^2$ after the rescaling, hence obtaining a polynomial that only depends on the kinematic variables in $\s\setminus \t$. Under the same rescaling of the integration variables in eq.~(\ref{leading_contribution_approximation}), $\T_{\t,0}^{(R)}\I(G)$ can be rewritten as
\begin{align}
    \begin{split}
        \T_{\t,0}^{(R)}\I(G)= (-Q^2)^{\rho_{R,0}} \prod_{i=1}^K (-p_i^2)^{\rho_{R,i}} \int[d\x'] \prod_{e\in G} (x'_e)^{\nu_e} \big( \overline{\mathcal{P}}_0^{(R)}(\x'; \s\setminus \t) \big)^{-D/2},
    \end{split}
    \label{wideangle_scattering_leading_contribution_pi2_dependence}
\end{align}
where
\begin{subequations}
\begin{align}
\rho_{R,i}\equiv& \bigg(L(\widetilde{J}_i)+\sum_{j\neq i} L(\widetilde{S}_{ij})\bigg)D/2 -\sum_{e\in J_i} \nu_{e} -\sum_{j\neq i} \sum_{e\in S_{ij}} \nu_{e}\ ,
\\
\rho_{R,0}\equiv& \sum_{e\in J}\nu_{e} + 2\sum_{e\in S}\nu_{e}\ .
\end{align}
\end{subequations}
Note that the integral of eq.~(\ref{wideangle_scattering_leading_contribution_pi2_dependence}) does not depend on the kinematic variables in $\t$, which indicates that we have verified theorem~\ref{theorem-analytic_structure_jet_pairing_soft} at the leading order. We now include all the other (higher-order) terms of $\T_{\t}^{(R)} \I$, i.e.
\begin{eqnarray}
\label{expansion_proof}
\T_{\t}^{(R)} \I(G) &&= \sum_{n=0}^\infty \left. \int [d\x] T_{\lambda,n} \Big(\lambda^{-p_R^{}(\epsilon)} I(\lambda^{\boldsymbol{u}_R}\x;\lambda^{\w}\s)\Big)\right|_{\lambda=1}\\ \nonumber
&&= \left. \sum_{n=0}^\infty \int [d\x] \Big( \prod_{e\in G} x_e^{\nu_e} \Big) T_{\lambda,n} \left[ \lambda^{-p_R^{}(\epsilon)} \big( \mathcal{P}(\lambda^{\boldsymbol{u}_R}\x;\lambda^{\w}\s)\big)^{-D/2} \right] \right|_{\lambda=1}
\end{eqnarray}
Defining 
\begin{eqnarray}
\Delta\mathcal{P}^{(R)}(\x;\s)\equiv \mathcal{P}(\x;\s) - \mathcal{P}_0^{(R)}(\x;\s)
\label{proof_commutativity_step1_notations}
\end{eqnarray}
to capture the non-leading terms of $\mathcal{P}(\x;\s)$ in the region $R$, we may write (\ref{expansion_proof}) as follows
\begin{eqnarray}
\T_{\t}^{(R)} \I(G) 
&&= \sum_{n=0}^\infty\int [d\x] 
\Big( \prod_{e\in G} x_e^{\nu_e} \Big)
T_{\lambda,n} \left[ \lambda^{-p_R^{}(\epsilon)}  \big( \mathcal{P}_0^{(R)}(\lambda^{\boldsymbol{u}_R}\x;\lambda^{\w}\s)\big)^{-D/2} \right. \nonumber\\
&&\hspace{4.5cm} \left. \left.\left( 1+\frac{\Delta\mathcal{P}^{(R)}(\lambda^{\v_R^{}}\x;\lambda^{\w}\s)}{\mathcal{P}_0^{(R)}(\lambda^{\v_R^{}}\x;\lambda^{\w}\s)} \right)^{-D/2} \right]\right|_{\lambda=1}.
\label{proof_commutativity_step1}
\end{eqnarray}
Eq.~(\ref{proof_commutativity_step1}) is equivalent to the Taylor expansion
\begin{eqnarray}
\T_{\t}^{(R)} \I(G) = \sum_{n=0}^\infty \frac{(-D/2)_n}{n!} \int [d\x] \Big( \prod_{e\in G} x_e^{\nu_e} \Big) \big( \mathcal{P}_0^{(R)}(\x;\s)\big)^{-D/2} \bigg( \frac{\Delta\mathcal{P}^{(R)}(\x;\s)}{\mathcal{P}_0^{(R)}(\x;\s)} \bigg)^{n},
\label{proof_commutativity_step2}
\end{eqnarray}
where $()_n$ denotes the falling factorial. For each fixed $n$, $(\Delta\mathcal{P}^{(R)}(\x;\s))^{n}$ is a polynomial of $\x$, so by expanding $(\Delta\mathcal{P}^{(R)}(\x;\s))^{n}$, one can view (\ref{proof_commutativity_step2}) as a sum over integrals of the following form:
\begin{eqnarray}
\T_{\t}^{(R)} \I(G) = \sum_{n=0}^\infty \sum_l \mathcal{C}_{n,l} \int [d\x] \Big( \prod_{e\in G} x_e^{\nu_e} \Big) \big( \mathcal{P}_0^{(R)}(\x;\s) \big)^{-D/2} \cdot \frac{\s^{\boldsymbol{p}_{n,l}} \x^{\boldsymbol{q}_{n,l}}}{\big( \mathcal{P}_0^{(R)}(\x;\s) \big)^{n}},
\label{proof_commutativity_step3}
\end{eqnarray}
where $l$ enumerates the different terms, while
$\mathcal{C}_{n,l}$ denote multinomial coefficients. $\s^{\boldsymbol{p}_{n,l}} \x^{\boldsymbol{q}_{n,l}}$ is the product of $n$ monomials from $\Delta\mathcal{P}^{(R)}(\x;\s)$ and the powers $\boldsymbol{p}_{n,l}$ and $\boldsymbol{q}_{n,l}$ are vectors of non-negative integers. From the definition of $\Delta\mathcal{P}^{(R)}$ it follows that the $n$-th order term in eq.~(\ref{proof_commutativity_step3}) is suppressed by $n$ powers of the expansion parameter. We note that eq.~(\ref{proof_commutativity_step3}) is completely general, and in particular it is valid for any region.

In the case of jet-pairing soft regions we can use the rescaling of $\x$ as in eq.~(\ref{wideangle_scattering_parameters_rescale}) to make this suppression manifest in terms of powers of $p_i^2/Q^2$, namely,
\begin{eqnarray}
\T_{\t}^{(R)} \I(G)=  (-Q^2)^{\rho_{R,0}} 
\prod_{i=1}^K (-p_i^2)^{\rho_{R,i}}&& 
\sum_{n=0}^\infty \sum_l \mathcal{C}_{n,l}
 \int [d\x'] \Big( \prod_{e\in G} {x_e'}^{\nu_e} \Big) \big( \overline{\mathcal{P}}_0^{(R)}(\x';\s\setminus \t) \big)^{-D/2}\nonumber\\
&&  \cdot \frac{(\s\setminus\t)^{\boldsymbol{p}'_{n,l}} {\x'}^{\boldsymbol{q}_{n,l}}}{\big( \overline{\mathcal{P}}_0^{(R)}(\x';\s\setminus \t) \big)^{n}}\cdot \prod_{i=1}^K \Big( \frac{-p_i^2}{-Q^2} \Big)^{k_i(n,l)}.
\label{proof_commutativity_step4}
\end{eqnarray}
Aside from the sum over $n$ and $l$, the first line of eq.~(\ref{proof_commutativity_step4}) is identical to eq.~(\ref{wideangle_scattering_leading_contribution_pi2_dependence}), while the second line contains subleading powers, which emerge from the last factor in eq.~(\ref{proof_commutativity_step3}).
Note that all the powers $k_i(n,l)$ are non-negative integers, such that $\sum_{i=1}^K k_i(n,l)=n$. The powers $\boldsymbol{p}'_{n,l}$ are vectors of non-negative integers, which characterise the dependence on the off-shell external kinematic variables $\s\setminus\t$. We note that the non-analytic dependence on~$p_i^2$ factors out of the sum over $n$ and $l$, and the remaining part is a regular Taylor expansion in all $p_i^2\in \t$. This is exactly the statement of eq.~(\ref{invariant_mass_overall_factors}). Hence we have proved theorem~\ref{theorem-analytic_structure_jet_pairing_soft}.
\end{proof}

Theorem~\ref{theorem-analytic_structure_jet_pairing_soft} can be used to investigate the commutativity of multiple on-shell expansions. We study this in the next subsection.

\subsection{Commutativity of multiple on-shell expansions}
\label{commutativity_multiple_onshell_expansions}

Given a graph $G$, we say that two region expansions, associated to sets of kinematic invariants $\t_1$ and $\t_2$ respectively, \emph{commute} at the $n$-th order if and only if
\begin{equation}
\T_{\t_1 \cup \t_2,n} \I(G)=
\sum_{n_1+n_2=n} \T_{\t_1,n_1} \T_{\t_2,n_2} \I(G) 
=\sum_{n_1+n_2=n} \T_{\t_2,n_1} \T_{\t_1,n_2} \I(G) \,,
\label{commutativity_definition}
\end{equation}
where the expansion operators $\T$ are defined as in section~\ref{general_setup_MoR}, where in each case $\T_{\t}$ represents an expansion in the limit in which all elements of $\t$ are simultaneously taken small, of order $\lambda$, while all other scales, $\s\setminus \t$, are kept fixed.
Eq.~(\ref{commutativity_definition}) means that we can either first expand in $\t_1$ and then in $\t_2$ or vice versa or just immediately do the expansion in set $\t_1\cup \t_2$; all these three expansions should agree for the $n$-th order term.

Let us consider the special case where $\t_1=\{p_1^2\}$ and $\t_2=\{p_2^2\}$. In this case we can identify their series expansions in $\lambda$ with expansions in $p_1^2$ and $p_2^2$, respectively:
\begin{align}
\T_{\t_i,n} \I    =\sum_{R\in \R(\I,\t_i)} \T^{(R)}_{\t_i,n}\I\,\qquad \text{for}\,\,i=1,2\,.
\end{align}
Considering now a general term in the double expansion in $p_1^2$ and $p_2^2$, we thus obtain
\begin{equation}
\T_{\t_1,n_1} \T_{\t_2,n_2} \I =
\sum_{R_2\in \R_2} \sum_{R_1\in \R_1(R_2)} \T^{(R_1)}_{\t_1,n_1} \T^{(R_2)}_{\t_2,n_2}\I \,,
\end{equation}
where $\R_2=\R(\I,\t_2)$ is the set of regions of $\I$ expanding in $\t_2$, and $\R_1(R_2)=\R(\T^{(R_2)}_{\t_2,n_2}\I,\t_1)$ is the set of regions of $\T^{(R_2)}_{\t_2,n_2}\I$ expanding in $\t_1$.

Naturally, we can associate this double sum in $p_1^2$ and $p_2^2$ to the limit $|p_1^2|\ll |p_2^2|\ll |Q^2|$, since we first expanded in $p_2^2$ assuming $|p_2^2|\ll |Q^2|$ (where $Q^2$ is some hard scale) and then assumed $|p_1^2|\ll |p_2^2|$ in the second expansion. In other words, this is the limit where the double expansion is expected to converge well. Were we to consider the other order, $\T_{\t_2,n_2} \T_{\t_1,n_1} \I$, we would get a similar double sum in $p_1^2$ and $p_2^2$ that would naively converge only for $|p_2^2|\ll |p_1^2|\ll |Q^2|$. Only if the two expansions commute would $\T_{\t_1,n_1} \T_{\t_2,n_2} \I$ and $\T_{\t_2,n_2} \T_{\t_1,n_1} \I$ be identical for any given $n_1$ and $n_2$. Therefore, eq.~(\ref{commutativity_definition}) is a nontrivial identity.

We point out that if all the regions of a graph $G$ are jet-pairing soft, then from theorem~\ref{theorem-analytic_structure_jet_pairing_soft}, the entire region expansion can be written as
\begin{eqnarray}
\T_{\t}\I(\s)= \sum_{R\in\R(I,\t)} \sum_{k_1,\dots,k_{|\t|}\geqslant 0} \prod_{p_i^2\in \t} (p_i^2)^{\rho_{R,i}(\epsilon)+k_i} \cdot \overline{\I}_{\{k\}}^{(R)}\left( \s \setminus \t \right).
\label{invariant_mass_overall_factors_allR}
\end{eqnarray}
Eq.~(\ref{invariant_mass_overall_factors_allR}) implies that $\T_{\t}\I(\s)$ can be written as a sum over terms, where in each of them, the non-analytic dependence on $p_i^2\to 0$, which is characteristic of the region, factors out of the expansion. Eq.~(\ref{triangle_one_loop_result}) provides a simple example of this, which is hence also an example of the commutativity of expansions (in $p_1^2$ and $p_2^2$). Namely, $\T_{\t_2,n_2} \T_{\t_1,n_1} \I= \T_{\t_1,n_1} \T_{\t_2,n_2} \I$ for any given $n_1$ and $n_2$, because the non-analytic behaviour near $p_i^2\to 0$ is associated exclusively with powers of $(p_i^2/q_1^2)^\epsilon$ in each term of eq.~(\ref{eq:tri3all}).

We conclude that in general, all the graphs for which all the regions are jet-pairing soft, satisfy eq.~(\ref{invariant_mass_overall_factors_allR}), and hence the commutativity relation (\ref{commutativity_definition}). These graphs include in particular any Sudakov form factor and any planar graph of wide-angle scattering.

Note that these commutativity relations also provide us with a simple means to check when eq.~(\ref{invariant_mass_overall_factors_allR}) does not hold. For instance, the nonplanar two-loop graph shown in figure~\ref{figure-UF_polynomial_onshell_expansion_example} contains a soft region whose connected component connects 3 jets. We will discuss this example in more details in the following subsection.

\subsection{Non-commuting example}

As discussed above, multiple on-shell expansions commute for large classes of graphs, hence non-commuting examples are by nature nontrivial. The first such example appears as a nonplanar two-loop graph with four external legs in an on-shell expansions of at least three external legs. We already met such a graph in  figure~\ref{figure-UF_polynomial_onshell_expansion_example} where the regions in the on-shell expansion in the set $\t_{123}=\{p_1^2,p_2^2,p_3^2\}$ were investigated. Our expectation is  that the expansion in $\t_{123}$ would not lead to a factorising dependence on the $p_i^2$ but that there would remain nontrivial dependence on scaleless ratios of the form $p_i^2/p_j^2$, such that subsequent expansions in subsets of $\t_{123}$ would lead to further simpler regions. A special feature of this particular example is that the leading term in the expansion in $\t_{123}$ is of order $\lambda^{-1}$, i.e. it is super-leading. The region expansion at this order is given by a single (double-soft) region, whose corresponding region vector is $\v_\text{SS}=(-1,-1,-1,-2,-2,-2;1)$ with the labeling corresponding to that in figure~\ref{figure-UF_polynomial_onshell_expansion_example}. The soft subgraph of this double-soft region connects to all three jets, and $\text{SS}$ is therefore not a jet-pairing soft region. Its properties are thus not dictated by theorem~\ref{theorem-analytic_structure_jet_pairing_soft}.
 
The leading Lee-Pomeransky polynomial in this region was already provided in eq.~(\ref{UF_polynomial_SS_leading_terms}). We rewrite it here in a slightly more compact form, after simplifying its kinematic dependence by rescaling $x_{i}\to x_i/(-p_i^2)$ for $i=1,2,3$, 
\begin{eqnarray}
\label{UF_polynomial_SS_leading_terms_simp}
\mathcal{P}_0^{(\text{SS})}(\x;\s)=&& ({\color{Red} x_4x_5} + {\color{Red} x_4x_6} + {\color{Red} x_5x_6}) (1 +  x_1+  x_2 + x_3) \\ \nonumber
&&+{\color{Red}z_{12}}  x_1x_2{\color{Red} x_6} +{\color{Red}z_{23}}x_2x_3{\color{Red} x_4} +{\color{Red}z_{13}} x_1x_3{\color{Red} x_5},
\end{eqnarray}
which depends only on the kinematic variables
\begin{equation}
z_{ij}=\frac{(-s_{ij})}{(-p_i^2)(-p_j^2)}    \,.
\end{equation}
The corresponding Feynman integral $\I_0^{(\text{SS})}(G_{4np})$ can be shown to depend, nontrivially, only on two ratios of the $z_{ij}$ variables as one of them can be scaled out (this can be seen by dimensional analysis). The integral can thus be expressed in the form
\begin{equation}
\label{eq:ISS}
\I_0^{(\text{SS})}(G_{4np})=\frac{z_{12}^{-1+2\eps}}{p_1^2p_2^2p_3^2}\; \bar\I_0^{(\text{SS})}(t_{13},t_{23})\,,
\end{equation}
where the dimensionless ratios $t_{13}$ and $t_{23}$ are given by
\begin{equation}
\label{t13t23_ratios}
t_{13}=\frac{z_{13}}{z_{12}}=\frac{(-s_{13})(-p_2^2)}{(-s_{12})(-p_3^2)},\qquad t_{23}=\frac{z_{23}}{z_{12}}=\frac{(-s_{23})(-p_1^2)}{(-s_{12})(-p_3^2)}\,,
\end{equation}
and $\bar\I_0^{(\text{SS})}$ admits the following integral representation:
\begin{equation}
\bar\I_0^{(\text{SS})}(t_{13},t_{23};\eps)=
\Gamma(1+2\eps) \int_0^\infty d^4 x\, 
\frac{(x_1+x_2+x_1x_2)^{-1-2\eps}(x_3+x_4+x_3x_4)^{-1-\eps}}{1+t_{13} x_1 x_3 + t_{23} x_2 x_4 }\,,
\end{equation}
with $d^4x\equiv dx_1\,dx_2\,dx_3\,dx_4$.

Note that the two ratios in eq.~(\ref{t13t23_ratios}) have no overall scaling in the corresponding region expansion parameter $\lambda\sim p_i^2$. Direct integration in terms of generalised polylogarithms (e.g. using HyperInt~\cite{Panzer:2014caa}) is obstructed by a square root of a cubic polynomial, which contains dependence on the variable of eq.~(\ref{non-commutativity_X}) above, namely,
\begin{equation}
X=\frac{4 t_{13}t_{23}}{(1+t_{13}+t_{23})^3}\,.    
\end{equation}
One can thus expect that this integral falls into the class of elliptic multiple polylogarithms, which has been studied extensively in recent years, see for instance~\cite{Bloch:2013tra,Adams:2013nia,Remiddi:2017har,Broedel:2017kkb,Broedel:2017siw,Broedel:2019hyg} and \cite{Bourjaily:2022bwx} for a recent review and references therein. Here we shall not pursue such analytic integration, but rather investigate the region expansion numerically.

To demonstrate that the dependence on $t_{ij}$ in $\bar\I_0^{(\text{SS})}(t_{13},t_{23};\eps)$ does indeed feature non-commuting expansions, we studied the subsequent on-shell expansion in $\boldsymbol{t}_1=\{p_1^2\}$. The program pySecDec reveals three further regions:
\begin{align}
\label{new_regionsNP}
\begin{split}
&\v_\text{H}=(0,0,0,0,0,0;1);\\
&\v_{\text{C}_1}=(-1,0,0,-1,-1,-1;1);\\
&\v_{\text{C}_1\text{C}_1}=(-1,0,0,-1,0,0;1),
\end{split}
\end{align}
which are all present within the double soft region of $\t_{123}$. Summing up these three regions we then demonstrate numerically that 
\begin{equation}
\T_{\t_1,0}\T_{\t_{123},0}\,\I(G_{4np})\ \neq\ \T_{\t_{123},0}\,\I(G_{4np}) \,.
\label{eq:demonstrateNC}
\end{equation}
This may be contrasted with the properties of multiple expansions for jet-pairing soft regions, taking the form of eq.~(\ref{proof_commutativity_step4}), where a simultaneous on-shell expansion in some $\{p_i^2\}$ is regular in each of the $p_i^2$, so a further expansion in any of them does not have any effect whatsoever. This is very different in the case considered here, where a subsequent expansion in $p_1^2$ yields new regions, eq.~(\ref{new_regionsNP}).

\begin{figure}[t]
\centering
\includegraphics[width=0.9\textwidth]{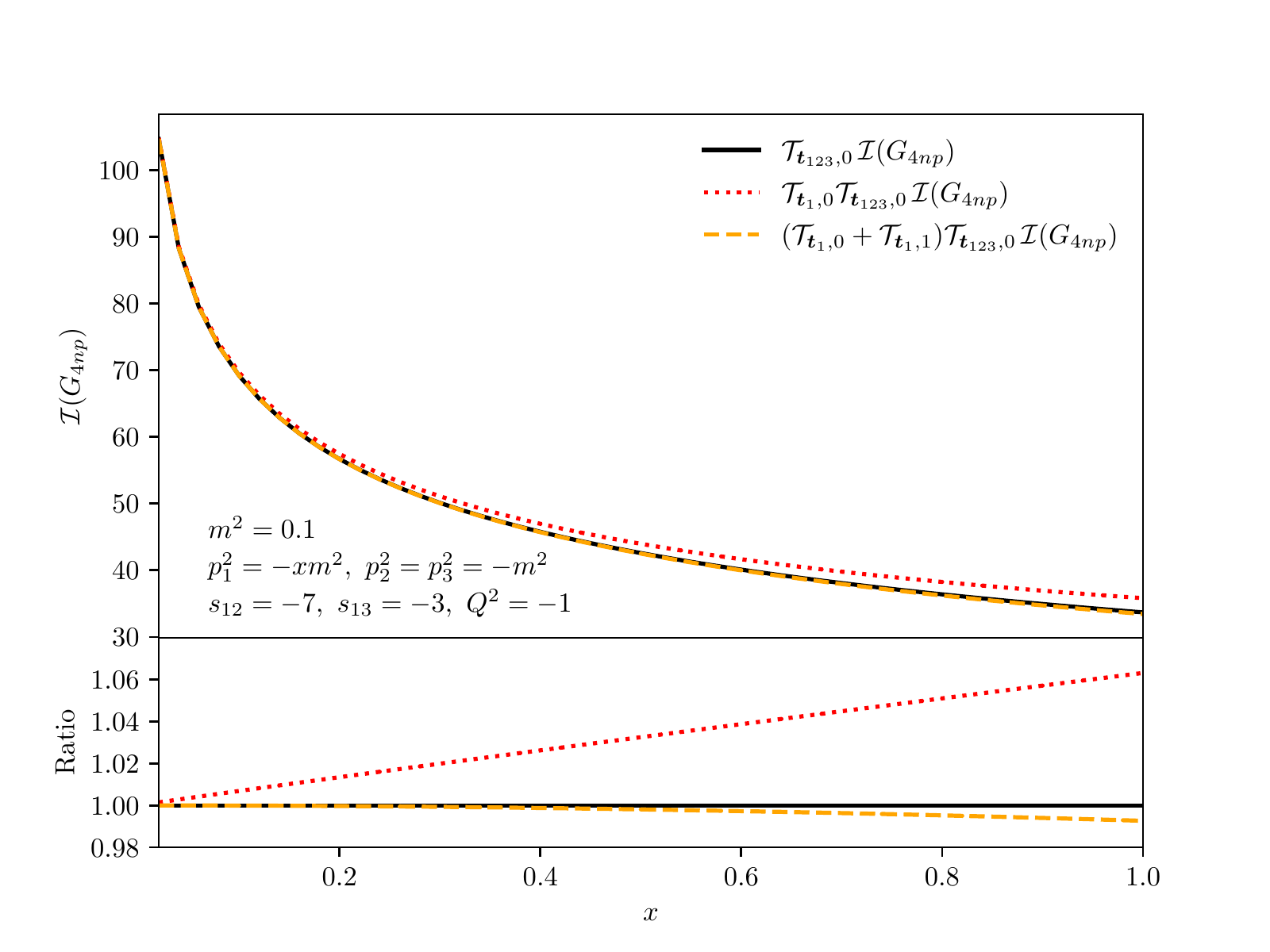}
\caption{The solid (black) curve corresponds to the right-hand side of eq.~(\ref{eq:demonstrateNC}) while the dotted (red) curve corresponds to the left-hand side. The dashed (orange) curve includes also the next-to-leading power ($x^1$) term, which is not considered in eq.~(\ref{eq:demonstrateNC}).}
\label{figure-NC}
\end{figure}
The numerical results are plotted in figure~\ref{figure-NC}, which shows that the difference between the two sides of eq.~(\ref{eq:demonstrateNC}) grows with increasing $x=p_1^2/p_2^2$. While the left-hand side of eq.~(\ref{eq:demonstrateNC}) is only expected to provide a good approximation at small $x$, we observe that the difference remains only of the order of a few percent even at $x=1$. 
The figure also shows that the next-to-leading power term in the $x$-expansion yields an improved convergence, giving further evidence that both the numerical integration and convergence of the series are reliable. We have thus obtained direct evidence that the respective region expansions do not commute for this integral, according to the definition in eq.~(\ref{commutativity_definition}).

We thus see that the simplest example that departs from the conditions of Theorem~\ref{theorem-analytic_structure_jet_pairing_soft} already  displays a non-commutative behaviour, which suggests that the theorem may well be in its strongest possible form. From an analytic perspective, it is interesting to note that the simplest non-commuting case already falls into the class of elliptic polylogarithms. It is intriguing that there seems to be a rapidly converging series approximation for this elliptic integral whose radius of convergence is surprisingly large. We will leave further exploration of these questions to future work.

\section{Conclusions and outlook}
\label{section-conclusions_outlook}

In this paper, we have studied the application of the MoR to the expansion of wide-angle scattering Feynman integrals about the on-shell limit of any subset of their external momenta.
For these integrals, we provide a complete picture of the regions that arise to any order in the power expansion, which we have described in both momentum space and in the Lee-Pomeransky parameter space. 

We began our analysis using the geometric approach to the MoR, where each region contributing to the expansion is identified as a facet of the Newton polytope in the Lee-Pomeransky parameter space. The regions consist of a hard region, where the on-shell limit is applied at the integrand level, and infrared regions, which are each characterised by a particular rescaling of the Lee-Pomeransky parameters. Each rescaling can be associated with a region vector, which is perpendicular to a given facet of the Newton polytope.
The integrals considered here do not all feature a Euclidean region, and therefore it is not obvious at the outset that all regions in the MoR are associated with facets. We nevertheless believe this to be the case for the on-shell expansion of any wide-angle scattering graph.

A key observation is that the aforementioned infrared regions stand in one-to-one correspondence with particular solutions of the Landau equations, which characterise infrared singularities of the given integral. We described the neighbourhood of the singularities in both momentum space and parameter space and formulated this relation in terms of a suitably-deformed version of the Landau equations. 
In momentum space, each infrared-sensitive region is identified by a particular partitioning of the graph into hard, jet, and soft subgraphs, as shown in figure~\ref{classical_picture}. Thus, in a given region, every propagator in the graph carries a characteristic momentum scaling (hard, jet, or soft) according to eq.~(\ref{infrared_region_momentum_scaling}). This immediately restricts the form of the region vectors that arise in parameter space. Specifically, in the Lee-Pomeransky space, entries of the region vectors are $0$, $-1$, and $-2$ for propagators that are hard, jet-like, or soft, respectively.
It is known that the second Landau equation restricts the form of potential infrared-sensitive momentum configurations, according to the Coleman-Norton analysis~\cite{ClmNtn65}. We investigated how this further constrains the possible region vectors, by imposing certain restrictions on the hard and jet subgraphs. 
The Landau-equation-based analysis of section~\ref{region_vectors_from_LE} culminated in proposition \ref{proposition-region_vectors_are_hard_and_infrared} which identifies the infrared regions at the Feynman graph level.

In section~\ref{section-regions_onshell_momentum_expansion}, we identified the \emph{complete set of requirements} for a configuration of momentum scalings (satisfying proposition~\ref{proposition-region_vectors_are_hard_and_infrared}) to correspond to a region, i.e. to be a facet of the Newton polytope. The key issue is that certain configurations of momentum scalings end up yielding scaleless integrals, which vanish in dimensional regularisation. From the geometric MoR perspective such integrals correspond to lower-dimensional faces, rather than to facets of the Newton polytope, and hence do not contribute. In order to fully characterise the facets contributing to the MoR in terms of the subgraphs ($H$, $J$ and $S$), we classified in theorem~\ref{theorem-leading_terms_form} all possible terms which can be part of the leading Lee-Pomeransky polynomial $\mathcal{P}_0^{(R)}(\x;\s)$. We showed that these fall into four classes, each admitting particular factorisation properties given in corollary~\ref{theorem-leading_terms_form-corollary}.
Using this classification, we formulated, in section~\ref{infrared_regions_in_wideangle_scattering}, the precise requirements under which the subgraphs $H$, $J$ and $S$ of proposition \ref{proposition-region_vectors_are_hard_and_infrared} qualify as a region.

In section~\ref{section-graph_finding_algorithm_regions} we translated these requirements into a purely graph-theoretical language, given in theorem~\ref{theorem-algorithm_necessary_sufficient}. On this basis, we constructed a graph-finding algorithm to obtain all the regions appearing in the on-shell expansion of generic wide-angle scattering graphs. 
This algorithm is entirely graph-theoretical and as such it does not directly involve any particular representation of the integrand. We verified that this algorithm generates precisely the set of regions that are expected based on the Newton polytope construction of ref.~\cite{HrchJnsSlk22}.

Finally, we studied the analytic structure of the on-shell expansions for a specific class of regions, which we call jet-pairing soft regions. In such regions, any connected component of the soft subgraph couples to precisely two jets, and thus it follows (theorem~\ref{theorem-analytic_structure_jet_pairing_soft}) that the non-analytic behaviour in the on-shell limit factors out and involves exclusively powers of~$(p_i^2)^{\epsilon}$, while the remaining expansion is regular in all $p_i^2$. This further implies the commutativity of multiple on-shell expansions in graphs where all the regions are jet-pairing soft. Such graphs include any Sudakov form factors and any planar wide-angle scattering processes. Conversely, the non-commutativity of multiple on-shell expansions requires nonplanar graphs with at least three external momenta that are taken on-shell, and may arise only from regions with a nontrivial soft subgraph containing at least one soft vertex. 

We emphasise that the geometric MoR is only expected to capture all the regions in cases where there exists an analytic continuation from a Euclidean domain, where all invariants have the same sign, and thus infrared singularities due to cancellations between terms in the Lee-Pomeransky polynomial are immediately excluded. In this case, all singularities correspond to endpoint singularities in parameter space. While wide-angle scattering of massless particles, as analysed here, does not always satisfy this strict requirement (e.g. $2\to2$ massless scattering, where momentum conservation precludes Mandelstam invariants with equal signs), we nevertheless expect that no new regions appear due to the aforementioned cancellations, and all the singularities are captured by the facets we have constructed. We have not provided a proof of this, but we fully expect that \emph{all} regions, whether described by facets or not, should be related to some solutions of the Landau equations.

Our work may be related to ref.~\cite{Ma20}, which constructs a forest-based infrared subtraction method (at the integrand level) for wide-angle scattering amplitudes~\cite{AntsStm18,AntsHndStmYangZeng21locally}. In this context, it is useful to recall that Smirnov~\cite{Smn90} has proven that the 
MoR remainder --- the  difference between an original Euclidean Feynman integral and the truncated $n$-th order MoR expansion around some small mass (momentum) --- can be identified with the ultraviolet- (infrared-)subtracted result, according to a variant of Zimmermann's forest formula. If a similar relation can be constructed between the on-shell expansion studied in this paper and the infrared forest formula, one may rigorously establish the completeness of the set of regions, and furthermore, derive a proof of convergence of the MoR expansion.

In this paper we focused exclusively on scalar Feynman integrals in the on-shell expansion. A natural application of this work is the study of the singularities of Feynman graphs and complete amplitudes in gauge theory. 
In particular, considering the leading power of the on-shell expansion for an ultraviolet-finite wide-angle scattering integral, we have
\begin{equation}
\label{LO_power_expansion}
\I(\s)\simeq \I^{(H)}(\s\setminus\t) + \sum_{R\in {\rm IR}}\I^{(R)}(\s)\,, 
\end{equation}
where the sum in the second term goes over all infrared regions, while the hard region contribution at leading power, $\I^{(H)}(\s\setminus\t)$, is the integral with all $p_i^2\in \t$ set strictly on shell. The latter is of course infrared divergent, and one may wish to explore its singularities, which appear as poles in the dimensional regularization parameter $\epsilon$. Since the off-shell integral $\I(\s)$ on left hand side of (\ref{LO_power_expansion}) is finite, one finds that the singularities may be expressed as
\begin{equation}
\label{LO_power_expansion_poles_in_eps}
\I^{(H)}(\s\setminus\t)= - \sum_{R\in {\rm IR}}\I^{(R)}(\s) +{\cal O}(\epsilon^0)\,,
\end{equation}
where in each infrared region $R$, the off-shell parameters $p_i^2\in \t$ act as infrared regulators.
The logarithmic dependence on these parameters cancels between the infrared regions. These region integrals do of course have ultraviolet divergences in $\epsilon$ (which were not present in $\I(\s)$), making eq.~(\ref{LO_power_expansion_poles_in_eps}) consistent up to finite ${\cal O}(\epsilon^0)$ terms. 
This decomposition of the integral, along with the detailed characterisation of the regions provided in this paper, may be useful for 
a range of applications, including the program of constructing bases of quasi-finite integrals~\cite{vonManteuffel:2014qoa,vonManteuffel:2015gxa}, studying infrared singularities in wide-angle scattering amplitudes~\cite{Almelid:2017qju,Almelid:2015jia,Becher:2019avh,Falcioni:2021buo} and in the context of infrared subtraction in cross sections. 

This work may also help with the study of factorization and its violation, in particular within Soft-Collinear Effective Theory (SCET) \cite{BurStw13lectures, BchBrgFrl15book, BurFlmLk00, BurPjlSwt02-1, BurPjlSwt02-2,RthstStw16EFT}, for which the MoR is the foundation. Processes involving QCD jets can usually be described by so-called $\text{SCET}_\text{I}$~\cite{BurStw13lectures}, where the fields are of either collinear or ultrasoft. These correspond precisely to the jet and soft momentum scalings of eq.~(\ref{infrared_region_momentum_scaling}) above. Our findings regarding the requirements the subgraphs $H$, $J_i$ and $S$ must admit are valid to all orders in the power expansion, and can therefore feed directly into the analysis of wide-angle scattering using SCET.
 
The extension of this work beyond the on-shell expansion and beyond the context of wide-angle scattering would be very interesting. Different kinematics, involving propagator masses, or forward limits, as well as different expansions, such as threshold or mass expansions, may give rise to different requirements on what constitutes a facet, as well as new types of regions, such as e.g. potential and Glauber regions. Such regions may still be captured using the Newton polytope approach~\cite{JtzSmnSmn12,AnthnrySkrRmn19,SmnvSmnSmv19}, provided all potential cancellations are identified. The analysis of the Landau equations could play a key in systematising this. Further developing purely graph-theoretical rules characterising the regions beyond the realm of the on-shell expansion in wide-angle scattering, would also be highly desirable.

\acknowledgments
We would like to thank Balasubramanian Ananthanarayan, Lorenzo Magnea, Sebastian Mizera, Ben Page, Erik Panzer, Ratan Sarkar, George Sterman and Mao Zeng for valuable discussions. This work is supported by the UKRI FLF grant ``Forest Formulas for the LHC'' (Mr/S03479x/1), the STFC Consolidated Grant ``Particle Physics at the Higgs Centre'' and the Royal Society University Research Fellowship (URF/R1/201268).

\appendix

\section{The possible regions in the nonplanar double-box graph}
\label{appendix-possible_regions_nonplanar_doublebox_graph}

In this appendix we focus on the nonplanar double-box graph shown in figure~\ref{figure-nonplanar_doublebox}. We will show that although there are positive and negative kinematic invariants in the $\mathcal{F}$ polynomial, there are no regions due to cancellation between terms. In other words, all the regions appearing in this specific graph in the on-shell expansion are of the endpoint type, which is in accordance with our more general statement regarding wide-angle scattering in section~\ref{region_vectors_from_Newton_polytope}.

From the Feynman parameterisation shown in figure~\ref{figure-nonplanar_doublebox}, the $\mathcal{F}$ polynomial is
\begin{eqnarray}
\mathcal{F}(\boldsymbol{\alpha};\s)=&& (-p_1^2) \left[ \alpha_1 \alpha_2 (\alpha_4 +\alpha_5 +\alpha_6 +\alpha_7) +\alpha_2 \alpha_4 \alpha_7 \right] \nonumber\\
+&&(-p_2^2) \left[ \alpha_2 \alpha_3 (\alpha_4 +\alpha_5 +\alpha_6 +\alpha_7) +\alpha_2 \alpha_5 \alpha_6 \right] \nonumber\\
+&&(-p_3^2) \left[ \alpha_4 \alpha_5 (\alpha_1 +\alpha_2 +\alpha_3 +\alpha_6 +\alpha_7) +\alpha_1 \alpha_5 \alpha_7 +\alpha_3 \alpha_4 \alpha_6 \right] \nonumber\\
+&&(-p_4^2) \left[ \alpha_6 \alpha_7 (\alpha_1 +\alpha_2 +\alpha_3 +\alpha_4 +\alpha_5) +\alpha_1 \alpha_4 \alpha_6 +\alpha_3 \alpha_5 \alpha_7 \right] \nonumber\\
+&&(-q_{12}^2) \left[ \alpha_1 \alpha_3 (\alpha_4 +\alpha_5 +\alpha_6 +\alpha_7) +\alpha_3 \alpha_4 \alpha_7 +\alpha_1 \alpha_5 \alpha_6 \right] \nonumber\\
+&&(-q_{13}^2) \alpha_2 \alpha_5 \alpha_7 +(-q_{14}^2) \alpha_2 \alpha_4 \alpha_6.
\end{eqnarray}
We choose a special frame of reference such that
\begin{eqnarray}
(-p_1^2),\ (-p_2^2),\ (-p_3^2),\ (-p_4^2),\ (-q_{12}^2),\ (-q_{13}^2)>0, \qquad (-q_{14}^2)<0.
\end{eqnarray}
Moreover, the on-shell limit indicates $|p_i^2|\ll |q_{jk}^2|$ for any $p_i$ and $q_{jk}$ above.
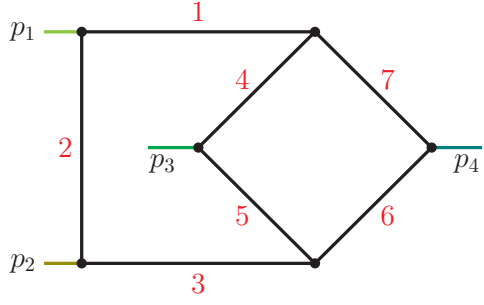
\begin{figure}[t]
\centering
\resizebox{0.45\textwidth}{!}{
\begin{tikzpicture}[decoration={markings,mark=at position 0.55 with {\arrow{latex}}}] 

\node (p1) at (1,5) {\Large $p_1$};
\node (p2) at (1,1) {\Large $p_2$};
\node (p3) at (3,3) {};
\node (p4) at (9,3) {};

\draw (p1) edge [ultra thick, color=LimeGreen] (2,5) node [right] {};
\draw (p2) edge [ultra thick, color=olive] (2,1) node [right] {};
\draw (p3) edge [ultra thick, color=Green] (4,3) node [right] {};
\draw (p4) edge [ultra thick, color=teal] (8,3) node [right] {};
\draw (2,1) edge [ultra thick] (6,1) node [right] {};
\draw (2,5) edge [ultra thick] (6,5) node [right] {};
\draw (4,3) edge [ultra thick] (6,1) node [right] {};
\draw (8,3) edge [ultra thick] (6,1) node [right] {};
\draw (4,3) edge [ultra thick] (6,5) node [right] {};
\draw (8,3) edge [ultra thick] (6,5) node [right] {};
\draw (2,5) edge [ultra thick] (2,1) node [right] {};

\path (2,5) -- (6,5) node [midway,yshift=10pt,color=Red] {\Large $1$};
\path (2,5) -- (2,1) node [midway,xshift=-8pt,color=Red] {\Large $2$};
\path (2,1) -- (6,1) node [midway,yshift=-10pt,color=Red] {\Large $3$};
\path (6,5) -- (4,3) node [midway,xshift=-7pt,yshift=5pt,color=Red] {\Large $4$};
\path (4,3) -- (6,1) node [midway,xshift=-7pt,yshift=-5pt,color=Red] {\Large $5$};
\path (6,1) -- (8,3) node [midway,xshift=7pt,yshift=-5pt,color=Red] {\Large $6$};
\path (8,3) -- (6,5) node [midway,xshift=7pt,yshift=5pt,color=Red] {\Large $7$};
\path (p3)-- (4,3) node [midway,xshift=-5pt,yshift=-8pt] {\Large $p_3$};
\path (p4)-- (8,3) node [midway,xshift=5pt,yshift=-8pt] {\Large $p_4$};

\node [draw=none,circle,minimum size=5pt,fill=Black,inner sep=0pt,outer sep=0pt] () at (2,1) {};
\node [draw=none,circle,minimum size=5pt,fill=Black,inner sep=0pt,outer sep=0pt] () at (2,5) {};
\node [draw=none,circle,minimum size=5pt,fill=Black,inner sep=0pt,outer sep=0pt] () at (4,3) {};
\node [draw=none,circle,minimum size=5pt,fill=Black,inner sep=0pt,outer sep=0pt] () at (8,3) {};
\node [draw=none,circle,minimum size=5pt,fill=Black,inner sep=0pt,outer sep=0pt] () at (6,5) {};
\node [draw=none,circle,minimum size=5pt,fill=Black,inner sep=0pt,outer sep=0pt] () at (6,1) {};
\end{tikzpicture}
}
\vspace{-3em}
\vspace{12pt}
\caption{The nonplanar double-box graph, where all the four momenta are nearly on-shell, denoted by $p_i$ ($i=1,2,3,4$). The red numbers label the Feynman parameters.}
\label{figure-nonplanar_doublebox}
\end{figure}

We further assume that all the regions (except the hard region) should be characterised by the Landau equations, which are
\begin{subequations}
\begin{align}
&\mathcal{F}(\boldsymbol{\alpha};\s)=0,
\label{Landau_equations_third_representation_I}\\ 
\forall i,\quad &\alpha_i=0\ \text{ or }\ \partial \mathcal{F}/\partial \alpha_i =0.
\label{Landau_equations_third_representation_II}
\end{align}
\label{Landau_equations_third_representation}
\end{subequations}
Note that this is the ``third representation'' of the Landau equations according to ref.~\cite{EdenLdshfOlvPkhn02book}.

Suppose there is a solution of the Landau equation above, such that there are both positive and negative terms in $\mathcal{F}$. Since the term $(-q_{14}^2) \alpha_2 \alpha_4 \alpha_6$ is the only negative one, according to eq.~(\ref{Landau_equations_third_representation_I}) we must have
\begin{eqnarray}
\alpha_i\neq 0 \qquad \text{for }i=2,4,6,
\end{eqnarray}
and hence from (\ref{Landau_equations_third_representation_II}),
\begin{eqnarray}
\partial \mathcal{F}/ \partial \alpha_i = 0 \qquad \text{for }i=2,4,6.
\end{eqnarray}
The equation $\partial \mathcal{F}/ \partial \alpha_2=0$ then yields
\begin{eqnarray}
\sum_{i=1}^4 (-p_i^2) (\cdots)_i +(-q_{13}^2) \alpha_5 \alpha_7+ (-q_{14}^2) \alpha_4 \alpha_6 =0,
\end{eqnarray}
where $(\cdots)_i$ denotes a polynomial of the Feynman parameters. In the limit $|p_i^2|/|q_{jk}^2|\to 0$, we further derive from above
\begin{eqnarray}
\alpha_5,\ \alpha_7 \neq 0.
\end{eqnarray}
The same reasoning applies to $\partial \mathcal{F}/ \partial \alpha_4 =0$ and $\partial \mathcal{F}/ \partial \alpha_6 = 0$, which gives
\begin{eqnarray}
\alpha_1,\ \alpha_3 \neq 0.
\end{eqnarray}

To summarise, for this specific example we have found that if there is a region due to cancellation between terms in $\mathcal{F}$, we must have $\alpha_i\neq 0$, and hence, $\partial \mathcal{F}/ \partial \alpha_i =0$  for any $i$. Specifically, the equation $\partial \mathcal{F} / \partial \alpha_1 =0$ yields
\begin{eqnarray}
\sum_{i=1}^4 (-p_i^2) (\cdots)_i+ (-q_{12}^2) \left[ \alpha_3 (\alpha_4 +\alpha_5 +\alpha_6 +\alpha_7) + \alpha_5 \alpha_6 \right]=0,
\end{eqnarray}
where the first term vanishes in the on-shell limit, and we observe that the second term is nonzero, and hence the condition for the Landau singularity, eq.~(\ref{Landau_equations_third_representation_II}), is not satisfied. This implies that all the regions must be of the endpoint type, and are associated to the facets of the Newton polytope.

\section{Relation between Schwinger and Lee-Pomeransky representation}
\label{appendix-schwinger-lp}

The Schwinger representation of the integral $\I(G)$ is
\begin{equation}
\label{IG_Schwinger}
\I(G) = \int_0^\infty \left( \prod_{e\in G} \frac{d\tilde{x}_e}{\tilde{x}_e} \frac{\tilde{x}_e^{\nu_e}}{\Gamma(\nu_e)}\right) \, \frac{e^{-\mathcal{F}/\mathcal{U}}}{\mathcal{U}^{D/2}}.
\end{equation}
where we use the standard definition of the Symanzik polynomials from eq.~(\ref{UFterm_general_expression}). Multiplication by
\begin{equation}
1 =  \int_0^\infty \frac{d\tilde{y}}{\tilde{y}} \frac{\tilde{y}^{(L+1)D/2-\nu}}{\Gamma((L+1)D/2-\nu)} e^{-\tilde{y}}
\end{equation}
together with a rescaling of the integration variables $\tilde{x}_e$ in eq.~(\ref{IG_Schwinger}) according to 
\begin{equation}
\tilde{x}_e = \tilde{y} x_e \quad \forall\ e \in G
\end{equation}
results in
\begin{equation}
\I(G) = \int_0^\infty  \left( \prod_{e\in G} \frac{dx_e}{x_e} \frac{x_e^{\nu_e}}{\Gamma(\nu_e)}\right)\frac{1}{\mathcal{U}^{D/2}}\, \int_0^{\infty}\frac{d\tilde{y}}{\tilde{y}}\, \frac{\tilde{y}^{D/2}}{\Gamma((L+1)D/2-\nu)} e^{-\tilde{y}\left(1+\mathcal{F}/\mathcal{U}\right)}.
\end{equation}
Here we used the homogeneity properties 
\[
\mathcal{U}(\{\tilde{x}_e\})=\tilde{y}^{L}\mathcal{U}(\{x_e\}), 
\qquad \mathcal{F}(\{\tilde{x}_e\};\s)=\tilde{y}^{L+1}\mathcal{F}(\{x_e\};\s).
\]
After rescaling the integration variable $\tilde{y}$ as
\begin{equation}
\tilde{y} = \frac{y}{1+\mathcal{F}/\mathcal{U}},
\end{equation}
we recognise the integral over $y$ as the integral representation of $\Gamma(D/2)$, leading to
\begin{equation}
\I(G) = \frac{\Gamma(D/2)}{\Gamma((L+1)D/2-\nu)}\int_0^\infty \left( \prod_{e\in G} \frac{dx_e}{x_e} \frac{x_e^{\nu_e}}{\Gamma(\nu_e)}\right) \, \frac{1}{\mathcal{U}^{D/2}} \frac{1}{(1+\mathcal{F}/\mathcal{U})^{D/2}}\,,
\end{equation}
which is the Lee-Pomeransky representation of $\I(G)$ as defined in~eqs.~(\ref{lee_pomeransky_definition}) and~(\ref{lee_pomeransky_integrand_definition}).

Consider now a specific region $R$. The Schwinger representation of the propagator (\ref{Schwinger_param}), along with the scaling of the virtualities in eq.~(\ref{virtuality_scaling}), 
imply that the Schwinger parameters scale according to
\begin{equation}
\label{Schwinger_x_scaling_rule_appendix}
\tilde{x}_e \sim \frac{1}{l_e^2(k,p,q)} \sim \lambda^{\tilde{u}_{R,e}}\,,
\end{equation}
where $\tilde{\boldsymbol{u}}_{R}$ is the scaling vector of the Schwinger parameters in region $R$.
Specifically, it follow that 
the Schwinger parameters $\tilde{x}_e$ in the integrand of 
eq.~(\ref{IG_Schwinger})
corresponding to hard, jet and soft edges  
scale as follows:
\begin{eqnarray}
\tilde{x}^{[H]}\sim \lambda^0,\ \quad \tilde{x}^{[J]}\sim \lambda^{-1},\ \quad
 \tilde{x}^{[S]}\sim \lambda^{-2}\,.
\label{Schwinger_parameter_scales}
\end{eqnarray}

Starting with the Schwinger 
representation of the rescaled integrand for region~$R$, namely 
\begin{equation}
\label{IG_Schwinger_rescaled}
\I^{(R)}(G) = \int_0^\infty \left( \prod_{e\in G} \frac{d\tilde{x}_e}{\tilde{x}_e} \frac{(\lambda^{{\tilde{u}_R,e}}\tilde{x}_e)^{\nu_e}}{\Gamma(\nu_e)}\right) \, 
\frac{\exp\bigg[{-\mathcal{F}(\lambda^{\tilde{\boldsymbol{u}}_R}\tilde{\x};\lambda^{\w}\s)
/\mathcal{U}(\lambda^{\tilde{\boldsymbol{u}}_R}\tilde{\x}})\bigg]}{\left(\mathcal{U}(\lambda^{\tilde{\boldsymbol{u}}_R}\tilde{\x})\right)^{D/2}}\,,
\end{equation}
and
repeating the derivation above, we obtain the following Lee-Pomeransky representation of region $R$, in terms of the same Symanzik polynomials, now written in terms of the rescaled Lee-Pomeransky variables $x_e$, 
\begin{equation}
\I^{(R)}(G) = \frac{\Gamma\Big(\frac{D}{2}\Big)}{\Gamma\Big((L+1)\frac{D}{2}-\nu\Big)}\int_0^\infty \left( \prod_{e\in G} \frac{dx_e}{x_e} \frac{(\lambda^{{u_R,e}}{x}_e)^{\nu_e}}{\Gamma(\nu_e)}\right) \, {\Big(\mathcal{U}(\lambda^{\boldsymbol{u}_R}{\x})+\mathcal{F}(\lambda^{\boldsymbol{u}_R}{\x};\lambda^{\w}\s) \Big)^{-\frac{D}{2}}
}\,,
\end{equation}
where in the final step we made the identification of the region vectors:
$\tilde{\boldsymbol{u}}_R ={\boldsymbol{u}}_R\,.$
The conclusion is thus, that the region vectors ${\boldsymbol{u}}_R$ of the Lee-Pomeransky representation are the same as in the Schwinger representation, as summarised by eq.~(\ref{x_scaling_rule}). In particular, given hard, jet and soft momentum modes (\ref{infrared_region_momentum_scaling}), we have established the scaling rule of the Lee-Pomeransky parameters in eq.~(\ref{LP_parameter_scales}).

\section{A proof of the requirements for an infrared region}
\label{appendix-necessity_sufficiency_requirements}

In this appendix we prove that for any configuration satisfying proposition~\ref{proposition-region_vectors_are_hard_and_infrared}, the requirements of $H,J$ and $S$ formulated in section~\ref{infrared_regions_in_wideangle_scattering} provide a \emph{necessary and sufficient} condition for the existence of a corresponding region $R$, such that $\boldsymbol{v}_R$ is normal to a lower facet of the Newton polytope $\Delta^{(N+1)}[{\cal P}(G)]$.

We first show that these requirements are sufficient, i.e. $\text{dim}(f_R)= N$ if the requirements of $H,J$ and $S$ are satisfied, where $f_R$ is the lower face that contains the points $\boldsymbol{r}$ corresponding to the minimum of $\boldsymbol{r}\cdot \boldsymbol{v}_R$. We aim to find the following vectors in $f_R$:
\begin{subequations}
\label{basis_vectors_infrared_facet}
    \begin{align}
        \Delta \boldsymbol{r}_{[H]}^{} \equiv (\underset{N(H)}{\underbrace{0,\dots,0,1,0,\dots, 0}},\underset{N(J)+N(S)}{\underbrace{0, \dots, 0}};0);
        \label{basis_vectors_infrared_facet_hard}\\
        \Delta \boldsymbol{r}_{[J]}^{} \equiv (\underset{N(H)}{\underbrace{0, \dots, 0}},\underset{N(J)}{\underbrace{0\dots,0,1,0,\dots 0}},\underset{N(S)}{\underbrace{0, \dots, 0}};-1);
        \label{basis_vectors_infrared_facet_jet}\\
        \Delta \boldsymbol{r}_{[S]}^{} \equiv (\underset{N(H)+N(J)}{\underbrace{0, \dots, 0}},\underset{N(S)}{\underbrace{0,\dots,0,1,0,\dots, 0}};-2),
        \label{basis_vectors_infrared_facet_soft}
    \end{align}
\end{subequations}
where we have parameterised the propagators of $G$ such that the first $N(H)$ parameters correspond to hard edges, the next $N(J)$ correspond to jet edges, and the final $N(S)$ correspond to soft edges. One can check that the vectors in eq.~(\ref{basis_vectors_infrared_facet}) are linearly independent of each other, and all are perpendicular to $\boldsymbol{v}_R$. Therefore, it suffices to find all the vectors above in order to justify that $f_R$ is a facet.

\bigbreak \noindent
\textbf{The existence of $N(H)$ vectors $\Delta \boldsymbol{r}_{[H]}^{}$.} Here we show that if all the propagators of $H_\text{red}$ are off-shell, then the vectors $\Delta \boldsymbol{r}_{[H]}^{}$ are in $f_R$. As introduced above, we have supposed that there are $n$ nontrivial 1VI components of $H$, denoted as $\gamma_1^H,\dots,\gamma_n^H$.

From the factorisation property of $\mathcal{U}^{(R)}(\x)$, eq.~(\ref{leading_Uterms_factorise}), any $\mathcal{U}^{(R)}$ term can be obtained by combining any chosen spanning trees of $H$, $\widetilde{J}_1,\dots,\widetilde{J}_K$ and $\widetilde{S}$, so we can consider a specific set of $\mathcal{U}^{(R)}$ terms having a common set of spanning trees of $\widetilde{J}_1,\dots,\widetilde{J}_K$ and $\widetilde{S}$, but each having a distinct spanning tree of $H$, denoted as $T^1(H)$. As in eq.~(\ref{hard_basis_vectors_U_polynomial}), it follows that the differences of these $\mathcal{U}^{(R)}$ terms take the form:
\begin{eqnarray}
\Big (\underset{N(\gamma_1^H)+\dots+N(\gamma_{j-1}^H)}{\underbrace{0,\dots,0}}, \underset{N(\gamma_j^H)}{\underbrace{1,0,\dots,0,-1,0,\dots,0}},\underset{N(\gamma_{j+1}^H)+\dots+N(\gamma_{n}^H)}{\underbrace{0,\dots,0}},\underset{\widehat{N}(H_\text{red})}{\underbrace{0,\dots,0}},\underset{N(J)+N(S)}{\underbrace{0,\dots,0}};0 \Big ).\qquad
\label{infrared_hardgraph_basis_vectors_U_polynomial}
\end{eqnarray}
For the first $N(H)$ entries of the vector above, we have chosen a specific parameterisation: the first $N(\gamma_1^H)$ entries correspond to the edges of $\gamma_1^H$, the next $N(\gamma_2^H)$ entries correspond to the edges of $\gamma_2^H$, etc.; the last $\widehat{N}(H_\text{red})$ entries are the edges of the graph $H_\text{red}$ that are also in $H$. There are two nonzero entries in eq.~(\ref{infrared_hardgraph_basis_vectors_U_polynomial}): $1$ appears in the first entry associated to the edges of $\gamma_j^H$, while $-1$ may appear in any of the other entries. It follows that there are $\sum_{j=1}^{n} (N(\gamma_j^H)-1) = \sum_{j=1}^{n} N(\gamma_j^H) -n$ vectors in total in eq.~(\ref{infrared_hardgraph_basis_vectors_U_polynomial}).

In line with eq.~(\ref{relation_propagator_number_reduced_HandJ}), we still need to find another $n+\widehat{N}(H_\text{red})$ vectors that are linear combinations of $\Delta \boldsymbol{r}_{[H]}$ and independent of eq.~(\ref{infrared_hardgraph_basis_vectors_U_polynomial}). For any of the $\mathcal{U}^{(R)}$ terms mentioned above, we denote the corresponding spanning tree by $T^1$. We now consider the spanning 2-tree $T^2$ obtained from $T^1$ by removing one off-shell edge $e\in T^1(H)$ from it. According to the table of corollary~\ref{theorem-leading_terms_form-corollary}, we know that these spanning 2-trees correspond to the $\mathcal{F}_\text{I}^{(q^2,R)}$ terms. Following the same argument in section~\ref{space_spanning_trees}, we can show that the vector $\boldsymbol{r}_2 -\boldsymbol{r}_1$, where $\boldsymbol{r_1}$ and $\boldsymbol{r}_2$ are the points associated to $T^1$ and $T^2$ respectively, may take one of the following two forms:
\begin{subequations}
\label{infrared_hardsubgraph_UandFs_difference}
    \begin{align}
        \Big (\underset{N(\gamma_1^H)+\dots+N(\gamma_{j-1}^H)}{\underbrace{0,\dots,0}}, \underset{N(\gamma_j^H)}{\underbrace{0,\dots,0,1,0,\dots,0}},\underset{N(\gamma_{j+1}^H)+\dots+N(\gamma_n^H)+\widehat{N}(H_\text{red})}{\underbrace{0,\dots,0}},\underset{N(J)+N(S)}{\underbrace{0,\dots,0}};0 \Big ),
        \label{infrared_hardsubgraph_UandFs_differenceI}
        \\
        \Big( \underset{N(\gamma_1^H)+\dots+N(\gamma_n^H)}{\underbrace{0,\dots,0}},\underset{\widehat{N}(H_\text{red})}{\underbrace{0,\dots,0,1,0,\dots,0}},\underset{N(J)+N(S)}{\underbrace{0,\dots,0}};0 \Big ).
        \label{infrared_hardsubgraph_UandFs_differenceII}
    \end{align}
\end{subequations}
In each of the vectors above, there is exactly one entry with value $1$, while all the others are~$0$. Note that in eq.~(\ref{infrared_hardsubgraph_UandFs_differenceI}), we may choose the entry $1$ at $n$ distinct positions, each corresponding to removing some off-shell propagator in a different nontrivial 1VI component $\gamma_j^H$ with $i=1,\dots,n$; these are bound to be independent. In eq.~(\ref{infrared_hardsubgraph_UandFs_differenceII}), we may choose the entry $1$ at any of the $\widehat{N}(H_\text{red})$ entries, each corresponding to removing a different internal propagator in $H_\text{red}\cap H$. Therefore, we have found another $n + \widehat{N}(H_\text{red})$ vectors that are independent of those in (\ref{infrared_hardgraph_basis_vectors_U_polynomial}), and proven that all the vectors $\Delta \boldsymbol{r}_{[H]}^{}$ shown in (\ref{basis_vectors_infrared_facet_hard}) exist.

\bigbreak \noindent
\textbf{The existence of $N(J)$ vectors $\Delta \boldsymbol{r}_{[J]}^{}$.} We now show that if all the internal propagators of $\widetilde{J}_{i,\textup{red}}$ carry exactly the momentum $p_i^\mu$, then the vectors taking the form of $\Delta \boldsymbol{r}_{[J]}^{}$ in eq.~(\ref{basis_vectors_infrared_facet_jet}) exist in~$f_R$. The requirement that all internal propagators carry exactly the jet momentum is equivalent to excluding jet subgraphs that include dead-end structures. Note that such may appear in $\widetilde{J}_{i,\text{red}}$ due to auxiliary propagators that are introduced upon separating nontrivial 1VI components.

Without loss of generality, we take $i=1$ and study the jet subgraph $J_1$ below. As above, we have assumed that $\gamma_1^{\widetilde{J}_1},\dots,\gamma_{n_1}^{\widetilde{J}_1}$ are the $n_1$ nontrivial 1VI components of $\widetilde{J}_1$.

Now consider a specific set of $\mathcal{U}^{(R)}$ terms sharing common spanning trees of $H$, $\widetilde{J}_2,\dots,\widetilde{J}_K$ and $\widetilde{S}$, but each having a distinct spanning tree of $\widetilde{J}_1$. Theorem~\ref{theorem-Uterms_space_dimensionality} then indicates that the following vectors can be found in $f_R$, which are the differences of these $\mathcal{U}^{(R)}$ terms: 
\begin{eqnarray}
&&\Big (\underset{N(H)}{\underbrace{0,\dots,0}}, \underset{N(\gamma_1^{\widetilde{J}_1})+\dots+N(\gamma_{j-1}^{\widetilde{J}_1})}{\underbrace{0,\dots,0}}, \underset{N(\gamma_j^{\widetilde{J}_1})}{\underbrace{1,0,\dots,0,-1,0,\dots,0}},\nonumber\\
&&\hspace{4cm}\underset{N(\gamma_{j+1}^{\widetilde{J}_1})+\dots+N(\gamma_{n_1}^{\widetilde{J}_1})}{\underbrace{0,\dots,0}},\underset{\widehat{N}(\widetilde{J}_{1,\text{red}})}{\underbrace{0,\dots,0}},\underset{N(J_2)+\dots+N(J_K)+N(S)}{\underbrace{0,\dots,0}};0 \Big ),
\label{pinch_jetgraph_basis_vectors_U_polynomial}
\end{eqnarray}
where $j=1,\dots,n_1$. One may verify that these vectors are linearly independent, and the total number of them is $\sum_{j=1}^{n_1} \Big( L(\gamma_j)-1 \Big)= \sum_{j=1}^{n_1} L(\gamma_j) =n_1$. From eq.~(\ref{relation_propagator_number_reduced_HandJ}), we still need to identify $n_1+\widehat{N}(\widetilde{J}_{1,\text{red}})$ additional independent vectors in $f_R$. These take one of the following two forms:
\begin{subequations}
\label{pinch_jetsubgraph_UandFs_difference}
\begin{align}
    \Big (\underset{N(H)}{\underbrace{0,\dots,0}}, \underset{N(\gamma_1)+\dots+N(\gamma_{j-1})}{\underbrace{0,\dots,0}}, \underset{N(\gamma_j)}{\underbrace{0,\dots,0,1,0,\dots,0},}\hspace{4.5cm}\nonumber\\
    \underset{N(\gamma_{j+1})+\dots+N(\gamma_{n_1})}{\underbrace{0,\dots,0}},\underset{\widehat{N}(\widetilde{J}_{1,\text{red}})}{\underbrace{0,\dots,0}},\underset{N(J_2)+\dots+N(J_K)+N(S)}{\underbrace{0,\dots,0}};-1 \Big ),
    \label{pinch_jetsubgraph_UandFs_difference1}\\
    \Big (\underset{N(H)}{\underbrace{0,\dots,0}}, \underset{N(\gamma_1)+\dots+N(\gamma_{n_1})}{\underbrace{0,\dots,0}}, \underset{\widehat{N}(\widetilde{J}_{1,\text{red}})}{\underbrace{0,\dots,0,1,0,\dots,0}},\underset{N(J_2)+\dots+N(J_K)+N(S)}{\underbrace{0,\dots,0}};-1 \Big ).
    \label{pinch_jetsubgraph_UandFs_difference2}
\end{align}
\end{subequations}
In each of the vectors above, there is exactly one entry with value $1$, while all the others associated to the Lee-Pomeransky parameters are~$0$. The last entry is $-1$. Note that in eq.~(\ref{infrared_hardsubgraph_UandFs_differenceI}), we may choose the entry $1$ at $n$ distinct positions, each corresponding to removing some jet edge in a different nontrivial 1VI component $\gamma_j^{\widetilde{J}_1}$ with $j=1,\dots,n_1$; these are bound to be independent. In (\ref{infrared_hardsubgraph_UandFs_differenceII}), we may choose the entry $1$ at any of the $\widehat{N}(\widetilde{J}_{1,\text{red}})$ entries, each corresponding to removing a different edge in $\widetilde{J}_{1,\text{red}}$.

We now explain how these vectors can be obtained under the condition that all the internal propagators of $\widetilde{J}_{1,\textup{red}}$ carry the same momentum $p_1$. In this case, the configuration of $\widetilde{J}_{1,\textup{red}}$ must be as shown in figure~\ref{reduced_form_jetsubgraph}. Recall that the contracted graph $\widetilde{J}_1$ contains a single auxiliary vertex. In any given spanning tree of $\widetilde{J}_1$ denoted by $T^1(\widetilde{J}_1)$, there is a unique path connecting this auxiliary vertex with the external momentum $p_1$. In line with figure~\ref{reduced_form_jetsubgraph}, this path must contain at least one edge from each nontrivial 1VI component of $\widetilde{J}_1$, and all the edges of $\widetilde{J}_{1,\textup{red}}$. By removing any edge from this path, one obtains a spanning 2-tree of $\widetilde{J}_1$ denoted by $T^2(\widetilde{J}_1)$.

We now take any spanning trees of the graphs $H$, $\widetilde{J}_2,\dots,\widetilde{J}_K$ and $\widetilde{S}$, and consider the union of these spanning trees with $T^1(\widetilde{J}_1)$ and $T^2(\widetilde{J}_1)$ respectively. The results are a spanning tree $T^1$ and a spanning 2-tree $T^2$ of $G$, which correspond to the points $\boldsymbol{r}_1$ and $\boldsymbol{r}_2$ of the following form:
\begin{subequations}
\begin{align}
    \boldsymbol{r}_1= \Big (\underset{N(H)}{\underbrace{\phantom{|}\dots\phantom{|}}}, \underset{N_1}{\underbrace{a_1^{(1)},\dots,a_{N_1}^{(1)}}},\dots,\underset{N_{n_1}}{\underbrace{a_1^{(n_1)},\dots,a_{N_{n_1}}^{(n_1)}}},\underset{N'}{\underbrace{0,\dots,0}},\underset{N(J_2)+\dots+N(J_K)+N(S)}{\underbrace{\phantom{b_{N_{n_1}}^{(n_1)}}\dots\phantom{b_{N_n}^{(n)}}}};0 \Big ),
    \\
    \boldsymbol{r}_2= \Big (\underset{N(H)}{\underbrace{\phantom{|}\dots\phantom{|}}}, \underset{N_1}{\underbrace{b_1^{(1)},\dots,b_{N_1}^{(1)}}},\dots,\underset{N_{n_1}}{\underbrace{b_1^{(n_1)},\dots,b_{N_{n_1}}^{(n_1)}}},\underset{N'}{\underbrace{b'_1,\dots,b'_{N'}}},\underset{N(J_2)+\dots+N(J_K)+N(S)}{\underbrace{\phantom{b_{N_{n_1}}^{(n_1)}}\dots\phantom{b_{N_{n_1}}^{(n_1)}}}};1 \Big ).
\end{align}
\end{subequations}
Note that $\boldsymbol{r}_1$ corresponds to a $\mathcal{U}^{(R)}$ term, while $\boldsymbol{r}_2$ corresponds to a $\mathcal{F}^{(p_1^2,R)}$ term. In the equations above, we have used $N_j$ ($j=1,\dots,n_1$) as the abbreviations for $N(\gamma_j^{\widetilde{J}_1})$, and $N'$ as the abbreviation for $\widehat{N}(\widetilde{J}_{1,\text{red}})$. Each of the parameter entries is either $0$ or $1$. From the analysis of the previous paragraph, the vectors $\boldsymbol{r}_1$ and $\boldsymbol{r}_2$ are identical except for two entries: one of the first $N$ entries is $0$ in $\boldsymbol{r}_1$ and $1$ in $\boldsymbol{r}_2$, and the last entry is $0$ in $\boldsymbol{r}_1$ and $1$ in $\boldsymbol{r}_2$. Clearly $\boldsymbol{r}_2 -\boldsymbol{r}_1$ exhausts all the vectors in eq.~(\ref{pinch_jetsubgraph_UandFs_difference}), thus we have obtained another $n_1 + \widehat{N}(\widetilde{J}_{1,\text{red}})$ vectors that are independent of those in (\ref{infrared_hardgraph_basis_vectors_U_polynomial}).

In summary, we proved that the requirement on $J$ is sufficient for the existence of $N(J)$ vectors $\Delta \boldsymbol{r}_{[J]}^{}$.

\bigbreak \noindent
\textbf{The existence of $N(S)$ vectors $\Delta \boldsymbol{r}_{[S]}^{}$.} Finally, we show that if every connected component of $S$ connects at least two jet subgraphs $J_i$ and $J_j$, then the all the vectors $\Delta \boldsymbol{r}_{[S]}^{}$ in eq.~(\ref{basis_vectors_infrared_facet_soft}) exist in $f_R$. Without loss of generality, we consider the graph $S_1$, which can be any connected component of $S$. By definition, its corresponding contracted graph $\widetilde{S}_1$ is a nontrivial 1VI graph.

We then consider the set of $\mathcal{U}^{(R)}$ terms sharing common spanning trees of $H$, $\widetilde{J}_1,\dots,\widetilde{J}_K$ and $\widetilde{S}_2,\dots, \widetilde{S}_n$, but each having a distinct spanning tree of $\widetilde{S}_1$. Again, from corollary~\ref{theorem-Uterms_space_dimensionality_corollary1}, the following vectors below can be found in $f_R$:
\begin{eqnarray}
\Big (\underset{N(H)+N(J)}{\underbrace{0,\dots,0}}, \underset{N(S_1)}{\underbrace{1,0,\dots,0,-1,0,\dots,0}},\underset{N(S_2)+\dots+N(S_n)}{\underbrace{0,\dots,0}};0 \Big ).
\label{pinch_softgraph_basis_vectors_U_polynomial}
\end{eqnarray}
These vectors are obtained as differences between $\mathcal{U}^{(R)}$ terms, where we have parameterised the soft propagators in a way that the first $N(S_1)$ entries of them are associated to the edges of $S_1$, the next $N(S_2)$ entries are associated to the edges of $S_2$, etc.

For any given $\mathcal{U}^{(R)}$ term, we denote the corresponding spanning tree by $T^1$. Having assumed that $S_1$ connects at least two jet subgraphs $J_i$ and $J_j$, there is an edge $e_0\in S_1\setminus T^1$, such that in the unique loop of the graph $T^1\cup e_0$, there are edges from $S_1$, $J_i$ and $J_j$. We then remove an edge $e_i\in J_i$ and $e_j\in J_j$ from it, and the result $T^2\equiv T^1\cup e_0 \setminus (e_i\cup e_j)$ is a spanning 2-tree. From corollary~\ref{theorem-leading_terms_form-corollary}, $T^2$ corresponds to an $\mathcal{F}_\text{II}^{(q^2,R)}$ term. The vector \hbox{$\Delta\boldsymbol{r}'\equiv \boldsymbol{r}_1-\boldsymbol{r}_2$}, where $\boldsymbol{r}_1$ and $\boldsymbol{r}_2$ are the points corresponding to $T^1$ and $T^2$ respectively, takes the following form
\begin{eqnarray}
\Delta \boldsymbol{r}' = \Big( \underset{N(H)}{\underbrace{0,\dots,0}}, \underset{N(J)}{\underbrace{0,\dots,0,-1,0,\dots,0,-1,0,\dots,0}},\underset{N(S)}{\underbrace{0,\dots,0,1,0,\dots,0}};0 \Big),
\end{eqnarray}
where the $-1$ values are associated to the jet propagators $e_i$ and $e_j$ respectively, and $1$ is associated to the soft propagator $e_0$, as we have explained above. First of all, the vector $\Delta \boldsymbol{r}'$ is in $f_R$ because both $\boldsymbol{r}_1$ and $\boldsymbol{r}_2$ correspond to $\mathcal{P}_0^{(R)}$ terms. Meanwhile, since we have already proved that all the vectors $\Delta \boldsymbol{r}_{[J]}$ in eq.~(\ref{basis_vectors_infrared_facet_jet}) exist in $f_R$, we can pick two other vectors in $f_R$, $\Delta \boldsymbol{r}_{i}^{}$ and $\Delta \boldsymbol{r}_{j}^{}$, such that the respective entries $e_i$ and $e_j$ are those that evaluate to~$1$. The vector $\Delta \boldsymbol{r}\equiv \Delta \boldsymbol{r}'+ \Delta \boldsymbol{r}_{i}^{} + \Delta \boldsymbol{r}_{j}^{}$, then takes the following form
\begin{eqnarray}
\Delta \boldsymbol{r}=\Big( \underset{N(H)+N(J)}{\underbrace{0, \dots, 0}},\underset{N(S_1)}{\underbrace{0,\dots,0,1,0,\dots, 0}},\underset{N(S_2)+\dots+N(S_n)}{\underbrace{0,\dots, 0}};-2 \Big),
\label{pinch_softgraph_basis_vectors_UminusF_polynomial}
\end{eqnarray}
where the only entry with the value $1$ is associated to $e_0$. Since $S_1$ can be any connected component of $S$, we can obtain further $n-1$ vectors in $f_R$ that are similar to eq.~(\ref{pinch_softgraph_basis_vectors_UminusF_polynomial}), where the value $1$ is associated to a specific propagator of $S_k$ ($k=2,\dots,n$). Therefore, we have shown that all the vectors $\Delta\boldsymbol{r}_{[S]}$ in (\ref{basis_vectors_infrared_facet_soft}) exist in $f_R$.

\bigbreak
In conclusion, we have shown that the requirements of $H$, $J$ and $S$ introduced in section~\ref{infrared_regions_in_wideangle_scattering}, are sufficient for $f_R$ to be $N$-dimensional, hence a facet. We still need to show that these requirements are necessary, which we will prove by contradiction below.

First, we suppose that the requirement of $H$ is not satisfied. In other words, there is an edge $e\in H_\textup{red}$ whose momentum is $p_i$ or $0$. We further suppose that $e\in G$, i.e. $e$ is not an auxiliary propagator of $H_\text{red}$. Then we will show that in every term of $\mathcal{P}_0^{(R)}$, the corresponding spanning (2-)tree contains $e$. To see this, recall from theorem~\ref{theorem-leading_terms_form} that, for a spanning (2-)tree $T$ corresponding to any $\mathcal{U}^{(R)}$, $\mathcal{F}^{(p_i^2,R)}$ or $\mathcal{F}_\text{II}^{(q_{ij}^2,R)}$ term, $H\cap T$ is a spanning tree of $H$. (This follows from the factorisation property in eq.~(\ref{summary_leading_terms_factorise}).) This indicates that $e\in H\cap T$ in these terms.
In the case that $T$ corresponds to an $\mathcal{F}_\text{I}^{(q^2,R)}$ term, $H\cap T$ is a spanning 2-tree, and the momentum flowing between its components must be off-shell. If $e\notin H\cap T$, then since $e\in H_\text{red}$, which has a tree structure, the momentum flowing between the components of $T$ must be the same momentum carried by $e$ itself, which is $p_i$ or $0$ based on our contradiction assumption. As a result, $e\in H\cap T$ for any $T$ that corresponds to a term of $\mathcal{P}_0^{(R)}(\x;\s)$. In other words, the parameter $x_e$ is absent in all the terms of $\mathcal{P}_0^{(R)}(\x;\s)$. This implies that all the points $\boldsymbol{r}\in f_R$, in addition to admitting the defining constraint that $\boldsymbol{r}\cdot \boldsymbol{v}_R$ is fixed at some (minimum) value, admit an \emph{extra constraint} that the entry of the vector $\boldsymbol{r}$ that corresponds to $x_e$ is always~$0$. This extra constraint makes the dimension of $f_R$ less than $N$, thus $f_R$ cannot be a lower facet. 

An argument can be made also if $e$ is an auxiliary propagator of $H_\text{red}$, carrying momentum $p_i$ or $0$. In this case, this propagator connects to a subgraph consisting of at least one loop, which is scaleless in the limit $p_i^2\to 0$, and thus cannot be part of $\mathcal{P}_0^{(R)}(\x;\s)$.

Second, suppose that the requirement of $J$ is not satisfied, i.e. there is an edge $e\in \widetilde{J}_{i,\text{red}}$ with zero momentum, then we can use the analysis above to show that $\text{dim}(f_R)<N$, which implies that $f_R$ is not a lower facet.

Third, suppose that the requirement of $S$ is not satisfied, i.e. one connected component $S_1$ is attached to only the jet $J_1$ and/or $H$. We will show below that for every spanning (2-)tree $T$ that corresponds to a $\mathcal{P}_0^{(R)}$ term, $S_1\cap T$ is a spanning tree of $S_1$. This follows directly from theorem~\ref{theorem-leading_terms_form} if $T$ corresponds to a $\mathcal{U}^{(R)}$, $\mathcal{F}^{(p_i^2,R)}$ or $\mathcal{F}_\text{I}^{(q^2,R)}$ term. For $T(=T^2)$ that corresponds to an $\mathcal{F}_\text{II}^{(q_{ij}^2,R)}$ term, we recall the discussion in section~\ref{leading_terms_in_infrared_regions}, that $T^2$ can be obtained from a $\mathcal{U}^{(R)}$ term $T^1$ by the following operations. We first add a soft edge to $T^1$, and then in the loop formed we remove two edges from different jets. However, this procedure cannot be realised if the soft edge is from $S_1$, since $S_1$ is not attached to two different jets. As a result, in any $\mathcal{F}_\text{II}^{(q_{ij}^2,R)}$ term, the aforementioned soft edge must be part of $S\setminus S_1$, and then this operation does not affect $S_1\cap T$, which must remain as a spanning tree of $S_1$ as in the original $\mathcal{U}^{(R)}$ term. This implies an extra constraint on the points $\boldsymbol{r}\in f_R$: the sum over all the entries of $\boldsymbol{r}$, which correspond to the edges of $S_1$, is fixed as $L(\widetilde{S}_1)$, as seen in corollary~\ref{theorem-leading_terms_form-corollary}. As above, this extra constraint makes the dimension of $f_R$ less than $N$.

\bigbreak
In summary, we have proved that the requirements proposed in section~\ref{infrared_regions_in_wideangle_scattering} regarding the subgraphs $H,J$ and $S$ provide a necessary and sufficient condition that $\boldsymbol{v}_R$ is normal to a lower facet of the Newton polytope $\Delta^{(N+1)}[\mathcal {P}(G)]$.

\bibliographystyle{JHEP}
\bibliography{refs}

\providecommand{\href}[2]{#2}\begingroup\raggedright\begin{thebibliography}{10}

\bibitem{EdenLdshfOlvPkhn02book}
R.~J. Eden, P.~V. Landshoff, D.~I. Olive, and J.~C. Polkinghorne, {\em The
  analytic S-matrix}.
\newblock Cambridge University Press, 2002.

\bibitem{Stm95book}
G.~F. Sterman, {\em {An Introduction to quantum field theory}}.
\newblock Cambridge University Press, 1993.

\bibitem{Cls11book}
J.~Collins, {\em Foundations of perturbative QCD}, vol.~32.
\newblock Cambridge University Press, 2011.

\bibitem{Stmg18}
A.~Strominger, {\em Lectures on the infrared structure of gravity and gauge
  theory}.
\newblock Princeton University Press, 2018.

\bibitem{Hrch21}
G.~Heinrich, {\it Collider physics at the precision frontier},  {\em Physics
  Reports} {\bf 922} (2021) 1--69.

\bibitem{AgwMgnSgnrlTrpth21}
N.~Agarwal, L.~Magnea, C.~Signorile-Signorile, and A.~Tripathi, {\it {The
  infrared structure of perturbative gauge theories}},  {\em Phys. Rept.} {\bf
  994} (2023) 1--120, [\href{http://xxx.lanl.gov/abs/2112.0709}{{\tt
  arXiv:2112.0709}}].

\bibitem{BurStw13lectures}
I.~W. Stewart and C.~W. Bauer, {\it Lectures on the soft-collinear effective
  theory},  2013.

\bibitem{BchBrgFrl15book}
T.~Becher, A.~Broggio, and A.~Ferroglia, {\it {Introduction to Soft-Collinear
  Effective Theory}},  {\em Lect. Notes Phys.} {\bf 896} (2015) pp.1--206,
  [\href{http://xxx.lanl.gov/abs/1410.1892}{{\tt arXiv:1410.1892}}].

\bibitem{Lnd59}
L.~Landau, {\it On analytic properties of vertex parts in quantum field
  theory},  {\em Nuclear Physics} {\bf 13} (1959), no.~1 181--192.

\bibitem{Stm78I}
G.~F. Sterman, {\it {Mass Divergences in Annihilation Processes. 1. Origin and
  Nature of Divergences in Cut Vacuum Polarization Diagrams}},  {\em Phys.
  Rev.} {\bf D17} (1978) 2773.

\bibitem{LbyStm78}
S.~B. Libby and G.~F. Sterman, {\it {Jet and Lepton Pair Production in
  High-Energy Lepton-Hadron and Hadron-Hadron Scattering}},  {\em Phys. Rev.}
  {\bf D18} (1978) 3252.

\bibitem{Stm96lectures}
G.~F. Sterman, {\it {Partons, factorization and resummation, TASI 95}},  in
  {\em {QCD and beyond. Proceedings, Theoretical Advanced Study Institute in
  Elementary Particle Physics, TASI-95, Boulder, USA, June 4-30, 1995}},
  pp.~327--408, 1995.
\newblock \href{http://xxx.lanl.gov/abs/hep-ph/9606312}{{\tt hep-ph/9606312}}.

\bibitem{ClsSprStm04}
J.~C. Collins, D.~E. Soper, and G.~F. Sterman, {\it {Factorization of Hard
  Processes in QCD}},  {\em Adv. Ser. Direct. High Energy Phys.} {\bf 5} (1989)
  1--91, [\href{http://xxx.lanl.gov/abs/hep-ph/0409313}{{\tt hep-ph/0409313}}].

\bibitem{Cls20}
J.~Collins, {\it A new and complete proof of the landau condition for pinch
  singularities of feynman graphs and other integrals},  {\em arXiv preprint
  arXiv:2007.04085} (2020).

\bibitem{Smn90}
V.~A. Smirnov, {\it Asymptotic expansions in limits of large momenta and
  masses},  {\em Communications in mathematical physics} {\bf 134} (1990),
  no.~1 109--137.

\bibitem{Smn94}
V.~A. Smirnov, {\it {Asymptotic expansions in momenta and masses and
  calculation of Feynman diagrams}},  {\em Mod. Phys. Lett. A} {\bf 10} (1995)
  1485--1500, [\href{http://xxx.lanl.gov/abs/hep-th/9412063}{{\tt
  hep-th/9412063}}].

\bibitem{Brn15}
F.~Brown, {\it Feynman amplitudes and cosmic galois group},  {\em arXiv
  preprint arXiv:1512.06409} (2015).

\bibitem{CtkGrshnTch82}
K.~Chetyrkin, S.~Gorishny, and F.~Tkachov, {\it Operator product expansion in
  the minimal subtraction scheme},  {\em Physics Letters B} {\bf 119} (1982),
  no.~4-6 407--411.

\bibitem{Ctk83}
K.~Chetyrkin, {\it Infrared r*-operation and operator product expansion in the
  minimal subtraction scheme},  {\em Physics Letters B} {\bf 126} (1983), no.~5
  371--375.

\bibitem{GrshnLrnTch83}
S.~Gorishny, S.~Larin, and F.~Tkachov, {\it The algorithm for ope coefficient
  functions in the ms scheme},  {\em Physics Letters B} {\bf 124} (1983),
  no.~3-4 217--220.

\bibitem{Grshn86}
S.~Gorishnij, {\it On the construction of operator expansions and effective
  theories in the ms-scheme. examples. infrared finiteness of coefficient
  functions},  tech. rep., Joint Inst. for Nuclear Research, 1986.

\bibitem{GrshnLrn87}
S.~G. Gorishny and S.~Larin, {\it Coefficient functions of asymptotic operator
  expansions in the minimal subtraction scheme},  {\em Nuclear Physics B} {\bf
  283} (1987) 452--476.

\bibitem{Ctk88I}
K.~G. Chetyrkin, {\it Operator expansions in the minimal subtraction scheme. i.
  the gluing method},  {\em Theor. Math. Phys.;(United States)} {\bf 75}
  (1988), no.~1.

\bibitem{Ctk88II}
K.~G. Chetyrkin, {\it Operator expansions in the minimal subtraction scheme.
  ii. explicit formulas for coefficient functions},  {\em Teoreticheskaya i
  Matematicheskaya Fizika} {\bf 76} (1988), no.~2 207--218.

\bibitem{SmthDVr88}
C.~L. Smith and J.~De~Vries, {\it The operator product expansion for minimally
  subtracted operators},  {\em Nuclear Physics B} {\bf 296} (1988), no.~4
  991--1006.

\bibitem{Grshn89}
S.~Gorishny, {\it Construction of operator expansions and effective theories in
  the ms scheme},  {\em Nuclear Physics B} {\bf 319} (1989), no.~3 633--666.

\bibitem{BglPrs57}
N.~N. Bogoliubov and O.~S. Parasiuk, {\it {On the Multiplication of the causal
  function in the quantum theory of fields}},  {\em Acta Math.} {\bf 97} (1957)
  227--266.

\bibitem{Hepp66}
K.~Hepp, {\it Proof of the bogoliubov-parasiuk theorem on renormalization},
  {\em Communications in Mathematical Physics} {\bf 2} (1966), no.~1 301--326.

\bibitem{Zmm69}
W.~Zimmermann, {\it {Convergence of Bogolyubov's method of renormalization in
  momentum space}},  {\em Commun. Math. Phys.} {\bf 15} (1969) 208--234. [Lect.
  Notes Phys.558,217(2000)].

\bibitem{CtkTch82}
K.~G. Chetyrkin and F.~Tkachov, {\it Infrared r-operation and ultraviolet
  counterterms in the ms-scheme},  {\em Physics Letters B} {\bf 114} (1982),
  no.~5 340--344.

\bibitem{CtkSmn84}
K.~G. Chetyrkin and V.~A. Smirnov, {\it {R* operator corrected}},  {\em Phys.
  Lett.} {\bf 144B} (1984) 419--424.

\bibitem{BnkSmn97}
M.~Beneke and V.~A. Smirnov, {\it {Asymptotic expansion of Feynman integrals
  near threshold}},  {\em Nucl. Phys.} {\bf B522} (1998) 321--344,
  [\href{http://xxx.lanl.gov/abs/hep-ph/9711391}{{\tt hep-ph/9711391}}].

\bibitem{SmnRkmt99}
V.~Smirnov and E.~R. Rakhmetov, {\it Strategy of regions in the asymptotic
  expansion of the two-loop vertex feynman diagrams},  {\em Teoreticheskaya i
  Matematicheskaya Fizika} {\bf 120} (1999), no.~1 64--71.

\bibitem{Smn99}
V.~A. Smirnov, {\it {Problems of the strategy of regions}},  {\em Phys. Lett.
  B} {\bf 465} (1999) 226--234,
  [\href{http://xxx.lanl.gov/abs/hep-ph/9907471}{{\tt hep-ph/9907471}}].

\bibitem{Smn02book}
V.~A. Smirnov, {\em Applied asymptotic expansions in momenta and masses},
  vol.~177.
\newblock Springer, 2003.

\bibitem{Jtz11}
B.~Jantzen, {\it {Foundation and generalization of the expansion by regions}},
  {\em JHEP} {\bf 12} (2011) 076,
  [\href{http://xxx.lanl.gov/abs/1111.2589}{{\tt arXiv:1111.2589}}].

\bibitem{PltRdrg21}
J.~Plenter and G.~Rodrigo, {\it Asymptotic expansions through the loop-tree
  duality},  {\em The European Physical Journal C} {\bf 81} (2021), no.~4
  1--13.

\bibitem{Plp08}
V.~Pilipp, {\it Semi-numerical power expansion of feynman integrals},  {\em
  Journal of High Energy Physics} {\bf 2008} (2008), no.~09 135.

\bibitem{PakSmn11}
A.~Pak and A.~Smirnov, {\it Geometric approach to asymptotic expansion of
  feynman integrals},  {\em The European Physical Journal C} {\bf 71} (2011),
  no.~4 1--6.

\bibitem{JtzSmnSmn12}
B.~Jantzen, A.~V. Smirnov, and V.~A. Smirnov, {\it Expansion by regions:
  revealing potential and glauber regions automatically},  {\em The European
  Physical Journal C} {\bf 72} (2012), no.~9 1--14.

\bibitem{SmnvSmnSmv19}
T.~Y. Semenova, A.~V. Smirnov, and V.~A. Smirnov, {\it On the status of
  expansion by regions},  {\em The European Physical Journal C} {\bf 79}
  (2019), no.~2 1--12.

\bibitem{AnthnrySkrRmn19}
B.~Ananthanarayan, A.~Pal, S.~Ramanan, and R.~Sarkar, {\it Unveiling regions in
  multi-scale feynman integrals using singularities and power geometry},  {\em
  The European Physical Journal C} {\bf 79} (2019), no.~1 1--20.

\bibitem{HrchJnsSlk22}
G.~Heinrich, S.~Jahn, S.~Jones, M.~Kerner, F.~Langer, V.~Magerya, A.~Poldaru,
  J.~Schlenk, and E.~Villa, {\it Expansion by regions with pysecdec},  {\em
  Computer Physics Communications} {\bf 273} (2022) 108267.

\bibitem{LeePmrsk13}
R.~N. Lee and A.~A. Pomeransky, {\it {Critical points and number of master
  integrals}},  {\em JHEP} {\bf 11} (2013) 165,
  [\href{http://xxx.lanl.gov/abs/1308.6676}{{\tt arXiv:1308.6676}}].

\bibitem{KnkUeda10}
T.~Kaneko and T.~Ueda, {\it A geometric method of sector decomposition},  {\em
  Computer Physics Communications} {\bf 181} (2010), no.~8 1352--1361.

\bibitem{AkHmHlmMzr22}
N.~Arkani-Hamed, A.~Hillman, and S.~Mizera, {\it {Feynman polytopes and the
  tropical geometry of UV and IR divergences}},  {\em Phys. Rev. D} {\bf 105}
  (2022), no.~12 125013, [\href{http://xxx.lanl.gov/abs/2202.1229}{{\tt
  arXiv:2202.1229}}].

\bibitem{AnthnryDasSkr20}
B.~Ananthanarayan, A.~B. Das, and R.~Sarkar, {\it Asymptotic analysis of
  feynman diagrams and their maximal cuts},  {\em The European Physical Journal
  C} {\bf 80} (2020), no.~12 1--24.

\bibitem{SmnTtyk09FIESTA}
A.~Smirnov and M.~Tentyukov, {\it Feynman integral evaluation by a sector
  decomposition approach (fiesta)},  {\em Computer Physics Communications} {\bf
  180} (2009), no.~5 735--746.

\bibitem{SmnSmnTtyk11FIESTA2}
A.~Smirnov, V.~Smirnov, and M.~Tentyukov, {\it Fiesta 2: parallelizeable
  multiloop numerical calculations},  {\em Computer Physics Communications}
  {\bf 182} (2011), no.~3 790--803.

\bibitem{Smn14FIESTA3}
A.~V. Smirnov, {\it Fiesta 3: cluster-parallelizable multiloop numerical
  calculations in physical regions},  {\em Computer Physics Communications}
  {\bf 185} (2014), no.~7 2090--2100.

\bibitem{Smn16FIESTA4}
A.~V. Smirnov, {\it Fiesta 4: Optimized feynman integral calculations with gpu
  support},  {\em Computer Physics Communications} {\bf 204} (2016) 189--199.

\bibitem{Smn22FIESTA5}
A.~Smirnov, N.~Shapurov, and L.~Vysotsky, {\it Fiesta5: numerical
  high-performance feynman integral evaluation},  {\em Computer Physics
  Communications} {\bf 277} (2022) 108386.

\bibitem{Trsv96}
O.~V. Tarasov, {\it Connection between feynman integrals having different
  values of the space-time dimension},  {\em Physical Review D} {\bf 54}
  (1996), no.~10 6479.

\bibitem{Normaliz}
W.~Bruns, B.~Ichim, T.~R{\"o}mer, R.~Sieg, and C.~S{\"o}ger, ``{Normaliz.
  Algorithms for rational cones and affine monoids}.'' Available at
  \url{https://www.normaliz.uni-osnabrueck.de}.

\bibitem{Qhull}
C.~B. Barber, D.~P. Dobkin, and H.~Huhdanpaa, {\it The quickhull algorithm for
  convex hulls},  {\em ACM Transactions on Mathematical Software} {\bf 22}
  (1996), no.~4 469--483.

\bibitem{Ma20}
Y.~Ma, {\it A forest formula to subtract infrared singularities in amplitudes
  for wide-angle scattering},  {\em Journal of High Energy Physics} {\bf 2020}
  (2020), no.~1910.11304 1--109.

\bibitem{EdgStm15}
O.~Erdoğan and G.~Sterman, {\it {Ultraviolet divergences and factorization for
  coordinate-space amplitudes}},  {\em Phys. Rev.} {\bf D91} (2015), no.~6
  065033, [\href{http://xxx.lanl.gov/abs/1411.4588}{{\tt arXiv:1411.4588}}].

\bibitem{Engel:2022kde}
T.~Engel, {\em {Muon-Electron Scattering at NNLO}}.
\newblock Phd thesis, 9, 2022.
\newblock \href{http://xxx.lanl.gov/abs/2209.1111}{{\tt arXiv:2209.1111}}.

\bibitem{ClmNtn65}
S.~Coleman and R.~E. Norton, {\it {Singularities in the physical region}},
  {\em Nuovo Cim.} {\bf 38} (1965) 438--442.

\bibitem{AnstsDhrDltHzgmMstbg13}
C.~Anastasiou, C.~Duhr, F.~Dulat, F.~Herzog, and B.~Mistlberger, {\it
  {Real-virtual contributions to the inclusive Higgs cross-section at
  $N^3LO$}},  {\em JHEP} {\bf 12} (2013) 088,
  [\href{http://xxx.lanl.gov/abs/1311.1425}{{\tt arXiv:1311.1425}}].

\bibitem{AnstsDhrDltFrlHzgmMstbg15}
C.~Anastasiou, C.~Duhr, F.~Dulat, E.~Furlan, F.~Herzog, and B.~Mistlberger,
  {\it {Soft expansion of double-real-virtual corrections to Higgs production
  at N$^{3}$LO}},  {\em JHEP} {\bf 08} (2015) 051,
  [\href{http://xxx.lanl.gov/abs/1505.0411}{{\tt arXiv:1505.0411}}].

\bibitem{West01book}
D.~B. West et~al., {\em Introduction to graph theory}, vol.~2.
\newblock Prentice hall Upper Saddle River, 2001.

\bibitem{HzgRjl17}
F.~Herzog and B.~Ruijl, {\it The r*-operation for feynman graphs with generic
  numerators},  {\em Journal of High Energy Physics} {\bf 2017} (2017), no.~5
  1--38.

\bibitem{Krm97}
D.~Kreimer, {\it {On the Hopf algebra structure of perturbative quantum field
  theories}},  {\em Adv. Theor. Math. Phys.} {\bf 2} (1998) 303--334,
  [\href{http://xxx.lanl.gov/abs/q-alg/9707029}{{\tt q-alg/9707029}}].

\bibitem{BkvBrskHzg20}
R.~Beekveldt, M.~Borinsky, and F.~Herzog, {\it The hopf algebra structure of
  the r*-operation},  {\em Journal of High Energy Physics} {\bf 2020} (2020),
  no.~7 1--36.

\bibitem{Vltm63}
M.~Veltman, {\it Unitarity and causality in a renormalizable field theory with
  unstable particles},  {\em Physica} {\bf 29} (1963), no.~3 186--207.

\bibitem{Vltm94book}
M.~Veltman, {\em Diagrammatica: the path to Feynman diagrams}.
\newblock No.~4. Cambridge University Press, 1994.

\bibitem{Abreu:2014cla}
S.~Abreu, R.~Britto, C.~Duhr, and E.~Gardi, {\it {From multiple unitarity cuts
  to the coproduct of Feynman integrals}},  {\em JHEP} {\bf 10} (2014) 125,
  [\href{http://xxx.lanl.gov/abs/1401.3546}{{\tt arXiv:1401.3546}}].

\bibitem{Maple}
{Maplesoft, a division of Waterloo Maple Inc..}, ``Maple.''

\bibitem{AnstsGlvOlr00}
C.~Anastasiou, E.~N. Glover, and C.~Oleari, {\it Scalar one-loop integrals
  using the negative-dimension approach},  {\em Nuclear Physics B} {\bf 572}
  (2000), no.~1-2 307--360.

\bibitem{BoosDvdchv90}
E.~Boos and A.~I. Davydychev, {\it A method of evaluating massive feynman
  integrals},  tech. rep., Moskovskij Gosudarstvennyj Univ., 1990.

\bibitem{Panzer:2014caa}
E.~Panzer, {\it {Algorithms for the symbolic integration of hyperlogarithms
  with applications to Feynman integrals}},  {\em Comput. Phys. Commun.} {\bf
  188} (2015) 148--166, [\href{http://xxx.lanl.gov/abs/1403.3385}{{\tt
  arXiv:1403.3385}}].

\bibitem{Bloch:2013tra}
S.~Bloch and P.~Vanhove, {\it {The elliptic dilogarithm for the sunset graph}},
   {\em J. Number Theor.} {\bf 148} (2015) 328--364,
  [\href{http://xxx.lanl.gov/abs/1309.5865}{{\tt arXiv:1309.5865}}].

\bibitem{Adams:2013nia}
L.~Adams, C.~Bogner, and S.~Weinzierl, {\it {The two-loop sunrise graph with
  arbitrary masses}},  {\em J. Math. Phys.} {\bf 54} (2013) 052303,
  [\href{http://xxx.lanl.gov/abs/1302.7004}{{\tt arXiv:1302.7004}}].

\bibitem{Remiddi:2017har}
E.~Remiddi and L.~Tancredi, {\it {An Elliptic Generalization of Multiple
  Polylogarithms}},  {\em Nucl. Phys. B} {\bf 925} (2017) 212--251,
  [\href{http://xxx.lanl.gov/abs/1709.0362}{{\tt arXiv:1709.0362}}].

\bibitem{Broedel:2017kkb}
J.~Broedel, C.~Duhr, F.~Dulat, and L.~Tancredi, {\it {Elliptic polylogarithms
  and iterated integrals on elliptic curves. Part I: general formalism}},  {\em
  JHEP} {\bf 05} (2018) 093, [\href{http://xxx.lanl.gov/abs/1712.0708}{{\tt
  arXiv:1712.0708}}].

\bibitem{Broedel:2017siw}
J.~Broedel, C.~Duhr, F.~Dulat, and L.~Tancredi, {\it {Elliptic polylogarithms
  and iterated integrals on elliptic curves II: an application to the sunrise
  integral}},  {\em Phys. Rev. D} {\bf 97} (2018), no.~11 116009,
  [\href{http://xxx.lanl.gov/abs/1712.0709}{{\tt arXiv:1712.0709}}].

\bibitem{Broedel:2019hyg}
J.~Broedel, C.~Duhr, F.~Dulat, B.~Penante, and L.~Tancredi, {\it {Elliptic
  polylogarithms and Feynman parameter integrals}},  {\em JHEP} {\bf 05} (2019)
  120, [\href{http://xxx.lanl.gov/abs/1902.0997}{{\tt arXiv:1902.0997}}].

\bibitem{Bourjaily:2022bwx}
J.~L. Bourjaily et~al., {\it {Functions Beyond Multiple Polylogarithms for
  Precision Collider Physics}},  in {\em {2022 Snowmass Summer Study}}, 3,
  2022.
\newblock \href{http://xxx.lanl.gov/abs/2203.0708}{{\tt arXiv:2203.0708}}.

\bibitem{AntsStm18}
C.~Anastasiou and G.~Sterman, {\it {Removing infrared divergences from two-loop
  integrals}},  {\em JHEP} {\bf 07} (2019) 056,
  [\href{http://xxx.lanl.gov/abs/1812.0375}{{\tt arXiv:1812.0375}}].

\bibitem{AntsHndStmYangZeng21locally}
C.~Anastasiou, R.~Haindl, G.~Sterman, Z.~Yang, and M.~Zeng, {\it {Locally
  finite two-loop amplitudes for off-shell multi-photon production in
  electron-positron annihilation}},  {\em JHEP} {\bf 04} (2021) 222,
  [\href{http://xxx.lanl.gov/abs/2008.1229}{{\tt arXiv:2008.1229}}].

\bibitem{vonManteuffel:2014qoa}
A.~von Manteuffel, E.~Panzer, and R.~M. Schabinger, {\it {A quasi-finite basis
  for multi-loop Feynman integrals}},  {\em JHEP} {\bf 02} (2015) 120,
  [\href{http://xxx.lanl.gov/abs/1411.7392}{{\tt arXiv:1411.7392}}].

\bibitem{vonManteuffel:2015gxa}
A.~von Manteuffel, E.~Panzer, and R.~M. Schabinger, {\it {On the Computation of
  Form Factors in Massless QCD with Finite Master Integrals}},  {\em Phys. Rev.
  D} {\bf 93} (2016), no.~12 125014,
  [\href{http://xxx.lanl.gov/abs/1510.0675}{{\tt arXiv:1510.0675}}].

\bibitem{Almelid:2017qju}
O.~Almelid, C.~Duhr, E.~Gardi, A.~McLeod, and C.~D. White, {\it {Bootstrapping
  the QCD soft anomalous dimension}},  {\em JHEP} {\bf 09} (2017) 073,
  [\href{http://xxx.lanl.gov/abs/1706.1016}{{\tt arXiv:1706.1016}}].

\bibitem{Almelid:2015jia}
O.~Almelid, C.~Duhr, and E.~Gardi, {\it {Three-loop corrections to the soft
  anomalous dimension in multileg scattering}},  {\em Phys. Rev. Lett.} {\bf
  117} (2016), no.~17 172002, [\href{http://xxx.lanl.gov/abs/1507.0004}{{\tt
  arXiv:1507.0004}}].

\bibitem{Becher:2019avh}
T.~Becher and M.~Neubert, {\it {Infrared singularities of scattering amplitudes
  and N$^{3}$LL resummation for $n$-jet processes}},  {\em JHEP} {\bf 01}
  (2020) 025, [\href{http://xxx.lanl.gov/abs/1908.1137}{{\tt
  arXiv:1908.1137}}].

\bibitem{Falcioni:2021buo}
G.~Falcioni, E.~Gardi, N.~Maher, C.~Milloy, and L.~Vernazza, {\it {Scattering
  amplitudes in the Regge limit and the soft anomalous dimension through four
  loops}},  {\em JHEP} {\bf 03} (2022) 053,
  [\href{http://xxx.lanl.gov/abs/2111.1066}{{\tt arXiv:2111.1066}}].

\bibitem{BurFlmLk00}
C.~W. Bauer, S.~Fleming, and M.~E. Luke, {\it {Summing Sudakov logarithms in B
  ---> X(s gamma) in effective field theory}},  {\em Phys. Rev.} {\bf D63}
  (2000) 014006, [\href{http://xxx.lanl.gov/abs/hep-ph/0005275}{{\tt
  hep-ph/0005275}}].

\bibitem{BurPjlSwt02-1}
C.~W. Bauer, D.~Pirjol, and I.~W. Stewart, {\it {Soft collinear factorization
  in effective field theory}},  {\em Phys. Rev.} {\bf D65} (2002) 054022,
  [\href{http://xxx.lanl.gov/abs/hep-ph/0109045}{{\tt hep-ph/0109045}}].

\bibitem{BurPjlSwt02-2}
C.~W. Bauer, D.~Pirjol, and I.~W. Stewart, {\it {Power counting in the soft
  collinear effective theory}},  {\em Phys. Rev.} {\bf D66} (2002) 054005,
  [\href{http://xxx.lanl.gov/abs/hep-ph/0205289}{{\tt hep-ph/0205289}}].

\bibitem{RthstStw16EFT}
I.~Z. Rothstein and I.~W. Stewart, {\it An effective field theory for forward
  scattering and factorization violation},  {\em Journal of High Energy
  Physics} {\bf 2016} (2016), no.~8 1--189.

\end{thebibliography}\endgroup
\end{document}